\theoremstyle{definition}
\newtheorem{theorem}{Theorem}
\newtheorem{definition}[theorem]{Definition}
\newtheorem{proposition}[theorem]{Proposition}
\newtheorem{lemma}[theorem]{Lemma}
\newtheorem{corollary}[theorem]{Corollary}
\begin{document}
	\newcommand{\nc}{\newcommand}


\newcommand  {\subs}     {\subseteq}
\renewcommand{\sups}{\supseteq}
\newcommand{\covers}{\sqsupseteq}
\newcommand{\covered}{\sqsubseteq}
\newcommand{\mydef}{\ensuremath{\mathrel{\smash{\stackrel{\scriptscriptstyle{
    \text{def}}}{=}}}}}

\newcommand{\ohne}{\ensuremath{\setminus}}
\newcommand{\Pot}{{\ensuremath{\mathcal{P}}}}

\newcommand{\eps}{{\ensuremath{\epsilon}}}
\newcommand{\pprime}{{\ensuremath{\prime \prime}}}

\newcommand {\nat}      {\mathbb{N}}

\newcommand{\calA}{{\ensuremath{\mathcal{A}}}\xspace}
\newcommand{\calB}{{\ensuremath{\mathcal{B}}}\xspace}
\newcommand{\calC}{\ensuremath{\mathcal{C}}\xspace}
\newcommand{\calD}{\ensuremath{\mathcal{D}}\xspace}
\newcommand{\calE}{\ensuremath{\mathcal{E}}\xspace}
\newcommand{\calG}{\ensuremath{\mathcal{G}}\xspace}
\newcommand{\calP}{\ensuremath{\mathcal{P}}\xspace}
\newcommand{\calQ}{\ensuremath{\mathcal{Q}}\xspace}
\newcommand{\calS}{\ensuremath{\mathcal{S}}\xspace}

\newcommand{\bigO}{\ensuremath{\mathcal{O}}}

\newcommand{\leftidx}[3]{{\vphantom{#2}}#1#2#3}
\newcommand{\lequiv}[3][]{\ensuremath{\tensor*[_{#2}]{\equiv}{^{#1}_{#3}}}}

\newcommand{\id}{\ensuremath{\text{id}}}

\newcommand{\hatE}{\ensuremath{\hat{E}}}
\newcommand{\hate}{\ensuremath{\hat{e}}}
\newcommand{\hatP}{\ensuremath{\hat{P}}}
\newcommand{\hatQ}{\ensuremath{\hat{Q}}}
\newcommand{\hatS}{\ensuremath{\hat{S}}}
\newcommand{\hatG}{\ensuremath{\hat{G}}}
\newcommand{\ol}[1]{\overline{#1}}

\newcommand{\posbool}{\ensuremath{\mathcal{B}^+}}

\newcommand{\Sigmat}{\tilde{\Sigma}}

\newcommand{\smin}[1]{\ensuremath{\left[#1\right]_{\text{min}}}}


\def\pone{\textsc{Juliet}\xspace}
\def\ptwo{\textsc{Romeo}\xspace}
\def\pones{\textsc{Juliet's}\xspace} 
\def\ptwos{\textsc{Romeo's}\xspace}
\def\ponea{\textsc{J}\xspace} 
\def\ptwoa{\textsc{R}\xspace}

\def\playerA{\textsc{Adam}\xspace}
\def\playerE{\textsc{Eve}\xspace}
\def\playerAs{\textsc{Adam's}\xspace}
\def\playerEs{\textsc{Eve's}\xspace}

\newcommand{\funcsymb}{\Gamma}
\newcommand{\ponesymb}{\funcsymb_{\ponea}}
\newcommand{\ptwosymb}{\funcsymb_{\ptwoa}}

\newcommand{\movesingle}[1]{\ensuremath{\stackrel{#1}{\longrightarrow}}}
\newcommand{\movesingles}{\movesingle{\sigma}}
\newcommand{\move}[2]{\ensuremath{\stackrel{#1,#2}{\longrightarrow}}}
\newcommand{\moves}[1]{\move{\sigma}{#1}}
\newcommand{\movest}{\move{\sigma}{\tau}}

\newcommand{\tmovesingle}[1]{\ensuremath{\stackrel{#1}{{\longrightarrow}^*}}}
\newcommand{\tmovesingles}{\tmovesingle{\sigma}}
\newcommand{\tmove}[2]{\ensuremath{\stackrel{#1,#2}{\longrightarrow}^*}}
\newcommand{\tmoves}[1]{\tmove{\sigma}{#1}}
\newcommand{\tmovest}{\tmove{\sigma}{\tau}}

\newcommand {\play}      {{\ensuremath{\Pi}}\xspace}

\newcommand{\passes}{\ensuremath{\text{LS}}\xspace}

\newcommand{\selG}{\ensuremath{\tilde{G}}\xspace}
\newcommand{\Glr}{\ensuremath{G_{\text{L2R}}}\xspace}

\newcommand{\Depth}[1][G]{\ensuremath{\text{Depth}^{#1}}}


\newcommand{\algprobname}[1]{\ensuremath{\text{\sc{#1}}}\xspace}
\newcommand{\lrall}{\algprobname{L2RAll}\xspace}
\newcommand{\corridor}{\algprobname{TwoPlayer2ExpCorridorTiling}\xspace}

\providecommand  {\myclass} [1]  {\ensuremath{\mbox{\sc #1}}\xspace}

\newcommand{\PTIME}{\myclass{PTIME}}
\newcommand{\PSPACE}{\myclass{PSPACE}}
\newcommand{\LOGSPACE}{\myclass{LOGSPACE}}
\newcommand     {\EXPTIME}  {\myclass{EXPTIME}}
\newcommand     {\iiEXPTIME}  {\myclass{2-EXPTIME}}
\newcommand     {\EXPSPACE}  {\myclass{EXPSPACE}}
\newcommand     {\NEXPSPACE}  {\myclass{NEXPSPACE}}
\newcommand     {\NEXPTIME}  {\myclass{NEXPTIME}}
\newcommand     {\coNEXPTIME}  {\myclass{co-NEXPTIME}}
\newcommand     {\iiiEXPTIME}  {\myclass{3-EXPTIME}}

\newcommand{\QBF}{\algprobname{QBF}}

\newcommand{\icup}{\ensuremath{\oplus}} 
\newcommand{\bigicup}{\ensuremath{\bigoplus}} 


\nc{\win}{JWin}

\newcommand{\safeu}{\ensuremath{\text{\win}}}
\newcommand{\safelrrf}{\ensuremath{\text{\win}_{L2R}^{\text{rf}}}}
\newcommand{\safelrol}[1]{\mbox{\ensuremath{\Sigma^*\setminus\safelr(#1)}}}
\newcommand{\safelrp}{\ensuremath{\text{\win}_{\text{L2R}^+}}}
\newcommand{\safelsi}{\ensuremath{\text{\win}_{LS1}}}
\newcommand{\safeop}[1][3]{\ensuremath{\text{\win}_{1P(#1)}}}
\newcommand{\safeR}{\safeop[4]}


\newcommand{\ltrp}{\ensuremath{\text{L2R}^+}}
\newcommand{\stratu}{\ensuremath{\text{STRAT}}}
\newcommand{\stratlr}{\ensuremath{\text{STRAT}}} 
\newcommand{\stratlrrf}{\ensuremath{\text{STRAT}_{L2R}^{rf}}}
\newcommand{\stratlrpc}{\ensuremath{\text{STRAT}_{L2R}^{pc}}}
\newcommand{\stratrf}{\ensuremath{\text{STRAT}^\text{rf}_{L2R}}}
\newcommand{\stratls}{\ensuremath{\text{STRAT}_{LS1}}}
\newcommand{\stratp}[1][3]{\ensuremath{\text{STRAT}_{1P(#1)}}}
\newcommand{\strata}{\ensuremath{\text{STRAT}_{\ponea}}}
\newcommand{\stratac}{\ensuremath{\text{STRAT}_{\ponea,\Call}}}
\newcommand{\stratb}{\ensuremath{\text{STRAT}_{\ptwoa}}}
\newcommand{\stratr}[1][G]{\ensuremath{\stratb(#1)}\xspace}

\newcommand{\Astrat}{{\ensuremath{\sigma}}}
\newcommand{\Bstrat}{{\ensuremath{{\tau}}}}

\newcommand{\ov}[1]{\ensuremath{\overline{#1}}}
\newcommand{\ovS}{\ensuremath{\ov{S}}}
\newcommand{\ovAstrat}{\ensuremath{\ov{\Astrat}}}
\newcommand{\ovBstrat}{\ensuremath{\ov{\Bstrat}}}

\newcommand{\func}[1][]{\ensuremath{\text{func}^{#1}}}
\newcommand{\funcs}[1][]{\ensuremath{\text{funcs}^{#1}}}
\newcommand{\funcslr}[1][G]{\ensuremath{s_\text{L2R}^{#1}}}
\newcommand{\word}[1][G]{\ensuremath{\text{word}_{#1}}}
\newcommand{\words}[1][G]{\ensuremath{\text{words}_{#1}}}
\newcommand{\state}[1][G]{\ensuremath{\text{state}_{#1}}}
\newcommand{\states}[1][G]{\ensuremath{\text{states}_{#1}}}
\newcommand{\suf}[1][G]{\ensuremath{\text{suf}^{#1}}}
\newcommand{\sufstates}[1][G]{\ensuremath{\text{sufstates}^{#1}}}
\newcommand{\Bstates}[1][]{\ensuremath{\text{2-states}^{#1}}}
\newcommand{\transit}{\ensuremath{\text{transit}}}

\newcommand{\cstate}[1][C]{\ensuremath{\text{state}_{\game{#1}}}}
\newcommand{\cstates}[1][C]{\ensuremath{\text{states}_{\game{#1}}}}


\newcommand{\wset}{\ensuremath{\mathcal{L}}}
\newcommand{\wsetRead}{\ensuremath{\mathcal{L}_{Read}}}
\newcommand{\wsetCall}{\ensuremath{\mathcal{L}_{Call}}}

\nc{\Call}{\ensuremath{\text{Call}}\xspace}
\nc{\Read}{\ensuremath{\text{Read}}\xspace}
\nc{\LS}{\ensuremath{\text{LS}}\xspace}
\nc{\Stop}{\ensuremath{\text{Stop}}\xspace}

\newcommand{\deriv}{\ensuremath{\Rightarrow}}


\newcommand{\effect}[2][]{\ensuremath{\calE^{#1}[#2]}\xspace} 
\newcommand{\geffect}[2][]{\ensuremath{\calE^{#1}[G,#2]}\xspace} 
\newcommand{\ceffect}[1][G]{\ensuremath{\calC[#1]}\xspace} 
\newcommand{\cneffect}[2][G]{\ensuremath{\calC^{#2}[#1]}\xspace} 
\newcommand{\game}[1]{\ensuremath{\calG(#1)}\xspace} 

\newcommand{\safelr}[1][]{\ensuremath{\text{\win}^{#1}}}
\newcommand{\Safelr}[1][]{\ensuremath{\textsc{JWin}^{#1}}}

\newcommand{\Gloc}{\ensuremath{\calG_{\text{loc}}}}
\newcommand{\Gnloc}{\ensuremath{\calG_{\text{nd,loc}}}}
\newcommand{\Gxsc}{\ensuremath{\calG_{\text{xsc}}}}
\newcommand{\Gall}{\ensuremath{\calG_{\text{all}}}}
\newcommand{\Gnall}{\ensuremath{\calG_{\text{nd,all}}}}
\newcommand{\Gfin}{\ensuremath{\calG_{\text{fin}}}}
\newcommand{\Gnd}{\ensuremath{\calG_{\text{nd}}}}
\newcommand{\Gndf}{\ensuremath{\calG_{\text{ndf}}}}
\newcommand{\Gins}{\ensuremath{\calG_{\text{ins}}}}
\newcommand{\Gval}[1][]{\ensuremath{\calG_{\text{val}}^{#1}}}
\newcommand{\Gflat}{\ensuremath{\calG_{\text{flat}}}}

\newcommand{\Floc}{\ensuremath{F_{\text{loc}}}}

\newcommand    {\algproblem} [3] 
{\fbox{\parbox[t]{.65\textwidth}{\centerline{#1}\begin{tabular}{lp{.45\textwidth}} Given: & #2\\ Question: & #3\end{tabular}}}}


\newcommand{\anRC}[1]{\ensuremath{{#1}_{\text{RC}}}}
\newcommand{\anR}[1]{\ensuremath{{#1}_{\text{R}}}}
\newcommand{\anC}[1]{\ensuremath{{#1}_{\text{C}}}}
\newcommand{\anNull}[1]{\ensuremath{{#1}_{\text{0}}}}
\newcommand{\anRandC}[1]{\ensuremath{{#1}_{\text{R$\land$C}}}}
\newcommand{\anRorC}[1]{\ensuremath{{#1}_{\text{R$\lor$C}}}}

\newcommand{\unan}[1]{\ensuremath{\natural({#1})}}

\newcommand{\Contract}{\ensuremath{\text{Contract}}}
\newcommand{\Root}{\ensuremath{\text{root}}}

\newcommand{\linstate}[1]{\ensuremath{\overline{#1}}}


\newcommand{\calls}[1][G]{\ensuremath{S^{#1}_{\text{call}}}}
\newcommand{\summary}{s}
\newcommand{\seffect}[2][1]{\ensuremath{E^{#1}(#2)}} 
\newcommand{\teffect}{\ensuremath{\tilde{E}}} 
\newcommand{\releffect}[2][]{\ensuremath{e^{#1}(#2)}\xspace} 
\newcommand{\deffect}[2][]{\ensuremath{\hat{E}^{#1}[#2]}\xspace}
\newcommand{\fctE}{\ensuremath{E_{\text{fct}}}}
\newcommand{\rveffect}{\ensuremath{e_{v}}}
\newcommand{\veffect}{\ensuremath{E_{v}}}
\newcommand{\finE}{\ensuremath{E^{<\infty}}}

\newcommand{\dreleffect}[1]{\ensuremath{\hat{e}(#1)}\xspace} 
\newcommand{\ehat}{\ensuremath{\hat{e}}\xspace} 

\newcommand{\uhat}{\ensuremath{\hat{u}}\xspace}

\newcommand{\effectset}[2][2]{\ensuremath{\mathcal{E}_{#1}(#2)}\xspace}

\newcommand{\Norm}{\ensuremath{\textsc{Norm}}\xspace}
\newcommand{\Mix}{\ensuremath{\textsc{Mix}}\xspace}
\newcommand{\SMix}{\ensuremath{\textsc{SMix}}\xspace}
\newcommand{\Pick}{\ensuremath{\textsc{Shuffle}}\xspace}
\newcommand{\Shuffle}{\ensuremath{\textsc{Shuffle}}\xspace}

\newcommand{\comp}{\ensuremath{\text{comp}}}
\newcommand{\decomp}{\ensuremath{\text{decomp}}}


\newcommand{\Alr}{{\ensuremath{A_{\text{L2R}}}}}
\newcommand{\Als}{{\ensuremath{A_{\text{LS1}}}}}

\newcommand{\deltalr}{{\ensuremath{\delta_{\text{L2R}}}}\xspace}
\newcommand{\deltaslr}{{\ensuremath{\delta^*_{\text{L2R}}}}\xspace}
\newcommand{\deltalrhat}{{\ensuremath{\hat{\delta}_{\text{L2R}}}}\xspace}
\newcommand{\deltaslrhat}{{\ensuremath{\hat{\delta}^*_{\text{L2R}}}}\xspace}

\newcommand{\dAlr}{{\ensuremath{\widehat{A}_{\text{L2R}}}}}

\newcommand{\Asep}{{\ensuremath{A_{\text{sep}}}}}
\newcommand{\Aut}[2][]{{\ensuremath{A_{#1 #2}}}}

\newcommand{\ovdelta}{{\ensuremath{\overline{\delta}}}}
\newcommand{\tdelta}{{\ensuremath{\tilde{\delta}}}}

\newcommand{\reach}[1][]{\ensuremath{\stackrel{#1}{\leadsto}}}

\newcommand{\Yield}{\ensuremath{\text{Yield}}}

\newcommand{\layer}[1][]{\ensuremath{#1\text{-layer}}}


\newcommand{\wf}[1][\Sigma]{\ensuremath{\text{WF}(#1)}}
\newcommand{\swf}[1][\Sigma]{\ensuremath{\text{sWF}(#1)}}
\newcommand{\op}[1]{\ensuremath{\text{\small\textlangle}{#1}\text{\small\textrangle}}}
\newcommand{\cl}[1]{\ensuremath{\text{\small\textlangle}/{#1}\text{\small\textrangle}}}

\newcommand{\nw}[1]{\ensuremath{\textroundcap{#1}}}

\newcommand{\true}{\ensuremath{\text{true}}}
\newcommand{\false}{\ensuremath{\text{false}}}

\newcommand{\Aug}{\ensuremath{\text{Aug}}}
\newcommand{\Enc}{\ensuremath{\text{Enc}}}

\newcommand{\Strip}{\ensuremath{\text{Strip}}}
\newcommand{\Val}{\ensuremath{\text{Val}}}


\newcommand{\We}{\ensuremath{\mathcal{W}}}
\newcommand{\Se}{\ensuremath{\mathcal{S}}}
\newcommand{\Te}{\ensuremath{\mathcal{T}}}
\newcommand{\Ge}{\ensuremath{\mathcal{G}}}
\newcommand{\He}{\ensuremath{\mathcal{H}}}

\newcommand{\Hh}{\ensuremath{\mathcal{H}}}
\newcommand{\Vv}{\ensuremath{\mathcal{V}}}

\newcommand{\ovWe}{\ensuremath{\overline{\mathcal{W}}}}
\newcommand{\ovTe}{\ensuremath{\overline{\mathcal{T}}}}
\newcommand{\ovGe}{\ensuremath{\overline{\mathcal{G}}}}

\newcommand{\TT}{\ensuremath{\mathbb{T}}}
\newcommand{\GG}{\ensuremath{\mathbb{G}}}
\newcommand{\NN}{\ensuremath{\mathbb{N}}}


\newcommand{\added}[1]{\color{blue}#1\color{black}}

\newcommand{\mcomment}[2]{{\footnotesize\color{blue}(#1)}\footnote{\color{blue}#1: #2}} 

\newcommand{\tsm}[1]{\mcomment{TS}{#1}}
\newcommand{\msm}[1]{\mcomment{MS}{#1}}

\newcommand{\thomas}[1]{\ \\ \fbox{\parbox{\linewidth}{{\sc Thomas}:\\
      #1}}}
\nc{\thomasm}{\tsm}

\newcommand{\martin}[1]{\ \\ \fbox{\parbox{\linewidth}{{\sc Martin}:\\ #1}}}
\nc{\martinm}{\msm}


\newcommand{        
\begin{figure}[h]
	\begin{center}
		\scalebox{}{\input{Bilder/}} \caption{} \label{}
	\end{center}          
\end{figure}           
}[4]{        
\begin{figure}[h]
	\begin{center}
		\scalebox{#1}{\input{Bilder/#2}} \caption{#3} \label{#4}
	\end{center}          
\end{figure}           
}

\newcommand{\ignore}[1]{}

\newcommand{\skipproof}[1]{}

\newenvironment{restate}[1]
{\noindent \textbf{#1 \itshape (restated).} \itshape} 
{}

\newenvironment{proofof}[1]
{\noindent \textbf{Proof of #1.}} 
{\qed}



	\title{Games for Active XML Revisited}
	\author{Martin Schuster \and Thomas Schwentick\\ \and \small TU Dortmund University}
	\maketitle

	\begin{abstract}
		The paper studies the rewriting mechanisms for intensional documents in the Active XML framework, abstracted in the form of \emph{active context-free games}. The \emph{safe rewriting} problem studied in this paper is to decide whether the first player, \pone, has a winning strategy for a given game and (nested) word; this corresponds to a successful rewriting strategy for a given intensional document. The paper examines several extensions to active context-free games.

The primary extension allows more expressive schemas (namely XML schemas and regular nested word languages) for both target and replacement languages and has the effect that games are played on nested words instead of (flat) words as in previous studies. Other extensions consider validation of input parameters of web services, and an alternative semantics based on insertion of service call results.

In general, the complexity of the safe rewriting problem is highly intractable (doubly exponential time), but the paper identifies interesting tractable cases.
	\end{abstract}

		\section{Introduction}

\paragraph*{Scientific context} 
This paper contributes to the theoretical foundations of intensional
documents, in the framework of \emph{Active XML}
\cite{AbiteboulBM08}. It studies game-based abstractions of the
mechanism transforming intensional documents into documents of a
desired form by calling web services. One form of such games has been
introduced under the name \emph{active context-free games} in
\cite{MuschollSS06} as an abstraction of a problem studied in
\cite{MiloAABN05}.\footnote{Actually, the two notions were introduced in  the respective conference papers.} The setting in \cite{MiloAABN05} is as follows: an \emph{Active XML} document is given, where some elements consist of functions representing web services that can be called. The goal is to rewrite the document by a series of web service calls into a document matching a given \emph{target schema}. 

Towards an intuition of Active XML document rewriting, consider the example in Figure \ref{fig:rewriting} of an online local news site dynamically loading information about weather and local events (adapted from \cite{MiloAABN05} and \cite{MuschollSS06}). Figure \ref{subfig:before} shows the initial Active XML document for such a site, containing \emph{function nodes} which refer to a weather and an event service, respectively, instead of concrete weather and event data. After a single function call to each of these services has been materialised, the resulting document may look like the one depicted in Figure \ref{subfig:after}. Note that the rewritten document now contains new function nodes; further rewriting might be necessary to reach a document in a given target schema (which could, for instance, require that the document contains at least one indoor event if the weather is rainy). 

\begin{figure}[h]
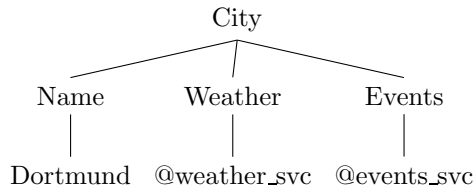
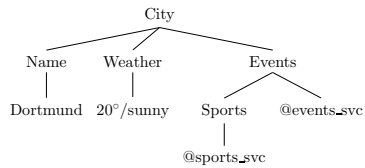

	\centering
	\subfloat[c][Example document before rewriting.]{
		\label{subfig:before}
		\Tree [.City [.Name Dortmund ] [.Weather @weather\_svc ] [.Events @events\_svc ] ]
	}
	\qquad
	\subfloat[c][Same document after function calls.]{
		\label{subfig:after}
		\resizebox{.4\textwidth}{!}{		
			\Tree [.City [.Name Dortmund ] [.Weather 20$^\circ$/sunny ] [.Events [.Sports @sports\_svc ] @events\_svc ] ]	
		}	
	}\\
	\caption{Example of Active XML rewriting.}
	\label{fig:rewriting}
\end{figure}

Modelling this rewriting problem as a game follows the approach of dealing with uncertainty by playing a ``game against nature'': We model the process intended to rewrite a given document into a target schema by performing function calls as a player (\pone). As her moves, she chooses which function nodes to call, and her goal is to reach a document in the target schema. Returns of function calls, on the other hand, are chosen (in accordance with some schema for each called service) by an antagonistic second player (\ptwo), whose goal is to foil \pone. The question whether a given document can always be rewritten into the target schema may then be solved by deciding whether \pone has a winning strategy. More specifically, given an input document, target schema and return schemas for function calls, there should exist a \emph{safe rewriting} algorithm that always rewrites the input document into the target schema, no matter the concrete returns of function calls, if and only if \pone has a winning strategy in the corresponding game.\footnote{It is hard to give a precise statement of \emph{safe rewriting} that does not already involve games, but we hope that the general idea of this statement becomes sufficiently clear.}

In \cite{MiloAABN05}, the target schema is represented by an XML
document type definition (DTD). It was argued that, due to
the restricted nature of DTDs, the problem can
be reduced to a rewriting game on strings where, in each move a single
symbol is replaced by a string, the
set of allowed replacement strings for each symbol is a regular language and the
target language is regular\footnote{More precisely, it should be given by  a \emph{deterministic} regular expression.}, 
as well.  

In \cite{MuschollSS06} the
complexity of the problem to determine the winner in such games
(mainly with finite replacement languages) was studied. Whereas this
problem is undecidable in general, there are important cases in which
it can be solved, particularly if \pone chooses the symbols to be replaced in a
left-to-right fashion.  In and after \cite{MuschollSS06,MiloAABN05}, research very much concentrated on games on
strings (and thus on the setting with DTDs). Furthermore, to achieve tractability, a special emphasis was
given to the restriction to \emph{bounded} strategies, in which  the
recursion depth with respect to web service calls is bounded by some
constant.

\paragraph*{Our approach}

The aim of this paper is to broaden the scope and extend the investigation of games for Active XML in several aspects. First of
all, we consider stronger schema languages (compared to DTDs) such as XML Schema and  Relax NG, due to their practical importance. To allow for this extension, our games are played on nested words \cite{AlurM09}.
\footnote{More precisely: word encodings of nested words in the sense of \cite{AlurM09}.}

Furthermore, we study the impact of the validation of input parameters for web service calls (partly
  considered already in \cite{MiloAABN05}), and
investigate an alternative semantics, where results of web service calls are
  inserted next to the node
  representing the web service, as opposed to replacing that node.

As we are particularly interested in the identification of tractable
cases, we follow the previous line of research by concentrating on
strategies in document order (left-to-right strategies) and by
considering bounded strategies (\emph{bounded replay}) and strategies in which no calls in results from previous
web service calls are allowed (\emph{no replay}). However, we also
pinpoint the complexity of the general setting. 

As a basic intuition for the concept of replay, consider again the online news site example from Figure \ref{fig:rewriting}, and assume that the schema for the event service's returns is (partially) given by $@\text{event\_svc} \rightarrow (\text{Sports} | \text{Movie}) @\text{event\_svc}$, i.e.~the event service allows for dynamic loading of additional results. A strategy with no replay would not be allowed to fetch any additional results in the situation of Figure \ref{subfig:after}, while a strategy with bounded replay $k$ (for some constant $k$) could load up to $k$ more events after the first. A strategy with unbounded replay would be able to fetch an arbitrary number of results, but might lead to a rewriting process that does not terminate if unsuccessful.

\paragraph*{Our contributions}

Our complexity results with respect to stronger schema languages are
summarised in \mbox{Table \ref{tab:results}}. In the general setting, the
complexity is very bad: doubly exponential time. However, there are
tractable cases for XML Schema: replay-free strategies in general and
strategies with bounded
replay in the case of finite replacement languages (that is, when there are
only finitely many possible answers, for each web service).
It should be noted that the \PSPACE-hardness result for the case
with DTDs, bounded replay and infinite replacement languages indicates that the
respective \PTIME claim in \cite{MiloAABN05} is wrong. 

\begin{table}[t]
\footnotesize
  \centering
          \begin{tabular}{|p{5cm}|c|c|c|}
            \hline
           & No replay & Bounded & Unbounded\\\hline
            \multicolumn{4}{|l|}{\textbf{Regular target language}}  \\\hline
Regular replacement & $\PSPACE$ & $\iiEXPTIME$& $\iiEXPTIME$\\\hline
Finite replacement & $\PSPACE$ & $\PSPACE$& $\EXPTIME$\\\hline
            \multicolumn{4}{|l|}{\textbf{DTD or XML Schema target language}} \\\hline
Regular replacement & $\PTIME$&  $\PSPACE$ & $\EXPTIME$\\\hline
Finite replacement & $\PTIME$&  $\PTIME$ & $\EXPTIME$\\\hline
          \end{tabular}
        \caption{Summary of complexity results. All results are completeness results.}
    \label{tab:results}
  \end{table}

In the setting where web services come with an input schema that
restricts the parameters of web service calls, we only study
replay-free strategies. It turns out that this case is tractable
if all schemas are specified by DTDs and the number of web services is
bounded. On the other hand, if the desired document structure is specified by an XML
Schema or the number of function symbols is unbounded, the task
becomes \PSPACE-hard.

For insertion-based semantics, we identify an undecidable
setting and establish a correspondence with the standard ``replacement''
semantics, otherwise. 

As a side result of independent interest, we show that the
word problem for alternating nested word automata is \PSPACE-complete.

\paragraph*{Related Work}
We note that the results on flat strings in this paper do not directly follow from the results in 
\cite{MuschollSS06}, as \cite{MuschollSS06} assumed target languages given by DFAs as opposed to \emph{deterministic} regular expressions, which are integral to both DTDs and more expressive XML schema languages. However, the techniques from \cite{MuschollSS06} can be adapted.

More related work for active context-free games than the papers
mentioned so far is discussed in
\cite{MuschollSS06}. Further results on active context-free games in
the ``flat strings'' setting can
be found in \cite{AbiteboulMB05,BjorklundSSK13}.
A different form of 2-player rewrite games are studied in
\cite{Waldmann02}.   More general \emph{structure rewriting games} are
defined in \cite{Kaiser09}.

\paragraph*{Organisation}

We give basic definitions in Section \ref{sec:prelim}. Games with
regular schema languages (given by nested word automata) are studied
in Section \ref{sec:general}, games in which the schemas are given as
DTDs or XML Schemas are investigated in Section
\ref{sec:simple}. Validation of parameters and insertion of web
service results are considered in Section \ref{sec:other}.
Most proofs are delegated to the appendix for brevity.

\paragraph*{Acknowledgements}
 We would like to thank the anonymous reviewers for their insightful and constructive comments. We are grateful to Nils Vortmeier and Thomas Zeume for careful proof reading, and to Krystian Kensy for checking our proof of Proposition \ref{prop:lowerdtd} (b) and for pinpointing the problems in the algorithm of \cite{MiloAABN05}  as part of his Master's thesis.


		\section{Preliminaries} \label{sec:prelim}

For any natural number $n \in \mathbb{N}$, we denote by $[n]$ the set $\{1, \ldots, n\}$.
Where $M$ is a (finite) set, $\Pot(M)$ denotes the powerset of $M$, i.e. the set of all subsets of $M$.
For an alphabet $\Sigma$, we denote the set of finite strings over $\Sigma$ by $\Sigma^*$ and  $\epsilon$ denotes the empty string.

\paragraph*{Nested words}
We use nested words\footnote{Our definition of nested words corresponds to word encodings of well-matched nested words in  \cite{AlurM09}.} as an abstraction of  XML documents \cite{AlurM09}. 
For a finite alphabet $\Sigma$, $\op{\Sigma} \mydef \{\op{a} \mid a \in \Sigma\}$ denotes the set of all \emph{opening $\Sigma$-tags} and $\cl{\Sigma} \mydef \{\cl{a} \mid a \in \Sigma\}$ the set of all \emph{closing $\Sigma$-tags}. The set $\wf \subs (\op{\Sigma} \cup \cl{\Sigma})^*$ of \emph{(well-)nested words over $\Sigma$} is the smallest set such that $\epsilon \in \wf$, and if $u,v \in \wf$ and $a \in \Sigma$, then also $u \op{a}v\cl{a} \in \wf$. We (informally) associate with every nested word $w$ its \emph{canonical forest representation}, such that words  $\op{a}\cl{a}$, $\op{a}v\cl{a}$ and $uv$ correspond to an $a$-labelled leaf, a tree with root $a$ (and subforest corresponding to $v$), and the forest of $u$ followed by the forest of $v$, respectively.  
A nested string $w$ is \emph{rooted}, if its corresponding forest is a tree. In a nested string $w = w_1 \ldots w_n \in \wf$, two tags $w_i \in \op{\Sigma}$ and $w_j \in \cl{\Sigma}$ with $i<j$ are \emph{associated} if the substring $w_i \ldots w_j$ of $w$ is  rooted.
To stress the distinction from nested strings in $\wf$, we refer to strings in $\Sigma^*$ as \emph{flat strings} (over $\Sigma$).

What we describe as opening and closing tags is often referred to as \emph{call symbols} and \emph{return symbols} in the literature on nested words; we avoid these terms to avoid confusion with \Read and \Call moves used in context-free games (see below).

\paragraph*{Context-free games}

A \emph{context-free game on nested words (cfG)} $G=(\Sigma,\Gamma,R,T)$ consists\footnote{Some of the following definitions are taken from \cite{BjorklundSSK13}.}
of a finite alphabet $\Sigma$, a set $\Gamma\subs\Sigma$ of \emph{function symbols}, a \emph{rule set} $R\subs\Gamma\times \wf$ and a \emph{target language} $T\subs\wf$. 
We will only consider the case where $T$ and, for each symbol $a\in \Gamma$, the set $R_a\mydef \{u\mid (a,u)\in R\}$ is a non-empty regular nested word language, to be defined in the next subsection.

A play of $G$ is played by two players, \pone\/ and \ptwo, on a word $w\in \wf$. In a nutshell, \pone moves the focus along $w$ in a left-to-right manner and decides, for every closing tag\footnote{It is easy to see that the winning chances of the game do not change if we allow \pone to play \Call moves at opening tags: if \pone wants to play \Call at an opening tag she can simply play \Read until the focus reaches the corresponding closing tag and play \Call then. On the other hand, if she can win a game by calling a closing tag, she can also win it by calling the corresponding opening tag, thanks to the fact that she has full information.} $\cl{a}$  whether she plays a \emph{Read} or, in case $a\in\Gamma$,  a \emph{Call} move. In the latter case, \ptwo then replaces the rooted word ending at the position of $\cl{a}$ with some word  $v\in R_a$ and the focus is set on the first symbol of $v$.
 In case of a \Read move (or an opening tag) the focus just moves further on. \pone wins a play if the word obtained at its end is in $T$.

Towards a  formal definition, a \emph{configuration} is a tuple $\kappa=(p,u,v)\in\{\ponea,\ptwoa\}\times (\op{\Sigma} \cup \cl{\Sigma})^*\times(\op{\Sigma} \cup \cl{\Sigma})^*$ where $p$ is the player to move, $uv \in \wf$ is the \emph{current word}, and the first symbol of $v$ is \emph{the current position}. A \emph{winning configuration} for \pone is a configuration $\kappa=(\ponea,u,\epsilon)$ with $u\in T$. 
The configuration $\kappa'=(p',u',v')$ is a \emph{successor configuration} of $\kappa=(p,u,v)$ (Notation: $\kappa\to \kappa'$) if one of the following holds:
 \begin{enumerate}[(1)]
 \item $p'=p=\ponea$, $u' = us$, and $sv' = v$ for some $s\in \op{\Sigma} \cup \cl{\Sigma}$ (\pone plays \Read);
  \item $p=\ponea$, $p'=\ptwoa$, $u = u'$, $v=v' = \cl{a} z$ for $z \in (\op{\Sigma} \cup \cl{\Sigma})^*$, $a\in\Gamma$, (\pone plays \Call);
  \item $p=\ptwoa$, $p'=\ponea$, $u=x \op{a} y$, $v = \cl{a} z$ for $x,z \in (\op{\Sigma} \cup \cl{\Sigma})^*$, $y \in \wf$, $u'=x$ and $v'=y'z$ for some $y' \in R_a$ (\ptwo plays $y'$);\footnote{We note that a \Call move on $\cl{a}$ in a substring of the form $\op{a}y\cl{a}$ actually deletes the substring $y$ along with the opening and closing $a$-tags. This is consistent with the AXML intuition of the subtree rooted at a function node getting replaced when the function node is called.}
 \end{enumerate}

The \emph{initial configuration} of game $G$ for string $w$ is $\kappa_0(w)\mydef (\ponea,\epsilon,w)$. 
A \emph{play} of $G$ is either an infinite sequence $\play=\kappa_0,\kappa_1,\ldots$ 
or a finite sequence $\play=\kappa_0,\kappa_1,\ldots,\kappa_k$ of
configurations, where, for each $i>0$, $\kappa_{i-1} \to \kappa_{i}$ and, in the finite case, $\kappa_k$ has no successor configuration. In the latter case, \pone
\emph{wins} the play if $\kappa_k$ is of the form $(\ponea,u,\epsilon)$ with $u\in T$, in all other cases, \ptwo wins.

\paragraph*{Strategies}
A \emph{strategy} for player $p\in\{\ponea,\ptwoa\}$ maps prefixes
$\kappa_0,\kappa_1,\ldots,\kappa_k$ of plays, where
 $\kappa_k$ is a $p$-configuration, to allowed moves.
We denote strategies for \pone by $\Astrat,\Astrat',\Astrat_1,\ldots$ and
strategies for \ptwo by $\Bstrat,\Bstrat',\Bstrat_1,\ldots$. 

A strategy
$\Astrat$ is \emph{memoryless} if, for every prefix
$\kappa_0,\kappa_1,\ldots,\kappa_k$ of a play, the selected move $\Astrat(\kappa_0,\kappa_1,\ldots,\kappa_k)$ only
depends on $\kappa_k$. As context-free games are reachability games we only need to consider memoryless games; see, e.g., \cite{GraedelTW02}.
\begin{proposition}
  Let $G$ be a context-free game, and $w$ a string. Then either \pone or \ptwo has a winning strategy on $w$, which is actually memoryless.
\end{proposition}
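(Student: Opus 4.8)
The plan is to reduce this to the standard determinacy result for reachability games on (possibly infinite) game graphs, as found in \cite{GraedelTW02}. First I would observe that a context-free game $G$ on a nested word $w$ induces a game graph $\Ge_{G,w}$ whose vertices are exactly the configurations reachable from $\kappa_0(w)$, with edges given by the successor relation $\to$. Vertices are partitioned into \pone-positions (those with $p = \ponea$) and \ptwo-positions (those with $p = \ptwoa$), and the target set is the set of winning configurations for \pone, i.e.\ configurations of the form $(\ponea, u, \epsilon)$ with $u \in T$. The winning condition is then a reachability condition: \pone wins a (finite or infinite) play iff it eventually visits a winning configuration and gets stuck there — equivalently, iff the play is finite and ends in such a configuration.

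The one subtlety is that this is not literally a plain reachability game, because \pone's goal is not just ``reach a target vertex'' but ``reach a target vertex from which there is no successor''. Since a winning configuration $(\ponea, u, \epsilon)$ with $u \in T$ has no successor (the second component of $v$ is empty, so neither a \Read nor a \Call move applies), these vertices are dead ends, and ``reach such a vertex'' and ``reach such a vertex and be stuck'' coincide. Conversely, any other dead end — a configuration $(\ponea, u, \epsilon)$ with $u \notin T$, or the rare case where \ptwo has no legal replacement because some $R_a$ is empty (excluded here since all $R_a$ are nonempty) — is losing for \pone. So the step I would take is: add a fresh sink vertex $\bot$ owned by \ptwo, redirect every non-target dead end to $\bot$ with a self-loop at $\bot$, and designate the target dead ends as the reachability objective. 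Then $G$ on $w$ is won by \pone iff \pone wins the induced reachability game, and iff \ptwo wins it otherwise — this is the standard positional determinacy theorem for reachability games on arbitrary (even infinite) graphs, which also yields memoryless (positional) winning strategies. Finally I would translate the positional strategy on $\Ge_{G,w}$ back into a memoryless strategy for the cfG in the sense defined above (a map from play prefixes depending only on the last configuration), which is immediate since graph positions are exactly configurations.

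The main obstacle — really the only thing requiring care — is the bookkeeping around finite plays and dead ends: making sure that the reachability-game reformulation faithfully captures that \pone loses on non-target dead ends and on infinite plays, and that no configuration reachable in $G$ is a ``\ptwo is stuck'' dead end (which holds because every $R_a$ is nonempty, so whenever \ptwo must move there is always a legal move). Everything else is routine: the successor relation is exactly the edge relation, player ownership is exactly the first component $p$, and positional determinacy of reachability games is a cited black box.
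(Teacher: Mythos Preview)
Your proposal is correct and takes essentially the same approach as the paper: the paper does not give a detailed proof but simply observes that context-free games are reachability games and cites \cite{GraedelTW02} for positional determinacy. You have spelled out the reduction explicitly (the dead-end bookkeeping and the nonemptiness of each $R_a$ ensuring \ptwo is never stuck), which is exactly the content the paper leaves implicit.
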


Therefore, in the following, strategies $\Astrat$ for \pone map configurations $\kappa$ to moves $\Astrat(\kappa)\in\{\Call,\Read\}$ and strategies $\Bstrat$ for \ptwo map configurations $\kappa$ to moves $\Bstrat(\kappa)\in \wf$. 

For configurations $\kappa,\kappa'$ and strategies $\Astrat,\Bstrat$  we write $\kappa\movest
\kappa'$ if $\kappa'$ is the unique successor configuration of $\kappa$ determined by strategies $\Astrat$ and $\Bstrat$. Given an initial word $w$ and
strategies  $\Astrat,\Bstrat$ the play\footnote{As the underlying game $G$
will always be clear from the context, our notation does not
mention $G$ explicitly.}
 $\play(\Astrat,\Bstrat,w)\mydef \kappa_0(w)\movest
\kappa_1\movest 
\cdots$ is uniquely determined. If $\play(\Astrat,\Bstrat,w)$ is finite, we denote the word represented by its final configuration by $\word(w,\Astrat,\Bstrat)$.

A strategy $\Astrat$ for \pone is \emph{finite} on string $w$ if the play $\play(\Astrat,\Bstrat,w)$ is 
finite for every strategy $\Bstrat$ of \ptwo. It is a \emph{winning strategy} on
$w$ if \pone wins the play $\play(\Astrat,\Bstrat,w)$, for every $\Bstrat$ of \ptwo. A strategy $\Bstrat$ 
for \ptwo is a \emph{winning strategy} for 
$w$ if \ptwo wins $\play(\Astrat,\Bstrat,w)$, for every strategy $\Astrat$ of
\pone. We only consider finite strategies for \pone, due to \pones  winning condition.
We denote the set of all finite strategies for \pone in the game $G$ by $\strata(G)$, and the set of all strategies for \ptwo by $\stratb(G)$.

The \emph{\Call depth} of a play $\play$ is the maximum nesting depth of \Call moves in $\play$, 
if this maximum exists. That is, the \Call depth of a play is zero, if no \Call is played at all, and one, if no \Call is played inside a string yielded by a replacement move.
For a  strategy $\Astrat$ of \pone and a string $w\in\wf$, the \emph{\Call depth} $\Depth(\Astrat,w)$ of $\Astrat$ on $w$ is the maximum \Call depth in any play $\play(\Astrat,\Bstrat,w)$. A strategy $\Astrat$ has \emph{$k$-bounded \Call depth} if $\Depth(\Astrat, w) \leq k$ for all $w \in \wf$. We denote by $\strata^k(G)$ the set of all strategies with $k$-bounded \Call depth for \pone on $G$. As a more intuitive formulation, we use the concept of \emph{replay}, which is defined as \Call depth (if it exists) minus one: Strategies for \pone  of \Call depth one are called \emph{replay-free}, and strategies of $k$-bounded \Call depth, for any $k$, have \emph{bounded replay}. For technical reasons, we need to use \Call depth for some formal proofs and definitions, but we will stick with the more intuitive concept of replay wherever possible.

By $\safelr(G)$ we denote the set of all words for which \pone has a
winning strategy in $\strata(G)$ (likewise for $\safelr[k](G)$ and $\strata^k(G)$).

\paragraph*{Nested word automata}
A \emph{nested word automaton (NWA)} $A = (Q, \Sigma, \delta, q_0, F)$ \cite{AlurM09} is basically a pushdown automaton which performs a push operation on every opening tag and a pop operation on every closing tag, and in which the pushdown symbols are just states.
More formally, $A$
consists of a set $Q$ of \emph{states}, an alphabet $\Sigma$, a \emph{transition function} $\delta$, an \emph{initial state} $q_0 \in Q$ and a set  $F \subs Q$ of \emph{accepting states}.
The function $\delta$ is the union of a function $(Q \times \op{\Sigma}) \rightarrow \Pot(Q \times Q)$ and a function $(Q \times Q \times \cl{\Sigma}) \rightarrow \Pot(Q)$.

A \emph{configuration} $\kappa$ of $A$ is a tuple $(q, \alpha) \in Q \times Q^*$, with a  \emph{linear state} $q$ and a sequence $\alpha$ of \emph{hierarchical states}, reflecting the pushdown store. A \emph{run of $A$ on $w=w_1 \ldots w_n \in \wf$} is a sequence $\kappa_0, \ldots, \kappa_n$ of configurations $\kappa_i = (q_i, \alpha_i)$ of $A$ such that for each $i \in [n]$ and $a\in\Sigma$ it holds that
\begin{itemize}
	\item if $w_i = \op{a}$, $ (q_i, p) \in \delta(q_{i-1}, \op{a})$ (for some $p \in Q$),  and $\alpha_i = p \alpha_{i-1}$, or
	\item if $w_i = \cl{a}$, $q_i \in \delta(q_{i-1}, p, \cl{a})$ (for some $p \in Q$), and $p \alpha_i = \alpha_{i-1}$. 
\end{itemize}
	In this case, we also write $\kappa_0 \reach[w]_{A} \kappa_n$.
We say that $A$ \emph{accepts} $w$ if $(q_0, \epsilon) \reach[w]_A (q', \epsilon)$ for some $q' \in F$. The language $L(A) \subs \wf$ is defined as the set of all strings accepted by $A$ and is called a \emph{regular language} (of nested words).

An NWA is \emph{deterministic} (or DNWA) if $|\delta(q, \op{a})| = 1 = |\delta(q,p,\cl{a})|$ for all $p,q \in Q$ and $a \in \Sigma$. In this case, we simply write $\delta(q, \op{a}) = (q', p')$ instead of $\delta(q, \op{a}) = \{(q', p')\}$ (and accordingly for $\delta(q,p,\cl{a})$), and $\delta^*(p, w) = q$ if $q$ is the unique state, for which $(p, \epsilon) \reach[w]_{A} (q, \epsilon)$.

An NWA is in \emph{normal form} if every transition function $\delta(p,\op{a})$ only uses pairs of the form $(q,p)$.
Informally, when $A$ reads an opening tag it  always pushes its current state (before the opening tag) and therefore can see this state when it reads the corresponding closing tag.
As in this case the hierarchical state is just the origin state $p$ of the transition, we write  $\delta(p,\op{a})=q$ as an abbreviation of  $\delta(p,\op{a})=(q,p)$, for DNWAs in normal form.

\begin{lemma}\label{lem:normal}
 There is a polynomial-time algorithm that computes for every deterministic NWA an equivalent deterministic NWA in normal form. \end{lemma}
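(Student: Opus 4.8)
The plan is to transform an arbitrary deterministic NWA $A = (Q,\Sigma,\delta,q_0,F)$ into an equivalent DNWA $A'$ in normal form by refining the state space so that the current state is always pushed on an opening tag. The natural construction is to let $A'$ remember, at each position, a pair consisting of the state $A$ would be in and the state $A$ was in at the time of the most recent still-open opening tag (or $q_0$ if we are at the top level). Concretely, I would take $A' = (Q', \Sigma, \delta', q_0', F')$ with $Q' \subseteq Q \times Q$, where the second component of a state records the linear state of $A$ immediately before the currently innermost unmatched opening tag. Then on an opening tag $\op{a}$ from a state $(q,r)$, $A'$ moves to the linear state $(\delta(q,\op{a})_1, q)$ — the new ``remembered'' component being the current linear state $q$ of $A$ — and pushes exactly this pair, i.e.\ the transition is of the required form $(q',p)$ with $p = q'$. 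On a closing tag $\cl{a}$ from linear state $(q,r)$ with popped hierarchical state $(q'', r'') = (q, r)$ (which by construction equals the linear state), $A'$ must recover the hierarchical state $p$ that $A$ used in $\delta(q_{\text{lin}}, p, \cl{a})$; but $p$ is precisely $r$, the second component, which is available. So $A'$ moves to $(\delta(q, r, \cl{a}), r'')$ where $r''$ is the second component of the popped state.

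The key steps, in order, are: (1) define $Q'$, $q_0' = (q_0, q_0)$, $F' = F \times Q$, and $\delta'$ as sketched above, checking that the opening-tag transitions only ever push pairs $(q,p)$ with the pushed symbol equal to the target state, which is exactly the normal-form condition; (2) prove by induction on the length of a run — or, cleaner, on the structure of a well-nested word — that $(q_0,\epsilon) \reach[w]_A (q,\epsilon)$ if and only if $(q_0',\epsilon) \reach[w]_{A'} ((q,q_0),\epsilon)$, with a strengthened inductive hypothesis handling runs on rooted subwords starting from an arbitrary state $(q,r)$ and tracking how the hierarchical stack of $A'$ mirrors that of $A$ augmented with the remembered states; (3) conclude $L(A') = L(A)$ from the base case of the induction with the accepting condition; (4) observe that $|Q'| \le |Q|^2$ and $\delta'$ is computable by a direct table lookup from $\delta$, so the whole construction runs in polynomial time. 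The determinism of $A'$ is immediate from that of $A$ since each new transition is defined by a single application of $\delta$.

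The main obstacle I anticipate is getting the bookkeeping in the inductive invariant exactly right: one has to be careful that when $A$ makes a closing-tag transition it may use a hierarchical state $p$ that is \emph{not} equal to the linear state it had just before the matching opening tag (in a general DNWA $\delta(q,\op{a}) = (q',p')$ can have $p' \ne q'$ and $p' \ne q$), so the invariant must explicitly say that $A'$'s second component equals the linear state of $A$ before the innermost open tag, while $A$'s own hierarchical stack stores the (possibly different) $p'$ values — and normal form is precisely the statement that we have reorganized things so $A'$ no longer needs those $p'$ values, recomputing the closing transition from the remembered linear state instead. Phrasing the inductive claim so that it simultaneously covers the ``inside a subtree'' case and correctly relates the two stacks is the delicate part; once that invariant is stated correctly, the induction itself is routine.

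Note also that the normal form as defined in the excerpt requires every $\delta'(p,\op{a})$ to use pairs of the form $(q,p)$, i.e.\ pushing the \emph{origin} state $p$; if one instead reads the definition as pushing the target state, a trivial relabelling fixes it, and in either reading the construction above adapts with no change in the complexity bound — I would just make sure to match whichever convention the paper intends (pushing $p$) by having $A'$ push its pre-transition state, which the construction already does.
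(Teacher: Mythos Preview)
Your approach would work once the details are straightened out, but it is considerably more complicated than necessary, and the paper's proof exploits a simpler observation that you miss.

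The paper's construction keeps the \emph{same} state set $Q$: on an opening tag from state $p$ it goes to $\delta_1(p,\op{a})$ and pushes $p$ (the origin state, as normal form requires); on a closing tag $\cl{a}$ from linear state $q$ with popped state $p$, it outputs $\delta(q,\delta_2(p,\op{a}),\cl{a})$. The point is that determinism lets you \emph{recompute} the hierarchical state $\delta_2(p,\op{a})$ that $A$ would have pushed, from just $p$ and the tag label $a$---both of which are available at the closing tag. No state-space blowup is needed. Your $Q\times Q$ construction carries the ``state before the innermost open tag'' redundantly in the linear state, when normal form already hands it to you on the stack.

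There is also a concrete bug in what you wrote: you set the new linear state on $\cl{a}$ to $(\delta(q,r,\cl{a}),r'')$, using $r$ directly as $A$'s hierarchical state. But $r$ is the linear state of $A$ \emph{before} the matching opening tag, not the hierarchical state $A$ actually pushed (which is $\delta_2(r,\op{a})$). You flag exactly this distinction in your ``obstacles'' paragraph, so you clearly see the issue, but the formula you gave does not reflect it. Once you replace $r$ by $\delta_2(r,\op{a})$ there, your construction becomes correct---and at that point you can also see that the second component of your linear state is doing no work that the popped stack symbol doesn't already do, which collapses your construction to the paper's.
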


\paragraph*{Algorithmic Problems}

In this paper, we study the following algorithmic problem $\Safelr(\calG)$ for various
classes $\calG$ of context-free games.

\begin{centering}
  \algproblem{$\Safelr(\calG)$}{A context-free game $G\in\calG$ and a string
    $w$.}{Is $w \in \safelr(G)$?}\\
\end{centering}

A class $\calG$ of context-free games in $\Safelr(\calG)$ comes with three parameters:
\begin{itemize}
\item the representation of the target language $T$,
\item the representation of the replacement languages $R_a$, and
\item to which extent replay is restricted.
\end{itemize}
It is a fair assumption that the representations of the target language and the replacement languages are of the same kind, but we will always discuss the impact of the replacement language representations separately. 
In our most general setting, investigated in Section \ref{sec:general}, target languages are represented by  deterministic nested word automata, and replacement languages by (not necessarily deterministic) nested word automata. We do not consider the representation of target languages by non-deterministic NWAs, as (1) already for DNWAs the complexity is very high in general, and (2) we can show that even in the replay-free case the complexity would become \EXPTIME-complete. We usually denote the automata representing the target and replacement languages by $A(T)$ and $A(R_a)$, respectively.
 
In Section \ref{sec:simple} we study the cases where $T$ is given as an XML Schema or a DTD. 
In each setting, we consider the cases of unrestricted replay, bounded replay (\Call depth $k$, for some $k$), and no  replay (\Call depth $1$). We note that replay depth is formally not an actual game parameter, but the algorithmic problem can be restricted to strategies of \pone of the stated kind.

If the class $\calG$  of games is clear from the context, we often simply write $\Safelr$ instead of $\Safelr(\calG)$.

We denote by $|R|$ the combined size of all $A(R_a)$, $a \in \Gamma$, and by $|G|$ the size of (a sensible representation of) $G$, i.e. $|G| = |\Sigma| + |R| + |A(T)|$.
		
		\section{Games with regular target languages} \label{sec:general}

We first consider our most general case, where target languages are
given by DNWAs, replacement languages by NWAs and replay is
unrestricted, because the algorithm that we develop for this case can
be adapted (and sped up) for many of the more restricted
cases. It is important to note that our results do not
  \emph{rely} on the presentation of schemas as nested word
  automata. In fact, in Section \ref{sec:simple}, we will assume that
  the target schema is given as an XML Schema or a DTD. However, for our algorithms
nested word automata are handy  to represent
(linearisations of) regular tree languages and therefore in \emph{this} section
target languages are represented by NWAs. We emphasize that deterministic bottom-up tree automata can be translated into deterministic NWAs in polynomial time \cite{AlurM09}.

This generic algorithm works in two main stages for a given cfG $G$ and
word $w$. It first analyses the game $G$ and aggregates all necessary
information in a so-called \emph{call effect} $C$. Then it uses $C$ to
decide whether \pone has a winning strategy in the game $G$ on $w$. 

The call effect $C$ only depends on $G$ and contains,
for every function symbol $f$ and every state $q$ of the $A(T)$, all possible effects of the subgame starting with a \Call move of \pone on some symbol $\cl{f}$ on the target language $T$, under the assumption that the sub-computation of $A(T)$ on the word yielded by the game from $\cl{f}$ starts in state $q$. More precisely, it summarises which sets $S$ of states \pone can enforce by some strategy $\sigma$,
where each $S$ is a set of states of $A(T)$ that \ptwo might enforce with a
counter strategy against $\sigma$. 

The first stage of the algorithm consists of an inductive computation in which
successive approximations $C^1,C^2,\ldots$ of $C$ are computed, where
$C^i$ is the restriction of $C$ to strategies of \pone of \Call depth
$i$. The size of call effects and the number of iterations are at most exponential in $|G|$. However, the
first stage can not be performed in exponential time as a single
iteration might take doubly exponential time in $|G|$. It turns out
through our corresponding lower bound that single iterations can not
be done faster.

At the end of the first stage, the algorithm computes an alternating
NWA $A_G$ (of exponential size) from $C$ that decides the set $\safelr(G)$.
In the second stage, $A_G$ is evaluated on $w$, taking at most polynomial
space in $|A_G|$ and $|w|$.

A restriction of games to bounded replay does not improve the general
complexity of the problem, as this is dominated by the doubly
exponential effort of a single iteration. However, for replay-free
games, no iterations are needed, the initial call effect $C^1$ is of
polynomial size and can easily be computed and therefore, in this
case, the overall complexity is dominated by the second stage,
yielding a polynomial-space algorithm. 

Altogether we prove the following theorem in this section.

\begin{theorem}\label{theo:general}\mbox{}
For the class of unrestricted games $\Safelr(\calG)$ is
  \begin{enumerate}[(a)]
  \item $\iiEXPTIME$-complete with
    unbounded replay,
  \item $\iiEXPTIME$-complete with 
    bounded replay, and
  \item $\PSPACE$-complete without replay.
  \end{enumerate}
\end{theorem}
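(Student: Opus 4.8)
The plan is to prove each of the three complexity bounds through matching upper and lower bounds, following the two-stage algorithmic architecture sketched in the surrounding text.

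\emph{Upper bounds.} For parts (a) and (b), I would first formalise the notion of \emph{call effect} $C$: for each function symbol $f \in \Gamma$ and each state $q$ of $A(T)$, $C$ records the set of sets $S$ of states of $A(T)$ such that \pone has a strategy (respecting the replay bound, if any) guaranteeing that the word produced from the subgame started by calling $\cl f$ drives $A(T)$ from $q$ into some state in $S$, while \ptwo can counter to reach exactly the states in $S$. The key observation is that $C$ has size at most exponential in $|G|$ (it is a subset of $\Gamma \times Q \times \Pot(\Pot(Q))$), and that it can be computed by the monotone fixpoint iteration $C^1 \subseteq C^2 \subseteq \cdots$, where $C^{i}$ restricts \pone to \Call depth $i$; the chain stabilises after at most exponentially many steps since each $C^i$ is a subset of a set of exponential size. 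A single iteration $C^{i} \mapsto C^{i+1}$ amounts to solving, for each $(f,q)$, a finite reachability game on a graph whose nodes are ``positions with a set of possible $A(T)$-states attached'', and whose size is doubly exponential in $|G|$ (because the game state tracks a subset of $Q$ at the current position together with the automaton states threaded through the already-fixed context); solving such a game takes time polynomial in its size, hence doubly exponential in $|G|$. Multiplying by the exponentially many iterations still gives \iiEXPTIME. In the bounded-replay case the number of iterations is the constant $k$, but a single iteration is still doubly exponential, so the bound is unchanged. Once $C$ is available, I would build an alternating NWA $A_G$ of exponential size recognising $\safelr(G)$ (it simulates $A(T)$ on the current word, branching existentially over \pones \Read/\Call decisions and universally over \ptwos replacements, consulting $C$ whenever \pone calls), and then invoke the side result mentioned in the introduction — that the word problem for alternating NWAs is in \PSPACE\ — to decide $w \in L(A_G)$ in space polynomial in $|A_G| + |w|$, i.e. exponential space, which is absorbed by \iiEXPTIME. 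For part (c), the replay-free case, no iteration is needed: $C = C^1$, and since \Call depth $1$ means no calls occur inside replacement words, $C^1$ is computable in polynomial time (for each $(f,q)$ we solve a polynomial-size reachability game on the NWAs $A(R_f)$ and $A(T)$ directly). The alternating automaton $A_G$ is then of polynomial size, and the \PSPACE\ word-problem algorithm for alternating NWAs gives the \PSPACE\ upper bound.

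\emph{Lower bounds.} For (a) and (b) it suffices to give an \iiEXPTIME-hardness reduction using \emph{replay-free} games would be too weak — instead I would reduce from an \iiEXPTIME-complete tiling problem, namely \corridor, a two-player corridor-tiling game over a corridor of doubly-exponential width. The two players of the tiling game are simulated by \pone and \ptwo: the doubly-exponential corridor width is encoded by forcing the players to spell out exponentially long row descriptions whose consistency (matching tiles horizontally and vertically across successive rows) is checked by the target DNWA $A(T)$, while the replacement languages $R_a$ let \pone demand the next tile choices and \ptwo supply the adversarial ones; a single level of \Call nesting already suffices to drive the construction, so the same reduction establishes hardness for bounded replay (indeed for \Call depth a small constant). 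For (c), \PSPACE-hardness follows by a reduction from \QBF\ (or from a linear-space alternating Turing machine): a replay-free game on a word of polynomial length can simulate the alternation of quantifiers, with \pone choosing existential bits, \ptwo choosing universal bits, and $A(T)$ — a DNWA of polynomial size — verifying the quantifier-free matrix on the final word; no replay is needed since each variable is touched once. Membership of the lower-bound target problems in the corresponding classes (\corridor $\in$ \iiEXPTIME, \QBF $\in$ \PSPACE) closes the completeness claims.

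\emph{Main obstacle.} The delicate part is the analysis of a single fixpoint iteration and the proof that it genuinely requires doubly exponential time — i.e. that the ``doubly exponential per iteration'' is not an artefact of a wasteful algorithm but is forced. Establishing this requires the matching lower bound for the iteration step itself (one has to show that determining $C^{i+1}$ from $C^i$ encodes an \iiEXPTIME-hard problem even when $|G|$ is the parameter), and this is where the bulk of the technical work lies; it also drives the design of the \corridor reduction, since the corridor width must match exactly the doubly-exponential blow-up hidden in the subset-of-$Q$-at-each-position game graph. A secondary subtlety is proving correctness of $A_G$: one must verify that threading $A(T)$-states through \Call moves via $C$ faithfully captures \pones winning condition, which relies crucially on the memorylessness proposition (every configuration has a well-defined ``best outcome'' independent of history) and on the full-information nature of the game.
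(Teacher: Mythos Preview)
Your upper-bound architecture matches the paper's: compute call effects by a fixpoint, build an alternating NWA $A_G$ from the call effect, and decide membership of $w$ in $L(A_G)$. Two points are underspecified, though. First, you describe a single iteration $C^i \mapsto C^{i+1}$ as ``solving a finite reachability game on a doubly-exponential graph'', but you do not say how the \emph{infinite} replacement languages $R_f$ (given by NWAs) fit into a finite game graph. The paper handles this by recasting the iteration as an \emph{ANWA non-emptiness} test: one builds an ANWA $A$ from $C^i$ with initial state $q$ and accepting set $X$, and then checks whether $R_f \subseteq L(A)$, equivalently whether $R_f \cap \overline{L(A)} = \emptyset$; the $\iiEXPTIME$ cost comes from ANWA non-emptiness (Proposition~\ref{prop:anwacomplexity}(a), via \cite{Bozzelli07}), not from an explicit doubly-exponential game arena. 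Second, the paper needs an additional argument (Proposition~\ref{prop:finitedepth}) that the fixpoint $\cneffect{*}$ really equals the unbounded-replay call effect $\ceffect$; this is not automatic, and the paper proves it via a K\H{o}nig-style compactness argument over a game with suitably chosen \emph{finite} replacement languages.

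Your lower bounds take a genuinely different route. The paper does \emph{not} reduce directly from \corridor or from \QBF to $\Safelr$. Instead it first establishes the complexity of ANWAs as a standalone result --- non-emptiness is $\iiEXPTIME$-complete and membership is $\PSPACE$-complete (Proposition~\ref{prop:anwacomplexity}) --- and then gives a single polynomial-time transformation (Lemma~\ref{lemma:ANWAtocfG}) turning an ANWA $A$ and word $w$ into a replay-free cfG $G$ and word $w'$ such that $w \in L(A)$ iff \ptwo wins on $w'$. This one construction yields both lower bounds: the $\PSPACE$ bound for (c) directly from ANWA membership, and the $\iiEXPTIME$ bound for (a)/(b) by adding one extra function symbol $s$ whose replacement lets \ptwo first produce an encoding of an arbitrary $w$ (so \Call depth~$2$ suffices). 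Your direct tiling reduction is plausible in principle, but your sketch does not explain the crucial mechanism --- how a polynomial-size target DNWA verifies vertical-constraint consistency between rows of \emph{doubly} exponential length; that is exactly the hard part, and the paper sidesteps it by inheriting hardness from the ANWA results. The paper's approach also buys the ANWA membership $\PSPACE$-completeness as a side result of independent interest.
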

The rest of this section gives a proof sketch for Theorem \ref{theo:general}.

Before we describe the generic algorithm in more detail, we discuss the
very natural and more direct approach by alternating
algorithms, in which a strategy for \pone is nondeterministically
guessed and the possible moves of \ptwo are taken care of by universal
branching. In our setting of context-free games, there are the following obstacles to this approach: (1) \ptwo
can, in general, choose from an infinite number of (and thus arbitrarily long) strings
in $R_a$, for the current $a$, and (2) it is not a priori clear that
such algorithms terminate on all branches.  Whereas the latter
obstacle is not too serious (if \pone has a winning strategy,
termination on all branches is guaranteed), the former
requires a more refined approach. We basically deal with it in two
ways: in some cases it is possible to show that it does not help \ptwo
to choose strings of length beyond some bound; in the remaining cases
(in particular in those cases considered in this section), the algorithms use abstracted moves instead of the actual replacement moves of the
game. The two stages that were sketched above, then come very
naturally: first, the abstraction has to be computed, then it can be
used for the actual alternating computation.

Our abstraction from actual cfGs is based on the simple observation that instead of knowing the final word $\word(w,\Astrat,\Bstrat)$ that is reached in a play $\play(\Astrat,\Bstrat,w)$, it suffices to  know whether $\delta^*(q_0,\word(w,\Astrat,\Bstrat))\in F$ to tell the winner. 
If we fix a strategy $\Astrat$ of \pone in a game on $w$, the possible
outcomes of the game (for the different strategies of \ptwo) can thus
be summarised by  $\states(q_0,w,\Astrat) \mydef
\{\delta^*(q_0,\word(w,\Astrat,\Bstrat))\mid \Bstrat \in
\stratb(G)\}$.

To this end, it will be particularly useful to study the (abstractions of) possible outcomes of
\emph{subgames} that start from a \Call move on some tag $\cl{a}$ until the
focus moves to the symbol after $\cl{a}$. 
\begin{definition}\label{def:calleffects}
For a cfG $G=(\Sigma,\Gamma,R,T)$ with a deterministic  target NWA $A(T) = (Q, \Sigma, \delta, q_0, F)$, the \emph{call effect} $\ceffect:\Gamma\times Q\to \Pot(\Pot(Q))$ is defined, for every $a\in\Gamma$, $q\in Q$, by
  \[
  \ceffect(a,q) \mydef \smin{\{\states(q, \op{a}\cl{a},\Astrat) \mid \Astrat
    \in \stratac(G)\}},
  \]
where $\stratac(G)$ contains all strategies of \pone that start
by playing \Read on $\op{a}$ and \Call on $\cl{a}$, and the operator $\smin{\cdot}$ removes all non-minimal sets from
  a set of sets.  
\end{definition}

We next describe how  to compute $\ceffect$ from a given cfG $G$.   As already mentioned, our algorithm follows a 
fixpoint-based approach. It computes inductively, for $k=1,2,\ldots$
the call effect of the restricted game of maximum \Call depth $k$. We
show that the fixpoint reached by this process is the actual call
effect $\ceffect$.

To this end, let, for every cfG $G$, $a\in\Sigma$, $q\in Q$,  and $k\ge 1$,
\[
\cneffect{k}(a,q) \mydef \smin{\{\states(q, \op{a}\cl{a},\Astrat) \mid \Astrat
    \in \stratac^k(G)\}}.
\]
As an important special case, the call effect of replay-free games ---  the basis for the inductive computation --- consists of only
one set.
\begin{lemma} \label{lemma:rfcalls} 
For every $q \in Q$ and $a\in \Sigma$, it holds that
\[
\cneffect{1}(a,q) =
  \{ \{\delta^*(q, v) \mid v \in R_a \} \}.
\]
 In particular, $\cneffect{1}$ can be computed from $G$ in polynomial time.
\end{lemma}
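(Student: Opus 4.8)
The plan is to unwind the definitions of $\cneffect{1}$ and $\states$, observe that in a replay-free play the only freedom \ptwo has is the single replacement at the called closing tag, and then verify that the state set $\{\delta^*(q,v)\mid v\in R_a\}$ is both achievable and the unique minimal achievable set, so that applying $\smin{\cdot}$ leaves exactly this one set.

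First I would fix $a\in\Gamma$ and $q\in Q$ (the case $a\in\Sigma\setminus\Gamma$ is vacuous, since then no \Call move on $\cl a$ is allowed, but as stated we only care about $a\in\Gamma$ here; I would note this). Consider any $\Astrat\in\stratac^1(G)$: by definition of $\stratac$ it plays \Read on $\op a$ and \Call on $\cl a$, and by the $1$-bounded \Call depth constraint it never plays another \Call inside the word yielded by \ptwos replacement. Hence, for an arbitrary $\Bstrat$ of \ptwo, the play $\play(\Astrat,\Bstrat,\op a\cl a)$ proceeds as follows: \pone reads $\op a$; \pone calls $\cl a$ (recall move~(3): this replaces the rooted subword $\op a\cl a$ by some $y'\in R_a$, with the empty forest $y$ deleted, so the current word becomes exactly $y'$); then \pone must play \Read moves all the way through $y'$ (she cannot \Call again by the depth bound), and the play ends with current word $y'$ and \pone to move. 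Therefore $\word(\op a\cl a,\Astrat,\Bstrat)=y'$ where $y'=\Bstrat(\cdot)\in R_a$ is \ptwos chosen replacement, and $\delta^*(q,\word(\op a\cl a,\Astrat,\Bstrat))=\delta^*(q,y')$. Conversely, for every $v\in R_a$ there is a \ptwo-strategy realising $y'=v$. Consequently $\states(q,\op a\cl a,\Astrat)=\{\delta^*(q,v)\mid v\in R_a\}$ for \emph{every} $\Astrat\in\stratac^1(G)$.

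Thus the set $\{\states(q,\op a\cl a,\Astrat)\mid \Astrat\in\stratac^1(G)\}$ is the singleton $\{\,\{\delta^*(q,v)\mid v\in R_a\}\,\}$ (note $\stratac^1(G)$ is non-empty: the strategy that reads everything and calls exactly at $\cl a$ is in it, and $R_a\neq\emptyset$ by assumption). Applying $\smin{\cdot}$ to a singleton leaves it unchanged, giving $\cneffect{1}(a,q)=\{\{\delta^*(q,v)\mid v\in R_a\}\}$ as claimed.

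For the complexity claim, I would observe that $\{\delta^*(q,v)\mid v\in R_a\}$ is computable in polynomial time: take $A(R_a)$ and the target DNWA $A(T)$, build a product construction that runs $A(R_a)$ for membership together with $A(T)$ tracked from start state $q$, and compute the set of reachable $A(T)$-states at accepting $A(R_a)$-configurations (this is essentially the image of a regular language under the deterministic transition function $\delta^*$, a standard polynomial-time reachability computation on a product of size $|A(R_a)|\cdot|Q|$). Doing this for all $a\in\Gamma$ and all $q\in Q$ is still polynomial in $|G|$. The only genuinely delicate point in the whole argument is the careful bookkeeping in move~(3) — that calling $\cl a$ in $\op a\cl a$ deletes the (empty) inner forest and replaces the whole rooted block by $y'$, so that the word after the call is precisely $y'$ with no leftover tags — and making sure the $1$-bounded \Call depth condition genuinely forbids any further \Call inside $y'$; once these are pinned down, the rest is routine.
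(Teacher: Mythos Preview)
Your proposal is correct and follows exactly the approach the paper takes: the paper's own proof is the single sentence ``This just follows from the definitions, as \ptwo can choose any string from $R_a$,'' and you have carefully unwound precisely those definitions. Your observation about $a\in\Sigma\setminus\Gamma$ being vacuous is apt (the call effect is only defined on $\Gamma\times Q$), and your polynomial-time argument via a product of $A(R_a)$ with $A(T)$ started at $q$ is the natural way to substantiate the paper's unproven complexity claim.
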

This just follows from the definitions, as \ptwo can choose
any string from $R_a$.

We next describe how each $\cneffect[G]{k+1}$ can be computed from $\cneffect[G]{k}$. The algorithm uses alternating nested word automata (ANWAs) which we will now define. 

An \emph{alternating nested word automaton (ANWA)} $A=(Q, \Sigma,
\delta, q_0, F)$ is defined like an NWA, except that the two parts of
$\delta$ map $(Q \times \op{\Sigma})$ into $\posbool(Q \times Q)$ and
$(Q \times Q \times \cl{\Sigma})$ into  $\posbool(Q)$, respectively, where $\posbool(Q)$  denotes the set of all positive
boolean combinations over elements of $Q$ using the binary operators
$\land$ and $\lor$ (and likewise for $\posbool(Q \times
Q)$).

The semantics of ANWA is defined via \emph{runs}, which require the notion of \emph{tree domains}. A tree domain is a prefix-closed language $D \subs \mathbb{N}^*$ of words over $\mathbb{N}$ such that, if $wk \in D$ for some $w \in D, k \in \mathbb{N}$, then also $wj \in D$ for all $j < k$. Strings in a tree domain are interpreted as node addresses for ordered trees in the standard way: $\epsilon$ addresses the root, and if $w \in D$ addresses some node $v$ with $k$ children, then $w1, \ldots, wk \in D$ address those children.

For any function $\lambda: D\to (Q \cup (Q\times Q))$ and node address $x \in D$, we denote by $\linstate{\lambda(x)}$ the linear state component of $\lambda(x)$, i.e. if $\lambda(x) = q$ or $\lambda(x) = (q,p)$ for some $p,q \in Q$, then $\linstate{\lambda(x)} = q$.

A \emph{run} $r=(D,\lambda)$ of an ANWA $A$ over a nested word $w =w_1 \ldots w_n$ is a finite tree of depth $n$, represented by a tree domain
$D$ and a labelling function $\lambda: D\to (Q \cup (Q\times Q))$ such that $\lambda(\epsilon)=q_0$ and, for every $x\in D$ of length $i$ with $\ell$ children, it holds that
\begin{itemize}
\item if $w_{i+1} \in \op{\Sigma}$, then $\{\lambda(x\cdot 1),\ldots,\lambda(x\cdot\ell)\} \models \delta(\linstate{\lambda(x)},w_{i+1})$, and
\item if $w_{i+1} \in \cl{\Sigma}$ with associated opening tag $w_j$, 
 and $\lambda(y) = (q,p)$ for some $p,q\in Q$ (where $y$ is the prefix of $x$ of length $j$), then $\{\lambda(x\cdot 1),\ldots,\lambda(x\cdot\ell)\} \models \delta(\linstate{\lambda(x)},p,w_{i+1})$.
\end{itemize}
An  ANWA $A$ \emph{accepts}  a nested word $w$ if there is a run
$(D,\lambda)$ over $w$ such that $\lambda(x)\in F$, for every $x\in D$
of length $|w|$.

ANWAs are used twice in the generic algorithm, first, to inductively compute
$\cneffect[G]{k+1}$ from $\cneffect[G]{k}$, second to actually decide  $\safelr(G)$, given $\ceffect$.
The following proposition will be crucial, in both cases.

\begin{proposition}\label{prop:calltoanwa}\mbox{ }
 There is an algorithm that computes from the call effect
    $\ceffect$ of a game $G$ in polynomial time in $|\ceffect|$ and $|G|$ an ANWA $A_{\ceffect}$ such that $L(A_{\ceffect})= \safelr(G)$.
 \end{proposition}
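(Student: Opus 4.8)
The plan is to build the ANWA $A_{\ceffect}$ so that a run of $A_{\ceffect}$ on $w$ directly simulates a play of the game $G$ on $w$ under a strategy $\Astrat$ of \pone, with the existential branching of the ANWA corresponding to \pones choices (\Read vs.\ \Call) and the universal branching corresponding to \ptwos counter-moves. The key idea is that since we already have $\ceffect$, we never need to expand a \Call move: whenever \pone calls some $\cl{f}$ and the simulated run of $A(T)$ is in state $q$ right before the subword that will be produced, we may simply consult $\ceffect(f,q)$, nondeterministically pick one of the (minimal) sets $S\in\ceffect(f,q)$, and then universally branch over the states in $S$, continuing the $A(T)$-simulation from each such state. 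Thus $A_{\ceffect}$ keeps track, in its linear state, of the current $A(T)$-state, and threads the hierarchical states of $A(T)$ through the nesting exactly as $A(T)$ would.

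Concretely, I would take the state set of $A_{\ceffect}$ to be $Q$ (the states of $A(T)$), with initial state $q_0$ and accepting set $F$, and define the transition function as follows. On an opening tag $\op{a}$ in linear state $q$: if $a\notin\Gamma$, mimic $A(T)$, i.e.\ push $q$ (WLOG $A(T)$ is in normal form by Lemma~\ref{lem:normal}, so $\delta(q,\op{a})$ is a single state $q'$) and move to $q'$; the positive boolean formula is just the singleton $\{(q',q)\}$. On a closing tag $\cl{a}$ in linear state $q$ with hierarchical (pushed) state $p$: if $a\notin\Gamma$, again mimic $A(T)$ and move to $\delta(q,p,\cl{a})$. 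The interesting case is a closing tag $\cl{f}$ with $f\in\Gamma$: here \pone has a choice. Either she plays \Read, in which case $A_{\ceffect}$ behaves exactly as in the $a\notin\Gamma$ case, moving to $\delta(q,p,\cl{f})$; or she plays \Call, in which case we disjunctively add, for each $S\in\ceffect(f,p)$, the conjunction $\bigwedge_{s\in S} s$ --- that is, the run universally splits and continues from each state $s\in S$ as the new linear state (with the hierarchical store popped, matching the fact that the called subword replaces $\op{f}\ldots\cl{f}$ and the $A(T)$-computation after it continues from a single fresh state). Note the subtlety that the relevant parameter for $\ceffect$ is the state $A(T)$ would be in \emph{before reading the subword yielded from $\cl{f}$}; because of the deletion semantics of a \Call move (footnote after rule~(3)), this is precisely the state $p$ that was pushed at the matching $\op{f}$, which is available as the hierarchical state. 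One then has to make sure the accepting condition is handled correctly at the top level (all leaves of the run labelled with states in $F$), and handle the boundary case of \Call at $\cl{f}$ when $f$ is the root, where the store is empty and $p = q_0$.

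The correctness argument is a two-way induction. For soundness, given a run of $A_{\ceffect}$ accepting $w$, read off \pones strategy from the disjunctive choices, and show by induction on \Call depth (using Definition~\ref{def:calleffects} of $\ceffect$ and the fact that $\ceffect(f,q)$ records exactly the minimal $S$ that \pone can enforce in the subgame) that for every \ptwo-strategy the resulting final word is accepted by $A(T)$; the universal branching over $S$ and the $\smin{\cdot}$ operator are exactly what makes "some branch for each element of $S$ accepts" equivalent to "\ptwo cannot escape $S$". For completeness, given a winning $\Astrat\in\strata(G)$ on $w$, build the run top-down, and when $\Astrat$ calls $\cl{f}$ in $A(T)$-state $p$ invoke Definition~\ref{def:calleffects} to obtain an $S\in\ceffect(f,p)$ that $\Astrat$ enforces, using it to label the universal split. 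The polynomial-time bound is then immediate: $|A_{\ceffect}| = O(|Q|)$ for states, and each transition formula has size $O(|\ceffect|)$, so the whole automaton is built in time polynomial in $|\ceffect|$ and $|G|$.

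\textbf{Main obstacle.} The delicate point --- and the place where I expect the real work to be --- is getting the bookkeeping of $A(T)$-states across \Call moves exactly right, in particular matching the "state before the yielded subword" in Definition~\ref{def:calleffects} with the hierarchical state of the ANWA, and justifying that a single consultation of $\ceffect$ per \Call faithfully captures the subgame even under nested replay. This requires carefully reconciling the deletion semantics of \Call with the run structure of ANWAs (whose runs have depth exactly $|w|$, i.e.\ are aligned to the \emph{original} word $w$, not to any rewritten word), and making the induction on \Call depth interface cleanly with the fixpoint characterisation of $\ceffect$; the purely automata-theoretic construction around it is routine by comparison.
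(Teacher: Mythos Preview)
Your construction is essentially identical to the paper's: the same state set $Q$, the same transitions on opening tags, and on closing tags the same disjunction of the \Read-successor with $\bigvee_{S\in\ceffect(a,p)}\bigwedge_{r\in S} r$, keyed on the hierarchical state $p$. You have correctly identified that the relevant argument for $\ceffect$ is the pushed state $p$, matching the deletion semantics of \Call.

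Where you differ slightly is in the organisation of the correctness proof. You propose to argue directly at the level of strategies, with an ``induction on \Call depth''. The paper instead introduces an intermediate abstraction, the \emph{word effect} $\geffect{w}(q)$, proves two compositional lemmas (sequential: $\geffect{uv}=\geffect{u}\circ\geffect{v}$; hierarchical: $\geffect{\op{a}v\cl{a}}=H_a[\geffect{v},\ceffect]$), and then shows by structural induction on $w$ that $X\in\geffect{w}(q)$ iff $A_C$ has a minimal run on $w$ from $q$ ending inside $X$. This buys two things: first, the induction variable is the structure of $w$ rather than \Call depth, which is cleaner since $\ceffect$ is already the full (unbounded-depth) effect and there is no depth parameter left to induct on; second, the compositional lemmas are reusable elsewhere in the paper. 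Your direct approach would still go through---once you have $S\in\ceffect(f,p)$ you simply invoke Definition~\ref{def:calleffects} to obtain the subgame strategy, no induction on depth needed---but the phrase ``induction on \Call depth'' is a red herring here and should be replaced by structural induction on $w$.
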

The computation of $\cneffect[G]{k+1}$ from $\cneffect[G]{k}$ involves
a non-emptiness test for ANWAs, the second stage a test whether $w\in
L(A_{\ceffect})$. Therefore, both of the following complexity results for ANWAs  influence the
complexity of our algorithms.

\begin{proposition}\label{prop:anwacomplexity}\mbox{}
  \begin{enumerate}[(a)]
  \item Non-emptiness for ANWAs is $\iiEXPTIME$-complete.
  \item The membership problem for ANWAs is $\PSPACE$-complete. 
  \end{enumerate}
\end{proposition}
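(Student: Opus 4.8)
For part (b), the membership problem, I would reduce to the known \PSPACE-completeness of evaluating alternating linear-bounded computations, but it is cleaner to argue directly. An ANWA $A$ run on a \emph{fixed} input word $w$ of length $n$ behaves like an alternating pushdown automaton whose stack height is bounded by $n$ (one hierarchical state is pushed per opening tag, popped per closing tag), and whose stack \emph{content} at position $i$ is determined purely by the input prefix $w_1\ldots w_i$ together with the choices made so far — crucially, the hierarchical state seen at a closing tag $w_{i+1}$ is read from the \emph{unique associated opening tag} $w_j$, which is a fixed position depending only on $w$, not on the run. So the natural configuration of the acceptance game is a position $i\in\{0,\ldots,n\}$ together with the vector of linear states currently ``pending'' at the (at most $n$) open tags — but in fact, because the game is played along a single root-to-leaf branch of the run tree, at position $i$ we only need to track the \emph{current} linear state plus, for each still-open ancestor tag, the stored pair $(q,p)$; that is $O(n\log|Q|)$ bits. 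I would phrase the acceptance of $w$ by $A$ as a two-player reachability game on this polynomial-size configuration space (Player~$\exists$ resolving $\lor$, Player~$\forall$ resolving $\land$), which is decidable in alternating polynomial time, hence in \PSPACE. For hardness, reduce from \QBF (or from \PSPACE-completeness of alternating reachability): given a quantified boolean formula, build a fixed nested word encoding the quantifier block and an ANWA whose $\lor$/$\land$ transitions mirror $\exists$/$\forall$ and whose closing-tag transitions check clause satisfaction by consulting the hierarchical states pushed when variables were assigned. The nesting structure is exactly what lets a closing tag ``see'' the truth value chosen at the matching opening tag, so no stack beyond depth $n$ is needed, and the reduction is logspace.

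For part (a), non-emptiness, the upper bound goes: an ANWA $A$ with $|Q|=m$ accepts \emph{some} nested word iff the tree-language it defines (viewing accepting runs as trees) is nonempty; by a standard translation, an ANWA can be converted to an equivalent nondeterministic NWA of size $2^{O(m)}$ via a subset-like construction (the linear-state part needs a set of states, while the hierarchical part — because only one hierarchical state is pushed per opening tag — is handled by pairing up summaries, still within $2^{O(m)}$), and non-emptiness of an NWA of size $N$ is decidable in time polynomial in $N$. That gives $2^{O(m)}$, which is merely \EXPTIME, not \iiEXPTIME. The gap to \iiEXPTIME must come from the \emph{size of the alphabet / transition formulas}: if the positive boolean combinations in $\delta$ are themselves large (which is exactly the situation when $\delta$ is generated, as in the generic algorithm, from a call effect of exponential size), the effective state count after determinisation is doubly exponential in the \emph{succinct} input size. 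So I would be careful to state the input measure: non-emptiness is \iiEXPTIME-complete when ANWAs are given in the natural succinct encoding where $\delta(q,\op a)$ and $\delta(q,p,\cl a)$ are boolean \emph{circuits} (or just large formulas) over $Q\times Q$ and $Q$ respectively. The upper bound is then: alternating exponential space (guess-and-check a run tree of exponential branching width using the circuit to validate each $\models$ step), which equals \iiEXPTIME by the alternation/space trade-off $\mathsf{AEXPSPACE}=\iiEXPTIME$.

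The lower bound for (a) is the real work, and it is where I expect the main obstacle to lie. I would reduce from \corridor, a standard \iiEXPTIME-complete tiling game, or directly from the acceptance problem for alternating $2^{2^{n}}$-space Turing machines. The idea: a configuration of a doubly-exponential-space machine is a word of length $2^{2^n}$; encode one configuration as a \emph{rooted nested word of depth $n$} whose $2^n$ leaves are grouped so that the hierarchical nesting provides $n$ ``coordinate bits'' addressing a cell — but a single level of nesting only buys $2^n$ cells, so to reach $2^{2^n}$ cells I would nest the addressing trick \emph{twice}, using an outer block of $n$ address bits to select among $2^n$ inner blocks, each holding $2^n$ cells. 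The ANWA, walking left-to-right through a linearisation of such an encoding, uses alternation to play the tiling game (existential player places a tile, universal player challenges a constraint) and uses the hierarchical stack to transport the coordinate of the currently-open cell down to where it must be compared with a neighbour; succinctness of $\delta$ (allowing polynomial-size formulas that quantify over the $n$ address bits) is what keeps the ANWA of polynomial size despite the doubly-exponential object it describes. The delicate points will be (i) making sure the ANWA can \emph{verify} that an input word is a legal encoding of a computation, not just play along a correct one — this is the usual ``the automaton must catch cheating'' issue, handled by having $\forall$ able to branch into a consistency check at every position; and (ii) threading the two-level addressing through the nesting so that adjacent-cell comparisons (which must hold across the whole $2^{2^n}$-long tape) are expressible with only depth-$n$ nesting and polynomial-size transition formulas. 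I would lean on the construction in Proposition~\ref{prop:calltoanwa} and the lower-bound machinery developed later for Theorem~\ref{theo:general}(a) to avoid reproving the tiling-game reduction from scratch, citing the analogous construction for \corridor.
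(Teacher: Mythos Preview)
Your treatment of part (b) is essentially correct and matches the paper: the upper bound is alternating polynomial time (the run tree has depth $|w|$ and each branch carries only polynomially many bits), hence $\PSPACE$ via the standard equality; the lower bound is a \QBF\ reduction where the nested word encodes the quantifier prefix and the ANWA branches $\exists/\forall$ on opening tags and checks clauses using the hierarchical states at closing tags. The paper's construction does exactly this, with the concrete word $\op{v_1}\cdots\op{v_n}\op{X}\cl{X}\cl{v_n}\cdots\cl{v_1}$.

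For part (a), however, your analysis contains a genuine error. Your claim that an ANWA with $m$ states can be converted to an equivalent nondeterministic NWA of size $2^{O(m)}$ is wrong, and your attempted rescue---attributing the $\iiEXPTIME$ bound to ``succinct'' circuit-encoded transition formulas---misidentifies the source of the complexity. The $\iiEXPTIME$-completeness holds for ANWAs in the \emph{standard} encoding with explicitly given formulas of polynomial size. The reason dealternation is doubly exponential is the interaction between alternation and the stack: different branches of an alternating run may push \emph{different} hierarchical states at the same opening tag, and the relevant ``summary'' of the alternating behaviour between a call and its matching return is not a relation on $Q$ but a map $Q\to\Pot(\Pot(Q))$---precisely the structure of the call effects $\ceffect$ used elsewhere in the paper. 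There are doubly exponentially many such summaries, and the fixpoint computation over them is what drives the $\iiEXPTIME$ upper bound. Your singly-exponential subset construction breaks because it cannot track, for each branch, which hierarchical state that branch pushed while still faithfully reproducing the $\land/\lor$ structure at the matching return.

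The paper does not reprove any of this: it simply cites Bozzelli's result that emptiness for alternating visibly pushdown automata is $\iiEXPTIME$-complete, noting that the lower bound there already uses only finite well-nested words and hence transfers directly to ANWAs. Your tiling-game sketch for the lower bound is in the right spirit (and is roughly how Bozzelli proceeds), but given that the result is known, the paper's one-line citation is the appropriate move.
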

Statement (a)  follows immediately from the corresponding result for
visibly pushdown automata in  \cite{Bozzelli07}, statement (b) is new, to
the best of our knowledge, and seems to be interesting in its own
right. It is shown in the appendix.

Now we continue describing the ingredients of the first stage of the generic algorithm.

\begin{lemma}\label{lemma:anwaeffects}
	Given a state $q \in Q$, an alphabet symbol $a \in \Gamma$, and $\cneffect[G]{k}$, for some $k\ge 1$, the call effect 
        $\cneffect[G]{k+1}(a,q)$ can be computed  in
        doubly exponential time in $|G|$.
\end{lemma}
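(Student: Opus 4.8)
The goal is to compute $\cneffect{k+1}(a,q)$, i.e.\ the min-closure of the family of sets $\states(q, \op{a}\cl{a}, \Astrat)$ that \pone can enforce with strategies of \Call depth at most $k+1$ starting with \Read on $\op a$ and \Call on $\cl a$. The plan is to reduce one round of this computation to a non-emptiness test on a suitable ANWA, which by Proposition~\ref{prop:anwacomplexity}(a) can be decided in doubly exponential time. The key insight is that a \Call-depth-$(k{+}1)$ subgame from $\cl a$ is: \ptwo picks some $v \in R_a$, and then \pone plays a \Call-depth-$k$ (at most) strategy on $v$; any \Call move \pone makes inside $v$ leads to a further subgame whose possible $A(T)$-state transitions are already recorded in $\cneffect{k}$. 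So the "new" information at level $k{+}1$ is obtained by letting \pone play a replay-free-style game on the NWA $A(R_a)$, where \Call moves on function symbols $f$ occurring in $v$ are resolved not by recursion but by consulting $\cneffect{k}(f,\cdot)$.

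Concretely, I would first fix a target state $q$ and build a product-style ANWA $B_{a,q}$ that runs on words $v$ accepted by $A(R_a)$. The automaton simulates $A(T)$ started in state $q$ along $v$, and at each closing tag $\cl f$ with $f \in \Gamma$ it branches: the existential player (\pone) either "reads through" (continuing the $A(T)$-simulation normally, corresponding to a \Read) or "calls" — in which case, using the stored value $\cneffect{k}(f, q')$ where $q'$ is the current $A(T)$-state at the matching opening tag $\op f$, it existentially chooses one enforceable set $S \in \cneffect{k}(f,q')$ and then universally (via the $\land$ in $\posbool$) branches over the states in $S$, continuing the $A(T)$-run from each. The universal player's branching in $B_{a,q}$ thus encodes \ptwos choices both of the replacement string $v$ and of the \Call-depth-$\le k$ sub-outcomes; the run trees of $B_{a,q}$ are in correspondence with \pone-strategies, and the leaf-state patterns reachable are exactly the sets in $\{\states(q,\op a\cl a,\Astrat) \mid \Astrat \in \stratac^{k+1}(G)\}$. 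I would then observe that the number of "relevant" sets is at most exponential in $|G|$ (since each is a subset of $Q$), and for each candidate set $S \subseteq Q$ one checks, by a non-emptiness query to a slight modification of $B_{a,q}$ that forces the leaf-state multiset to equal exactly $S$ (or to be $\subseteq S$ and $\supseteq S$, handled via two acceptance conditions or an intersection), whether $S$ is in the family; collecting the enforceable sets and applying $\smin{\cdot}$ gives $\cneffect{k+1}(a,q)$. Iterating over the (polynomially many) pairs $(a,q)$ and the (exponentially many) candidate sets $S$, each test costing doubly exponential time by Proposition~\ref{prop:anwacomplexity}(a), keeps the total within doubly exponential time.

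The main obstacle I anticipate is getting the ANWA $B_{a,q}$ to faithfully capture the \emph{interleaving} of \pones and \ptwos choices with the correct quantifier alternation — in particular, that \pone commits to her \Read/\Call decision and, in the \Call case, to a set $S \in \cneffect{k}(f,q')$ \emph{before} seeing which state of $S$ \ptwo will pick, and that the hierarchical (stack) state of the ANWA correctly transports the $A(T)$-state $q'$ from $\op f$ down to $\cl f$ so that the right slice of $\cneffect{k}$ is consulted. This is exactly what the ANWA stack mechanism is for (cf.\ the normal-form discussion and Lemma~\ref{lem:normal}), but the bookkeeping — product of $A(R_a)$-states, $A(T)$-states, and the encoded set choices, all threaded through $\posbool$ formulas — must be set up so that acceptance of $B_{a,q}$ (with the leaf constraint "$=S$") is equivalent to "$S \in \{\states(q,\op a\cl a,\Astrat) : \Astrat \in \stratac^{k+1}(G)\}$". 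A secondary subtlety is the "exactly $S$" versus "superset of $S$ minimal" matter: rather than testing equality directly it may be cleaner to enumerate all $S$, test "is $S$ enforceable up to superset" (a single acceptance condition: every leaf-state lies in $S$) and "is every state of $S$ forced" (the dual), and then apply $\smin{\cdot}$; I would argue correctness of this reformulation against Definition~\ref{def:calleffects}. The size bounds are routine: $|B_{a,q}|$ is polynomial in $|A(R_a)|$, $|A(T)|$, and $|\cneffect{k}|$, the latter being at most exponential in $|G|$, so each non-emptiness test runs in doubly exponential time in $|G|$, and there are only exponentially many of them.
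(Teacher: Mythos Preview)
Your inner construction—the ANWA that simulates $A(T)$ from state $q$, and at each closing function tag either reads through or existentially picks a set from $\cneffect{k}(f,\cdot)$ and universally branches over its elements—is exactly the automaton $A_C$ of Proposition~\ref{prop:calltoanwa} instantiated with $\cneffect{k}$; that part matches the paper.

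The gap is in how you handle \ptwo's choice of the replacement word $v\in R_a$. You write that ``the universal player's branching in $B_{a,q}$ thus encodes \ptwo's choices both of the replacement string $v$ and of the \Call-depth-$\le k$ sub-outcomes'', and then propose to decide enforceability of $S$ via a \emph{non-emptiness} query on $B_{a,q}$. But $v$ is the \emph{input word} of the ANWA, and in the non-emptiness problem for ANWAs the input is quantified \emph{existentially}, outermost. So non-emptiness of your product $A(R_a)\times A_S$ decides ``$\exists\,v\in R_a$ such that \pone has a depth-$k$ strategy on $v$ with $\states(q,v,\Astrat)\subseteq S$''. What you actually need is the universal statement ``$\forall\,v\in R_a\ \exists\,\Astrat\ldots$'', since \ptwo picks $v$ adversarially after \pone's \Call on $\cl a$. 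The universal branching of an ANWA ranges over transition choices within a run on a \emph{fixed} input; it cannot quantify over the input itself.

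The paper closes this quantifier gap by complementation: with $A$ the (modified) $A_C$-automaton accepting exactly those $v$ on which \pone can enforce $\subseteq S$ at depth $\le k$, the correct test is $R_a\subseteq L(A)$, i.e.\ $R_a\cap\overline{L(A)}=\emptyset$. Complementing an ANWA is routine (dualize $\land/\lor$ in the transition formulas and swap accepting and rejecting states), and the product with the NWA $A(R_a)$ then yields an ANWA whose state set is still polynomial in $|G|$; a single non-emptiness test on that automaton gives the right answer for each candidate $S$, in doubly exponential time by Proposition~\ref{prop:anwacomplexity}(a). Your overall counting argument (at most exponentially many candidate sets, polynomially many pairs $(a,q)$) is fine once this fix is made. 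As a minor point, your ``dual'' test that every state of $S$ be forced is unnecessary: testing, for every $S$, whether some subset of $S$ is enforceable and then applying $\smin{\cdot}$ already yields $\cneffect{k+1}(a,q)$, which is what the paper does.
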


By Lemmas \ref{lemma:rfcalls} and \ref{lemma:anwaeffects}, one can compute $\cneffect{k}$
inductively, for every $k\ge 1$. By definition it holds, for every $q$
and $a$, that $\cneffect{k}(a,q)$ is contained in the closure of
$\cneffect{k+1}(q,a)$ under supersets. As there are 
 $\le 2^{|Q|}$ sets in each $\cneffect{k}(a,q)$ (for $a \in \Gamma, q \in Q$),
the computation reaches a fixed point after at most
exponentially many iterations. We denote this fixed point
by $\cneffect{*}$, that is, we define, for every $a\in\Sigma$,
$q\in Q$:\\
\[
\cneffect{*}(a,q) \displaystyle\mydef\smin{\bigcup_{k=1}^\infty \cneffect{k}(a,q).}
\]

In particular, for each game $G$, there is a number $\ell\le
|\Gamma|\times|Q|\times 2^{|Q|}$ such that $\cneffect{*} =
\cneffect{\ell}$ and $\cneffect{m}=\cneffect{\ell}$, for every $m\ge \ell$.
However, it is not self evident that this process actually constructs
$\ceffect$, i.e., that  $\cneffect{*}=\ceffect$. The following
result shows that this is actually the case.

\begin{proposition}\label{prop:finitedepth}
For every cfG $G$ it holds:  \mbox{$\cneffect{*}=\ceffect$}.
\end{proposition}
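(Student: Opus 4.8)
The plan is to prove $\cneffect{*}=\ceffect$ by establishing two inclusions, viewed as inclusions between sets of sets up to the $\smin{\cdot}$-closure (equivalently, up-closure under supersets). One inclusion is essentially trivial; the other is the heart of the matter.

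\textbf{The easy inclusion.} Since every strategy of bounded \Call depth is in particular a strategy (in $\stratac(G)$), for every $k\ge 1$, $a\in\Gamma$ and $q\in Q$ we have $\stratac^k(G)\subs\stratac(G)$, hence each set $\states(q,\op{a}\cl{a},\Astrat)$ arising from a $k$-bounded strategy also arises from an (unrestricted) strategy. After applying $\smin{\cdot}$, this gives that every set in $\cneffect{k}(a,q)$ contains some set in $\ceffect(a,q)$; taking the union over $k$ and then $\smin{\cdot}$ again, every set in $\cneffect{*}(a,q)$ contains some set in $\ceffect(a,q)$. So $\ceffect$ is ``at least as good for \pone'' as $\cneffect{*}$, i.e. $\cneffect{*}(a,q)$ is contained in the up-closure of $\ceffect(a,q)$.

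\textbf{The hard inclusion: $\ceffect\subs\cneffect{*}$ (up to up-closure).} Here I must show that for every strategy $\Astrat\in\stratac(G)$ and every target state $q$, the set $S=\states(q,\op{a}\cl{a},\Astrat)$ contains some set belonging to $\cneffect{*}(a,q)=\cneffect{\ell}(a,q)$; that is, \pone can achieve an outcome at least as restrictive using a strategy of \Call depth $\le\ell$. The natural approach is by induction on some well-founded measure on the subgame tree induced by $\Astrat$. The key combinatorial point is the one flagged just before the proposition: since $\cneffect{\ell}=\cneffect{\ell+1}=\cdots$, the fixpoint $\cneffect{\ell}$ is ``closed'' in the sense that one more layer of \Call-nesting, when the inner calls are already played according to $\cneffect{\ell}$-optimal sub-strategies, produces nothing new. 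Concretely, given an arbitrary finite strategy $\Astrat$ and a play position where \pone first makes a \Call on some $\cl{b}$, the subgame below that call is again a cfG-subgame; by applying the inductive hypothesis to the (strictly simpler) strategy restricted to that subgame, I can replace \pones behaviour there by a \Call-depth-$\le\ell$ sub-strategy achieving an outcome in $\cneffect{\ell}(b,\cdot)$ that is $\subs$-below what $\Astrat$ achieved. Doing this simultaneously at all ``topmost'' \Call positions of $\Astrat$ and then invoking the one-step characterisation of $\cneffect{k+1}$ from $\cneffect{k}$ (the construction underlying Lemma~\ref{lemma:anwaeffects}, via the ANWA $A_{\ceffect}$ of Proposition~\ref{prop:calltoanwa} applied at level $\ell$) shows that the overall outcome lies in (the up-closure of) $\cneffect{\ell+1}(a,q)=\cneffect{\ell}(a,q)$. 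Combined with Lemma~\ref{lemma:rfcalls} as the base case (\Call depth~$1$), this closes the induction.

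\textbf{Main obstacle.} The delicate part is setting up the induction correctly, because \pones strategy $\Astrat$ is a priori only \emph{finite}, not of bounded \Call depth, so there is no uniform bound on the nesting depth of the play tree to induct on directly. The right move is to induct not on \Call depth but on a measure that genuinely decreases when passing to a subgame below a topmost \Call (e.g. the structural size of the remaining nested word, or an ordinal rank of the finite play tree, using König's lemma / the fact that winning plays for \pone are finite), while using the fixpoint property $\cneffect{\ell}=\cneffect{\ell+1}$ to absorb the extra layer. One must also be careful that the substitution of sub-strategies is monotone: replacing a sub-outcome by a $\subs$-smaller one for \ptwo can only shrink (or keep) the set of globally reachable target states, which is exactly why $\smin{\cdot}$-membership is preserved; this monotonicity is implicit in the construction of $A_{\ceffect}$ and should be stated explicitly as the engine driving the argument.
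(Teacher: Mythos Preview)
Your easy inclusion is fine, and you correctly identify the real obstacle: a finite strategy $\Astrat\in\stratac(G)$ need not have bounded \Call depth, so you cannot simply induct on depth. However, neither of the two measures you propose to replace it actually works.

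The first suggestion, ``structural size of the remaining nested word'', does not decrease when you pass to a subgame below a topmost \Call: after \pone calls $\cl{a}$, \ptwo may insert an arbitrarily large word from $R_a$, so the word on which the sub-strategy is played can be larger than the original.

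The second suggestion, ``an ordinal rank of the finite play tree, using K\H onig's lemma'', contains the crucial mistake. K\H onig's lemma needs finite branching, but when the replacement languages $R_a$ are infinite, \ptwo has infinitely many responses, so the play tree of $\Astrat$ is \emph{infinitely branching}. It is well-founded (every path is a play, and $\Astrat$ is finite, so every path is finite), but it is not a finite tree, and K\H onig's lemma simply does not apply. You could in principle salvage this by assigning ordinal ranks to the well-founded (infinite) tree and doing a genuine transfinite induction, but you have not set this up, and your inductive step as written also conflates call effects on $\op{b}\cl{b}$ with word effects on the replacement string, which would need to be straightened out.

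The paper's proof takes a genuinely different route that sidesteps the infinite-branching problem entirely. It first constructs from $G$ an auxiliary game $G'$ with \emph{finite} replacement languages $R'_a\subseteq R_a$, chosen so that (i) the inductive computation of $\cneffect[\cdot]{k}$ is unchanged (by keeping finitely many witness strings for each step), and (ii) every word effect realised in $G$ is already realised in $G'$ (by keeping one representative per word-effect equivalence class). In $G'$ the play tree \emph{is} finitely branching, so K\H onig's lemma legitimately gives that every finite strategy has bounded \Call depth, whence $\cneffect[G']{*}=\ceffect[G']$ trivially. The work then reduces to showing $\cneffect[G]{*}=\cneffect[G']{*}$ and $\ceffect[G']=\ceffect[G]$, the latter by a (finitary) nested induction on $(\text{\Call depth in }G',\text{ nesting depth},\text{ root width})$. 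This is where the finiteness of $R'_a$ pays off: it converts the problematic transfinite induction you would need in $G$ into an ordinary induction in $G'$.
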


Now we can give a (high-level) proof for Theorem \ref{theo:general}.

\begin{proofof}{Theorem \ref{theo:general}}
We first justify the upper bounds. Let $G$ be a cfG and $w$ a word. By Lemma \ref{lemma:rfcalls},
$\cneffect{1}$ can be computed in polynomial time from $G$. For the replay-free
case, we can immediately construct an ANWA for $\safelr(G)$ and
evaluate it on $w$, yielding a \PSPACE upper bound by Proposition
\ref{prop:anwacomplexity}.

For (a) and (b), $\ceffect$ ($\cneffect{k}$, respectively) can be computed in doubly
  exponential time, $A_C$ can be computed in
  exponential time (in the size of $G$), and  whether $w\in L(A_C)$ can then be tested in
  polynomial space in $|A_C|$ and $|w|$, that is, in at most
  exponential space in $|G|$ and $|w|$.

  That these upper bounds can not be considerably improved, is stated
  in the following proposition, thereby completing the proof of Theorem
  \ref{theo:general}.
\end{proofof}

\begin{proposition} \label{prop:lowergeneral}\mbox{}
For the class of unrestricted games $\Safelr$ is
  \begin{enumerate}[(a)]
  \item $\iiEXPTIME$-hard with
    bounded replay, and
  \item  $\PSPACE$-hard with
    no replay.
  \end{enumerate}
\end{proposition}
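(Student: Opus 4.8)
The plan is to prove the two lower bounds by reductions from well-chosen complete problems, matching the ANWA complexity results in Proposition~\ref{prop:anwacomplexity}. For part~(b), the \PSPACE-hardness with no replay, I would reduce from the membership problem for ANWAs, which is \PSPACE-complete by Proposition~\ref{prop:anwacomplexity}(b) (or, alternatively, directly from a standard \PSPACE-complete problem such as \QBF or \corridor). Given an ANWA $A$ and a nested word $w$, the idea is to build a context-free game $G$ in which a single round of \Call moves lets \pone simulate the existential (disjunctive) choices in a run of $A$ on $w$, while \ptwo, via the replacement strings she selects, simulates the universal (conjunctive) choices. Because the target language is given by a DNWA, we can use it to check that the "guessed run" is accepting; the replacement languages only need to be finite (indeed small), so this also yields the finite-replacement \PSPACE-hardness claimed in Table~\ref{tab:results}. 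The key is that in a replay-free game \pone's single \Call on each function symbol, followed by \ptwos bounded replacement, is exactly strong enough to encode one alternation level of the ANWA computation on the fixed word $w$; no recursion into returned strings is needed, so \Call depth $1$ suffices.

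For part~(a), the \iiEXPTIME-hardness with bounded replay, I would reduce from \corridor (two-player $2$-exponential corridor tiling), the canonical \iiEXPTIME-complete problem, or equivalently from non-emptiness of ANWAs via Proposition~\ref{prop:anwacomplexity}(a). The strategy is to have the two players of the game build, turn by turn, a play of the corridor-tiling game (equivalently, an accepting run of the ANWA), where \pones \Call moves encode \playerEs tile choices and \ptwos replacement strings encode \playerAs tile choices. The doubly-exponential-width corridor is handled in the usual way: a row is represented implicitly, with a counter of $2^n$ bits (so the row has $2^{2^n}$ cells) maintained inside the nested-word structure, and the DNWA target language enforces local tiling constraints and the counter bookkeeping. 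Here bounded replay — in fact a fixed small \Call depth — is what allows the counter and the alternation to be threaded through the subgames; unbounded replay is not needed, which is why (a) is stated for bounded replay and (b) for no replay, and together with the upper bounds in Theorem~\ref{theo:general} this pins down the complexity exactly. One should check that the reduction is computable in polynomial time, i.e.\ that $G$ and the input word have size polynomial in the corridor-tiling instance; this is routine since all gadgets (counters, tile alphabets, transition checks) are of polynomial size.

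The main obstacle I anticipate is the faithful encoding of the \emph{alternation} in part~(a): the game's move structure is left-to-right and each function symbol is called at most a bounded number of times, so one must carefully lay out the initial word $w$ so that the sequence of \Call opportunities for \pone interleaves correctly with \ptwos forced replacements to mirror the strict alternation \playerE–\playerA–\playerE–\dots of the tiling game, and so that a player who "cheats" (deviates from the intended encoding of a legal move) is detected by the DNWA and loses. Getting this correspondence precise — in particular, arguing both directions, that a winning strategy for \pone yields a winning strategy for \playerE and vice versa — is the delicate part; the sheer width of the corridor is comparatively standard, handled by nested counters, and the finiteness of the replacement languages in the \PSPACE case of part~(b) comes for free once the encoding is set up. Since this is the lower-bound counterpart of an already-sketched matching upper bound, the full technical construction is deferred to the appendix.
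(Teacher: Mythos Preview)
Your plan for part~(b) is sound and close to the paper's approach: both reduce from ANWA membership (Proposition~\ref{prop:anwacomplexity}(b)) by encoding the alternation of the ANWA in the \Read/\Call choices of \pone and the replacement choices of \ptwo on a suitably padded version of the input word, with finite replacement languages. Either role assignment works for~(b) in isolation.

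The gap is in part~(a). You assign the existential ANWA choices to \pone and the universal ones to \ptwo. The paper deliberately does the opposite --- \ptwo plays the existential role and \pone the universal one --- and this inversion is the whole point. With that assignment, the membership reduction of~(b) yields ``\pone wins on $w'$ iff $w\notin L(A)$''. Part~(a) then follows by adding a single fresh function symbol $s$ whose replacement language is $\{\,v(w)\mid w\in\wf\,\}$: on input $\op{s}\cl{s}$, \pone must call $s$, \ptwo supplies a candidate word $v(w)$, and the replay-free membership game is played on it. Hence \pone wins iff $L(A)=\emptyset$, and \Call depth~$2$ suffices. The reason the inversion is essential is that only \ptwo can produce an arbitrarily long replacement string; \pones moves are one bit each. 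With your role assignment (\pone existential), lifting~(b) to ANWA non-emptiness would require \pone to exhibit the witness word, which she cannot do.

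Your alternative route via \corridor\ does not close this gap either. A $\iiEXPTIME$ tiling game has doubly exponentially many alternating moves; with bounded \Call depth and a polynomial-size initial word, such a number of \pone-choices can arise only if \ptwo first inflates the arena by a long replacement --- but in your assignment \ptwo is \playerA and has no incentive to do so unless forced, and your sketch (``a counter of $2^n$ bits maintained inside the nested-word structure'') does not explain how this forcing works or how the doubly exponential alternation depth is achieved within a fixed \Call depth. The paper's two-step reduction through Lemma~\ref{lemma:ANWAtocfG} avoids all of this: the role swap is the missing idea.
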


Claims (a) and (b) of Proposition \ref{prop:lowergeneral} follow from the corresponding parts of \mbox{Proposition \ref{prop:anwacomplexity}}; in the proof, we construct from an ANWA $A$ a replay-free cfG simulating $A$ on any input word $w$ (yielding claim (b)) and explain how replay can be added to that game to find and verify a witness for the non-emptiness of $A$, if one exists (yielding claim (a)).

For finite (and explicitly given) replacement languages the
complexity changes considerably in the cases with replay, but not in
the replay-free case.

\begin{proposition}\label{prop:finitegeneral}
For the class of unrestricted games with finite replacement languages,  $\Safelr(\calG)$ is
  \begin{enumerate}[(a)]
  \item $\EXPTIME$-complete with    unbounded replay, and
  \item $\PSPACE$-complete with  bounded or without replay.
  \end{enumerate}
\end{proposition}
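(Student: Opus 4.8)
The plan is to reuse the machinery of Theorem \ref{theo:general} but exploit the fact that when each $R_a$ is finite and given explicitly, \ptwo's moves come from a bounded, known set. First I would address the upper bounds. For the replay-free case and the bounded-replay case, I would re-examine the computation of $\cneffect{k+1}$ from $\cneffect{k}$: instead of invoking the doubly exponential ANWA non-emptiness test of Lemma \ref{lemma:anwaeffects}, observe that when the replacement languages are finite the subgame started by a \Call on $\cl a$ has a game tree whose branching (over \ptwo's choices) is bounded by $|R_a|$ and whose depth is bounded — in the bounded-replay case by the fixed $k$, and the strings inserted are of bounded length. Hence $\cneffect{k+1}(a,q)$ can be computed directly by an alternating procedure of polynomial depth, giving a \PSPACE (indeed alternating polynomial-time) bound rather than doubly exponential. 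Combining this with Proposition \ref{prop:anwacomplexity}(b) (membership for ANWAs is in \PSPACE) and noting that $A_{\ceffect}$ has at most exponential size but the relevant call effects here are of at most exponential size computable in \PSPACE, we get the \PSPACE upper bound for (b). For (a), with unbounded replay the fixpoint iteration still terminates after at most exponentially many rounds (as in the discussion preceding Proposition \ref{prop:finitedepth}), and each round is now only \PSPACE (hence exponential time) rather than doubly exponential; the subsequent ANWA evaluation on $w$ is also within exponential time, so the whole procedure runs in \EXPTIME.

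For the lower bounds of (b), I would adapt the construction behind Proposition \ref{prop:lowergeneral}(b): the replay-free \PSPACE-hardness there is obtained by simulating an ANWA on an input word, and that simulation can be carried out with finite (in fact, singleton or very small) replacement languages, so \PSPACE-hardness already holds in the finite-replacement, replay-free setting; bounded replay includes replay-free as a special case, so \PSPACE-hardness propagates. Alternatively one can reduce from \QBF or from the membership problem for alternating polynomial-space machines directly, encoding existential choices as \Call moves of \pone and universal choices as \ptwo's replacement choices from a fixed finite set. For the \EXPTIME lower bound in (a), I would reduce from an \EXPTIME-complete problem such as the word problem for alternating polynomial-space Turing machines (equivalently, an alternating \PSPACE computation, or a suitable tiling/corridor problem), using unbounded replay to let \pone drive a computation of exponential length while \ptwo resolves universal branchings via a finite replacement alphabet; this parallels the known \EXPTIME-completeness of active context-free games with finite replacement in the flat-string setting from \cite{MuschollSS06}, and the nested-word encoding adds only polynomial overhead.

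The main obstacle I expect is the careful bookkeeping in the upper-bound argument for part (b): one must verify that with finite replacement languages the intermediate objects $\cneffect{k}$ that actually arise during the fixpoint iteration — and the ANWA $A_{\ceffect}$ evaluated in the second stage — can be handled in polynomial space, even though a priori $\ceffect$ has up to exponentially many sets of states per $(a,q)$ pair. The resolution is that we never need to materialise $\ceffect$ in full: the \PSPACE membership algorithm for $A_{\ceffect}$ can query the needed entries of $\ceffect$ on demand, and each such query is itself an alternating polynomial-time (hence \PSPACE) computation over the finite replacement sets. Making this ``lazy evaluation'' precise — interleaving the call-effect computation with the ANWA run so that the whole thing stays in \PSPACE — is the delicate part; everything else is a routine adaptation of the constructions already developed for Theorem \ref{theo:general} and Proposition \ref{prop:lowergeneral}.
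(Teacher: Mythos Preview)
Your proposal is correct and follows essentially the same approach as the paper. The paper's proof likewise (i) replaces the doubly exponential non-emptiness test in the inductive step by membership tests $v\in\widebar{L(A)}$ for each $v\in R_a$ (each in \PSPACE by Proposition~\ref{prop:anwacomplexity}(b)), yielding \EXPTIME for (a) via exponentially many \PSPACE rounds; (ii) for (b), uses exactly the recomputation/on-demand technique you describe, relying on the observation that evaluating $A_C^k$ on $w$ needs only polynomial space in $|w|$ and the number of states of $A_C^k$, provided transition satisfaction can be tested in \PSPACE; and (iii) takes the lower bounds from Theorem~4.3 of \cite{MuschollSS06} for (a) and from the fact that the construction in Proposition~\ref{prop:lowergeneral} already uses finite replacement languages for (b).
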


 The upper bound in (a) follows as for finite
replacement languages  $\cneffect[G]{k+1}(a,q)$ can be computed from
$\cneffect[G]{k}(a,q)$ in polynomial space\footnote{It is worth noting
that this upper bound even holds if the finite replacement language is
not explicitly given, but represented by NWAs.}. 
The \PSPACE upper bound in (b) can then be achieved by the usual ``recomputation
technique'' of space-bounded computations.   

The lower bound in (a) already holds for flat words (see Theorem 4.3 in \cite{MuschollSS06}). The lower bound in (b) follows as the proof of Proposition \ref{prop:lowergeneral} only uses finite replacement languages. 

As our algorithms generally construct ANWAs deciding $\safelr(G)$, the data complexity for $\Safelr$ is in $\PSPACE$ for all cases considered in this section due to Proposition \ref{prop:anwacomplexity}.

		\section{Games with XML Schema target languages}\label{sec:simple}

The results of Section \ref{sec:general} provide a solid foundation for our further studies, but the setting studied there suffers from two problems: (1) the complexities are far too high (at least for games with replay) and (2) the assumption that target and replacement languages are specified by (D)NWAs is not very realistic. In this section, we address both issues at the same time: when we require that target languages are specified by typical XML schema languages (DTD or XML Schema), we get considerably better complexities. 

The better complexities basically all have the same reason: XML Schema
target languages can be described by a restriction of nested word
automata, which we call \emph{simple} below. This restriction
translates to the alternating NWAs corresponding to call effects. 
For simple ANWAs, however, the two basic algorithmic
problems, Non-emptiness and Membership have dramatically better
complexities: \PSPACE and \PTIME as opposed to $\iiEXPTIME$ and
$\PSPACE$, respectively. We emphasise that, in accordance with the
official standards, our definitions for DTDs and XML Schema require \emph{deterministic}
regular expressions.

Altogether, we prove the following complexity results.

\begin{theorem}\label{theo:simple}\mbox{}
For classes of games with XML Schemas or DTDs, respectively, $\Safelr$ is
  \begin{enumerate}[(a)]
  \item $\EXPTIME$-complete for 
    unbounded replay,
  \item $\PSPACE$-complete for 
    bounded replay, and
  \item $\PTIME$-complete (under logspace-reductions) without replay.
  \end{enumerate}
\end{theorem}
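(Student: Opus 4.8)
The plan is to mirror the structure of the proof of Theorem \ref{theo:general}, replacing the general machinery (DNWAs, ANWAs, their $\iiEXPTIME$/$\PSPACE$ complexities) with a restricted class of automata --- call them \emph{simple} NWAs --- that captures exactly the nested-word languages definable by DTDs and XML Schemas, and whose Non-emptiness and Membership problems sit one exponential lower, in $\PSPACE$ and $\PTIME$ respectively. The first step is to isolate this restriction: a DTD or XML Schema only tests, for each node, a deterministic regular expression against the sequence of \emph{labels} of its children (plus, for XML Schema, a bounded amount of typing information propagated top-down). This means the corresponding DNWA never needs to pass nontrivial hierarchical information other than (a bounded function of) the type assigned at the opening tag, and the linear state at a closing tag is determined locally. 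I would formalise ``simple'' so that (i) deterministic bottom-up tree automata for XML Schema languages translate in polynomial time to simple DNWAs, and (ii) the alternation/quotienting constructions used to pass from call effects to ANWAs preserve simplicity. The key technical claim to establish is: \emph{Non-emptiness for simple ANWAs is in $\PSPACE$ and Membership is in $\PTIME$} --- the point being that in a run of a simple ANWA the conjuncts created at an opening tag are ``independent'' below, so one can evaluate the acceptance game by a single left-to-right pass maintaining, at the current frontier, only a polynomial-size summary (a set of linear states paired with stack-type information), rather than the doubly-exponential blow-up forced by general ANWAs.

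With that in place the upper bounds follow exactly as before. For the replay-free case, $\cneffect{1}$ is computed in polynomial time (Lemma \ref{lemma:rfcalls}), the ANWA $A_{\cneffect{1}}$ for $\safelr(G)$ is simple and polynomial-size (Proposition \ref{prop:calltoanwa} plus the simplicity-preservation claim), and Membership for simple ANWAs is in $\PTIME$, giving (c). For bounded replay one iterates the computation of $\cneffect{k+1}$ from $\cneffect{k}$; the new ingredient is that each such iteration uses a Non-emptiness test for a \emph{simple} ANWA, hence runs in $\PSPACE$ rather than $\iiEXPTIME$, and the call effects themselves stay of at most exponential size, so the whole first stage --- and the final Membership test --- fits in $\PSPACE$, giving (b). For unbounded replay, Proposition \ref{prop:finitedepth} guarantees the fixpoint $\cneffect{*}=\ceffect$ is reached after at most exponentially many iterations, each now costing $\PSPACE = \EXPTIME$-by-padding-safe effort; more precisely one gets an $\EXPTIME$ bound on building $A_{\ceffect}$ and then a $\PTIME$ (in $|A_{\ceffect}|$, hence $\EXPTIME$ overall) Membership test, giving (a).

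For the lower bounds I would give three reductions. (c): $\PTIME$-hardness under logspace reductions by reducing from a $\PTIME$-complete problem such as monotone circuit value (or the emptiness/membership problem for the relevant simple-automaton model), encoding the circuit as a bounded-depth document on which \pone's replay-free winning question simulates circuit evaluation --- the alternation of $\land$/$\lor$ gates matching \pone's \Call choices versus \ptwo's replacement choices. (b): $\PSPACE$-hardness; here the DTD case is the one already flagged in the introduction as contradicting the $\PTIME$ claim of \cite{MiloAABN05}, and the natural route is a reduction from \QBF (or from a corridor-tiling/space-bounded-TM problem), using bounded replay to let \pone ``guess'' and \ptwo ``challenge'' polynomially many quantifier choices while a DTD enforces the matrix --- essentially the construction behind Proposition \ref{prop:lowerdtd}. (a): $\EXPTIME$-hardness, by reduction from an $\EXPTIME$-complete game such as $\EXPTIME$-bounded alternating Turing machines or two-player corridor tiling of exponential width, where unbounded replay supplies the exponentially long plays needed to walk through an exponential-size configuration space, with the DTD checking local consistency of successive configurations.

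\medskip
\noindent\textbf{Main obstacle.} The crux is the claim that Membership for simple ANWAs is in $\PTIME$ (and Non-emptiness in $\PSPACE$): one must pin down the simplicity restriction precisely enough that it (a) is genuinely satisfied by the ANWAs arising from DTD/XML-Schema call effects --- which requires checking that the $\smin{\cdot}$-based call-effect computation and the construction of Proposition \ref{prop:calltoanwa} do not reintroduce ``entangled'' hierarchical branching --- and (b) is strong enough to collapse the doubly-exponential acceptance-game analysis to a polynomial-size frontier computation, presumably via a least-fixpoint evaluation over the tree structure that at each closing tag combines the summaries of the subtrees independently. Getting a definition that threads this needle, and in particular verifying that XML Schema's top-down type propagation still falls under it, is where the real work lies; the complexity-class bookkeeping and the three hardness reductions are comparatively routine adaptations of \cite{MuschollSS06} and Proposition \ref{prop:lowergeneral}.
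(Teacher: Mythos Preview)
Your proposal is correct and follows essentially the same approach as the paper: define a restricted class of ``simple'' (A)NWAs capturing DTD/XML-Schema languages, establish that Non-emptiness and Membership for simple ANWAs drop to $\PSPACE$ and $\PTIME$ respectively (the paper's Proposition \ref{prop:lanwacomplexity}), verify that the call-effect-to-ANWA construction preserves simplicity, and then rerun the generic algorithm of Section \ref{sec:general} with these improved costs; the lower bounds are indeed obtained by lifting flat-string results from \cite{MuschollSS06} (the paper's $\PSPACE$ lower bound for bounded replay goes via Corridor Tiling rather than \QBF, but that is immaterial). One small point you gloss over for part (b): since the intermediate call effects and the resulting SANWA can be of exponential size, fitting the bounded-replay case into $\PSPACE$ requires the standard recomputation trick (never materialising the full object, recomputing needed pieces on demand), which the paper makes explicit by pointing to the argument of Proposition \ref{prop:finitegeneral} (b).
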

Here, the lower bounds are proven for DTDs, and the upper bounds for XML Schemas.

The lower bound in Theorem \ref{theo:simple} (b) for the case of games
with DTDs contradicts the statement of a \PTIME algorithm in Section 4.3 of
\cite{MiloAABN05}  (unless $\PTIME =\PSPACE$).\footnote{A close inspection of the 
  construction in the proof in
  \cite{MiloAABN05} reveals that the automaton constructed there does not deal correctly
  with the alternation between the choices of \ptwo and \pone. More
  precisely, the automaton allows \ptwo to let the suffix of a
  replacement string depend on the choices of \pone on its prefix.}

Before we describe the proof of  Theorem \ref{theo:simple}, we first define
\emph{single-type tree grammars} and \emph{local tree  grammars} as
well-established abstractions of XML Schema and DTDs,
respectively (see, e.g.,
\cite{MurataLMK05}). However, we will refer to  grammars of these types as XML
Schemas and DTDs, respectively.

\begin{definition}
	A \emph{(regular) tree grammar} is a tuple $T = (\Sigma, \Delta, S, P, \lambda)$, where
	\begin{itemize}
	\item $\Sigma$ is a finite alphabet of \emph{labels},
	\item $\Delta$ is a finite alphabet of \emph{types},
	\item $S \in \Delta$ is the \emph{root} or \emph{starting type},
	\item $P$ is a set of \emph{productions} of the form $X
          \rightarrow r_X$ mapping each type $X \in \Delta$ to a deterministic regular expression $r_X$ over $\Delta$, called the \emph{content model} of $X$, and
	\item $\lambda: \Delta \rightarrow \Sigma$ is a \emph{labelling function} assigning a label from $\Sigma$ to each type in $\Delta$.
	\end{itemize}
	$T$ is \emph{single-type} if for each $X \in \Delta$, the
        content model $r_X$ contains no competing types, i.e. if $r_X$
        contains no two types $Y \neq Z$ with $\lambda(Y) =
        \lambda(Z)$. $T$ is \emph{local}, if it has exactly one type
        for every label.
\end{definition}

We omit the definition of the formal semantics of regular tree grammars. The nested word language $L(T)$ described by $T$ is just the set of linearisations of trees of the tree language that is defined in the standard way.

We next define \emph{simple}  DNWAs, a restriction of DNWAs that captures all languages  specified by single-type tree
grammars. In simple DNWAs, states are typed, i.e. each state has a component in some type alphabet $\Delta$.
Informally, when a simple DNWA $A$ reads a subword $w=\op{a}v\cl{a}$ in state $q$, it determines already on reading $\op{a}$ which state $q'$ it will take after processing $w$, and this state will be of the same type as $q$. After reading $\op{a}$, the linear state of $A$ only depends on the \emph{type} of $q$, not the exact state; this models the single-type restriction. After reading $\op{a}$, $A$ goes on to validate $v$, and if this validation fails, $A$ enters a failure state $\bot$ instead of $q'$. Thus, the state of $A$ at a position basically only depends on its ancestor positions (in the tree view of the document) and their left siblings. The only way in which other nodes in subtrees of these nodes can influence the state is by assuming the sink state $\bot$. Thus, in the spirit of
\cite{MartensNSB06}, we could call such DNWAs \emph{ancestor-sibling-based} but we prefer the term \emph{simple} for simplicity.
 
\begin{definition}
	A deterministic NWA $A(T) = (Q, \Sigma, \delta, q_0, F)$ in
        normal form
 is      \emph{simple} (SNWA) if there exist a \emph{type alphabet} $\Delta$ and \emph{state set} $P$ with $Q \subs P \times \Delta$, a \emph{local acceptance function} $\Floc: \Sigma \rightarrow \Pot(Q)$, a \emph{target state function} $t: Q \times \Sigma \rightarrow Q$ and a \emph{failure state} $\bot \in Q \setminus F$, such that 
 the following conditions are satisfied for every $a\in\Sigma$:
  \begin{itemize}
  \item for every $p,p' \in P, X \in \Delta$: $\delta((p, X), \op{a}) = \delta((p',X),\op{a})$;
  \item for every $q \in \Floc(a)$: $\delta(q,p,\cl{a}) = t(p,a)$;
  \item for every $q \in Q\setminus \Floc(a)$: $\delta(q,p,\cl{a}) = \bot$ and
  \item for every $q \in Q$: $\delta(\bot, \op{a}) = \delta(\bot, q,\cl{a}) = \bot$.
  \item for every $(p, X) \in Q$: $t((p, X),a) = (p', X )$ for some $p' \in P$.
  \end{itemize}

A cfG is called \emph{simple} if its target DNWA is
simple. 
\end{definition}

\begin{proposition}\label{prop:dtdsimple}
From every  single-type tree grammar $T$, a simple DNWA $A$ can be
computed in polynomial time, such that $L(A)=L(T)$.
\end{proposition}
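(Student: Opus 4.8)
The plan is to give a direct construction of a simple DNWA $A$ from a given single-type tree grammar $T = (\Sigma, \Delta, S, P, \lambda)$, reading the tree top-down while keeping track, at each position, of the type that the current subtree is expected to have. The state set will essentially be $Q \subs P \times \Delta$ where the $\Delta$-component records the ``expected type'' of the subtree currently being processed, and the $P$-component keeps the progress information needed to validate content models. Concretely, $P$ will consist of states of the deterministic finite automata $M_X$ for the deterministic regular expressions $r_X$ (one per type $X \in \Delta$), suitably tagged so that competing types can be distinguished; because $T$ is single-type, no content model $r_X$ has two competing types, so on reading an opening tag $\op{a}$ the DFA $M_X$ (in whatever state it currently is) has at most one outgoing transition on a type $Y$ with $\lambda(Y) = a$, and that $Y$ is uniquely determined. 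This is exactly what makes the ``target state function'' $t$ and the ``local acceptance function'' $\Floc$ well-defined and gives determinism of $A$.

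The key steps, in order, are the following. First I would compute, for each type $X \in \Delta$, the minimal DFA $M_X = (S_X, \Delta, \delta_X, s_X^0, F_X)$ for $r_X$; these have polynomial size because $r_X$ is a deterministic (one-unambiguous) regular expression, hence has a polynomial-size Glushkov automaton which is already deterministic. Second, I would define the state set: $Q$ consists of a failure state $\bot$, an initial state $q_0$, and pairs $(s, X)$ where $X \in \Delta$ and $s \in S_X$; I would take $P = \{\bot, q_0\text{-dummy}\} \cup \bigcup_X S_X$ (tagging to keep the $S_X$ disjoint), so $Q \subs P \times \Delta$ as required after pairing each $s \in S_X$ with $X$. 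Third, I would define the transition function so that it satisfies the five bullet conditions of the definition of SNWA: on an opening tag $\op{a}$ from a state $(s, X)$, look at whether $M_X$ from state $s$ has a (necessarily unique, by single-type-ness) transition on some type $Y$ with $\lambda(Y) = a$; if so, push $(s, X)$ and enter $(s_Y^0, Y)$; if not, enter $\bot$. Crucially, this target linear state $(s_Y^0, Y)$ depends only on the \emph{type} $X$ (through which outgoing $a$-transition $M_X$ takes), not on the exact state $s$ — wait, that is not quite right, so here is where the definition's first bullet needs care: the DNWA's transition $\delta((p,X),\op{a})$ must be independent of $p$. To arrange this, I let the $\Delta$-component $X$ in a state $(s,X)$ denote the type \emph{assigned to the current node} (not to its parent's content model), so that upon reading $\op{a}$ we simply pick the unique child type $Y$ with $\lambda(Y)=a$ allowed anywhere in $r_X$'s language at the current DFA position — and because $T$ is single-type this $Y$ is determined by $X$ and $a$ alone; the DFA-state bookkeeping needed to check that the \emph{sequence} of children is legal is recorded in the hierarchical (pushed) state and re-examined on the matching closing tag. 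On a closing tag $\cl{a}$ in state $(s, Y)$ (having popped the parent state $(s', X)$): set $\Floc(a)$ to be the set of states $(s,Y)$ with $\lambda(Y)=a$ and $s \in F_Y$ (i.e., the content model of $Y$ has been fully matched); if $(s,Y) \in \Floc(a)$, advance $M_X$ from $s'$ on $Y$ to get a new state $s''$ and return $t((s',X),a) := (s'', X)$; otherwise go to $\bot$. The sink-state conditions for $\bot$ are immediate. Fourth, I would set $q_0$ and $F$ so that $A$ starts expecting a single root of type $S$ and accepts iff the root has been validated. Fifth, I would verify $L(A) = L(T)$ by a straightforward induction on the structure of nested words: $A$ reaches an accepting configuration on $w$ iff $w$ is the linearisation of a tree in the tree language of $T$, using the standard correspondence between runs of $M_X$ on child-type sequences and membership in $L(r_X)$.

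The main obstacle I anticipate is making the state components line up \emph{exactly} with the rather rigid syntactic shape demanded by the SNWA definition — in particular the requirement $\delta((p,X),\op{a}) = \delta((p',X),\op{a})$, i.e., the linear state after an opening tag depends only on the type component, together with $Q \subs P \times \Delta$ and $t((p,X),a) = (p',X)$ (the type component is preserved by $t$). This forces a careful choice of what the ``type'' $X$ in a state means: it cannot be the expected type of the \emph{current} subtree in the naive sense if we also want $t$ to preserve it, so I would let $X$ be the type being validated by the content-model DFA stored in the hierarchical state (equivalently, the type of the \emph{parent}), with $s$ ranging over DFA states of $M_X$, and read the child type $Y$ off of $\lambda(a)$ and the single-type property at the moment we push — the pushed state is then $(s,X)$ with the \emph{same} $X$, matching $t((s,X),a)=(s'',X)$ on the way back up. Getting this encoding right, and then checking all five bullets mechanically, is the real content; the language-equivalence proof and the polynomial-time bound (which follows since each $M_X$ is polynomial and there are $|\Delta|$ of them) are routine. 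I would also invoke Lemma \ref{lem:normal} if needed to ensure the resulting automaton is in normal form, though the construction above already pushes the current state on every opening tag, so it is in normal form by design.
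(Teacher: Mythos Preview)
Your proposal is correct and follows essentially the same construction as the paper: build DFAs $M_X$ for each content model $r_X$, take states of the form $(s,X)$ with $s$ a DFA state and $X$ a type, and use the single-type property to make the opening-tag transition depend only on the type component.

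The one place where you tie yourself in knots unnecessarily is the first SNWA bullet. Your first instinct was right and you should not abandon it: by single-type-ness, for each $X\in\Delta$ and $a\in\Sigma$ there is \emph{globally} at most one type $Y$ occurring in $r_X$ with $\lambda(Y)=a$; this $Y$ does not depend on the current DFA state $s$ at all. The paper makes this explicit by introducing a function $\nu(X,a)$ returning that unique $Y$ (padding with a dummy dead type $X_\bot$ when no such $Y$ exists), and then simply sets $\delta((s,X),\op{a})=(s^0_{\nu(X,a)},\nu(X,a))$ and $t((s,X),a)=(\delta_X(s,\nu(X,a)),X)$. Both bullets are then immediate, with $X$ meaning ``type of the node whose children we are currently scanning''. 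There is no need to reinterpret $X$ as the parent's type; that alternative encoding only obscures the argument. The remaining points (polynomial size from deterministic regular expressions, $\Floc(a)=\bigcup_{\lambda(Y)=a}F_Y\times\{Y\}$, correctness by structural induction) are exactly as you outline and as the paper does them.
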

The following adaptation of the notion of \emph{simplicity} to ANWAs is a bit technical. It will guarantee however that the ANWAs obtained from simple games are simple and have reasonable complexity properties. 
\begin{definition}
	An ANWA $A = (Q, \Sigma, \delta, q_0, F)$ with $Q \subs P \times \Delta$ (for some state set $P$ and type alphabet $\Delta$)  is \emph{simple} (SANWA), if it has the following two properties.
\begin{itemize}
\item \emph{(Horizontal simplicity)}
        There are 
a \emph{local acceptance function} $\Floc: \Sigma
  \rightarrow \Pot(Q)$,
       a \emph{test state} $q_?\in Q$, and
a \emph{target state function} $t: Q \times \Sigma \rightarrow Q$,
	such that the transition function $\delta$ of $A$ satisfies the following conditions:
	\begin{itemize}
		\item $\delta(q,q',\cl{a}) = t(q',a)$ for all $q \in Q$
                  and $q'\not=q_?$;
		\item $\delta(q,q_?,\cl{a}) =
                  \begin{cases}
                    \text{true}, & \text{if $q\in \Floc(a)$}\\
                    \text{false}, & \text{if $q\not\in \Floc(a)$}\\
                  \end{cases}$
	\end{itemize}
	Furthermore, for each $(p, X) \in Q$ and $a \in \Sigma$, it holds that $t((p, X),a) = (p', X)$ for some $p' \in P$.
\item \emph{(Vertical Simplicity)} For each $X \in \Delta$ and $a \in \Sigma$, there is a $q \in Q$ such that for all $p \in P$ it holds that $\delta((p,X), \op{a}) \in \posbool(\{q\} \times ((P \times \{X\}) \cup \{q_?\}))$.
\end{itemize}
\end{definition}

Essentially,  horizontal simplicity states that $A$ has two kinds of computations on a well-nested subword: (1) computations starting from a pair $(q,q_?)$ test a property of the subword and can either succeed or fail at the end of the subword (and thus influence the overall computation); (2) computations starting from a pair $(q,q')$ for $q'\not=q_?$ basically ignore the subword. Even though they may branch in an alternating fashion, the state after the closing tag $\cl{a}$ is the same  in all subruns, is determined by $t(q',a)$ and has the same type as $q'$.
  
  Vertical simplicity, on the other hand, states that all alternation in $A$ happens in the choice of hierarchical states -- while, on an opening tag, $A$ may branch into sub-runs pushing different hierarchical states onto the stack, the choice of linear follow-up state is ``locally deterministic'', depending only the type of the previous state of $A$ and the label of the tag being read, and the current type is preserved in all hierarchical states except for $q_?$. Together, these two conditions also guarantee that SNWAs may also be interpreted as SANWAs.

\begin{proposition} \label{prop:lanwacomplexity}\mbox{}
  \begin{enumerate}[(a)]
  \item Non-emptiness for SANWA is $\PSPACE$-complete.
  \item The membership problem for SANWA is decidable in polynomial
    time.
  \end{enumerate}
\end{proposition}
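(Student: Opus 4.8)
The plan is to prove the two parts separately, in each case taking the known bound for general ANWAs (Proposition~\ref{prop:anwacomplexity}) and showing how horizontal and vertical simplicity collapse one exponential.

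For part~(b), membership, the idea is that simplicity makes the run of an SANWA essentially deterministic on the linear component, so that an accepting run can be searched for in polynomial time by a bottom-up (forest) computation on the input nested word $w$. Concretely, I would process $w$ in the tree-view, and at each node compute the set of pairs $(q_{\text{in}}, q_{\text{out}})$ of linear states such that, starting the sub-run at that node in linear state $q_{\text{in}}$, there is an accepting sub-run on the whole subtree that leaves the automaton in state $q_{\text{out}}$ after the closing tag. Horizontal simplicity guarantees that a ``non-test'' sub-run (one whose hierarchical state at the matching opening tag is not $q_?$) is forced: its outgoing state is $t(q',a)$ regardless of what happens in the subtree, so it contributes nothing and can be summarised trivially. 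A ``test'' sub-run (hierarchical state $q_?$) must end in $\Floc(a)$, which is again a purely local check once the incoming linear state is known. Vertical simplicity guarantees that on an opening tag $\op{a}$, the only real branching is which hierarchical states get pushed, while the linear follow-up state $q$ is determined by the type of the current state and $a$; so the set of reachable $(q_{\text{in}}, q_{\text{out}})$ pairs at a node is computed from those of its children by evaluating a positive boolean formula over the children's summaries — a polynomial-time operation. Since $|Q|$ is polynomial and the number of nodes is $|w|$, the whole computation is polynomial time; correctness is a straightforward induction on subtree height against the definition of a run.

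For part~(a), non-emptiness, the $\PSPACE$ upper bound is the main point. Here there is no input word, so one must search over nested words $w$ and accepting runs simultaneously. The key observation is that, because of simplicity, the ``profile'' relevant to the continuation of an accepting run on a subtree is small: by horizontal simplicity only the incoming linear state matters for what a test sub-run can certify (membership of the final state in $\Floc(a)$), and by vertical simplicity the linear state after an opening tag is a function of the type only. So one can define, for each type $X\in\Delta$, the set $W_X\subs Q$ of linear states $q$ such that some nested word admits an accepting sub-run starting at state $q$ whose post-state has type $X$ — or more precisely work with reachability of $q_?$-certified states from a given linear state. These sets satisfy a monotone fixpoint equation: $q$ belongs to the relevant set iff there is a label $a$ and a positive boolean combination, witnessed by $\delta((p,X),\op{a})$, of (i) children summaries already computed and (ii) the local acceptance condition $\Floc(a)$ for the $q_?$-branches. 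Iterating this fixpoint takes polynomially many rounds (the sets only grow and live in $\Pot(Q)$ indexed by $\Delta$), each round evaluating polynomially many positive boolean formulas over polynomially many states; hence the fixpoint is computable in polynomial time, and one then checks whether $q_0$ reaches $F$ through it. Actually this gives $\PTIME$ membership-style reasoning but for \emph{non-emptiness} the subtlety is that a witness nested word may be needed and the alternating branches must all be satisfied by the \emph{same} document structure along shared ancestors; handling that shared-ancestor dependency is what forces a $\PSPACE$ (alternating-$\PTIME$) rather than $\PTIME$ algorithm, and matches the lower bound. For the lower bound I would reduce from a canonical $\PSPACE$-complete problem — e.g.\ \QBF\ or the membership problem for a polynomial-space alternating Turing machine — encoding a computation as a (linear-depth) nested word where the horizontal components handle the tape contents and acceptance checks via $\Floc$, and the vertical/hierarchical branching handles the alternation of the machine; this is analogous in spirit to the standard encoding used for Proposition~\ref{prop:anwacomplexity}(b), restricted to fit the simplicity constraints.

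The main obstacle I expect is the $\PSPACE$ upper bound in part~(a): one has to argue carefully that, despite unbounded nesting depth of candidate witness words, an accepting run can be found by an alternating polynomial-time procedure that re-derives the needed summaries on demand rather than storing a whole run. The delicate point is that alternating branches of the run share a common prefix of ancestor nodes (hence share hierarchical-state choices), so the procedure cannot independently guess subtrees for different conjuncts; the simplicity conditions are exactly what make the shared information bounded (the type $X$ together with a single linear state per level), which is what I would need to spell out to make the $\PSPACE$ bound go through. The remaining parts — the $\PTIME$ bottom-up algorithm for membership and the $\PSPACE$-hardness reduction — are comparatively routine.
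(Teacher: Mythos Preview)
Your plan for part~(b) is correct but takes a different route from the paper. You propose a bottom-up dynamic programme over the tree structure of $w$, computing at each node a set of reachability pairs; the paper instead gives an alternating logarithmic-space algorithm (and hence the tighter bound $\text{ALOGSPACE}=\PTIME$). The paper's trick is that whenever a transition at $\op{a}$ produces a hierarchical state $p\neq q_?$, horizontal simplicity means the state after the matching $\cl{a}$ is already determined as $t(p,a)$, so one can branch universally into (i) a subcomputation that jumps straight to $\cl{a}$ and continues from $t(p,a)$, and (ii) a subcomputation that processes the enclosed subword without remembering $p$. Only test subruns need to track one nesting depth. Your approach is more direct for the stated $\PTIME$ claim; the paper's buys the sharper space bound.

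For part~(a), however, there is a genuine gap in your upper-bound argument. You correctly identify the obstacle---different alternating branches share ancestor nodes and so cannot independently guess subtrees---but you do not supply the mechanism that resolves it. The paper's solution has two ingredients you are missing. First, a pumping lemma (Lemma~\ref{lemma:shortstrings}) showing that if the language is non-empty then it already contains a word of width at most exponential and nesting depth at most \emph{polynomial} in $|Q|$, $|\Sigma|$, $|\Delta|$; without such a bound there is no termination argument for an alternating search. Second, an alternating polynomial-time game in which \playerE asserts the existence of a witness not by naming it, but by committing to a function $S\colon Q\to\Pot(Q)$ summarising, for every start state, the set of end states some run on the claimed word can reach; \playerA challenges, and \playerE responds by decomposing into two halves (concatenation round) or peeling off one level of nesting (nesting round), each time refining $S$. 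Counters bounding the remaining width (logarithmically many bits) and depth (polynomial) force termination in polynomially many rounds. Your proposed fixpoint over sets $W_X\subs Q$ does not capture this: it records reachability from a \emph{single} start state, whereas the shared-ancestor constraint requires that one word simultaneously realise the claimed behaviour from \emph{every} start state \playerA might pick, which is exactly what the function $S$ encodes.

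For the lower bound in~(a), your \QBF\ reduction could be made to work, but the paper's route is shorter: reduce from non-emptiness of alternating finite automata, encoding a flat string $w_1\cdots w_n$ as the depth-$0$ nested word $\op{w_1}\cl{w_1}\cdots\op{w_n}\cl{w_n}$, which trivially satisfies the simplicity constraints.
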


\begin{proofof}{Theorem \ref{theo:simple}}
  The generic algorithm from the previous section can be adapted for simple cfGs, but with better complexity thanks to Proposition \ref{prop:lanwacomplexity}, to yield the upper bounds stated in \mbox{Theorem \ref{theo:simple}}.

More precisely, Proposition \ref{prop:lanwacomplexity} (b) and Lemma
\ref{lemma:rfcalls} yield a polynomial time bound for replay-free games.
Proposition \ref{prop:lanwacomplexity} (a) guarantees that the
inductive step in the computation of $\ceffect$ can be carried out in polynomial space
(as opposed to doubly exponential time).\footnote{We actually use a
  slightly stronger result than Proposition \ref{prop:lanwacomplexity}
  (a): deciding whether, for an NWA $A_1$ and
  a SANWA $A_2$, it holds $L(A_1)\cap L(A_2)\not=\emptyset$, is
  complete for $\PSPACE$.} The upper bounds for games with
unrestricted replay follows immediately and the upper bound for
bounded replay can be shown similarly as in
Proposition \ref{prop:finitegeneral} (b).

The lower bounds are given by the following proposition. They mostly follow from careful adaptation of lower bound proofs of \cite{MuschollSS06} for games on flat strings.
\end{proofof}

\begin{proposition} \label{prop:lowerdtd}\mbox{}
For the class of games with target languages specified by DTDs, \mbox{$\Safelr$ is}
  \begin{enumerate}[(a)]
  \item $\EXPTIME$-hard with
    unrestricted replay,
  \item  $\PSPACE$-hard with
    bounded replay, and
  \item $\PTIME$-hard  (under logspace-reductions) without replay
  \end{enumerate}
\end{proposition}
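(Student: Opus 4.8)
\textbf{Proof plan for Proposition \ref{prop:lowerdtd}.}
The plan is to obtain the three hardness results by reducing from suitable canonical complete problems, in each case reusing as much as possible of the string-based constructions of \cite{MuschollSS06} and adapting them so that the target language becomes (the linearisation language of) a DTD rather than an arbitrary DFA. The general strategy throughout is the following: a DTD on nested words is considerably weaker than a DNWA, so one must encode the (flat) word games of \cite{MuschollSS06} into rooted nested words in such a way that a DTD is strong enough to recognise exactly the intended target set. The standard trick is to make the document a shallow tree: a root labelled by some fixed symbol whose children form, in document order, the flat string that the string game manipulates, so that membership in the target DTD reduces to checking that the sequence of children labels lies in a deterministic regular language; the individual child positions can themselves be thin subtrees encoding whatever local gadget structure the reduction needs. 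Because \pone plays left-to-right and \Call moves replace a rooted subword, placing each ``letter'' of the simulated string in its own subtree lets a \Call move on that subtree mimic a symbol-replacement move of the flat game. One must, as always, be careful that the deterministic-regular-expression requirement on content models is met; this is where the construction genuinely differs from \cite{MuschollSS06}, which allowed arbitrary DFAs, and it is the point the ``Related Work'' paragraph already flags.

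For part (c), \PTIME-hardness under logspace reductions, I would reduce from a standard \PTIME-complete problem such as the monotone circuit value problem, or equivalently from the alternating-reachability/\textsc{And-Or-Graph} accessibility problem, which is the more natural fit here: a replay-free game where \pone reads a document once and \ptwo answers calls from replacement languages is essentially an alternating reachability instance, and one encodes the AND/OR graph into a DTD-defined target plus finite replacement rules. The nesting structure is only used to bundle each vertex-gadget into its own subtree; the DTD content models needed are trivial (single types, deterministic by construction), so the reduction is logspace and the correctness argument is a routine induction on the layered structure of the AND/OR graph.

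For parts (a) and (b) the reductions are heavier. For (b), \PSPACE-hardness with bounded replay, the natural source is \QBF (or the corresponding result in \cite{MuschollSS06}); the quantifier alternation is realised by alternating \Call and \Read choices, and the bounded replay parameter $k$ is tied to the number of quantifier blocks, while the matrix of the formula is checked by the target DTD applied to the final flat string of variable-assignment letters sitting under the root. For (a), \EXPTIME-hardness with unrestricted replay, one reduces from an \EXPTIME-complete problem such as the acceptance problem for alternating polynomial-space Turing machines, or more conveniently from a two-player corridor-tiling-style game; unbounded replay is precisely what lets \pone drive a computation of unbounded length, with \ptwo's replacement choices playing the role of the adversarial/universal moves, and the target DTD checks the final configuration. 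In all three cases the key correctness lemma is a bijection between winning strategies of \pone in the constructed game and solutions/accepting computations of the source instance, proved by translating moves in both directions.

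The main obstacle I expect is exactly the determinism constraint on the regular expressions in the DTD content models: the flat-string lower bounds of \cite{MuschollSS06} are stated for target languages given by DFAs, and an arbitrary DFA language need not be a deterministic (one-unambiguous) regular language, so one cannot simply quote those results. The work is therefore to re-engineer each target automaton into one whose content-model languages are one-unambiguous --- typically by adding marker symbols or by shaping the document so that the DTD only ever has to verify a ``local'' deterministic pattern per node rather than a global automaton run --- while preserving the reduction's correctness and its logspace bound. A secondary technical point is verifying that a \Call move, which deletes the entire subtree below the called node, interacts correctly with the subtree-per-letter encoding; but once the encoding is set up so that each simulated symbol occupies a one-node-deep subtree, this is immediate. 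I would also double-check (as the acknowledgements hint) that the bounded-replay bound in (b) is genuinely $k$-bounded for the stated $k$ and does not secretly require an unbounded number of \Call levels.
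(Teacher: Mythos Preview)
Your overall strategy---encode a flat string game as a depth-zero or depth-one nested word and verify that the target content model can be written as a deterministic regular expression---is exactly right, and it is essentially what the paper does. The paper's organisation is slightly cleaner than yours: it first proves three stand-alone lemmas for \emph{flat} string games whose target language is given by a deterministic regular expression (one for each of the three replay regimes), and only then applies a single, uniform lifting $w_1\cdots w_n \mapsto \op{w_1}\cl{w_1}\cdots\op{w_n}\cl{w_n}$ to obtain the DTD results. This factoring means the nested-word part of the argument is one paragraph rather than three, and it isolates the real work (getting the DRE constraint to go through) from the encoding bookkeeping.

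There is, however, a genuine gap in your plan for part~(b). You write that ``the bounded replay parameter $k$ is tied to the number of quantifier blocks.'' If the required \Call depth grows with the input formula, then you have not shown hardness for any \emph{fixed} bound $k$; you have only shown hardness for the class of games with unbounded replay, which is part~(a). Bounded-replay hardness requires a reduction that works at some constant depth, uniformly over all inputs. The paper handles this by reducing not from \QBF\ but from (the complement of) \textsc{Corridor Tiling}: \pone first calls a start symbol (depth~1) so that \ptwo must produce a candidate tiling, and then \pone uses depth-2 calls to \emph{flag a single error} (a violated horizontal or vertical constraint, or a wrong final tile) which a carefully crafted deterministic regular expression can verify. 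This achieves \PSPACE-hardness already at \Call depth~2. If you want to stay with \QBF, you would need a different encoding in which quantifier alternation is realised horizontally (e.g.\ via \pone's \Read/\Call choice at each variable position) rather than by nesting calls; your current description does not indicate that you have this in mind.
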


For finite (and explicitly given) replacement languages we get
feasibility even for bounded replay, but no improvement for unbounded replay.
\begin{proposition} \label{prop:simplefinite}\mbox{}
For the class of games with target languages specified by XML Schemas and
explicitly enumerated finite replacement languages, $\Safelr$ is
  \begin{enumerate}[(a)]
  \item $\EXPTIME$-complete with
    unrestricted replay, and
  \item $\PTIME$-complete  (under logspace-reductions)  with bounded replay
    or without  replay.
  \end{enumerate}
The same results hold for DTDs in place of XML Schemas.
\end{proposition}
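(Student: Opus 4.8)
The plan is to combine the upper-bound machinery developed for simple games (Section \ref{sec:simple}) with the finite-replacement refinements from Proposition \ref{prop:finitegeneral}, and to obtain the lower bounds by transfer from the flat-string results of \cite{MuschollSS06} together with Proposition \ref{prop:lowerdtd}. For the \EXPTIME upper bound in (a), I would follow the generic two-stage algorithm: first compute $\cneffect{*}=\ceffect$ inductively via Lemma \ref{lemma:rfcalls} and the inductive step, then build the ANWA $A_{\ceffect}$ (Proposition \ref{prop:calltoanwa}) and test membership of $w$. Since the target is an XML Schema, the ANWA obtained is an SANWA, so the membership test is in \PTIME (Proposition \ref{prop:lanwacomplexity} (b)). The key point is that for explicitly given finite replacement languages, the inductive step $\cneffect{k}\mapsto\cneffect{k+1}$ can be carried out in polynomial space, exactly as in the footnote to Proposition \ref{prop:finitegeneral} (a); combined with the at most exponentially many iterations and the fact that each call effect has size at most $2^{|Q|}$, the whole first stage runs in exponential time. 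Hence (a) is in \EXPTIME. The matching lower bound already holds for flat strings and regular (in fact finite) replacement languages by Theorem 4.3 of \cite{MuschollSS06}, and flat strings are a special case of nested words with DTD target languages, so (a) is \EXPTIME-complete for both DTDs and XML Schemas.

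For the \PTIME upper bound in (b), the crucial observation is that, with explicitly enumerated finite replacement languages and $k$-bounded \Call depth, the whole computation collapses to polynomially many levels: by Lemma \ref{lemma:rfcalls}, $\cneffect{1}$ is computable in polynomial time, and I would argue that each further step $\cneffect{j}\mapsto\cneffect{j+1}$ for $j < k$ stays polynomial-time in the finite-replacement, simple setting — this is the analogue of the \PSPACE-to-\PTIME improvement we already exploited for membership, now applied to the inductive step itself. Intuitively, with finitely many (and explicitly listed) replacement strings, the set of states \ptwo can enforce is obtained by a bounded product construction over the SANWA corresponding to $\cneffect{j}$, and non-emptiness of the relevant intersections needed to read off $\cneffect{j+1}$ reduces to \PTIME reachability in a product automaton of polynomial size rather than to SANWA non-emptiness. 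After $k$ such steps we have $\cneffect{k}$, build the SANWA $A_{\cneffect{k}}$ restricted to \Call depth $k$ (polynomial size), and evaluate it on $w$ in polynomial time via Proposition \ref{prop:lanwacomplexity} (b). The replay-free case is just $k=1$ and is already covered by the argument for Theorem \ref{theo:simple} (c). For the lower bounds in (b), I would note that the \PTIME-hardness proof of Proposition \ref{prop:lowerdtd} (c) uses only a replay-free game with a DTD target and can be arranged to use finite replacement languages, so (b) is \PTIME-complete under logspace reductions for DTDs, and a fortiori for XML Schemas.

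The main obstacle I anticipate is the precise bookkeeping in the \PTIME claim of part (b): one must verify that restricting to \Call depth $k$ genuinely keeps every intermediate object polynomial, in particular that the product construction needed to compute $\cneffect{j+1}$ from $\cneffect{j}$ together with the finite replacement languages does not blow up across the $k$ levels (a naive recursion could incur a tower of height $k$). The fix is that $k$ is a fixed constant of the problem class, and each level's object is polynomial in $|G|$ and $|w|$ with the exponent independent of $k$; one has to phrase the induction on $j \le k$ carefully so that the polynomial degree does not grow with $j$. A secondary subtlety, already flagged in \cite{MiloAABN05} and corrected here, is to make sure the product construction respects the alternation between \pone's and \ptwo's choices — i.e.\ that the suffix of a replacement string chosen by \ptwo is not allowed to depend on \pone's moves on its prefix — which is automatically handled by our SANWA-based formulation but must be stated explicitly. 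Everything else (the \EXPTIME bound, the lower-bound transfers) is routine given the results already established.
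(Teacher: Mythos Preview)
Your treatment of (a) is fine, though slightly over-engineered: the paper simply invokes the \EXPTIME upper bound of Theorem \ref{theo:simple}, which already holds without the finite-replacement assumption, and takes the lower bound from the flat-string reduction of Lemma \ref{lemma:dreexptime} (which uses only finite replacement rules).

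For (b), however, your plan has a real gap. You propose to compute $\cneffect{1},\ldots,\cneffect{k}$ explicitly and argue that each step stays polynomial-time. The problem is that a call effect $\cneffect{j}(a,q)$ is an element of $\Pot(\Pot(Q))$ and can contain exponentially many (in $|Q|$) minimal sets already for small $j$; there is no reason why bounding the \Call depth by a constant $k$ keeps these objects polynomial in size. Your proposed fix---that $k$ is fixed and ``each level's object is polynomial with exponent independent of $k$''---simply does not follow, and the paper does not establish any such bound. Consequently the inductive step cannot be read off from a polynomial-size product automaton as you suggest.

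The paper takes a completely different route for the upper bound in (b): an alternating logspace algorithm that \emph{never materialises call effects}. It directly simulates the game on the input word, using existential branching for \pone's choices and universal branching for \ptwo's (finitely many) replacements. Two features make this work in logspace. First, since the \Call depth is bounded by the constant $k$ and the replacement languages are explicitly listed in $G$, the ``current word'' at any point differs from $w$ only inside at most $k$ nested replacement strings, each of which can be addressed by a pointer of length $O(\log|G|)$ into the representation of $G$; so the configuration is representable with $O(k\log(|G|+|w|))$ bits. Second, simplicity of the target DNWA is exploited by universally forking at every opening tag $\op{a}$: one subcomputation verifies the subword against $\Floc(a)$, the other jumps directly past the matching $\cl{a}$ via the target-state function $t$. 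This avoids storing the hierarchical-state stack and is exactly the mechanism from the membership algorithm in Proposition \ref{prop:lanwacomplexity} (b). The overall argument is the nested-word analogue of Theorem 5.8 in \cite{MuschollSS06}. Your SANWA/effect-based route would need a genuinely new argument bounding the size of bounded-depth call effects, which the paper neither provides nor needs.
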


Once again, as our algorithm generally computes a SANWA deciding $\safelr(G)$, the data complexity for $\Safelr$ is in $\PTIME$ for all cases considered here, due to  Proposition \ref{prop:lanwacomplexity}.
		
			\section{Validation of parameters and Insertion}\label{sec:other}

In this section, we focus on two features that have not been addressed in the previous two sections: validation of the parameters of a function call with respect to a given schema, and a semantics which allows that returned trees do not \emph{replace} their call nodes but are inserted next to them.

	\subsection{Validation of parameters}
As pointed out in \cite{MiloAABN05}, in Active XML, parameters of function calls should be valid with respect to some schema. Transferred to the setting of cfGs this means that \pone should only be able to play a \Call move in a configuration $(\ponea,u\op{a}v,\cl{a}w)$ if $\op{a}v\cl{a}$ is in $V_a$ for some set $V_a$ of words that are valid for calls of $\cl{a}$. Our definition of cfGs and the previous ones studied in the literature mostly ignore this aspect.\footnote{Actually, \cite{MiloAABN05} takes validation into account but the precise way in which parameters are specified and tested is not explained in full detail.} We do not investigate all possible game types in combination with parameter validation but rather concentrate on the most promising setting with respect to tractable algorithms.
It turns out, that games without replay and with DTDs to specify target, replacement and validation languages have a tractable winning problem as long as the number of different validation DTDs is bounded by some constant.\footnote{Note that this implies a polynomial-time data complexity for arbitrary replay-free games with DTD target, replacement and validation languages.} It becomes intractable if the number of validation schemas can be unbounded and (already) with target and validation languages specified by XML Schemas, even with only one validation schema.  
 
More precisely, we prove the following results.

\begin{theorem}\label{theo:validationptime}
For the class of games with validation with a bounded number of validation DTDs and target languages specified by DTDs, $\Safelr$  is in  $\PTIME$ without replay.
\end{theorem}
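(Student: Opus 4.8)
The plan is to adapt the replay-free instance of the generic algorithm — the one proving Theorem \ref{theo:simple}(c) — so that it additionally tracks the states of automata for the validation DTDs, exploiting the fact that a \emph{bounded} number of validation DTDs keeps every object of polynomial size. First I would reduce all schemas to simple automata: by Proposition \ref{prop:dtdsimple}, the target DTD, each replacement DTD $R_a$, and each of the $c$ distinct validation DTDs (where $c$ is the fixed bound) become simple DNWAs in polynomial time; call the last ones $B_1,\dots,B_c$. The decisive structural point is that validity of a parameter $\op{a}v\cl{a}$ with respect to a DTD is an \emph{ancestor--sibling-local} property: a simple DNWA for it, having processed a subtree, sits in a designated ``valid'' state or in its sink $\bot$, according to whether every node inside obeys the relevant content model and $a$ matches the root type.

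Next I would form the product $\mathcal{A}$ of $A(T)$ with $B_1,\dots,B_c$, after equipping each $B_i$ with a type-indexed family of sink states so that a local validation failure in one component is merely remembered in the typed state rather than collapsing the whole run. Since $c$ is constant and the type alphabet of $\mathcal{A}$ is the product of the individual (polynomially large) type alphabets, $\mathcal{A}$ is again a simple DNWA of size polynomial in $|G|$, and the state of $\mathcal{A}$ after processing any subtree records both the $A(T)$-computation on it and, in one shot, its validity with respect to all $c$ validation schemas. Because we are replay-free, no fixpoint iteration is needed (cf.\ the proof of Theorem \ref{theo:simple}): applying the argument of Lemma \ref{lemma:rfcalls} to $\mathcal{A}$, the \emph{joint} replay-free call effect of $\mathcal{A}$ on a function symbol $a$ is the single set $\{\delta^*_{\mathcal{A}}(q,v)\mid v\in R_a\}$, computable in polynomial time. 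Using the joint effect (rather than the effects of $A(T)$ and of each $B_i$ separately) is essential, since when \pone plays \Call, \ptwo answers with a \emph{single} $v\in R_a$ that simultaneously fixes the new target state and all new validation states.

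From $\mathcal{A}$ and this call effect I would build, exactly as for simple games, a simple ANWA $A_C$ for $\safelr(G)$ over the state space of $\mathcal{A}$, with one modification: at a closing tag of a function symbol $a$, \pone may take the \Call alternative (a universal branch over the states of the joint call effect) only when the current $\mathcal{A}$-state marks the parameter subtree as valid for $V_a$, and otherwise she is forced to \Read. Then $w\in\safelr(G)$ iff $w\in L(A_C)$, which by Proposition \ref{prop:lanwacomplexity}(b) is decidable in polynomial time; as $|A_C|$ is polynomial in $|G|$, the whole procedure runs in polynomial time.

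The hard part will be verifying that $A_C$ is genuinely a \emph{simple} ANWA (horizontal and vertical simplicity), so that Proposition \ref{prop:lanwacomplexity}(b) applies: on the one hand the validation-failure information must be carried inside the typed state without triggering the global sink (handled by the sink-state families above), and on the other hand the \Call transitions — which let \pone choose between \Read and \Call and, within \Call, branch universally over the joint call effect gated by a local validity test — must be massaged into the permitted SANWA transition shape. Both are routine given the simple structure of $\mathcal{A}$, but constitute the technical core. It is also worth noting where the two hypotheses enter: with an unbounded number of validation DTDs the product $\mathcal{A}$ becomes exponential, and with validation by XML Schemas instead of DTDs validity is no longer ancestor--sibling-local, so $\mathcal{A}$ stops being simple — and, as the paper shows, the problem is $\PSPACE$-hard in both of those cases.
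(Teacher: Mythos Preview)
Your approach has a genuine gap. Gating \pones \Call alternative at $\cl{a}$ by the current $\mathcal{A}$-state, as you propose, violates horizontal simplicity: in a SANWA the closing-tag transition is $t(q',a)$ and may not depend on the linear state reached after processing the subtree. Moving the decision to $\op{a}$ (as the SANWA construction of Proposition~\ref{prop:calltolanwa} does) also fails, for a reason that does not arise without validation: whether \pone may \Call at $\cl{a}$ depends on \ptwos moves inside the subtree, and \pone can legitimately play a \emph{conditional} strategy --- ``\Call if the subtree came out $V_a$-valid, \Read otherwise.'' A single commitment at $\op{a}$ loses: committing to \Call adds a $V_a$-validity test subrun that \ptwo can make fail (even though \pone could then still win by Reading), while committing to \Read forfeits the \Call option altogether. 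A concrete instance: the subtree is a single function node whose two possible replacements make it $V_a$-valid or not, and \pone needs \Call in the first case and \Read in the second. The related claim that the product $\mathcal{A}$ is a simple DNWA is also problematic: the state after $\cl{a}$ must record \emph{which} of the $c+1$ component $\Floc$-tests the subtree passed (so that later Calls on ancestors can be decided), but simplicity allows only one $\Floc$-test with one sink $\bot$; your ``typed sink families'' would make the post-$\cl{a}$ state depend on the linear state, again breaking horizontal simplicity.

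The paper avoids both obstacles by working bottom-up over the tree with \emph{dossiers}: for each node it computes the set of profile sets (subsets of $\{T,V_1,\dots,V_d\}$) that \pone can enforce on the subtree, and then decides the next level by solving a constant number of replay-free \emph{flat} games (Corollary~\ref{coro:flatptime}). With $d$ fixed there are only constantly many profiles and profile sets, so everything stays polynomial. The profile-set layer is precisely what encodes \pones conditional \Call/\Read choice and the per-schema validity propagation that your construction is missing; one could in principle embed it into a SANWA by adding profile-set ``modes'' to the state, but that amounts to reimplementing the paper's dossier algorithm, not the routine massaging you anticipate.
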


The algorithm uses a bottom-up approach. The basic idea is that, starting from the leaves, at each level of the tree (that is for some node $v$ and its leaf children) all relevant information about the game in the subtree $t_v$ is computed with the help of flat replay-free games and aggregated in $v$. Then the children of $v$ are discarded and the algorithm continues until only the root remains.

The following result shows that for slightly stronger games, parameter validation worsens the complexity.\footnote{This is, of course not surprising. If any, the surprising result is Theorem \ref{theo:validationptime}.}

\begin{theorem}\label{theo:validationlower}
  For the class of games with validation, $\Safelr$ (without replay) is
  \begin{enumerate}[(a)]
  \item $\EXPTIME$-hard, if target and validation languages are specified by DNWAs (even with only one function symbol);
  \item $\PSPACE$-hard, for games with only one function symbol, if the  validation language is given by an XML schema, the  target language by a DTD and a finite replacement language; and
  \item $\PSPACE$-hard, for games with an unbounded number of validation DTDs and replacement and target languages specified by DTDs.
  \end{enumerate}
\end{theorem}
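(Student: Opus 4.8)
\textbf{Proof proposal for Theorem \ref{theo:validationlower}.}

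The plan is to prove each of the three hardness claims by a separate reduction, in each case encoding the source problem into a replay-free cfG with validation so that \pone wins if and only if the source instance is positive. The guiding principle throughout is that a validation schema $V_a$ on a \Call move is a ``gate'': \pone can only perform a \Call on $\cl{a}$ if the current subword rooted at $\op{a}$ lies in $V_a$, and once she does, \ptwo may substitute any word from $R_a$. By choosing $V_a$ cleverly we force \pone to first build up (via earlier moves of the game) a subword that certifies some property — this is the mechanism that lets validation carry nontrivial computational content even in the replay-free setting.

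For part (a), I would reduce from the non-emptiness problem for ANWAs, which is \EXPTIME-hard already in the ``simple''-free general case (it is the lower bound underlying Proposition \ref{prop:lowergeneral}(a), adapted to a single iteration level; more directly one can reduce from an \EXPTIME-complete tiling or alternating-\PSPACE problem expressed through an NWA/DNWA). The idea: with only one function symbol $a$, let the target DNWA and the validation DNWA jointly enforce that \ptwo's replacement word — which \pone ``reads through'' afterwards — correctly encodes a computation, while the validation schema on the single \Call forces \pone to have first written down, in the parameter position, a guessed witness that the validation DNWA checks. Since DNWAs can recognise linearisations of deterministic bottom-up tree automata (noted after Proposition in Section \ref{sec:general}), we have enough expressive power to simulate an alternating computation tree within one \Call layer; \pone's single nondeterministic \Call choice plus \ptwo's adversarial replacement together give the alternation, and the two DNWAs (target and validation) verify both halves. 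The main obstacle here is the bookkeeping needed to confine everything to \emph{one} function symbol and \emph{replay-free} play while still getting genuine alternation; I expect this to require a somewhat delicate encoding where the validation DNWA checks the ``existential'' part and the interplay of $R_a$ with the target DNWA checks the ``universal'' part.

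For parts (b) and (c), I would adapt the corresponding \PSPACE-hardness arguments of \cite{MuschollSS06} for games on flat strings (the same source the paper already leans on for Proposition \ref{prop:lowerdtd}), reducing from \QBF. In (c), the unbounded number of validation DTDs plays the role that an unbounded number of function symbols plays in the standard reduction: each quantifier block $\exists x_i / \forall x_i$ gets its own function symbol with its own validation DTD $V_i$, and the DTD $V_i$ enforces that the already-constructed prefix encodes a consistent partial assignment to $x_1,\dots,x_{i-1}$ before the call at level $i$ is permitted; \pones \Call choices and \ptwos replacements alternately fix the existential and universal variables, and a DTD target language checks the matrix. In (b), instead we exploit that a \emph{single} XML Schema (single-type tree grammar) validation language is already strong enough — its typing can remember ``which variable we are assigning'' along a path — so one function symbol suffices, with a DTD target and a finite replacement language handling the last-step verification of the propositional matrix. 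The hard part across (b) and (c) will be checking that the reductions stay within the advertised schema classes (DTD vs.\ XML Schema, finite vs.\ regular replacement) and that the replay-free restriction does not accidentally let a player cheat; concretely, one must verify that \pone gains nothing by declining an available \Call, and that \ptwo cannot use the length freedom in regular $R_a$ to escape the intended simulation — both standard but necessary sanity checks in this line of work.
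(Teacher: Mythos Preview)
Your proposal for part~(a) has a genuine gap. Non-emptiness for ANWAs is $\iiEXPTIME$-complete (Proposition~\ref{prop:anwacomplexity}(a)), not \EXPTIME, so it cannot serve as the source for an \EXPTIME-hardness reduction, and the ``alternation simulation'' you sketch is both vague and unnecessarily ambitious for this bound. The paper instead reduces from the intersection emptiness problem for $n$ DNWAs $A_1,\ldots,A_n$, which is \EXPTIME-hard via deterministic top-down tree automata. The mechanism is pleasantly simple and uses no alternation at all: the input is $\op{t}^{n+1}\op{s}\cl{s}\cl{t}^{n+1}$; \pone must \Call the innermost $t$, whereupon \ptwo supplies an arbitrary $v\in\wf$ (his best choice being a string in $\bigcap_i L(A_i)$, if one exists); the validation DNWA for $t$ is built so that a \Call at the $i$-th remaining level is permitted exactly when $v\notin L(A_i)$; the target accepts if fewer than $n$ $t$-levels remain. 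Thus \pone wins iff every $v$ is rejected by some $A_i$, i.e., iff the intersection is empty. The whole reduction needs only the single function symbol $t$, and the expressive work is done entirely by the one validation DNWA (a disjoint union of the complemented $A_i$ indexed by nesting depth).

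For parts~(b) and~(c) you correctly identify \QBF as the source problem, and the overall idea is right, but the architecture you describe diverges from the paper's in ways worth noting. First, the paper takes $\varphi$ in \emph{disjunctive} normal form, so that after an assignment is fixed \pone (not \ptwo) selects a clause to witness satisfaction; this orientation is what makes validation the right gate. Second, the paper does not assign one validation schema per \emph{variable} as you propose for~(c); instead the tree has a top segment of $m$ clause nodes above a ``backbone'' carrying all $n$ variable assignments, and the validation schema's job is to check, when \pone calls clause node~$C_j$, that the assignment recorded below satisfies~$C_j$. In~(b), the single-type property of the XML Schema is used precisely to propagate the clause index~$j$ down through the backbone types $b_i^j$; in~(c), this single-type grammar is split into one local grammar (DTD) per clause, so that~(c) falls out of~(b) almost for free. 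Your per-variable scheme might be made to work, but you would have to argue separately that a DTD can check ``consistent partial assignment so far'' in a way that ultimately forces the matrix, and you lose the clean derivation of~(c) from~(b).
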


Part (a) is proven by reduction from the intersection emptiness problem for DNWAs, while parts (b) and (c) use similar reductions from the problem of determining whether a quantified Boolean formula in disjunctive normal form is true.

Due to time constraints and as we are mainly interested in finding tractable cases, we have not looked for matching upper bounds.

\subsection{Insertion rules}
In our definition of \Call moves, we define the successor configuration of a configuration $(\ptwoa,u\op{a}v,\cl{a}w)$ to be $(\ponea,u,v'w)$, that is,  $\op{a}v\cl{a}$ is \emph{replaced} by a string $v'\in R_a$. However, Active XML also offers an ``append'' option, where results of function calls are inserted as siblings after the calling function node (cf. \cite{AbiteboulBM08}). There are (at least) three possible semantics of a \Call move for insertion (as opposed to replacement) based games: the next configuration could be (1) 
$(\ponea,u,\op{a}v\cl{a}v'w)$, (2) $(\ponea,u\op{a}v\cl{a},v'w)$, or (3) $(\ponea,u\op{a}v\cl{a}v',w)$, depending on ``how much replay'' we allow for \pone. We consider  (1) as the general setting, (2) as the setting with \emph{weak replay} and (3) as the setting \emph{without replay}. It turns out that the weak replay setting basically corresponds to the (unrestricted) setting with replacement rules and that (3) corresponds to the replay-free setting with replacement rules.
Setting (1), however, gives \pone a lot of power and makes $\Safelr(\calG)$ undecidable.

\begin{theorem}\label{theo:insertion}\mbox{ }
  For the class of games with insertion semantics, target DNWAs and replacement NWAs, $\Safelr$ is
  \begin{enumerate}[(a)]
  \item undecidable in general;
  \item $\iiEXPTIME$-complete for games with weak replay; and
  \item $PSPACE$-complete for games without replay.
 \end{enumerate}
\end{theorem}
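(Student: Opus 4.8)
The plan is to address the three parts separately, reusing as much of the machinery from Sections~\ref{sec:general} and the replacement-semantics setting as possible. For part~(b), the weak-replay setting, the key observation is that a \Call move from $(\ptwoa, u\op{a}v, \cl{a}w)$ leading to $(\ponea, u\op{a}v\cl{a}, v'w)$ leaves the subword $\op{a}v\cl{a}$ behind as a completed, read-only prefix: \pone has already moved the focus past $\cl{a}$ and can never return to it. Hence, from the point of view of what \pone can still influence and what the target automaton $A(T)$ sees, this is \emph{almost} the same as replacement, except that instead of deleting $\op{a}v\cl{a}$ it is kept. First I would set up a reduction in both directions between weak-replay insertion games and ordinary (unrestricted-replay) replacement games: given an insertion game $G$, build a replacement game $G'$ over an extended alphabet in which each function symbol $a$ is ``split'' so that calling it replaces $\op{a}v\cl{a}$ by $\op{a}v\cl{a}v'$ with $v' \in R_a$ --- i.e.\ the replacement language becomes $\{\op{a}v\cl{a}\} \cdot R_a$ over the already-processed part, which one simulates by tracking in the target automaton's state the state $\delta^*$ would reach after $\op{a}v\cl{a}$. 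Concretely, one augments $A(T)$ so that, when reading the closing tag $\cl{a}$ of a callable node, it records (via the hierarchical/linear state) the state it would continue from, and the replacement NWA for $a$ is modified to first ``replay'' a run of $A(T)$ on $\op{a}v\cl{a}$ from that recorded state and then feed a word of $R_a$. This yields a polynomial-time reduction both ways, so by Theorem~\ref{theo:general}(a) we get \iiEXPTIME-completeness for part~(b). For part~(c), the no-replay insertion setting $(\ponea, u\op{a}v\cl{a}v', w)$, the same splitting argument shows it corresponds exactly to the replay-free replacement setting --- once $v'$ is inserted, no \Call inside $v'$ is possible and the focus moves on --- so Theorem~\ref{theo:general}(c) gives \PSPACE-completeness.

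The main work is part~(a), undecidability. The plan is a reduction from the halting problem (or the Post correspondence problem, or emptiness of the intersection of two context-free languages --- whichever is most convenient), exploiting the fact that under semantics~(1) a \Call on $\cl{a}$ produces $(\ponea, u, \op{a}v\cl{a}v'w)$: the focus is set \emph{back} to the start of $\op{a}v\cl{a}$, so \pone can call the same node arbitrarily often and, crucially, can inspect and re-call nodes generated by previous calls. This gives the game the flavour of an unrestricted rewriting system with full two-sided control by \pone over when to stop, while \ptwo's role is reduced to choosing among the finitely/regularly many right-hand sides. I would encode a Turing machine computation (or a two-counter machine) so that: a single function symbol's replacement language lets \pone lay down successive configuration-strings of the machine separated by markers; the back-jumping of the focus lets \pone repeatedly expand the current configuration into the next one; and the target DNWA $T$ checks that the final word encodes a valid, terminating, consecutive computation history (this is a regular check once the ``one step'' local consistency is built into how replacement strings may be chosen, with \pone forced to keep calling until a halting configuration appears). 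The point is that \pone has a winning strategy iff the machine halts, and no bound on the number of \Call moves can be imposed because of the focus reset.

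The hard part will be making the encoding robust against \ptwos adversarial choices: since \ptwo picks the replacement word $v' \in R_a$, I must design $R_a$ so that every choice of \ptwo either directly advances a faithful simulation or is immediately detectable as illegal by the target automaton (so that \ptwo gains nothing by cheating), while still leaving \pone enough freedom to drive the simulation to a halting state. A clean way to do this is to let \ptwo choose only ``don't-care'' bits (e.g.\ the concrete tape symbols in cells that \pone will overwrite anyway, or nondeterministic branches the simulated machine is allowed to make) and to route all correctness-critical structure through either \pones \Read/\Call decisions or through regular constraints in $T$; the well-nestedness discipline of nested words actually helps here, since matching tags give a free bracket-counting mechanism. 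A secondary subtlety is ensuring plays are \emph{finite} on the winning side --- recall we only consider finite strategies for \pone --- so the encoding must guarantee that a correct simulation of a halting computation reaches a target word after finitely many moves, while the \emph{existence} of arbitrarily long (non-halting) plays is exactly what makes the problem escape any complexity class. I expect the argument to mirror standard undecidability proofs for reachability in context-free rewriting games, with the extra nested-word bookkeeping being routine once the flat-word skeleton is in place; details are deferred to the appendix.
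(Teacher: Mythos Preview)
Your high-level plan for (b) and (c) --- two-way polynomial reductions between insertion games (weak replay / no replay) and replacement games (unbounded / no replay), then invoking Theorem~\ref{theo:general} --- is exactly what the paper does. However, your concrete construction for the insertion$\to$replacement direction does not work as written: you propose that calling $\cl{a}$ in the replacement game should replace $\op{a}v\cl{a}$ by $\op{a}v\cl{a}v'$, but $v$ is runtime data, so this cannot be a fixed replacement language $R_a$. Your workaround (``the replacement NWA for $a$ is modified to first replay a run of $A(T)$ on $\op{a}v\cl{a}$'') does not type-check either, since the replacement automaton is fixed in advance and has no access to $v$. The paper sidesteps this cleanly with an \emph{anchor} trick: transform every $\op{a}v\cl{a}$ into $\op{a}\mu(v)\cl{a}\op{a'}\cl{a'}$ with a fresh non-function copy $a$ and a fresh function symbol $a'$; only $\op{a'}\cl{a'}$ is callable, and replacing it by $\mu(v')$ exactly simulates insertion. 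The reverse direction (replacement$\to$insertion) uses a dual ``backspace'' symbol $b$: insertions are wrapped as $\op{b}u\cl{b}$, and the new target DNWA carries a fallback state so that reading $\op{b}$ rewinds the simulated $A(T)$ to before the last rooted block, effectively deleting it. Both constructions are polynomial and respect the stated replay correspondences.

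For (a), your approach is genuinely different from the paper's and has a real gap. You propose to encode a Turing-machine (or two-counter) computation history and have the target DNWA check ``that the final word encodes a valid, terminating, consecutive computation history'', asserting this is ``a regular check once the one-step local consistency is built into how replacement strings may be chosen''. But it is not: the replacement language $R_a$ is fixed and cannot depend on the current configuration $v$, so one-step consistency cannot be enforced by $R_a$; and checking that two adjacent, arbitrarily long configuration strings differ by one TM step is not a regular nested-word property. You acknowledge the difficulty with \ptwo's adversarial choices but do not resolve this central obstruction. The paper instead reduces from a problem already known to be undecidable: the winning problem for \emph{flat} context-free games with \emph{arbitrary} (not left-to-right) strategies, proved undecidable in~\cite{MuschollSS06}. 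The key observation is that under insertion semantics~(1), wrapping the flat input as $\op{r}\nw{w}\cl{r}$ and calling $\cl{r}$ resets the focus to the beginning while appending an ignorable tail; thus \pone can make unboundedly many left-to-right passes over the current string, which suffices to simulate an arbitrary-position strategy. Replacement within a pass is simulated via the same backspace mechanism as above. This avoids any need to verify computation histories and leverages existing undecidability.
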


The proof idea for Theorem \ref{theo:insertion} is to simulate insertion-based games by replacement-based games and vice versa; part (a) additionally uses the undecidability of $\Safelr$ for arbitrary (i.e. not necessarily left-to-right) strategies on games with flat strings, which was proven to be undecidable in \cite{MuschollSS06}.

		\section{Conclusion}\label{sec:conclusion}

The complexity of context-free games on nested words
differs considerably from that on flat words
(\iiEXPTIME vs.\ \EXPTIME), but there are still interesting tractable
cases. One of the main insights of this
paper is that the main tractable cases remain tractable if one allows
XML Schema instead of DTDs for the specification of schemas. 

Another result is that adding validation of input
parameters can worsen the complexity, but tractability can be
maintained by a careful choice of the setting. However, here the step
from DTDs to XML Schema may considerably worsen the complexity.

Insertion semantics with
unlimited replay yields undecidability. 

We leave open some corresponding upper bounds in the 
setting with validation of input parameters.  In future work, we plan to study the impact of parameters of
function calls more thoroughly. 
		
		 \markboth{Games for Active XML Revisited}{Martin Schuster and Thomas Schwentick}

\counterwithin{theorem}{section}

\renewcommand{\thesection}{\Alph{section}}
\setcounter{section}{0}
\newpage

		\section{Appendix}
For easier reference, we restate the results that were already stated
in the body of the paper. Definitions and results not stated in the body can be
identified by their number of the type A.xxx. At the end of the
appendix there is another bibliography which contains references for
all work mentioned in the appendix.

		\subsection*{Proofs for Section \ref{sec:prelim}}

\begin{restate}{Lemma \ref{lem:normal}}
 There is a polynomial-time algorithm that computes for every deterministic NWA an equivalent deterministic NWA in normal form. 
\end{restate}
\begin{proof}
Let $A = (Q, \Sigma, \delta, q_0, F)$ and let $\delta_1$ and
$\delta_2$ the projections of $\delta$ to its first and second
component (for opening tags only), respectively, i.e.,
$\delta(p,\op{a})=(\delta_1(p,\op{a}),\delta_2(p,\op{a}))$. An
equivalent DNWA $A' = (Q, \Sigma, \delta', q_0, F)$ in normal form can
be constructed by letting $\delta'(p,\op{a})\mydef
(\delta_1(p,\op{a}),p)$ and $\delta'(q,p,\cl{a})\mydef
\delta(q, \delta_2(p,\op{a}), \cl{a})$.
\end{proof}
\newpage		\section*{Proofs for Section \ref{sec:general}}

In this section, we give proofs for the upper and lower bounds on the complexity of $\Safelr$ for unrestricted games stated in Section \ref{sec:general}. 

\subsection*{Upper bounds for Theorem \ref{theo:general}}

The proof of the upper bounds in Theorem \ref{theo:general} consists technically of three main parts:
\begin{itemize}
\item the first part describes how to compute an ANWA for a cfG from its call effect (Proposition \ref{prop:calltoanwa}),
\item the second part establishes the complexity of emptiness and membership for ANWAS (Proposition \ref{prop:anwacomplexity}), and
\item the third part shows that the fix point process sketched after Lemma \ref{lemma:anwaeffects} in Section \ref{sec:general} indeed computes the call effect of a game. 
\end{itemize}

\subsubsection*{Transforming call effects into ANWAs}

The proof of Proposition \ref{prop:calltoanwa} requires a considerable amount of preparation.

As mentioned in Section \ref{sec:general}, our main tool for proving upper bounds on general cfGs is abstracting from subgames to the effects they induce on the target automaton $A(T)$. To facilitate the proof of Proposition \ref{prop:calltoanwa}, we extend this abstraction from the \emph{call effects} of subgames on rooted strings as defined in Section \ref{sec:general} to effects of arbitrary nested strings.
Formally, a \emph{(word) effect} maps states $q$ of $A(T)$ to sets of sets of states of $A(T)$. The effect of a game $G$ on  a word $w$ relative to state $q$ is basically the set of all state sets $X$, for which \pone has a strategy that guarantees that every play on $w$ yields some word $v$ with $\delta^*(q,v)\in X$. For ease of reference, we restate some definitions from Section \ref{sec:prelim} needed for word effects.

In the following, we  sometimes consider \emph{subgames} on a certain part of a string and talk about strategies for subgames. From a configuration $(u,vw)$, \pone can use a strategy $\Astrat$ on the subgame on $v$. This means that she follows $\Astrat$ until a configuration $(uv',w)$ is reached.

\begin{definition}
For a cfG $G=(\Sigma,\funcsymb, R,T)$ with a deterministic  target NWA $A(T) = (Q, \Sigma, \delta, q_0, F)$, we define the following notation.
  \begin{itemize}
  \item $\word(w, \Astrat, \Bstrat)$ denotes the unique final word that is reached in the game on $w$ with strategies $\Astrat \in \strata(G)$ and $\Bstrat \in \stratb(G)$.
  \item $\words(w,\Astrat) \mydef \{\word(w, \Astrat, \Bstrat) \mid \Bstrat \in \stratb\}$ denotes the set of final words that can be reached through strategies of \ptwo, for a fixed strategy $\Astrat \in \strata(G)$.
  \item $\states(q,w,\Astrat) \mydef \{\delta^*(q,v) \mid v \in \words(w,\Astrat)\}$ denotes the set of states that $A(T)$ can take at the end of final words that can be reached through strategies of \ptwo, for a fixed strategy $\Astrat \in \strata(G)$.
  \end{itemize}

  Finally, we define the \emph{word effect, $\geffect{w}:Q\to\Pot(\Pot(Q))$, of $G$ on $w$} by
  \[
  \geffect{w}(q) \mydef \smin{\{\states(q, w,\Astrat) \mid \Astrat
    \in \strata(G)\}},
  \]
for every $q\in Q$, where the operator $\smin{\cdot}$ removes all non-minimal sets from  a set of sets as before.
\end{definition}

To simplify notation, the subscript $G$ will often be omitted if the game $G$ is clear from the context.

The intuition behind word effects is the following abstraction of cfGs into single-round games: On an input string $w$, \pone first chooses a strategy $\Astrat$, then \ptwo chooses a strategy $\Bstrat$; the outcome of the game on $w$ is uniquely determined by $\Astrat$ and $\Bstrat$. In terms of effects, this corresponds to \pone picking a set $X = \states(q_0,w,\Astrat) \in \geffect{w}(q_0)$ and \ptwo then choosing a final state $q = \delta^*(q_0, \word(w, \Astrat, \Bstrat)) \in X$. This intuition also explains our use of the $\smin{\cdot}$ operator, as it makes no sense for \pone to offer \ptwo a choice from a set $X \subs Q$ if she can instead offer him the more limited options in some $X' \subsetneq X$.\footnote{Minimisation in our model corresponds to the monotonicity of powers \cite{Benthem03} or effectivity functions \cite{PaulyP03a}. Using, as we do, an inclusion-minimal ``basis" instead of a monotonic ``upward closure" allows for a more succinct representation and lower complexity in some places.}

It is easy to see that \pone has a winning strategy in $G$ on $w$ if and only if there is some $X \in \geffect{w}(q_0)$ such that $X \subseteq F$; to determine whether \pone has a winning strategy  it therefore suffices to compute $ \geffect{w}$.

It is natural to reason about effects for nested words in an inductive fashion. We first consider sequential composition. From \pones point of view, the game on a nested word $uv$ (with $u,v \in \wf$) from a state $q$ on proceeds as follows. \pone fixes a strategy $\Astrat$ on $u$.  The set of states that \ptwo can reach at the end of the subgame on $u$ is just $\states(q,u,\Astrat)$. For each state $p\in   \states(q,u,\Astrat)$, \pone can choose a strategy $\Astrat_p$ for $v$ and the result set is then the union of all sets that can be reached by \ptwo against any $\Astrat_p$ on $v$. To express the set of all combinations of outcomes for the second part, we use the following operator.

\begin{definition}
	Let $\calD=\{D_1,\dots,D_n\}$ be a set of sets of sets. Then $\Mix(\calD)$ is the set 
	$$\smin{\{d_1\cup \dots \cup d_n \;|\; d_1\in D_1 \land \dots \land d_n\in D_n\}}.$$
\end{definition}
 In other words, the \Mix operation yields every way of taking the union of one element from each of $D_1,\dots,D_n$ and then removes non-minimal sets.

	Let $E_1, E_2$ be mappings from $Q$ into $\Pot(\Pot(Q)$. Then the \emph{composition of $E_1$ and $E_2$} is defined as the mapping $E_1 \circ E_2: Q \rightarrow \Pot(\Pot(Q))$ with
	$$(E_1 \circ E_2)(q) \mydef \smin{\bigcup_{X \in E_1(q)} \Mix(\{ E_2(q') | q' \in X \})}.$$

Not surprisingly, effect composition commutes with word concatenation.

\begin{lemma} \label{lemma:seqcomposition}
For every cfG $G=(\Sigma,\funcsymb,R,T)$ and $u, v \in \wf$ it holds
\[
\geffect{uv} = \geffect{u} \circ \geffect{v}.
\]
\end{lemma}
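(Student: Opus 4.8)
The plan is to prove $\geffect{uv}=\geffect{u}\circ\geffect{v}$ by unwinding both sides to the level of strategies and states, using the intuition already established: a strategy of \pone on $uv$ decomposes into a strategy on $u$ together with, for each state $p$ that \ptwo can reach at the boundary, a continuation strategy on $v$. First I would fix $q\in Q$ and prove the two inclusions between $\smin{\cdot}$-normalised sets. For this it is convenient to prove the stronger (pre-minimisation) statement
\[
\{\states(q,uv,\Astrat)\mid \Astrat\in\strata(G)\}
\ \text{and}\
\bigcup_{X\in\geffect{u}(q)}\Mix(\{\geffect{v}(q')\mid q'\in X\})
\]
have the same set of minimal elements, since $\smin{\cdot}$ is idempotent and compatible with taking unions of upward-closures; then apply $\smin{\cdot}$ to both.

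The key structural fact to isolate first is a decomposition/composition lemma for strategies and plays on a concatenation. Given $\Astrat\in\strata(G)$, the play on $uv$ first traverses (possibly rewriting) the $u$-part until the focus crosses the boundary; restricting $\Astrat$ to this initial segment gives a strategy $\Astrat_u$ on the subgame on $u$, in the sense of the ``subgame'' paragraph in the excerpt. Crucially, because \pone plays left-to-right and the boundary between $u$ and $v$ is a well-nested cut, no \Call move on the $u$-part can ever touch symbols of $v$ and vice versa; so once the focus enters $v$, the configuration is $(uv',v)$ for some $v'\in\wf$ with $\delta^*(q,uv')=\delta^*(p,v')$ where $p=\delta^*(q,u\text{-part outcome})$ — more precisely, for each $p\in\states(q,u,\Astrat_u)$, the residual behaviour of $\Astrat$ on $v$ is a well-defined strategy $\Astrat_{v,p}\in\strata(G)$ on $v$, and $\states(q,uv,\Astrat)=\bigcup_{p\in\states(q,u,\Astrat_u)}\states(p,v,\Astrat_{v,p})$. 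Conversely, any choice of $\Astrat_u$ on $u$ together with a family $(\Astrat_{v,p})_{p}$ of strategies on $v$ can be glued into a single strategy on $uv$ realising exactly that union — here one uses that \pone has full information, so she can read off $p$ from the current word. This back-and-forth is the heart of the argument.

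Granting that lemma, the forward inclusion $\geffect{uv}(q)\subseteq(\geffect{u}\circ\geffect{v})(q)$ is immediate: for any $\Astrat$ the set $\states(q,uv,\Astrat)=\bigcup_{p\in X_0}\states(p,v,\Astrat_{v,p})$ where $X_0=\states(q,u,\Astrat_u)$; since some $X\in\geffect{u}(q)$ satisfies $X\subseteq X_0$, and since for each $p\in X$ we have $\states(p,v,\Astrat_{v,p})$ is a superset of some element of $\geffect{v}(p)$, the displayed union contains a set of the form $\bigcup_{p\in X}d_p$ with $d_p\in\geffect{v}(p)$, i.e. it contains (a superset of) an element of $\Mix(\{\geffect{v}(p)\mid p\in X\})$. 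The reverse inclusion runs the glue construction: given $X\in\geffect{u}(q)$ witnessed by $\Astrat_u$, and elements $d_p\in\geffect{v}(p)$ witnessed by $\Astrat_{v,p}$, glue them to obtain $\Astrat$ on $uv$ with $\states(q,uv,\Astrat)\subseteq\bigcup_{p\in X}d_p$ (one may even have to enlarge $X$ to $\states(q,u,\Astrat_u)$, but that only adds more terms, each still a superset of some $\geffect{v}$-element, so the resulting set still dominates an element of the $\Mix$ set — and after $\smin{\cdot}$ this is harmless). Finally I would remark that $\smin{\cdot}$ and the $\Mix$/union operations interact correctly so that minimising at the end of the composition matches minimising $\{\states(q,uv,\Astrat)\}$; this is the same bookkeeping already implicit in the definition of $\circ$.

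The main obstacle I expect is the strategy gluing lemma — specifically making precise the notion of ``residual strategy'' $\Astrat_{v,p}$ and verifying that gluing a family indexed by reachable boundary states yields a legal (finite, left-to-right) strategy on $uv$ whose outcome set is exactly the claimed union. This requires care because different plays reach the boundary after different rewritings of $u$, so the ``same'' continuation strategy on $v$ must be applied in configurations whose $u$-prefix differs; the justification is that \pone's move only depends on the current configuration (memorylessness, Proposition before Definition~\ref{def:calleffects}) and that the part of the word to the right of the focus is untouched so far, hence still equals $v$. Everything else — the inclusions and the $\smin{\cdot}$-bookkeeping — is routine set manipulation once this lemma is in hand.
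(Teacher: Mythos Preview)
Your proposal is correct and follows essentially the same approach as the paper: decompose a strategy on $uv$ into a strategy on $u$ together with a family of continuation strategies on $v$ indexed by the boundary state, and conversely glue such data back into a strategy on $uv$; then handle the $\smin{\cdot}$-minimisation by showing each side's elements have subsets on the other side. The paper packages your ``$\smin{\cdot}$-bookkeeping'' into a small preliminary lemma (two normalised sets with mutual subset-domination are equal), which streamlines the presentation, but the substance is identical.
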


Before proving Lemma \ref{lemma:seqcomposition}, we give an auxiliary result that will greatly simplify proofs about effects and similar functions. To that end, we call a set $\calD$ of sets \emph{normalised} if it contains no two sets $X,Y$ such that $X \subsetneq Y$ (or, equivalently, if $\calD=\smin{\calD}$). For two sets of sets $E_1, E_2$, we write $E_1 \covers E_2$ if and only if every $X \in E_1$ has a subset in $E_2$.

\begin{lemma} \label{lemma:normalised}
	Let $E_1, E_2$ be two normalised sets of sets. If $E_1 \covers E_2$ and $E_1 \covered E_2$, then $E_1 = E_2$.
\end{lemma}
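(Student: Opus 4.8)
The statement to prove is Lemma~\ref{lemma:normalised}: if $E_1, E_2$ are normalised sets of sets with $E_1 \covers E_2$ and $E_1 \covered E_2$, then $E_1 = E_2$. By symmetry of the two hypotheses, it suffices to show $E_1 \subseteq E_2$; the inclusion $E_2 \subseteq E_1$ follows by swapping the roles of $E_1$ and $E_2$.

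\textbf{Main argument.} Take an arbitrary $X \in E_1$. The plan is to chase the two covering relations through one full cycle and use normalisation of $E_1$ to collapse the resulting chain. Since $E_1 \covers E_2$, there is some $Y \in E_2$ with $Y \subseteq X$. Since $E_1 \covered E_2$ (i.e.\ $E_2 \covers E_1$), there is some $X' \in E_1$ with $X' \subseteq Y$. Chaining these inclusions gives $X' \subseteq Y \subseteq X$, so $X' \subseteq X$ with $X', X \in E_1$. Because $E_1$ is normalised, it contains no two sets with one a proper subset of the other, so $X' \subseteq X$ forces $X' = X$. But then $X = X' \subseteq Y \subseteq X$, which squeezes everything together: $X = Y \in E_2$. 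As $X \in E_1$ was arbitrary, $E_1 \subseteq E_2$.

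\textbf{Obstacle.} There is essentially no obstacle here; the only point requiring a moment's care is making sure the direction of $\covered$ is unpacked correctly --- $E_1 \covered E_2$ means $E_2 \covers E_1$, i.e.\ every set in $E_2$ has a subset in $E_1$ --- and that normalisation is applied to $E_1$ (for the $E_1\subseteq E_2$ direction) rather than to $E_2$. The symmetric half of the proof is then literally the same argument with the indices $1$ and $2$ interchanged, using that normalisation of $E_2$ then does the collapsing work.
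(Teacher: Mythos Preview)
Your proof is correct and is essentially identical to the paper's own argument: both prove $E_1 \subseteq E_2$ by chasing the chain $X' \subseteq Y \subseteq X$ through the two covering relations and collapsing it via normalisation of $E_1$, then invoke symmetry for the reverse inclusion.
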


\begin{proof}
	We prove only $E_1 \subs E_2$; inclusion in the other direction then follows by symmetry.	Let $X_1 \in E_1$, and let $X_2 \in E_2$ with $X_2 \subs X_1$. By assumption, there also exists $X'_1 \in E_1$ with $X'_1 \subs X_2$, and therefore $X'_1 \subs X_2 \subs X_1$. Since both $X_1$ and $X'_1$ are in $E_1$, and $E_1$ is normalised by assumption, this inclusion cannot be proper, and it follows that $X'_1 = X_2 = X_1$, and therefore $X_1 = X_2$ and $X_1 \in E_2$.
\end{proof}

\begin{proofof}{Lemma \ref{lemma:seqcomposition}}
Let $q \in Q$. This proof uses Lemma \ref{lemma:normalised} to prove the equality of the two normalised sets $\geffect{uv}(q)$ and $(\geffect{u} \circ \geffect{v})(q)$.

($\covers$):
Let $X \in \geffect{uv}(q)$. Then there exists some strategy $\Astrat_{uv} \in \stratlr(G)$ such that $X = \states(q, uv, \Astrat_{uv})$. Let $\Astrat_u$ be the restriction of $\Astrat_{uv}$ to the subgame on $u$, let $X_u \in \geffect{u}(q)$ with $X_u \subs \states(q, u, \Astrat_u)$ and $\{p_1, \ldots, p_k\} = X_u$. For each $i \in [k]$, let $\Astrat_v^i$ be a restriction of $\Astrat_{uv}$ to the subgame on $v$ in case \ptwo chooses a strategy $\Bstrat$ with $\state(q, u, \Astrat, \Bstrat) = p_i$, and let $X^i_v \in \geffect{v}(p_i)$ such that $X^i_v \subs \states(p_i, v, \Astrat_v^i)$ for all $i \in [k]$. Let $X'= X_v^1 \cup \ldots X_v^k$.

	By definition of $\circ$, and because of normalisation, there exists some $X'' \in (\geffect{u} \circ \geffect{v})(q)$ with $X'' \subs X'$. So, to show the desired inclusion, it suffices to prove that $X' \subs X$.
	
	Let $p' \in X'$. Then, $p' \in X_v^i$  and therefore $X' \in \states(p_i, v, \Astrat_v^i)$ for some $i \in [k]$. Also, $p_i \in X_u \subs \states(q,u,\Astrat_u)$, i.e. $p_i = \state(q, u, \Astrat_u, \Bstrat)$ for some $\Bstrat \in \stratb(G)$. By the definition of $\Astrat_u$ and $\Astrat_v^i$, this implies that $p' \in \states(q, uv, \Astrat_{uv}) = X$.
	
($\covered$):
	Let $X \in (\geffect{u} \circ \geffect{v})(q)$. By definition of $\circ$, there are sets $X_u = \{q_1, \ldots, q_k\} \in \geffect{u}(q)$ and $X^i_v \in \geffect{v}(q_i)$ for each $i \in [k]$ such that  $X = X_v^1 \cup \ldots X_v^k$. By definition of $\geffect{\cdot}$, there are strategies $\Astrat_u, \Astrat^1_v, \ldots, \Astrat_v^k \in \stratlr(G)$ with $X_u = \states(q, u, \Astrat_u)$ and $X^i_v=\states(q_i, v, \Astrat^i_v)$.
	
	Define a strategy $\Astrat_{uv}$ on $uv$ as follows. On $u$, \pone plays according to $\Astrat_u$; if this play yields some string $u_i \in \words(u, \Astrat_u)$ with $\delta^*(q, u_i) = q_i$, \pone then plays according to $\Astrat^i_v$ on $v$.
	
	Denote  $\states(q, uv, \Astrat_{uv})$ by $X'$ for short. Due to normalisation, there exists some $X'' \in \geffect{uv}$ with $X'' \subs X'$. What needs to be shown is therefore only that $X' \subs X$. 
	
	Let $q' \in X'$. Then there exists a strategy $\Bstrat$ for \ptwo and strings $u',v' \in \wf$ such that $u'v' = \word(uv, \Astrat_{uv}, \Bstrat)$, $u' = \word(u,\Astrat_{uv}, \Bstrat)$ and $\delta^*(q,u'v') = q'$. Let $q_u = \delta^*(q, u')$; then, it holds that $q' = \delta^*(q_u, v')$. By the definition of $\Astrat_{uv}$, it follows that $q_u \in X_u$ and $q' \in X^i_v$ for some $i$, so $q' \in X$, which concludes the proof.
\end{proofof}

It follows directly from Lemma \ref{lemma:seqcomposition} that the sequential composition of effects is associative.

The word effect of a word of the form $\op{a} v \cl{a}$ is induced by the word effect of $v$ and the possible moves of the players on $\op{a}$ and $\cl{a}$. In particular, as \pone may choose \Call on $\cl{a}$, the possible outcomes of a subgame on a subword of the form $\op{a} \cl{a}$ become crucial. As in the main part of this paper, we summarise the possible outcomes of subgames on ``two-letter words'' of the form $\op{a} \cl{a}$ by \emph{call effects} as defined in Definition \ref{def:calleffects}. For ease of reference, we restate that 
  \[
  \ceffect(a,q) \mydef \smin{\{\states(q, \op{a}\cl{a},\Astrat) \mid \Astrat
    \in \stratac(G)\}},
  \]
for every $a\in\Sigma$ and $q\in Q$, where $\stratac(G)$ contains all strategies of \pone that start by playing \Read on $\op{a}$ and \Call on $\cl{a}$.  

To describe hierarchical composition of word effects we define, for every $a\in\Sigma$  the following operator $H_a:Q\to\Pot(\Pot(Q))$
For every two functions $E: Q \rightarrow \Pot(\Pot(Q))$ and $C: \Sigma \times Q \rightarrow \Pot(\Pot(Q))$ and $q,q'\in Q$ such that $\delta(q,\op{a})=q'$, let
\[
H_a[E,C](q) \mydef \smin{\bigcup_{X\in E(q')} \Mix(\{C(a,q)\cup\{\delta(r,q,\cl{a})\}\mid r\in X\})}.
\]
Informally, interpreting $E$ as a word effect and $C$ as a call effect, the first set inside the $\Mix$ operator accounts for \Call moves and the second for \Read moves of \pone. Now we can formulate how effects behave hierarchically.

\begin{lemma} \label{lemma:hiercomposition}
For every cfG $G=(\Sigma,\funcsymb,R,T)$, $v \in \wf$, and $a\in\Sigma$, it holds
\[
\geffect{\op{a}v\cl{a}} = H_a[\geffect{v},\ceffect].
\]
\end{lemma}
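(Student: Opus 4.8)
The plan is to prove both inclusions $\covers$ and $\covered$ between the two normalised sets $\geffect{\op{a}v\cl{a}}(q)$ and $H_a[\geffect{v},\ceffect](q)$ for an arbitrary fixed $q\in Q$, invoking Lemma~\ref{lemma:normalised} to conclude equality. Throughout, write $q'$ for $\delta(q,\op{a})$, so that on reading $\op{a}$ the target automaton $A(T)$ moves deterministically from $q$ to $q'$ while pushing $q$. The key structural observation is a decomposition of \pones strategies on $\op{a}v\cl{a}$: since \pone necessarily plays \Read on the opening tag $\op{a}$ (there is no \Call at opening tags), a strategy $\Astrat$ on $\op{a}v\cl{a}$ restricts to a strategy $\Astrat_v$ on the subgame on $v$ from linear state $q'$; and, for each final word $v_i$ that play on $v$ can produce, once the focus reaches $\cl{a}$ the automaton is in some state $r = \delta^*(q',v_i)$, and \pone must decide between \Read (giving final state $\delta(r,q,\cl{a})$ for this branch of \ptwos play) or \Call on $\cl{a}$ (triggering the subgame on $\op{a}\cl{a}$, whose outcomes are summarised by $\ceffect(a,q)$, since the call starts with $A(T)$ in state $q$ — the state before $\op{a}$, which is what gets popped). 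This is exactly the information the operator $H_a$ aggregates: $\Mix$ over $X\in\geffect{v}(q')$ of the sets $\ceffect(a,q)\cup\{\delta(r,q,\cl{a})\}$ for $r\in X$.

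For the $\covers$ direction, I would take $Y\in\geffect{\op{a}v\cl{a}}(q)$, witnessed by a strategy $\Astrat$, so $Y=\states(q,\op{a}v\cl{a},\Astrat)$. Let $\Astrat_v$ be its restriction to the subgame on $v$ and pick $X\in\geffect{v}(q')$ with $X\subseteq\states(q',v,\Astrat_v)$. For each $r\in X$, \pones move at $\cl{a}$ (which depends on the linear state $r$) is either \Read — contributing the singleton outcome $\delta(r,q,\cl{a})$ — or \Call, in which case the restriction of $\Astrat$ to the subsequent subgame on $\op{a}\cl{a}$ is a strategy in $\stratac(G)$ and so yields some $X_r\in\ceffect(a,q)$ with $X_r\subseteq\states(q,\op{a}\cl{a},\cdot)$. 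In either case $r$ contributes a subset of $\ceffect(a,q)\cup\{\delta(r,q,\cl{a})\}$ to the reachable outcomes. Taking unions over $r\in X$ produces some element of $\Mix(\{\ceffect(a,q)\cup\{\delta(r,q,\cl{a})\}\mid r\in X\})$ that is contained in $Y$; hence some element of $H_a[\geffect{v},\ceffect](q)$ is contained in $Y$, which is what $\covers$ requires. (The minor care point: $\ceffect$ is defined with the $\smin{\cdot}$ operator so one must note $\states(q,\op{a}\cl{a},\cdot)$ contains some element of $\ceffect(a,q)$ as a subset, rather than equals one; this is harmless.)

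For the $\covered$ direction, I would start from $Y\in H_a[\geffect{v},\ceffect](q)$, so there is $X=\{r_1,\dots,r_k\}\in\geffect{v}(q')$ and, for each $r_i$, a choice $Z_i\subseteq\ceffect(a,q)\cup\{\delta(r_i,q,\cl{a})\}$ with $Y=Z_1\cup\cdots\cup Z_k$. Fix a strategy $\Astrat_v$ on $v$ with $X=\states(q',v,\Astrat_v)$, and for each $i$ with $Z_i\ni$ some element of $\ceffect(a,q)$ fix a strategy $\Astrat_i\in\stratac(G)$ on $\op{a}\cl{a}$ realising that element of $\ceffect(a,q)$ as (a subset of) its state set. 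Now assemble a strategy $\Astrat$ on $\op{a}v\cl{a}$: play \Read on $\op{a}$; play $\Astrat_v$ on $v$; when the focus reaches $\cl{a}$ with automaton state $r_i$ (which is determined by \ptwos play on $v$, namely by which $v_i$ with $\delta^*(q',v_i)=r_i$ was produced), play \Read if $Z_i=\{\delta(r_i,q,\cl{a})\}$, and otherwise play \Call and then follow $\Astrat_i$ on the resulting $\op{a}\cl{a}$-subgame. One checks that $\states(q,\op{a}v\cl{a},\Astrat)\subseteq Y$ by tracing an arbitrary play of \ptwo through this construction, and then normalisation of $\geffect{\op{a}v\cl{a}}$ gives some element contained in that set, hence contained in $Y$, establishing $\covered$.

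The main obstacle I expect is purely bookkeeping rather than conceptual: being careful that \pones strategy at $\cl{a}$ is allowed to depend on the current configuration (in particular on the linear state $r$ of $A(T)$, which is itself a function of \ptwos earlier replacement choices on $v$), so that the assembled strategy in the $\covered$ direction is well-defined and legal; and handling the $\smin{\cdot}$-induced slack — everywhere one has "$X\subseteq\states(\cdots)$" rather than equality — consistently, which is exactly what Lemma~\ref{lemma:normalised} is designed to absorb. A secondary subtlety worth a sentence is the deletion effect noted in the paper: a \Call on $\cl{a}$ inside $\op{a}v_i\cl{a}$ deletes $v_i$ and the $a$-tags entirely, so the post-\Call subgame really is on $\op{a}\cl{a}$ and $A(T)$ resumes from the popped state $q$, which is precisely why $\ceffect(a,q)$ (with second argument $q$, not $r$) is the right summary to splice in.
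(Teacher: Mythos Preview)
Your proposal is correct and follows essentially the same approach as the paper's own proof: both invoke Lemma~\ref{lemma:normalised}, decompose a strategy on $\op{a}v\cl{a}$ into its restriction on $v$ plus the \Read/\Call decision at $\cl{a}$, and reassemble strategies in the reverse direction. The only cosmetic difference is that in the $\covers$ direction the paper works with the full set $\states(q',v,\Astrat_v)$ rather than first passing to a minimal $X\in\geffect{v}(q')$ contained in it, and picks a concrete witness word $v_i$ for each reachable state (which you would also need to do, since \pones move at $\cl{a}$ depends on the configuration, not merely the automaton state $r$); your remarks on the $\smin{\cdot}$ slack and on why the call effect is indexed by the popped state $q$ are well-taken and make explicit points the paper leaves implicit.
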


\begin{proof}
We show, once again using Lemma \ref{lemma:normalised}, that for every
$q \in Q$, it holds that
\[
\geffect{\op{a}v\cl{a}}(q) = H_a[\geffect{v},\ceffect](q).
\]

($\covers$):
	Let $X \in \geffect{\op{a}v\cl{a}}(q)$. Then there is some strategy $\Astrat \in \stratlr(G)$ such that $X = \states(q,\op{a}v\cl{a}, \Astrat)$. Let $\Astrat_v$ be the sub-strategy of $\Astrat$ on $v$, let $\delta(q, \op{a}) = (q_a,p)$ and let $X_v = \{q_1, \ldots, q_k\} = \states(q_a,v,\Astrat_v)$. For each $i \in [k]$, let $v_i \in \words(v, \Astrat_v)$ such that $\delta^*(q_a, v_i) = q_i$ and let $\Astrat_i$ be the sub-strategy of $\Astrat$ starting at $(\op{a}v_i, \cl{a})$. If \pones move on $\cl{a}$ according to $\Astrat_i$ is \Read, let $X_i=\{\delta(q_i,p,\cl{a}\}$, otherwise let $X_i = \states(q,\op{a}\cl{a},\Astrat_i)$\footnote{As, in this case, $\Astrat_i \in \stratac$ is a strategy playing \Call on $\cl{a}$, we can omit $v$ here.}. Clearly, $X' = X_1 \cup \ldots \cup X_k$ has a subset in $\Mix(\{\ceffect(a,q)\cup\{\delta(r,p,\cl{a})\}\mid r\in X\})$ and therefore in $H_a[\effect{v},\ceffect](q)$. It remains to be proven that $X' \subs X$. 
	
	Let $q' \in X'$. Then $q' \in X_i$ for some $i \in [k]$. If $X_i = \{\delta(q_i,p,\cl{a}\}$, then clearly $q' = \delta^*(q, \op{a}v_i\cl{a}) \in \states(q, \op{a}v\cl{a}, \Astrat)=X$. Otherwise, $q' \in \states(q,\op{a}\cl{a},\Astrat_i)$ and $\Astrat_i$ coincides on $\op{a}\cl{a}$ with $\Astrat$ on $\op{a}v_i\cl{a}$, it follows again that $q' \in X$.
	
($\covered$):
	Let $X \in H_a[\geffect{v},\ceffect](q)$ and let $\delta(q,\op{a}) = (q',p)$. Then, there exists some $X_v = \{q_1, \ldots q_k\} \in \geffect{v}(q')$ such that $X=X_1 \cup \ldots \cup X_k$, where each $X_i$ is either in $\ceffect(a,q)$ or of the form $\{\delta(q_i,p,\cl{a})\}$. By the definition of $\geffect{v}$, there exists some strategy $\Astrat_v \in \stratlr$ on $v$ such that $\states(q', v, \Astrat_v) = X_v$, and by the definition of $\ceffect(a,q)$, for each $i$ with $X_i \in \ceffect(a,q)$ there exists a strategy $\Astrat_i \in \stratac$ such that $X_i = \states(q,\op{a}\cl{a},\Astrat_i)$. We extend $\Astrat_v$ to a strategy $\Astrat$ on $\op{a}v\cl{a}$ as follows: \pone reads the initial $\op{a}$, then plays on $v$ according to $\Astrat_v$. The string $v'$ resulting from this play on $v$ has to fulfil $\delta^*(q', v') = q_i$ for some $i \in [k]$; if, for this $i$ it holds that $X_i = \{\delta(q_i,p,\cl{a})\}$, then \pone plays \Read on $\cl{a}$, otherwise she plays \Call on $\cl{a}$ and plays according to $\Astrat_i$ in the resulting sub-game. Let $X' = \states(q, \op{a}v\cl{a}, \Astrat)$; it is easy to see
that $X' \subs X$, and since $X'$ has a subset in $\geffect{\op{a}v\cl{a}}(q)$ by definition, this proves the claim.
\end{proof}

We are now ready to define the ANWA $A_C$ from Proposition \ref{prop:calltoanwa}. The intuition behind it is that $A_C$ uses alternation to guess strategy choices for \pone and \ptwo in the above abstraction of $G$ on $w$ using call effects and tracks a current state $q$ in the target language DNWA $A(T)$. On opening tags, as well as on closing tags for which $A_C$ existentially guesses \pones move to be \Read, $A_C$ simply simulates $A(T)$; on closing tags $\cl{a}$ where $A_C$ decides for \pone to play \Call, $A_C$ then chooses existentially  a set $X \in \ceffect(a,q)$ (corresponding to a substrategy for \pone after the \Call on $\cl{a}$) and branches universally into all states $q' \in X$ (corresponding to \ptwos choice of a counter-strategy and a corresponding resulting state).

Formally, $A_C= (Q, \Sigma, \delta_C, q_0, F)$ is an ANWA in normal form , where $\delta_C$ is defined as follows. (Recall that $A(T) = (Q, \Sigma, \delta, q_0, F)$ is the target language DNWA in normal form.)
	\begin{itemize}
	\item For $a\in\Sigma, q\in Q$: 
\[
\delta_C(q,\op{a}) \mydef \delta(q,\op{a}).
\]
	\item For $a\in\Sigma, q,p\in Q$: 
\[
 \delta_C(q,p, \cl{a}) \mydef \delta(q,p,\cl{a}) \lor \bigvee_{X \in \ceffect(a,p)} \bigwedge_{r \in X} r.
\]
	\end{itemize}

We go on to prove the correctness of $A_C$. To that end, we call a run $\rho$ of an ANWA $A$ on a string $w$ \emph{minimal} if no proper subtree of $\rho$ is a run of $A$ on $w$ (i.e. if each set of states chosen to follow up some state on reading some symbol is inclusion-minimal among the sets of states fulfilling the corresponding transition formula).

\begin{lemma}\label{lemma:anwaconstruction}
	Let $q \in Q$, $w \in \wf$ and $X \subs Q$. Then, $X \in \geffect{w}(q)$ if and only if there is a minimal run of $A_C$ on $w$ starting at $q$ and ending in states from $X$.
\end{lemma}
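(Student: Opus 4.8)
The plan is to prove Lemma \ref{lemma:anwaconstruction} by structural induction on the nested word $w$, mirroring the compositional structure of word effects established in Lemmas \ref{lemma:seqcomposition} and \ref{lemma:hiercomposition}. The base case is $w = \epsilon$: here $\geffect{\epsilon}(q) = \{\{q\}\}$ (the only strategy does nothing and $\delta^*(q,\epsilon)=q$), and the only run of $A_C$ on the empty word is the single node labelled $q$, which trivially ``starts at $q$ and ends in states from $X$'' exactly when $q \in X$, i.e. when $\{q\} \subseteq X$; by minimality the witnessing set is $\{q\}$ itself. So the base case matches.

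For the inductive step there are two cases following the definition of $\wf$. First, $w = uv$ with $u, v \in \wf$ both shorter: I would use Lemma \ref{lemma:seqcomposition}, $\geffect{uv} = \geffect{u}\circ\geffect{v}$, and the fact that a minimal run of $A_C$ on $uv$ starting at $q$ decomposes into a minimal run on $u$ from $q$ reaching some minimal set $Y = \{q_1,\dots,q_k\}$, followed, at each leaf $q_i$, by a minimal run of $A_C$ on $v$ from $q_i$; the set of final states of the whole run is then $\bigcup_i X_i$ where $X_i$ is reached by the $v$-run from $q_i$. Unwinding the definition of $\circ$ and of $\Mix$, this is exactly the statement that $X \supseteq X'$ for some $X' \in (\geffect{u}\circ\geffect{v})(q)$, using the induction hypothesis on $u$ and on $v$. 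The $\smin{\cdot}$/minimality bookkeeping is handled with Lemma \ref{lemma:normalised} as in the earlier proofs. Second, $w = \op{a}\,v\,\cl{a}$ with $v \in \wf$: here I would invoke Lemma \ref{lemma:hiercomposition}, $\geffect{\op{a}v\cl{a}} = H_a[\geffect{v},\ceffect]$, and analyse a minimal run of $A_C$ on $\op{a}v\cl{a}$ from $q$. Reading $\op{a}$, $A_C$ applies $\delta_C(q,\op{a}) = \delta(q,\op{a}) = (q',q)$ (normal form), so it moves to $q'$ and pushes $q$; by the induction hypothesis on $v$ the sub-run on $v$ reaches a minimal set $X_v = \{q_1,\dots,q_k\} \in \geffect{v}(q')$ (up to minimisation); then reading $\cl{a}$ at each leaf $q_i$, the transition formula is $\delta(q_i,q,\cl{a}) \lor \bigvee_{Z\in\ceffect(a,q)}\bigwedge_{r\in Z} r$, so a minimal satisfying set is either the singleton $\{\delta(q_i,q,\cl{a})\}$ (\pone plays \Read) or some $Z_i \in \ceffect(a,q)$ (\pone plays \Call). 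The union of these minimal satisfying sets over $i$ is precisely an element of $\Mix(\{\ceffect(a,q)\cup\{\delta(r,q,\cl{a})\}\mid r\in X_v\})$, hence, after minimisation, of $H_a[\geffect{v},\ceffect](q)$; conversely every element of that set arises this way. Again Lemma \ref{lemma:normalised} reconciles the two directions.

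The main obstacle, and the place requiring the most care, is the bookkeeping around \emph{minimal} runs versus the $\smin{\cdot}$ operator in the effect definitions. The statement quantifies existentially over minimal runs, whereas the $\circ$, $\Mix$ and $H_a$ operators bake in minimisation; so in each direction one does not get literal equality of the ``reached set'' $X'$ with an element of the effect, but rather the containment relations $\covers$ and $\covered$, which then must be combined using normalisation (Lemma \ref{lemma:normalised}). A related subtlety is that a minimal run need not be built from minimal sub-runs at every internal branching point in a naive sense — one must argue that, given \emph{any} run ending in states from $X$, a minimal sub-run can be selected below each node whose leaves still lie inside $X$, and conversely that composing minimal sub-runs yields a set that has a subset which is a minimal run's leaf set. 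I would address this by proving the equivalence in the slightly weaker ``has a subset in / is a superset of'' form throughout the induction and only invoking Lemma \ref{lemma:normalised} at the end of each case, exactly as done in the proofs of Lemmas \ref{lemma:seqcomposition} and \ref{lemma:hiercomposition}. Everything else — the correspondence between $A_C$'s existential choice of $X\in\ceffect(a,q)$ and \pones \Call substrategy, and between $A_C$'s universal branching and \ptwos counter-strategies — follows directly from the definition of $\delta_C$ and Definition \ref{def:calleffects}.
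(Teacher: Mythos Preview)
Your proposal is correct and follows essentially the same approach as the paper's proof: structural induction on $w$ with base case $\epsilon$, the concatenation case handled via Lemma \ref{lemma:seqcomposition}, and the nesting case $\op{a}v\cl{a}$ handled via Lemma \ref{lemma:hiercomposition}, in each case translating back and forth between elements of the word effect and minimal runs of $A_C$. Your treatment of the minimality bookkeeping is, if anything, more explicit than the paper's, which essentially asserts that the composed run is minimal because its sub-runs are and because $\ceffect(a,q)$ is normalised; your plan to work with the $\covers/\covered$ relations and invoke Lemma \ref{lemma:normalised} at the end of each case is a clean way to make that precise.
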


\begin{proof}
	Let $q \in Q$, $X \subs Q$ and $w \in \wf$. The proof is by induction on the structure of $w$.

	For $w= \epsilon$, the claim is trivially fulfilled, as $\geffect{\epsilon}(q) = \{\{q\}\}$ by the definition of string effects.
	
	Let $w = uv$ for $u,v \in \wf$. For the ``only if" direction, it follows from Lemma \ref{lemma:seqcomposition} that there are sets $X_u = \{q_i, \ldots, q_k\} \in \geffect{u}(q)$ and $X_v^1, \ldots, X_v^k$ with $X_v^i \in \geffect{v}(q_i)$ for each $i \in [k]$ and $X = X_v^1 \cup \ldots \cup X_v^k$. By induction, there exist a minimal run $\rho_u$ of $A_C$ starting at $q$ and ending inside $X_u$ and for each $i \in [k]$ a minimal run $\rho_v^i$ on $v$ starting at $q_i$ and ending inside $X_v^i$. From these, we can construct a run $\rho$ of $A_C$ on $w$ by replacing each leaf labelled $q_i$ in $\rho_u$ with the entire run $\rho_v^i$ rooted at $q_i$. Obviously, $\rho$ is a run of $A_C$ starting at $q$ and ending inside $X$, and $\rho$ is minimal because $\rho_u$ and all $\rho_v^i$ are. The ``if" direction is proven analogously.
	
	Let $w = \op{a}v\cl{a}$ for $a \in \Sigma$, $w \in \wf$. Let further $\delta(q, \op{a}) = q'$. For ``only if", Lemma \ref{lemma:hiercomposition} implies that $X \in H_a[\geffect{v},\ceffect](q)$. This means that there is a set $X_v = \{q_1, \ldots, q_k\} \in \geffect{v}(q')$ and sets $X_w^1, \ldots, X_w^k$ such that $X = X_w^1 \cup \ldots \cup X_w^k$ and for each $i \in [k]$ either $X_w^i \in \ceffect(a,p)$ or $X_w^i = \{\delta(q_i, q,\cl{a})\}$. By induction, there exists a minimal run $\rho_v$ of $A_C$ on $v$ starting at $q'$ and ending inside $X_v$. We extend $\rho_v$ to a run $\rho$ on $w$ as follows:  The root of $\rho$ is labelled $q$ and has as its only child the root of a copy of $\rho_v$;  each leaf of this copy labelled $q_i$ has as its children exactly the states in $X_w^i$. Using the definition of $A_C$, it is easy to verify that $\rho$ is indeed a run of $A_C$ on $w$, and it is also clear that $\rho$ starts at $q$ and ends inside $X$. Finally, $\rho$ is minimal because its subrun on $v$ is minimal, and for each $q_i$, the set $X_w^i$ is an inclusion-minimal set fulfilling $\delta_C(q_i, q,\cl{a})$ (for $X_w^i \in \ceffect(a,q)$, this follows from $\ceffect(a,q)$ being normalised). Again, the ``if" part is proven analogously.
\end{proof}

Now we are in the position to prove Proposition \ref{prop:calltoanwa}:

\begin{restate}{Proposition \ref{prop:calltoanwa}}
 There is an algorithm that computes from the call effect
    $\ceffect$ of a game $G$ in polynomial time in $|\ceffect|$ and $|G|$ an ANWA $A_C$ such that $L(A_C)= \safelr(G)$.
\end{restate}

\begin{proof}
The statement follows from Lemma \ref{lemma:anwaconstruction}, as $A_C$ has an accepting run on any string $w \in \wf$ if and only if it has a minimal such run. Obviously, $A_C$ is of polynomial size in the size of $G$ and $\ceffect$ and can be constructed from these in polynomial time
\end{proof}

\subsubsection*{The complexity of ANWAs}

\begin{restate}{Proposition \ref{prop:anwacomplexity}}
  \begin{enumerate}[(a)]
  \item Non-emptiness for ANWAs is $\iiEXPTIME$-complete.
  \item The membership problem for ANWAs is $\PSPACE$-complete. 
  \end{enumerate}
\end{restate}
\begin{proof}
Statement (a) follows easily from \cite{Bozzelli07} where
$\iiEXPTIME$-completeness of Emptiness for alternating visibly
pushdown automata was
shown. The lower bound in that paper only requires finite well-nested
words.

Towards the upper bound in (b),  it is easy to see that an ANWA $\calA=(Q, \Sigma,
\delta, q_0, F)$ on some nested word $w$ can be simulated by an alternating Turing machine with polynomial time bound, hence the classical results from \cite{ChandraKS81} yield a polynomial space upper bound. 

For future reference we note that this computation can be actually be done in polynomial space in $|w|$ and the size $|Q|$ of $\calA$'s set of states, if it can be tested in polynomial space, whether
\begin{itemize}
\item for a given set $X\subs Q\times Q$ of pairs of states, a symbol
  $a$ and a state $q$, whether $X \models \delta(q,a)$, and
\item for a given set $X\subs Q$ of states, a symbol
  $a$ and states $p,q$, whether $X \models \delta(q,p,a)$.
\end{itemize}
The proof of this statement is along the same lines as the proof that alternating polynomial time is contained in polynomial space: The tree of all possible computations has polynomial depth and can be analysed with polynomial space.

The lower bound in (b) is shown by  a reduction from \QBF, that is, the problem to decide whether a quantified Boolean formula evaluates to true. We assume that the input formula for \QBF is of the form $\Phi = Q_1 x_1 \ldots Q_n x_n \varphi(x_1, \ldots x_n)$ with $Q_i \in \{\exists, \forall\}$ and a boolean formula $\varphi$ with $m$ clauses in conjunctive normal form.

The idea behind this reduction is to transform $\Phi$ into an ANWA $A$ and a  nested string $w$
such that $\Phi$ is true if and only $A$ accepts $w$. Actually, $w$ is of a very simple form: $\op{v_1}\cdots\op{v_n}\op{X}\cl{X}\cl{v_n}\cdots\cl{v_1}$.

If the automaton $A$ reads an opening tag $\op{v_i}$, it branches existentially, if $x_i$ is existentially quantified,
and it branches universally, if $x_i$ is universally quantified, thus choosing a truth assignment $\alpha$ for the variables. 
Finally, when it reads $\op{X}$, $A$ branches universally, picking one of the $m$ clauses of $\varphi$ in every branch. When it reads the suffix $\cl{X}\cl{v_n}\cdots\cl{v_1}$ of $w$, $A$ tests that $\alpha$ makes the chosen clause true.

To this end, the automaton $A$ uses three kinds of states: 
\begin{itemize}
	\item \emph{assignment states}, $q_+$ and $q_-$, corresponding to true and false, respectively,
	\item \emph{clause states} $q_j$, for $j\in [m]$, representing the clause chosen from $\varphi$ to be tested for truth and
	\item a starting state $q_0$ and an accepting state $q_F$.
\end{itemize}

For the formal construction, let $\Phi = Q_1 x_1 \ldots Q_n x_n \varphi$ be the input formula for \QBF with $Q_i \in \{\exists, \forall\}$ for all $i \in [n]$ and a quantifier-free boolean formula $\varphi = C_1 \land \ldots \land C_m$ with clauses $C_j$. 
Let $w$ be constructed as above.

The ANWA $A = (Q, \Sigma, \delta, q_0, \{q_F\})$ in normal form is defined as follows:
\begin{itemize}
	\item $Q = \{q_0, q_+, q_-, q_F\} \cup \{q_j \mid j \in [m]\}$;
	\item $\Sigma = \{v_i \mid i \in [n]\} \cup \{X\}$;
	\item For $q\in\{q_0, q_+,q_-\}$ and $i\in [n]$,\\
$\delta(q,\op{v_i}) = 
\begin{cases}
  q_+\lor q_- & \text{if $Q_i=\exists$,}\\
  q_+\land q_- & \text{if $Q_i=\forall$;}\\
\end{cases}
$
\item $\delta(q,\op{X}) = q_1\land\cdots\land q_m$;
\item For $q\in\{q_+,q_-\}$ and $j\in [m]$,\\
$\delta(q_j,q,\cl{X})=
\begin{cases}
  q_F & \text{
    \begin{minipage}{6cm}
      if $x_n$ occurs in $C_j$ and $q=q_+$ or $\neg x_n$ occurs in
      $C_j$ and $q=q_-$,
    \end{minipage}
}\\
  q_j & \text{otherwise.}
\end{cases}
$;
\item For $q\in\{q_+,q_-\}$, $j\in [m]$, and $2 \leq i \leq n$,\\
$\delta(q_j,q,\cl{v_i})=
\begin{cases}
  q_F & \text{
    \begin{minipage}{6cm}
      if $x_{i-1}$ occurs in $C_j$ and $q=q_+$ or $\neg x_{i-1}$ occurs in
      $C_j$ and $q=q_-$,
    \end{minipage}
}\\
  q_j & \text{otherwise.}
\end{cases}
$
\item For all $q \in Q$, $\delta(q, q_0, \cl{v_1}) = q$, and
\item For all $q \in Q$ and $i \in [n]$, $\delta(q_F, q, \cl{v_i}) = q_F$.
\end{itemize}

It remains to be shown that $\Phi$ evaluates to true if and only if $w \in L(A)$.

We first note that $A$ is deterministic on the suffix $\cl{X}\cl{v_n}\cdots\cl{v_1}$ of $w$.
It is not hard to show that, on this suffix, $A$ reaches the accepting
state from state $q_j$, if and only if, the truth assignment $\alpha$
induced by the choices on $\op{v_1}\cdots\op{v_n}$ makes $C_j$
true. Thus, the subrun on the suffix
$\op{X}\cl{X}\cl{v_n}\cdots\cl{v_1}$ of $w$ is accepting, if and only if, $\alpha$ makes all $m$ clauses true. 
Finally, the existential and universal branching of $A$ on
$\op{v_1}\cdots\op{v_n}$ corresponds to the quantification of the
variables of $\Phi$ in the obvious and correct way.
\end{proof}

\begin{restate}{Lemma \ref{lemma:anwaeffects}}
	Given a state $q \in Q$, an alphabet symbol $a \in \Gamma$, and $\cneffect[G]{k}$, for some $k\ge 1$, the call effect 
        $\cneffect[G]{k+1}(a,q)$ can be computed  in
        doubly exponential time in $|G|$.
\end{restate}

\begin{proof}
Let $a\in\Sigma$, $q\in Q$, $X\subs Q$, and $k\ge 0$. We show that,  given $\cneffect[G]{k}$ and a set $X \subs Q$,  it can be decided in
        doubly exponential time in $|Q|$ and polynomial
       time in $|R|$ whether a subset
        of $X$ is in $\in\cneffect[G]{k+1}(a,q)$.

Let $A_C$ be as defined for the proof of Lemma \ref{lemma:anwaconstruction} with $\cneffect[G]{k}$ as its basic Call effect, and let $A$ be its modification with initial state $q$ and set $X$ of accepting states.
$A$ accepts all nested strings $w$ on which there exists a strategy $\Astrat$ for \pone of Call depth at most $k$ such that $\states(q,w,\Astrat) \subs X$.
 	
	Let, for each $a\in \funcsymb$,  $A_a = (Q_a, \Sigma_a, \delta_a, q_{0,a}, F_a)$ be a NWA for $R_a$.
	
By definition,  $X$ has a subset in $\cneffect[G]{k+1}(a,q)$, if \pone has a strategy $\Astrat$ of call depth $k+1$ on $\op{a}\cl{a}$ that plays \Call on $\cl{a}$ and fulfils $\states(q,\op{a}\cl{a},\Astrat) \subs X$. 
Such a strategy $\Astrat$ for \pone exists if and only if \emph{for every word} $w\in R_a$ there is a strategy $\Astrat_w$ of \pone on $w$ with $\states(q,w,\Astrat_w) \subs X$, thus if and only if $R_a\subs L(A)$, equivalently $R_a\cap \widebar{L(A)}=\emptyset$ .

By using a standard product construction and a complementation of an ANWA, the test boils down to a non-emptiness test for an ANWA with a state set of polynomial size in $|G|$ and can thus be done in doubly exponential time thanks to Proposition \ref{prop:anwacomplexity}.\footnote{Note that the transition formulas of this ANWA may be of exponential size in $|G|$; however, the upper bound proof for the complexity of AVPA emptiness testing in \cite{Bozzelli07} still yields only a doubly exponential time complexity in $|G|$ here.}
\end{proof}

\subsubsection*{Adequacy of the fixed-point process}

The following lemma will be used in the proof of Proposition \ref{prop:finitedepth}.
\begin{lemma}\label{lemma:replacementeffects}
For a cfG $G=(\Sigma,\funcsymb,R,T)$ with a deterministic  target NWA $A(T) =
(Q, \Sigma, \delta, q_0, F)$, it holds, for every $a\in\Gamma$ and
$q\in Q$:
  \[
  \ceffect(a,q) =
    \Mix(\{\geffect{w}(q) \mid w\in R_a\}). 
  \]
\end{lemma}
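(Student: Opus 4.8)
\textbf{Proof plan for Lemma \ref{lemma:replacementeffects}.}

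The plan is to unfold the definition of $\ceffect(a,q)$ in terms of strategies of \pone that begin by playing \Read on $\op{a}$ and \Call on $\cl{a}$, and then to observe that such a strategy decomposes cleanly into a choice, for each replacement word $w \in R_a$ that \ptwo may hand back, of a continuation strategy $\Astrat_w$ for the subgame on $w$. Concretely, after \pone plays \Call on $\cl{a}$, the subword $\op{a}\cl{a}$ (note $v = \epsilon$ here, so there is nothing deleted) is replaced by \ptwos choice of some $w \in R_a$, and the focus moves to the first symbol of $w$. Since $A(T)$ in normal form satisfies $\delta^*(q, \op{a}\cl{a}) $ behaviour that leaves the linear state effectively at $q$ when the subword between the tags is deleted — more precisely, the state from which $A(T)$ processes the replacement word $w$ is exactly $q$, because the \Call move on $\cl{a}$ in $\op{a}\cl{a}$ reduces the current word to the empty string at that position — the set $\states(q, \op{a}\cl{a}, \Astrat)$ for a strategy $\Astrat \in \stratac(G)$ equals $\bigcup_{w \in R_a} \states(q, w, \Astrat_w)$, where $\Astrat_w$ is the restriction of $\Astrat$ to the subgame on $w$ (in the sense of ``subgames'' defined just before Lemma \ref{lemma:seqcomposition}).

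First I would make precise the claim that a strategy $\Astrat \in \stratac(G)$ is, up to the outcomes that matter, the same data as a family $(\Astrat_w)_{w \in R_a}$ of arbitrary strategies on the words $w \in R_a$: given $\Astrat$, restrict it after the forced \Read/\Call opening to get each $\Astrat_w$; conversely, given a family $(\Astrat_w)_w$, assemble a strategy in $\stratac(G)$ that plays \Read on $\op{a}$, \Call on $\cl{a}$, and then follows $\Astrat_w$ once \ptwo has revealed $w$. Under this correspondence, $\states(q, \op{a}\cl{a}, \Astrat) = \bigcup_{w \in R_a} \states(q, w, \Astrat_w)$, because \ptwos first move (the choice of $w$) is unconstrained and each branch then proceeds independently.

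Next I would connect this to the $\Mix$ operator and the word effects $\geffect{w}$. Recall $\geffect{w}(q) = \smin{\{\states(q,w,\Astrat_w) \mid \Astrat_w \in \strata(G)\}}$. Writing $R_a = \{w_1, \dots, w_n\}$ (using that $R_a$ is nonempty and regular; if $R_a$ is infinite the same argument works with the obvious indexed-union reading, or one appeals to the finiteness arguments elsewhere, but for this purely set-theoretic identity no finiteness is needed — one may simply read $\Mix$ over the possibly infinite family), the set $\{\bigcup_i \states(q, w_i, \Astrat_{w_i}) \mid (\Astrat_{w_i})_i\}$ ranges over exactly the same sets as $\{d_1 \cup \dots \cup d_n \mid d_i \in \states\text{-values at }w_i\}$, and since every $\states(q,w_i,\Astrat_{w_i})$ has a subset in $\geffect{w_i}(q)$ and every element of $\geffect{w_i}(q)$ is realized by some $\Astrat_{w_i}$, the $\smin{\cdot}$-closures of $\{d_1 \cup \dots \cup d_n \mid d_i \in \geffect{w_i}(q)\}$ and of $\{\bigcup_i \states(q,w_i,\Astrat_{w_i}) \mid \dots\}$ coincide. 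This is exactly the statement $\ceffect(a,q) = \Mix(\{\geffect{w}(q) \mid w \in R_a\})$. I would carry out the two inclusions via Lemma \ref{lemma:normalised}, showing $\covers$ and $\covered$ between the two normalised sets, mirroring the style of the proofs of Lemmas \ref{lemma:seqcomposition} and \ref{lemma:hiercomposition}.

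The main obstacle, and the place deserving the most care, is the bookkeeping around what state $A(T)$ is in when it begins reading the replacement word $w$, and the precise meaning of ``subgame on $w$'' when $w$ was itself produced by \ptwos move (so that \pone may, with call depth budget $k$ within this subgame, further call function symbols inside $w$). The footnote to the definition of successor configurations stresses that a \Call on $\cl{a}$ inside $\op{a}y\cl{a}$ deletes $y$; here $y = \epsilon$, so that subtlety is benign, but one must still be careful that $\ceffect$ is defined with the forced opening \Read on $\op{a}$ and \Call on $\cl{a}$, so the $\op{a}$ is genuinely consumed and the target automaton's transition on $\op{a}$ followed immediately by the deletion on \Call leaves it back at $q$ — which is why the right-hand side features $\geffect{w}(q)$ and not $\geffect{w}(q')$ for some shifted $q'$. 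Once this is pinned down, the remainder is the routine $\smin{\cdot}$-and-$\Mix$ manipulation via Lemma \ref{lemma:normalised}.
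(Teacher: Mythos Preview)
Your proposal is correct and follows essentially the same route as the paper's proof: both decompose a strategy $\Astrat \in \stratac(G)$ into a family $(\Astrat_w)_{w \in R_a}$ of continuation strategies, establish $\states(q,\op{a}\cl{a},\Astrat) = \bigcup_{w\in R_a}\states(q,w,\Astrat_w)$, and then close out the two inclusions via Lemma~\ref{lemma:normalised} in the $\covers$/$\covered$ style of Lemmas~\ref{lemma:seqcomposition} and~\ref{lemma:hiercomposition}.

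One clarification on the point you flagged as the ``main obstacle'': your reasoning about the state bookkeeping is more convoluted than necessary and slightly off. You write that ``the $\op{a}$ is genuinely consumed and the target automaton's transition on $\op{a}$ followed immediately by the deletion on \Call leaves it back at $q$.'' In fact the automaton never reads $\op{a}$ at all in the final word: by clause~(3) of the successor-configuration definition, when \ptwo responds to the \Call with $w \in R_a$, the configuration $(\ptwoa,\op{a},\cl{a})$ becomes $(\ponea,\epsilon,w)$, so both $\op{a}$ and $\cl{a}$ are deleted outright. Thus $\word(\op{a}\cl{a},\Astrat,\Bstrat) = \word(w,\Astrat_w,\Bstrat')$ directly, and $\delta^*(q,\cdot)$ is applied to that final word from the original $q$ simply by the definition of $\states$. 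There is no automaton transition on $\op{a}$ to undo; the identity $\states(q,\op{a}\cl{a},\Astrat) = \bigcup_w \states(q,w,\Astrat_w)$ falls out immediately. This does not affect the correctness of your overall argument, only the justification of this one step.
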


\begin{proof}
	Since both sides of the claimed equation are minimal sets, it suffices by Lemma \ref{lemma:normalised} to show that each element of a set on one side of the equation has a subset on the other side.
  
  ($\covers$):
  Let $a \in \Gamma$, $q \in Q$ and $X \in \ceffect(a,q)$. By definition of $\ceffect$, there exists a strategy $\Astrat \in \stratac$ such that $X = \states(q,\op{a}\cl{a}, \Astrat)$. Again by definition, \pone plays \Call on $\cl{a}$ according to $\Astrat$.

 For every choice $w \in R_a$ with which \ptwo might respond to \pones initial \Call move on $\cl{a}$, there is a sub-strategy $\Astrat_w$ of $\Astrat$ on $w$. For each $w \in R_a$, let $X_w = \states(q,w,\Astrat)$. Obviously, each $X_w$ has a subset in $\geffect{w}(q)$, and therefore the set $X' = \bigcup_{w \in R_a}X_w$ has a subset in $\Mix(\{\geffect{w}(q) \mid w\in R_a\})$. It only remains to be proven that $X' \subs X$, so let $q' \in X'$. Then, by the definition of $X'$, there is some $w \in R_a$, strategy $\Bstrat_w \in \stratb$ and $w' \in \wf$ such that $w' = \word(w, \Astrat_w, \Bstrat_w)$ and $\delta^*(q,w')=q'$. From the way $\Astrat_w$ was defined from $\Astrat$, it follows that $w' \in \words(\op{a}\cl{a}, \Astrat)$, and therefore $q' \in X$.
  
  ($\covered$):
  Let  $a \in \funcsymb$, $q \in Q$ and $X \in \Mix(\{\geffect{w}(q) \mid w\in R_a\})$. Then, for each $w \in R_a$ there exists some set $X_w \in \geffect{w}(q)$ such that $X = \bigcup_{w \in R_a} X_w$. By the definition of $\geffect$, this means that for every $w \in R_a$ there is some strategy $\Astrat_w\in \stratlr$ such that $\states(q,w,\Astrat_w) = X_w$. Let $\Astrat \in \stratac$ be the strategy on $\op{a}\cl{a}$ where \pone plays \Call on $\cl{a}$ and then, if \ptwo picks $w \in R_a$ as a replacement, keeps playing according to $\Astrat_w$ on $w$. By definition of $\ceffect$, the set $X'=\states(q,\op{a}\cl{a},\Astrat)$ has a subset in $\ceffect(q,a)$, and it only remains to be proven that $X' \subs X$. Let therefore $q' \in X'$ Then, there is some strategy $\Bstrat \in \stratb$ and string $w' \in \wf$ such that $w'=\word(\op{a}\cl{a}\Astrat, \Bstrat)$. Since \pones first move according to $\Astrat$ is a \Call on $\cl{a}$, there is some string $w \in R_a$ which \ptwo chooses as a replacement according to $\Bstrat$; by definition of $\Astrat$, it then holds that $q' \in \states(q,w,\Astrat_w) = X_w \subs X$ as was to be proven.
\end{proof}

For the following proof, the \emph{width} of a nested word is the maximum number of children of any node in its corresponding forest. Its \emph{root width} is just the number of trees in its forest.  The \emph{(nesting) depth} of a nested word is the depth of its canonical forest representation.

\begin{restate}{Proposition \ref{prop:finitedepth}}
For every cfG $G$ it holds:  \mbox{$\cneffect{*}=\ceffect$}.
\end{restate}

\begin{proof}
For the proof we construct from a  cfG $G=(\Sigma,\Gamma,R,T)$ a game $G'=(\Sigma,\funcsymb,R',T)$, where $R'$ consists of particular finite sublanguages $R'_a\subs R_a$, for every $a \in \funcsymb$. Then we show
\begin{enumerate}[(a)]
\item \mbox{$\cneffect[G]{*}=\cneffect[G']{*}$},
\item \mbox{$\cneffect[G']{*}=\ceffect[G']$}, and finally
\item \mbox{$\ceffect[G']=\ceffect$}.
\end{enumerate}

To construct $G'$, we first examine the algorithm from the proof of Lemma \ref{lemma:anwaeffects} more closely. For a given state $q \in Q$, alphabet symbol $a \in \Sigma$, state set $X \subs Q$ and effect $\cneffect[G]{k}$, the output of that algorithm depends only on the existence of a single string from $R_a$ -- 
for $a \in \funcsymb$, the algorithm rejects if and only if there is a string in $R_a$ that is \emph{not} accepted by $A$. For each $q \in Q$, $a \in \Sigma$, $X \subs Q$ and $k \geq 1$, let $w(q,a,k,X)$ be one such \emph{witness string} of minimum length, if such a string exists. Obviously, the output of the algorithm from Lemma \ref{lemma:anwaeffects} for input $q,a,X$ and $\cneffect[G]{k}$ does not change if we replace $R_a$ by any subset of $R_a$ containing $w(q,a,k,X)$.

Let $k^*$ be the smallest number with $\cneffect{*} = \cneffect{k^*}$, and let $W_a$ be the set containing all $w(q,a,k,X)$ for all $q \in Q$, $k \leq k^*$ and $X \subs Q$. Furthermore, for each $w \in \wf$, let $v(a,w)$ be a string of minimum length such that $\geffect{w} = \geffect{v(a,w)}$ and $v(a,w) \in R_a$ and let $V_a = \{v(a,w) \mid w \in \wf\}$. Since there are only finitely many different string effects, each set $V_a$ for $a \in \Sigma$ must be finite as well.

The replacement rules $R'$ for $G'$ are now constructed as follows: 
For  each $a \in \funcsymb$, let $R'_a \mydef W_a \cup V_a$. By construction, it holds that $R'_a$ is a finite subset of $R_a$, and an easy induction argument (along with the above considerations) shows that $\cneffect[G']{k} = \cneffect{k}$ for each $k \geq 1$. Along with the definition of $\cneffect[\cdot]{*}$, this proves (a).

For (b) it is sufficient to show that each finite strategy $\Astrat\in\strata[G']$ on a word $w$ has bounded \Call depth. This can be easily established with the help of K\H onig's Lemma. To this end, we consider the strategy tree $T_{\Astrat,w}$ for $\Astrat$ on $w$ where each node is a game position of the form $(p,u,v)$ with a player index $p \in \{\ponea,\ptwoa\}$ and strings $u,v \in (\op{\Sigma} \cup \cl{\Sigma})^*$ and each node corresponding to a game position $\kappa$ has as children the possible follow-up positions $\kappa'$ such that $\kappa \movest \kappa'$ for $\Astrat$ and some counter-strategy $\Bstrat \in \stratb$. Each node of this tree has a finite number of children -- nodes corresponding to positions belonging to \pone have only a single child each (as $\Astrat$ is fixed), and positions in which \ptwo is to replace some $a \in \Sigma$ have one child for each string in $R'_a$.  Thus, the \Call depth of nodes is bounded, as otherwise $T_{\Astrat,w}$ would be a finitely branching tree with branches of arbitrary length, which by K\H onig's Lemma would yield that $T$ has an infinite branch, contradicting the finiteness assumption for $\Astrat$.

Towards (c), we prove the slightly stronger claim that $\geffect{w}(q) = \effect{G',w}(q)$ for all $q \in Q$ and $w \in \wf$. Lemma \ref{lemma:replacementeffects} then implies (c). To this end, we prove that each set in $\effect{G',w}(q)$ has a subset in $\geffect{w}(q)$ and vice versa, which proves the desired equality by Lemma \ref{lemma:normalised}.

One of these directions is almost trivial, as \ptwo simply has no more possible moves in $G'$ than in $G$. Thus, any strategy $\Astrat \in \stratac(G)$ induces a sub-strategy $\Astrat' \in \stratac(G')$ with $\words[G'](w, \Astrat') \subs \words[G](w, \Astrat)$ and therefore also $\states[G'](q, w, \Astrat') \subs \states[G](q, w, \Astrat)$.

For the other direction, let $q \in Q$, $w \in \wf$ and let $\Astrat' \in \stratac(G')$ with $X = \states[G'](q,w,\Astrat') \in \effect{G',w}(q)$. Let $d\mydef \Depth[G'](\Astrat',w)$. This is well-defined as  $\Astrat'$ is finite.
We prove by nested induction over $(d,\text{nesting depth of }w,\text{root width of }w)$ that there exists a strategy $\Astrat$ in $G$ with $\states[G](q,w,\Astrat) \subs X$, which implies that $X$ has a subset in $\geffect{w}(q)$. 

If $d = 0$, then \pone only plays \Read on the entirety of $w$; obviously, this strategy is feasible in $G$ as well and yields the same result. 

If $d>0$, \pone must play \Call on $w$ at some point, and therefore it holds that $w \neq \epsilon$. 

If $w = uv$ for $u,v \in \wf$, let $\Astrat'_u$ be the sub-strategy of $\Astrat'$ on $u$, and let $\{q_1, \ldots, q_k\} = \states[G'](q,u,\Astrat'_u)$. For each $i \in [k]$, let further $\Astrat'_{v,i}$ be a sub-strategy of $\Astrat'$ on $v$ in case the play on $u$ yields some string $u'$ with $\delta^*(q, u')=q_i$. By induction (as $u$ and $v$ have smaller root width than $w$), there exist strategies $\Astrat_u$ on $u$ and $\Astrat_{v,i}$ on $v$ in $G$ such that $\states[G](q,u,\Astrat_u) \subs \{q_1, \ldots, q_k\}$ and $\states[G](q_i ,v,\Astrat_{v,i}) \subs \states[G'](q_i ,v,\Astrat'_{v,i})$. Let $\Astrat$ be the strategy on $uv$ in $G$  where \pone plays according to $\Astrat_u$ on $u$ and according to $\Astrat_{v,i}$ if the play on $u$ yielded a string $u'$ with $\delta^*(q, u')=q_i$. Then, it holds that $\states[G](q, w, \Astrat) \subs \bigcup_{i \in [k]} \states[G'](q_i, v, \Astrat'_{v,i}) \subs X$.

If $w=\op{a}v\cl{a}$ for some $a \in \funcsymb$, $v \in \wf$, let $\delta(q,\op{a}) = (q',p)$, let $\Astrat'_v$ be the sub-strategy of $\Astrat'$ on $v$, and let $\{q_1, \ldots, q_k\} = \states[G'](q', v, \Astrat'_v)$. By induction (as the depth of $v$ is smaller than the depth of $w$), there exists a strategy $\Astrat_v$ on $v$ in $G$ with $\states[G](q', v, \Astrat_v) \subs \states[G'](q', v, \Astrat'_v)$. In the strategy $\Astrat$ on $w$, \pone plays according to $\Astrat_v$ on $v$. The play on $v$ from $q'$ according to $\Astrat$ is bound to reach some state $q_i$ for $i \in [k]$. If there is some string $v_i \in \words[G'](v, \Astrat'_v)$ with $\delta^*(q',v_i) = q_i$ such that \pone would play \Read on $\cl{a}$ according to $\Astrat'$ in $G'$ if the play on $v$ yields $v_i$, then \pone also plays \Read on $\cl{a}$ according to $\Astrat$; obviously, in this case, the resulting state from the play according to $\Astrat$ is in $X$. Otherwise, \pone plays $\Call$ on $\cl{a}$ in $\Astrat$. Let $z \in R_a$ be some arbitrary response for \ptwo to this \Call move in $G$; we now explain how \pone plays on $z$ according to $\Astrat$. 

By construction, the replacement language $R'_a$ in $G'$ contains the string $v(a,z)$, so this string is a valid response for \ptwo to the \Call by \pone on $\cl{a}$ in $G'$. Let $\Astrat'_{v(a,z)}$ be the sub-strategy of $\Astrat'$ if \ptwo chooses this response. As $\Astrat'_{v(a,z)}$ has a \Call depth of at most $d-1$, by induction there exists a strategy $\Astrat_{v(a,z)}$ for \pone on $v(a,z)$ in $G$ with $\states[G](q, v(a,z), \Astrat_{v(a,z)}) \subs \states[G'](q, v(a,z), \Astrat'_{v(a,z)})$. By the definition of $v(a,z)$, it holds that $\geffect{z} = \geffect{v(a,z)}$, which implies that there is a strategy $\Astrat_z$ for \pone on $z$ in $G$ such that $\states[G](q, z, \Astrat_{z}) \subs \states[G](q, v(a,z), \Astrat_{v(a,z)})$. In $\Astrat$, \pone then plays on $z$ according to $\Astrat_z$, and the above set inclusions show that all states resulting from this play are in $X$ as well, which completes the case $w=\op{a}v\cl{a}$ for $a \in \funcsymb$ and concludes the proof.
\end{proof}

\subsection*{Lower bounds}

Similar to Lemma \ref{lemma:anwaconstruction}, where we constructed ANWA from given cfGs to obtain upper complexity bounds, we prove matching lower bounds for by transforming ANWA into cfGs. 

\begin{lemma}\label{lemma:ANWAtocfG}
	There is a polynomial time algorithm that computes, given an ANWA $A$ and a nested word $w$, a  cfG  $G = (\Sigma, \emptyset, \funcsymb, R, T)$ and a nested word $w'$ such that $w\in L(A)$ if and only if \ptwo has a winning strategy on $w'$ in $G$, against all replay-free strategies of \pone. Furthermore, $G$ only depends on $A$ (not on $w$) and can be computed in polynomial time in the size of $A$.
\end{lemma}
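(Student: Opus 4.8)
The plan is to reduce the membership problem for ANWAs to the replay-free winning problem for cfGs, by arranging that a single \emph{replay-free} play of $G$ on $w'$ traces exactly one branch of a run of $A$ on $w$. The word $w'$ is obtained from $w$ by a uniform, tag-by-tag substitution: every tag of $w$ is replaced by a small gadget that depends only on $A$, and the gadgets are glued so that $w'$ stays well-nested. Thus $G$ --- its alphabet $\Sigma$, its function symbols, its rule set $R$, and its (deterministic) target automaton $A(T)$ --- depends only on $A$, and all dependence on $w$ sits in $w'$. The gadget replacing an opening tag $\op{a}$ of $w$ carries a function symbol; when \pone plays \Call there, the shape of $R$ together with $T$ forces \ptwo to write down an encoding of her resolution of the \emph{existential} (disjunctive) part of the local transition $\delta_A(q,\op{a})$, where $q$, the current linear state of the simulated run, is maintained by $A(T)$. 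The matching closing-tag gadget pops, so that the deterministic $A(T)$ can mirror the \emph{hierarchical} state of $A$ across matched $\op{a}$/$\cl{a}$ gadgets using its own hierarchical states, and the $\cl{a}$-gadget similarly lets \ptwo commit to the disjunctive part of $\delta_A(q,p,\cl{a})$.

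First I would fix a polynomial-size encoding of the positive boolean combinations in $\delta_A$ by their gate circuits, and describe a ``choice'' as a root-to-leaf path that resolves each $\lor$-gate. In the game, \ptwo owns the $\lor$-gates --- her replacement words spell out her choices, whose well-formedness the deterministic $A(T)$ can verify on the fly --- while \pone owns the $\land$-gates: with her remaining \Read/\Call freedom along the gadgets of $w'$ she selects, conjunct by conjunct, which branch of the would-be run she wishes to \emph{audit}, and her placement of \Call moves makes exactly that branch's certificate survive into the final word. The target language $T$ is then designed so that a finished play is \emph{lost} by \pone precisely when the single branch she audited is locally consistent and ends in an accepting state of $A$, and \emph{won} by \pone as soon as she catches an inconsistency in \ptwos bookkeeping or reaches a non-accepting leaf state. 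Hence \ptwo has a strategy that beats \emph{every} replay-free strategy of \pone iff she can fix her existential choices once and for all so that every branch \pone might audit is consistent and accepting --- which is exactly the assertion that $A$ has an accepting run on $w$, i.e.\ $w\in L(A)$. The two directions would then be proved by reading the required $\lor$-choices off an accepting run of $A$ on $w$ to build the \ptwo strategy, and conversely reconstructing an accepting run of $A$ from a winning \ptwo strategy by auditing every branch in turn.

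The step I expect to be the main obstacle is realising the \emph{universal} branching of the alternating automaton faithfully inside single left-to-right plays, while respecting two peculiarities of the model: \pone is replay-free, so she may only play \Call at \Call depth one and never inside a replacement word; and a \Call on a closing tag $\cl{a}$ deletes the whole enclosed subword $\op{a}y\cl{a}$. The gadgets must therefore be laid out so that \pones audit of a branch is driven purely by her depth-one \Read/\Call choices over $w'$, so that whatever $A(T)$ needs to know about a subtree survives a deletion triggered by a \Call on that subtree's root, and so that it is exactly the ``for all replay-free \pone'' quantifier that supplies the conjunctive branching of $A$ (the existential branching being supplied by \ptwos single round of replacements). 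That the deterministic $A(T)$ cannot itself guess anything --- it may only check certificates --- is what makes this delicate. A secondary but routine task is to verify that $|G|$ and $|w'|$ are polynomial and that the per-tag substitution is genuinely $w$-oblivious; this reduces to checking that the gate circuits for the finitely many transition formulas of $A$ and the bookkeeping done by $A(T)$ all have polynomial size, which is clear once the gadgets are pinned down.
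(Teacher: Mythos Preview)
Your proposal is correct and follows essentially the same approach as the paper: encode $w$ tag-by-tag into $w'$ with gadgets carrying the transition formulas of $A$, let \ptwo resolve the $\lor$-gates and \pone the $\land$-gates, and have the deterministic target automaton track the simulated state and check acceptance along the single branch traced by a replay-free play. The paper's concrete realisation of the point you flag as the main obstacle is particularly clean: the function symbols are just $\lor$ and $\land$, with $R_\lor=\{\op{1}\cl{1},\op{2}\cl{2}\}$ (so \ptwo picks the branch) and $R_\land=\{\op{2}\cl{2}\}$ a singleton (so \pones \Read means ``branch 1'' and \Call means ``branch 2''), and each gadget $v(\op{a})$ already spells out $\Enc(\delta(q_i,\op{a}))$ for \emph{every} state $q_i$, leaving $A(T)$ only the deterministic job of navigating to the right sub-encoding.
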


\begin{proof}
 Let $A = (Q, \Sigma, q_0, \delta, \{q_F\})$ be an ANWA and $w\in \wf$ a nested word. The idea is to simulate the alternation of $A$ in the game $G$ on $w'$. We design $G$ to only admit replay-free strategies for \pone. To make this possible, we construct $w'$ from $w$ by adding substrings that offer enough ``space'' for this simulation. 

We assume without loss of generality that $\Sigma$ does not contain any symbols from $(Q\times Q) \cup Q \cup \{b, 0,1,{\lor},{\land},\bot,\top\}$.
For any formula $\varphi \in \posbool(Q\times Q) \cup \posbool(Q)$, the \emph{encoding} $\Enc(\varphi)$ is the well-nested string over the alphabet $Q \cup\{\vee, \land\}$ derived from $\varphi$ in the natural way:
\begin{itemize}
\item If $\varphi \in \{\bot,\top\}$, then $\Enc(\varphi) = \op{\varphi}\cl{\varphi}$;
	\item If $\varphi = (q,p) \in Q\times Q$, then $\Enc(\varphi) = \op{(q,p)}\cl{(q,p)}$;
	\item If $\varphi = q \in Q$, then $\Enc(\varphi) = \op{q}\cl{q}$;
	\item If $\varphi = \varphi_1 \lor \varphi_2$, then $\Enc(\varphi) = \op{\lor}\cl{\lor}\op{b}\Enc(\varphi_1) \Enc(\varphi_2)\cl{b}$;	
	\item If $\varphi = \varphi_1 \land \varphi_2$, then $\Enc(\varphi) = \op{\land}\cl{\land}\op{b}\Enc(\varphi_1) \Enc(\varphi_2)\cl{b}$.
\end{itemize}

Let $q_1,\ldots,q_m$ be an enumeration of the states in $Q$. 

Let $\Sigma'$ and $\Sigma''$ be two distinct copies of $\Sigma$ with symbols of the form $a'$ and $a''$, respectively, for every $a\in\Sigma$.

For each $a\in\Sigma $, we define
\begin{itemize}
\item $v(\op{a})\mydef\op{a'}\cl{a'}\Enc(\delta(q_1,\op{a}))\cdots
  \Enc(\delta(q_m,\op{a}))\op{a}$, and
\item   $v(\cl{a})=\op{a''}\cl{a''}\Enc(\delta(q_1,q_1,\cl{a}))\cdots 
  \Enc(\delta(q_m,q_m,\cl{a}))\cl{a}$.
\end{itemize}
We note that in $v(\op{a})$, for each $i\le m$, there is a subword $\Enc(\delta(q_1,\op{a}))$, whereas in $v(\cl{a})$ there is a subword $\Enc(\delta(q_i,q_j,\cl{a}))$, for every $i,j\le m$.  
The string $w'$ is defined as the nested word $v(w)$, that results from $w$ by replacing every tag $\sigma \in \op{\Sigma} \cup \cl{\Sigma}$ with $v(\sigma)$.

As explained above, the purpose of the game $G$ is to simulate the alternation of $A$. We associate existential branching with \ptwo and universal branching with \pone.\footnote{The reader might feel that it would be more natural to associate existential moves to \pone. Why our chosen association is useful will become clear in the proof of Proposition \ref{prop:lowergeneral} given below.} To this end, the replacement languages for $\op{\lor}$ and $\op{\land}$ are as follows.
\begin{itemize}
\item $R_\lor =  \{\op{1}\cl{1},\op{2}\cl{2}\}$;
\item $R_\land =  \{\op{2}\cl{2}\}$;
\end{itemize}
All other symbols should not be replaced in the game, so we set $\funcsymb = \{\lor, \land\}$.

The intention of the construction is that the behaviour of $A$ on $w$ is simulated as follows in the game on $v(w)$ in $G$. Choices corresponding to $\lor$-gates in transitions are taken by \ptwo (and we force \pone to call every symbol $\lor$ as strings containing $\lor$-tags will not be accepted by the target NWA). The choice of $\op{1}\cl{1}$ by \ptwo is interpreted by the choice of the first branch of the formula by $A$ and likewise for $\op{2}\cl{2}$ and the second branch.
 Choices corresponding to $\land$-gates  in transitions are taken by \pone: we interpret $\op{\land}\cl{\land}$ just as $\op{1}\cl{1}$ in the $\lor$-case. Therefore, if \pone reads$\op{\land}\cl{\land}$ this corresponds to choosing the first branch of the formula, if she calls it, she chooses the second branch.

The target automaton follows the choices taken by the two players. At opening tags of the form $\op{(q,p)}$ it interprets $q$ and $p$ as the next horizontal and hierarchical state, respectively. It accepts if it ends in an accepting state or reaches $\op{\top}\cl{\top}$  at some point. If it reaches $\op{\bot}\cl{\bot}$  at some point, it rejects.

It remains to show that indeed $w$ is accepted by $A$ if and only if \ptwo has a winning strategy on $v(w)$ in $G$.

We call a strategy for \pone on $v(w)$ \emph{valid} if \pone plays \Call on every $\lor$ symbol. Since \pone can never win with a strategy that is not valid, we restrict our attention to valid strategies for \pone on $v(w)$.

We will now show that each run of $A$ on $w$ corresponds to some strategy $\Bstrat$ of \ptwo on $v(w)$ in $G$, and that an accepting run on $w$ induces a winning strategy on $v(w)$ and vice versa.

Let $\Bstrat$ be a strategy for \ptwo on $v(w)$, and let $\sigma$ be some tag in $w$. We say that a subformula $\varphi'$ encoded in $v(\sigma)$ is \emph{enabled} according to $\Bstrat$ and some counterstrategy for \pone if the resulting sub-play on $v(\sigma)$ yields a substring of the form $\op{1}\cl{1}\op{b}\Enc(\varphi') \Enc(\psi)\cl{b}$ or $\op{2}\cl{2}\op{b}\Enc(\psi) \Enc(\varphi')\cl{b}$ (for some formula $\psi$). By the construction of $v(\sigma)$, for each $q \in Q$ (and $\gamma \in \Gamma$, if $\sigma \in \cl{\Sigma}$ )the set of all states $q' \in Q$ such that $q'$ might be enabled in the sub-play on $\Enc(\delta(q,\sigma))$ (resp. $\Enc(\delta(q,\gamma,\sigma))$) according to $\Bstrat$ and some valid counter-strategy for \pone satisfies the formula $\delta(q,\sigma)$ (resp. $\delta(q,\gamma,\sigma)$). In this way, the strategy $\Bstrat$ induces a run $\rho$ of $A$ on $w$ such that for each valid counter-strategy of \pone, the resulting rewriting of $v(w)$ corresponds to one path in $\rho$. 

Similarly, a run $\rho$ of $A$ on $w$ induces a strategy $\Bstrat$ for \ptwo on $v(w)$; if, for some tag $\sigma$ in $w$ and state $q \in Q$, $P \subs Q \times Q$ (resp. $P \subs Q$) is the follow-up state set satisfying $\delta(q, \sigma)$ (resp. $\delta(q, p, \sigma)$ for some appropriate $p \in Q$), $\Bstrat$ can be constructed to enable exactly the states from $P$ for all counter-strategies of \pone.

As the target automaton in $G$ accepts a rewriting of $v(w)$ if and only if it encodes a path in a run of $A$ on $w$ ending in an accepting state, the correspondence between runs of $A$ on $w$ and strategies of $\ptwo$ on $v(w)$ in $G$ implies that there exists a winning strategy for \ptwo on $v(w)$ in $G$ if and only if there is an accepting run of $A$ on $w$.
\end{proof}

Using Lemma \ref{lemma:ANWAtocfG}, it is easy to prove our lower bounds.

\begin{restate}{Proposition \ref{prop:lowergeneral}}
 For the class of unrestricted games $\Safelr$ is
  \begin{enumerate}[(a)]
  \item $\iiEXPTIME$-hard with
    bounded replay, and
  \item  $\PSPACE$-hard with
    no replay.
  \end{enumerate}
\end{restate}

\begin{proof}
  The proof that $\Safelr[k](\Gall)$ is $\iiEXPTIME$-hard for all $k \geq 2$ is by a reduction from the emptiness problem for ANWA, which is $\iiEXPTIME$-hard according to Proposition \ref{prop:anwacomplexity}(a). 
  
  Given an ANWA $A$, let $G'$ be the VP-cfG constructed by the algorithm of Lemma \ref{lemma:ANWAtocfG}. Let $G$ be the game with an additional new function symbol $s$ which \ptwo is allowed to rewrite by any string of the form $v(w)$ as defined in the proof of Lemma \ref{lemma:ANWAtocfG}. Then $L(A)$ is non-empty if and only if \ptwo  has a winning strategy on $\op{s}\cl{s}$ in $G$. This yields the desired reduction from emptiness for ANWA to $\Safelr[2](\Gall)$.\\

	$\PSPACE$-hardness of $\Safelr[1](\Gall)$ follows directly from the corresponding hardness result for the ANWA membership problem (Prop. \ref{prop:anwacomplexity} (b)) along with the existence of a polynomial-time reduction proven in Lemma \ref{lemma:ANWAtocfG}.
\end{proof}

\subsection*{Finite replacement languages}

\begin{restate}{Proposition \ref{prop:finitegeneral}}
For the class of unrestricted games with finite replacement languages,  $\Safelr(\calG)$ is
  \begin{enumerate}[(a)]
  \item $\EXPTIME$-complete with    unbounded replay, and
  \item $\PSPACE$-complete with  bounded or without replay.
  \end{enumerate}
\end{restate}

\begin{proof}
As already mentioned in the body of the paper, the lower bounds follow from  Theorem 4.3 in
\cite{MuschollSS06} and the proof of Proposition \ref{prop:lowergeneral}. Thus, only the upper bounds need to be established.

For (a), the non-emptiness test for $R_a\cap \widebar{L(A)}$ can be replaced by a membership test $v\in \widebar{L(A)}$, for each of the finitely many strings $v\in R_a$. This can be done in polynomial space by Proposition \ref{prop:anwacomplexity}. The exponential time upper bound then immediately follows because the number of iterations of the fixpoint process is at most exponential and the final test whether $w$ is accepted by $A_{\ceffect}$ needs only exponential time. 

For (b), a polynomial space algorithm for a bounded number $k$ of replay works basically just as in the general case, by first computing the call effect $\cneffect{k}$ from the input game $G$, then computing from it the ANWA $A_C^k$ from Proposition \ref{prop:calltoanwa} and finally simulating $A_C^k$ on the input string $w$.
The initial call effect,  $\cneffect{1}$, can again be computed in polynomial time. For each $i$, $\cneffect{i+1}$ can be computed from $\cneffect{i}$ in polynomial space and finally, whether $A_C^k$ accepts $w$ can be tested in polynomial space in $|w|$ and the number of states of $A_C^k$, that is, in the number of states of the target automaton of $G$.

Some care is needed though, as the (representation of the) intermediate automata and the resulting automaton  $A_C^k$ can be of superpolynomial (at most exponential) size. However, as usual for space bounded computations, the information about $A_C^k$ and the intermediate automata can be recomputed whenever it is needed. The composition of these constantly many polynomial space computations then yields an overall polynomial space bound. It is crucial here that, as observed in the proof of Proposition \ref{prop:anwacomplexity}, the evaluation of $A_C^k$ is possible in polynomial space in $|w|$ and the number of states of $A_C^k$.
\end{proof}
By a more complicated argument, the upper bound of Proposition \ref{prop:finitegeneral} (a) can even be established in the case where the finite replacement languages are not given explicitly but by NWAs.

\newpage		\section*{Proofs for Section \ref{sec:simple}}

For our upper bounds, we formalise XML Schema\footnote{For more background on formalisations of XML Schema we refer the reader to \cite{MartensNeven+Simple07}.}  target languages by way of simple NWA (as defined in Section \ref{sec:simple}) and use similar techniques as in the upper bound proofs for Section \ref{sec:general}. 
Lower bounds, on the other hand, will generally follow from lower bounds for context-free games on \emph{flat} strings, as defined in \cite{MuschollSS06}.

\subsection*{Upper bounds}

The general structure of the algorithms is the same as in Section \ref{sec:general}. Technically, the two main parts of the proof are to show that SANWAs are suitable (SNWAs can be computed from XML Schemas, Proposition \ref{prop:dtdsimple},  and SANWAs from (simple) game effects, Proposition \ref{prop:calltolanwa}) and to establish the complexity of SANWAs (Propositions \ref{prop:captest} and \ref{prop:lanwacomplexity}).

\subsubsection*{Suitability of simple NWAs}

First off, we prove that simple NWA are at least as expressive as single-type tree grammars. The idea behind this is rather straightforward, as we only need to combine DFAs for each type's content model and, on reading some opening tag $\op{a}$, start some DFA in a sub-computation to check the nested string between $\op{a}$ and the associated $\cl{a}$ for compliance with the content model of some type $X$. Thanks to the single-type property, the type $X$ is uniquely defined by $a$ and the context from which $\op{a}$ was read, so we obtain a deterministic automaton as desired.

\begin{restate}{Proposition \ref{prop:dtdsimple}}
From every  single-type tree grammar $T$, a simple DNWA $A$ can be
computed in polynomial time, such that $L(A)=L(T)$.
\end{restate}

\begin{proof}

	Let $T = (\Sigma, \Delta, S, P, \lambda)$ be a single-type tree grammar. We will construct a SNWA $A$ such that $L(T) = L(A)$.
	
	Due to the single-type property, for each type $X \in \Delta$ and each $a \in \Sigma$, there is at most one type $X'$ in the content model of $X$ with $\lambda(X')=a$. Without loss of generality, assume that there is exactly one such type for each $X$ and $a$ (which can be done by adding a ``dummy type" $X_\bot$ with $r_{X_\bot} = \emptyset$ to $T$), and denote this type by $\nu(X,a)$.
	
	For each $X \in \Delta$, let $A_X = (P_X, \Delta, \delta_X, p_{0,X}, F_X)$ be a DFA deciding $L(r_X)$ (which can be computed from the \emph{deterministic} regular expression $r_X$ in polynomial time). Assume w.l.o.g. that all $P_X, P_Y$ are disjoint for $X \neq Y$. Then, the SNWA $A = (Q, \Sigma, \delta, (p_0, 0), \{(p_f,0)\})$ is defined as follows:
	\begin{itemize}
	\item $Q = \{\bot\} \cup P \times \Delta'$, with 
		\begin{itemize}
		\item $P = \{p_0, p_f \} \cup \bigcup_{X \in \Delta} P_X$ and
		\item $\Delta' = \Delta \cup \{0\}$, with $0 \notin \Delta$
		\end{itemize}
	\item $\delta$ is defined by
		\begin{itemize}
		\item $\delta((p_0,0), \op{\lambda(S)}) = (q_{0,S},S)$,
		\item $\delta((p, X), \op{a}) = (p_{0, \nu(X,a)}, \nu(X,a))$ for each $a \in \Sigma$, $p \in P$, $X \in \Delta$,
		\item $\delta(q,q',\cl{a})$ is defined by $t$ below as per the definition of SNWA,		
		\end{itemize}
	\item $\Floc(a) = \bigcup_{X \in \Delta: \lambda(X)=a} (F_X \times \{X\})$, and
	\item $t$ is defined by
		\begin{itemize}
		\item $t((p_0,0), \lambda(S))=(q_f,0)$ and
		\item $t((p,X), a) = (\delta_X(p, a),X)$ for each $a \in \Sigma$, $p \in P$, $X \in \Delta$.
		\end{itemize}	
	\end{itemize}
	
	To show that $L(T) = L(A)$, it suffices to show that for every $w \in \wf$ and $X \in \Delta$, it holds that $\op{\lambda(X)} w \cl{\lambda(X)} \in L(X)$ if and only if $\delta^*((q_{0,X},X), w) \in \Floc(\lambda(X))$, where $L(X)$ is defined like $L(T)$ with root type $X$. The claimed equality then follows with $L(T) = L(S)$.
\end{proof}

\begin{proposition}\label{prop:calltolanwa}
  There is an algorithm that computes from the call effect
    $\ceffect$ of a simple game $G$ in polynomial time in $|\ceffect|$ and $|G|$
    a SANWA $A_C$ such that $L(A_C)= \safelr(G)$.
 \end{proposition}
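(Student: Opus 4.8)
The plan is to re-use the architecture of the proof of Proposition~\ref{prop:calltoanwa}: construct an ANWA $A_C$ from $\ceffect$ essentially as there, but in a shape that satisfies horizontal and vertical simplicity, so that correctness and the polynomial size/time bound come out essentially as before. The first ingredient is a structural fact about call effects of simple games. Writing the target SNWA as $A(T)=(Q,\Sigma,\delta,q_0,F)$ with $Q\subs P\times\Delta$, target-state function $t$ and sink $\bot$, I would show that for every state $q=(p,X)$ and $a\in\Gamma$, each $Y\in\ceffect(a,q)$ consists only of states of type $X$, together with possibly $\bot$. This is immediate from Lemma~\ref{lemma:replacementeffects}, which expresses $\ceffect(a,q)$ through the effects $\geffect{w}(q)$ for $w\in R_a$, together with the elementary observation that a simple DNWA run on a \emph{well-nested} word from a state of type $X$ can only end in a state of type $X$ or in $\bot$: each top-level subtree either validates, advancing the $P$-component along $t$ (which preserves the type), or drives the run into the absorbing sink. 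This is exactly what forces the hierarchical states pushed by the automaton to stay within the type component $X$, as vertical simplicity requires.

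The second, more delicate, ingredient is the automaton itself. The $A_C$ of Proposition~\ref{prop:calltoanwa} is \emph{not} an SANWA: its closing-tag transition depends on the incoming linear state through the test ``$q\in\Floc(a)$'' inherited from $A(T)$, whereas horizontal simplicity only permits such a dependence through the test state $q_?$; and it also carries the call-effect disjunction there. The fix is to move all branching into the \emph{opening}-tag transitions and to route the content-validity test of a \Read move through $q_?$. Concretely, when $A_C'$ reads $\op{a}$ in a state $q$ it existentially chooses one of three options, all going to the \emph{same} linear follow-up $\delta_{A(T)}(q,\op{a})$ (which depends only on $\mathrm{type}(q)$, so vertical simplicity holds): (a) \pone commits that the transformed content of $\op{a}$ is valid --- $A_C'$ additionally spawns, via a conjunct, a thread that pushes $q_?$ and on the matching $\cl{a}$ answers $\text{true}$ iff the current linear state lies in $\Floc(a)$, while the main thread (pushing $q$) resumes at $t(q,a)$; (b) \pone does not commit --- the main thread resumes at $\bot$ on $\cl{a}$ (this $\bot$ can still be escaped later by a \Call on an ancestor, exactly as in $A(T)$); (c) if $a\in\Gamma$, \pone \Calls $\cl{a}$ --- $A_C'$ existentially picks $Y\in\ceffect(a,q)$ and universally branches, for each $r\in Y$, into a thread resuming at $r$ on $\cl{a}$. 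Options (b) and (c) use fresh ``pass-through'' hierarchical states with $t$-values $\bot$ resp. $r$ (by the structural fact these stay of type $\mathrm{type}(q)$, modulo a routine per-type copy of the sink), and the closing-tag transitions are $\delta_{A_C'}(\,\cdot\,,q_?,\cl{a})=\text{true}/\text{false}$ according to $\Floc(a)$ and $\delta_{A_C'}(\,\cdot\,,h,\cl{a})=t(h,a)$ for every other hierarchical state $h$, so horizontal simplicity holds. The three options correspond precisely to \pone reading $\cl{a}$ with a valid result, reading $\cl{a}$ with an invalid result that she plans to erase, and calling $\cl{a}$.

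It remains to check $L(A_C')=\safelr(G)$ and the resource bound. The bound is clear: $A_C'$ has $O(|Q|)$ states and transition formulas of size $O(|\ceffect|\cdot|Q|)$, computable in polynomial time. For the language equality I would redo the induction of Lemma~\ref{lemma:anwaconstruction} (Lemmas~\ref{lemma:seqcomposition} and~\ref{lemma:hiercomposition} about word and call effects carry over verbatim, as they do not refer to simplicity): one shows that minimal accepting runs of $A_C'$ on $w$ from $q$ realise exactly $\geffect{w}(q)$, the nested $q_?$-threads jointly verifying that every word \ptwo can force upon \pone --- after \pones replacements --- is accepted by $A(T)$, while the main threads implement the \Read/\Call bookkeeping and the call-effect shortcut exactly as the disjunction in $A_C$ did. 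The step I expect to be the main obstacle is precisely finding the right automaton: the naive adaptation that validity-checks the content of every subtree unconditionally is \emph{wrong} --- it rejects instances where \pone wins only by erasing a subtree with invalid content --- so the validity test must be made an explicit choice (option (b)), and one must check carefully that this choice interacts correctly with the $\bot$-routing of $A(T)$ under arbitrary ancestor \Call moves. Once the construction is pinned down, verifying simplicity is routine and the correctness argument is a lengthy but mechanical variant of Lemma~\ref{lemma:anwaconstruction}.
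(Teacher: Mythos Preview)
Your construction follows the paper's line: shift the alternation from closing to opening tags, push the intended post-subtree state as the hierarchical state so that the target function can be taken as $t_C(h,a)=h$, and delegate the content-validity check to a test thread with hierarchical state $q_?$. The paper's proof is very terse --- it gives only the opening-tag formula
\[
\delta^\ell_C(q,\op{a}) \;=\; \bigl((q',t(q,a)) \wedge (q',q_?)\bigr) \;\lor\; \bigvee_{X \in \ceffect(a,q)} \bigwedge_{r \in X} (q',r)
\]
(with $q'=\delta(q,\op{a})$) and asserts that simplicity and correctness follow ``analogously to the proof of Lemma~\ref{lemma:anwaconstruction}''. In particular the paper does \emph{not} include your option~(b), and it does not spell out the type-preservation argument for vertical simplicity or the per-type sinks.

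Your concern about option~(b) is well-founded and is a point the paper glosses over. On a word $\op{a}\op{c}\cl{c}\cl{a}$ with $a\in\Gamma$, $c\notin\Gamma$, and $\delta(\delta(q_0,\op{a}),\op{c})\notin\Floc(c)$, the paper's automaton is forced into the \Read disjunct at $\op{c}$ and thereby spawns a $q_?$-thread that returns \textsf{false}; every run dies there, even the ones that chose the \Call disjunct at $\op{a}$ --- yet \pone wins the actual game by calling $\cl{a}$ and discarding the invalid $c$-subtree. Your option~(b), letting the existential player commit to $\bot$ without launching a validity test, is exactly what is needed to keep those runs alive (the $\bot$ is harmless because $t_C(r,a)=r$ ignores the linear state). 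So your proposal is the same approach as the paper's but strictly more careful, and the issue you flagged as ``the main obstacle'' is a genuine gap in the paper's sketch that your construction repairs.
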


\begin{proof}
 We construct $A^\ell_C$ almost as the automaton $A_C$  in the proof of
Proposition \ref{prop:calltoanwa}. However, as they are mimicking games, the
alternating transitions in $A_C$ occur at closing tags, whereas the
definition of simple ANWAs requires that  alternating transitions occur
only at opening tags. Thus we slightly adapt the construction as follows.

Let $A(T) = (Q, \Sigma, \delta, q_0, F_0)$ be a SNWA in normal form and let $P, \Delta,\Floc,t,\bot$
witness the simplicity of $A(T)$.
With $q_? \notin Q$, we let $A^\ell_C \mydef ((Q \cup \{q_?\}), \Sigma, \delta^\ell_C, q_0, F_0)$, where
$\delta^\ell_C$ is defined as follows
	\begin{itemize}
	\item For every $q\in Q$ and $a\in \Sigma$, where $q' = \delta(q,\op{a})$,\\
 \[
          \delta^\ell_C(q,\op{a}) \mydef \left((q',t(q,a)) \land (q',q_?)\right) \lor \bigvee_{X
            \in C(q)} \bigwedge_{r \in X} (q',r).
\]
	\item  For every $q,q'\in Q$ and $a\in \Sigma$, $\delta^\ell_C(q,q', \cl{a})$ is defined via a target state function $t_C$ as per the definition of SANWA.
	\end{itemize}
	
	The target state function $t_C$ and final state function $F_{\text{loc,C}}$ witnessing the simplicity of $A^\ell_C$ are defined by $t_C(q,a)\mydef q$ and $F_{\text{loc,C}}(a)\mydef\Floc(a)$, respectively. This automaton obviously fulfils both simplicity conditions, by construction and the simplicity of $A(T)$. The correctness of the automaton is proven analogously to the proof of Lemma \ref{lemma:anwaconstruction}.
\end{proof}

\subsubsection*{Complexity of simple ANWAs}

To prove the upper bound in Proposition \ref{prop:lanwacomplexity} (a), i.e., that non-emptiness for SANWAs is in \PSPACE, we start off by proving a somewhat stronger result: That the problem of determining, given a NWA $A$ and a SANWA $B$, whether there is a nested word accepted by both $A$ and $B$, is in $\PSPACE$. The standard approach for proving a result of this sort (a product construction between two NWA or two SANWA) is generally not feasible here, as SANWA are less expressive than NWA (so $A$ cannot in general be transformed into a SANWA) and transforming $B$
 into a NWA might incur a doubly exponential blow-up in size. Therefore, a $\PSPACE$ algorithm has to be constructed especially for this problem and uses the following  pumping property for strings in $L(A)\cap L(B)$.

As in the previous section, the \emph{width} of a nested word is the maximum number of children of any node in its corresponding forest. Its \emph{root width} is just the number of trees in its forest.  The \emph{depth} of a nested word is the depth of its canonical forest representation.

	\begin{lemma}\label{lemma:shortstrings}
		Let $A = (Q_A, \Sigma, \delta_A, q_{0,A}, F_A)$ be a NWA 
		and let $B = (Q_B, \Sigma, \delta_B, q_{0,B}, F_B)$ be a SANWA 
 with type alphabet $\Delta$, final state function $\Floc$, target state function $t$ and test state $q_? \in Q$. Then $L(A)\cap L(B)\not=\emptyset$ if and only if there exists a string in $L(A)\cap L(B)$ of width at most $2^{|Q_B|} \cdot |\Sigma| \cdot |Q_A|$ and depth at most $3(|\Sigma|+1)|Q_A|^2|\Delta|$.
	\end{lemma}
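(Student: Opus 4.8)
The ``if'' direction is immediate: any string in $L(A)\cap L(B)$ of the stated dimensions witnesses $L(A)\cap L(B)\neq\emptyset$. For the ``only if'' direction I would argue by a pumping/surgery argument. Fix a \emph{shortest} witness $w\in L(A)\cap L(B)$ together with an accepting run $\rho_A$ of $A$ on $w$ (a sequence of NWA configurations) and an accepting run $\rho_B=(D,\lambda)$ of $B$ on $w$ (a tree-shaped run, which inherits the structural restrictions forced by horizontal and vertical simplicity). The plan is to show that, unless $w$ already meets both bounds, one can delete a well-nested infix of $w$ — either a block of consecutive sibling subtrees (reducing the width) or a ``middle slice'' of a single subtree (reducing the depth) — and adapt both $\rho_A$ and $\rho_B$ to the shorter word, contradicting minimality of $w$. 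Adapting $\rho_A$ is routine: the sub-run of an NWA on a well-nested infix never touches the part of its stack below the level at which the infix begins, so such sub-runs are ``portable'' and may be deleted or moved as long as the linear state of $A$ at each splice point is preserved.

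\textbf{Width reduction.} Fix a node $v$ of $w$ with children $u_1,\dots,u_k$ with $k$ large. Here the simplicity of $B$ is essential. By vertical simplicity, on reading $v$'s opening tag, $B$ enters on \emph{every} branch of $\rho_B$ one and the same linear state $q$, whose type $X^*$ depends only on $v$'s label and the type of $B$'s state just before $v$; the hierarchical states pushed there come from a fixed pool of at most $|Q_B|$ candidates and do not affect horizontal moves. By horizontal simplicity, a branch of $\rho_B$ that does not enter test mode ($q_?$) inside some $u_i$ re-emerges after $u_i$ in a state of type $X^*$ again, and the transition it takes on $u_i$'s closing tag depends only on the hierarchical state it had pushed, not on the (now irrelevant) content of $u_i$. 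Hence, after each prefix $u_1\cdots u_i$ one can record a \emph{summary}: (i) the linear state of $A$ reached there, (ii) the root label of $u_{i+1}$, and (iii) the frontier of $\rho_B$ at that point, which by the above is a subset of a small set of states. If $u_1\cdots u_i$ and $u_1\cdots u_j$ ($i<j$) carry the same summary, I would delete $u_{i+1},\dots,u_j$, splice $\rho_A$ at the common $A$-state, and for $\rho_B$ graft, onto each top-level branch of $v$'s subforest surviving to cut point $i$, the corresponding branch surviving to cut point $j$ (same linear state and same pushed hierarchical state, hence still an accepting run on the shortened word). Since there are at most $2^{|Q_B|}\cdot|\Sigma|\cdot|Q_A|$ summaries, iterating pushes the width below that bound.

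\textbf{Depth reduction.} Consider a root-to-leaf path $v_0,v_1,\dots$ in $w$ and attach to each $v_i$ the summary consisting of the pair (linear state of $A$ just before $v_i$'s opening tag, linear state of $A$ just after $v_i$'s closing tag), the \emph{type} of $B$'s state just before $v_i$, and $v_i$'s label. Using the type rather than the full $B$-state is again justified by vertical simplicity: the entire behaviour of $\rho_B$ inside $v_i$'s subtree is governed by that type together with the label. If a descendant $v_j$ on the path carries the same summary, replace the subtree rooted at $v_i$ by the strictly smaller one rooted at $v_j$: $\rho_A$ is unaffected outside because entry and exit $A$-states agree and the sub-run on $v_j$'s subtree is portable, and the sub-run of $\rho_B$ on $v_j$'s subtree is re-usable verbatim because $B$ enters it in the same fixed linear state and returns a state of the expected type at the closing tag. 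There are at most $|Q_A|^2\cdot|\Delta|\cdot|\Sigma|$ such summaries, which is below $3(|\Sigma|+1)|Q_A|^2|\Delta|$, so iterating bounds the depth.

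The main obstacle is the $B$-side of the width surgery. Because $B$ is alternating, cutting and splicing a run is delicate: the crux is to choose a frontier summary for $\rho_B$ that is simultaneously small enough to yield the stated $2^{|Q_B|}$ factor and rich enough that matching summaries genuinely permit grafting one run fragment onto another while keeping the run \emph{accepting}. This is exactly where the SANWA restrictions pay off (all branches sharing the linear state after an opening tag; type preservation and content-independence of pass-through closing transitions), and making it fully precise — in particular the bookkeeping for branches that terminate inside a deleted block via test mode — is the technical heart of the proof; the depth surgery is comparatively straightforward once this machinery is set up.
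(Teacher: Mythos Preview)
Your overall plan matches the paper's: a pumping argument on a shortest witness, with a sibling-block deletion for width and a subtree replacement for depth, each driven by a pigeonhole count over suitable summaries. Your width summary (state of $A$, frontier of $\rho_B$ as a subset of $Q_B$, label) is essentially the paper's ``layer'', and the splicing you describe is what the paper does.

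The depth argument, however, has a genuine gap concerning test branches. When you replace $v_i$'s subtree by $v_j$'s, the branching at the (new) opening tag is still dictated by $\delta_B(s_i,\op{a})$, where $s_i$ is the unchanged state just before $v_i$; this formula may force a branch that pushes $q_?$. That branch must, at the matching closing tag, find its linear state in $\Floc(a)$ --- i.e., the sub-run of $B$ on the transplanted content $w''$ from $q_a$ must be \emph{successful}. Your summary (label, $A$-entry/exit pair, $B$-type) does give you a sub-run of $\rho_B$ on $w''$ from $q_a$, but nothing guarantees it ends in $\Floc(a)$: in the original run this held only if $\rho_B$ happened to push $q_?$ at $v_j$, which depends on the full state before $v_j$ (not just its type) and on the particular satisfying set chosen there --- neither of which your summary records. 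The paper closes exactly this gap by adding to the depth summary the requirement that $w'$ and $w''$ agree on whether a successful sub-run exists (``either all sub-runs of $\rho$ on $w'$ and $w''$ are unsuccessful or there exist successful runs in $\rho$ on both''); this extra information is what accounts for the factor $3(|\Sigma|+1)$ in the stated depth bound rather than your $|\Sigma|$. So the test-mode bookkeeping you correctly flag as the crux of the width surgery is equally indispensable for the depth surgery; the latter is not ``comparatively straightforward'' without it.
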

	
\begin{proof}
	The ``if'' direction is trivial. For ``only if'', assume for the sake of contradiction that $L(A)\cap L(B)\not=\emptyset$, but there is no string in $L(A)\cap L(B)$ fulfilling the claimed upper bounds on both width and depth.

	First, we observe that for all words $w \in L(B)$, all nodes of any depth $i$ in an arbitrary accepting run of $B$ on $w$ contain only linear states of the same type, i.e. if $\rho = (D, \lambda)$ is an accepting run of $B$ on $w$, and $x, y \in D$ with $|x| = |y| = i$ for any $i \in \mathbb{N}$, and if $\linstate{\lambda(x)} = (p,X)$ and $\linstate{\lambda(x)} = (p', Y)$, then $X=Y$. This can be proven by a simple induction on $i$.
	
	In the remainder of this proof, if $\rho$ is a run of $B$ on some string $w$ and $\rho'$ is a sub-run of $\rho$ on a nested substring $w'$ of $w$, we call $\rho'$ \emph{successful} if all leaves of $\rho'$ are accepting with respect to the context of $w'$, i.e. if all leaves of $\rho'$ are in $F$ in case $w'=w$, or if all leaves of $\rho'$ are in $\Floc(a)$ in case $\op{a}w'\cl{a}$ is a substring of $w$. Note that due to the definition of runs, all test subruns of $\rho'$ (i.e. subruns starting with horizontal state $q_?$) have to accept. Furthermore, by the above observation, all subtrees of $\rho'$ immediately below its root start from the same state, as that state is uniquely given by the tags enclosing $w'$ and the root type of $\rho'$.
	
	First off, let $w \in L(A) \cap L(B)$ be a string of width greater than $2^{|Q_B|} \cdot |\Sigma| \cdot |Q_A|$ and minimal length. We now prove that $L(A) \cap L(B)$ contains a string shorter than $w$, in contradiction to the assumed minimality.

	Let $w'$ be a maximum-length nested substring of $w$ with root width greater than $2^{|Q_B|} \cdot |\Sigma| \cdot |Q_A|$. Let $\rho$ be an accepting run of $B$ on $w$ and $\rho'$ its sub-run on $w'$. Similarly, since $A$ may also be viewed as an ANWA, there is an accepting run $\pi$ of $A$ on $w$ in which each non-leaf node has only a single child. Let $\pi'$ be the sub-run of $\pi$ on $w'$. For $k=1,..,|w'|$, let $\layer[k](w') \in \Sigma \times ((\Pot(Q_B) \times Q_A) \cup (\Pot(Q_B^2) \times Q_A^2))$ such that if $w'_k = \op{a}$, $(q_1,p_1), .., (q_\ell, p_\ell)$ are all pairs of states at depth $k$ in $\rho'$ and $(q,p)$ is the state pair of depth $k$ in $\pi'$, then $\layer[k](w')=(a, \{(q_1, p_1),..,(q_\ell,p_\ell)\}, (q,p))$, and if $w'_k = \cl{a}$, $(q_1), .., (q_\ell)$ are all states at depth $k$ in $\rho'$ and $q$ is the state at depth $k$ in $\pi'$, then $\layer[k](w')=(a, \{q_1,..,q_\ell\},q)$. As the root width of $w'$ is greater than $|\Sigma \times \Pot(Q_B) \times Q_A|$, there are numbers $i < j < |w'|$ such that $\layer[i](w')=\layer[j](w')$ and the substrings $w'_1..w'_i$ and $w'_1..w'_j$ (and therefore also $w'_{i+1}\ldots w'_j$) are well-nested. The claim, then, is that there are accepting runs of $A$ and $B$ on the string $\tilde{w}$ derived from $w$ by deleting $w'_{i+1}\ldots w'_j$ from $w'$.
	
\skipproof{	Let $\layer[i](w')=\layer[j](w') = (a, \{q_1,..,q_\ell\},q)$ for some $a \in \Sigma$, $q_1, \ldots, q_\ell \in Q_B$ and $q \in Q_A$. An accepting run for $A$ on $\tilde{w}$ can easily be obtained from $\pi$ by cutting out the sub-run on $w'_{i+1}\ldots, w'_j$. For $B$, we construct an accepting run $\tilde{\rho}$ on $\tilde{w}$ by constructing a sub-run $\tilde{\rho}'$ of $B$ on $w'_1 \ldots w'_i w'_{j+1} \ldots w'_{|w'|}$ from $\rho'$ that is successful if $\rho'$ is successful, and then replacing $\rho'$ with $\tilde{\rho}'$ in $\rho$.
	
	For each $i \in [\ell]$, denote by $\rho'_i$ the sub-run of $\rho'$ on $w'_{j+1} \ldots w'_{|w'|}$ starting at $q_i$. Then it is easy to see that the tree $\tilde{\rho}'$ obtained from $\rho'$ by replacing each sub-run on $w'_{i+1} \ldots w'_{|w'|}$ starting at $q_i$ with $\rho'_i$ is indeed a run of $B$ on $w'_1 \ldots w'_i w'_{j+1} \ldots w'_{|w'|}$. Since the states at the leaves of $\tilde{\rho}'$ form a subset of the states at the leaves of $\rho'$, it is also clear that $\tilde{\rho}'$ is successful if $\rho$ is. It follows that replacing $\rho'$ with $\tilde{\rho}'$ in $\rho$ yields an accepting run of $B$ on $\tilde{w}$. With $\tilde{w}$, we have therefore constructed a string in $L(A) \cap L(B)$ that is strictly shorter than $w$, which yields the desired contradiction. $L(A) \cap L(B)$ must therefore contain a string of width at most $2^{|Q_B|} \cdot |\Sigma| \cdot |Q_A|$.
}	

	Assume now, again for the sake of contradiction, that there is no string in $L(A) \cap L(B)$ of width at most $2^{|Q_B|} \cdot |\Sigma| \cdot |Q_A|$ and depth at most $3(|\Sigma|+1)|Q_A|^2|\Delta|$. By the above part of the proof, this means that all strings fulfilling the requirement on width must be of a depth exceeding $3(|\Sigma|+1)|Q_A|^2|\Delta|$. Let $w$ be such a string of minimal length, and let $\rho$ be an accepting run of $B$ and $\pi$ an accepting run of $A$ on $w$. 
	
	As the nesting depth of $w$ is greater than $3(|\Sigma|+1)|Q_A|^2|\Delta|$, there exist
well-nested strings $w'$ and $w''$ such that for some $a \in \Sigma$, 
	\begin{itemize}
		\item $\op{a}w'\cl{a}$ is a substring of $w$, 
		\item $\op{a}w''\cl{a}$ is a substring of $w'$, 
		\item all sub-runs of $\rho$ on $w'$ and $w''$ start from the same state $q_a \in Q_B$
		\item either all sub-runs of $\rho$ on $w'$ and $w''$ are unsuccessful or there exist successful runs in $\rho$ on both $w'$ and $w''$, and 
		\item the states of $A$ according to $\pi$ before and after reading $\op{a}w'\cl{a}$ are the same as those before and after reading $\op{a}w''\cl{a}$.		
	\end{itemize}
	
	The claim is that both $A$ and $B$ have accepting runs on the string $\tilde{w}$ derived from $w$ by replacing $w'$ with $w''$. As $w''$ is a proper substring of $w'$, proving this claim yields the desired contradiction to the minimal length of $w$ and thus the claim of Lemma \ref{lemma:shortstrings}
\end{proof}

\begin{proposition}\label{prop:captest}
  There is an alternating algorithm that tests in polynomial time whether, for an
  NWA $A$ and a SANWA $B$ it holds $L(A)\cap L(B)\not=\emptyset$. 
\end{proposition}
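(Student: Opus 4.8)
The plan is to convert the small-model property of Lemma \ref{lemma:shortstrings} into an alternating procedure that guesses such a small common witness on the fly, together with an accepting run of $A$ and an accepting run of $B$ on it, while verifying everything along computation branches of only polynomial length. Writing $D \mydef 3(|\Sigma|+1)|Q_A|^2|\Delta|$ and $W \mydef 2^{|Q_B|}|\Sigma||Q_A|$ for the depth and width bounds of Lemma \ref{lemma:shortstrings}, note that both $D$ and $\log W$ are polynomial in $|A|+|B|$; hence it suffices to decide whether some word in $L(A)\cap L(B)$ has a forest of depth at most $D$ and width at most $W$. In the algorithm the current nesting depth is tracked by an explicit counter bounded by $D$, and the number of pending siblings at a level by a counter stored in binary and bounded by $W$.

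The core is a recursive alternating procedure $\mathrm{Check}$ that verifies the existence of a sub-forest realising a prescribed transition. Its arguments are: (i) the configuration of $A$ before the sub-forest (a linear state plus a hierarchical stack of height at most $D$) and the one required after it; (ii) a polynomial-size description of the frontier of $B$'s run-tree before the sub-forest and the one required after it; and (iii) the remaining depth budget and the remaining (binary-encoded) width budget. On a sub-forest with more than one tree, $\mathrm{Check}$ \emph{existentially} guesses a split into two sub-forests of roughly equal width, together with the $A$-configuration and the $B$-frontier at the split point, and then branches \emph{universally} into the two halves, halving the width budget in each. This divide-and-conquer lets the width contribute only an additive $\bigO(\log W)$ term to the branch length, while guessing the split data existentially \emph{before} the universal branch keeps the guessed run of $A$ and the guessed run-tree of $B$ consistent across the two halves, since the two halves concern disjoint parts of the document and meet only at the pre-committed boundary configuration. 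On a single tree $\op{a}v\cl{a}$, $\mathrm{Check}$ reads $\op{a}$ — existentially guessing an $A$-transition (pushing a hierarchical state) and, for $B$, a minimal satisfying assignment of the opening transition formula for each frontier thread — recurses on $v$ with the depth budget decremented, then reads $\cl{a}$, popping and applying the closing transitions; at the top level it additionally checks that $A$ ends in $F_A$ and every thread of $B$ ends in $F_B$. Note that $B$'s alternation at opening tags need not be delegated to the machine's universal branching at all: it is absorbed into the frontier description, so the machine's universal moves are used solely for the sibling divide-and-conquer.

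The step I expect to be the main obstacle — and the only place where the hypothesis that $B$ is \emph{simple} is genuinely used — is keeping the $B$-frontier description polynomial-size. Here vertical simplicity gives that all threads entering a subtree agree on the linear state immediately after the opening tag, so that the frontier of $B$ at any position is captured by the \emph{set} of pairs (linear state, top hierarchical state), of which there are at most $|Q_B|^2$; horizontal simplicity gives that the closing transition of a non-$q_?$ thread is deterministic (namely $t(\cdot,a)$) and that of a $q_?$ thread is just the local test $\Floc(a)$, so that the part of each thread's stack below the top need only be carried as a polynomial-size relation per open level rather than explicitly, and this summary can be updated correctly on opening and closing tags. This is precisely the point at which a \emph{general} ANWA would force exponential-size macro-states — consistent with the $\iiEXPTIME$ lower bound of Proposition \ref{prop:anwacomplexity} — explaining why the argument is special to SANWAs. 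Correctness of $\mathrm{Check}$ is then routine: by Lemma \ref{lemma:shortstrings} together with an induction over sub-forests, $\mathrm{Check}$ has an accepting computation iff $L(A)\cap L(B)\neq\emptyset$; and by construction every branch of the computation has length polynomial in $D$, $\log W$, $|Q_A|$ and $|Q_B|$, so the whole procedure runs in alternating polynomial time, as required (and, via Proposition \ref{prop:lanwacomplexity}, yields the stated upper bound for SANWA non-emptiness).
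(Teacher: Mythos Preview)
Your high-level plan is exactly the paper's: invoke Lemma~\ref{lemma:shortstrings}, then run an alternating divide-and-conquer that halves the width existentially/universally and recurses on depth, so that each branch has length $\bigO(D\cdot\log W)$. The treatment of $A$ (carrying an explicit stack of height $\le D$, versus the paper's choice of carrying only a before/after state pair and re-guessing at each nesting round) is a cosmetic difference; both are polynomial.

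The point where your write-up diverges from the paper, and where it becomes genuinely unclear, is the abstraction of $B$. The paper does \emph{not} try to track a frontier of a single run of $B$. Instead, \playerE commits at every stage to a function $S:Q_B\to\Pot(Q_B)$, with the intended meaning ``for every start state $q$ there is a run on the current sub-forest ending inside $S(q)$''. Concatenation rounds compose two such functions; nesting rounds let \playerA pick a single $q\in Q_B$ to challenge, after which \playerE exhibits a satisfying set $\{p'\}\times P\models\delta_B(q,\op{a})$ and checks $S(q)=\bigcup_{p\in P\setminus\{q_?\}}\{t(p,a)\}$ together with $S'(p')\subseteq\Floc(a)$ if $q_?\in P$. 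The object $S$ has size $O(|Q_B|^2)$, and \playerA's choice of $q$ is precisely how the universal quantifier over start states is discharged --- so $B$'s alternation \emph{is} delegated to the machine's universal branching, contrary to what you say.

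Your alternative --- carrying ``the set of pairs (linear state, top hierarchical state)'' plus ``a polynomial-size relation per open level'' --- is where the argument is underspecified. A set-of-pairs frontier loses the association between a thread's current state and its deeper stack, and a per-level relation loses correlations across levels; you have not shown that these lossy summaries still compose correctly under the width split (in particular, two threads that happen to share a midpoint state may require \emph{different} sub-runs on the second half, and it is not evident your summary permits this without over- or under-approximating). If what you really intend by ``frontier'' is the function $S$ above, then your approach collapses to the paper's (minus \playerA's nesting challenge, which can indeed be replaced by a polynomial deterministic scan over all $q$); but as written, the data structure and its update rules are not pinned down enough to constitute a proof.
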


\begin{proof}
	We formulate the claimed algorithm as a game for two players, whom we will call \playerA and \playerE to avoid confusion with the players for context-free games. This game will always terminate after at most polynomially many rounds, so an alternating polynomial-time algorithm can easily be constructed from it by branching nondeterministically (resp. universally) for the moves for \playerE (resp. \playerA) and accepting the input if and only if \playerE wins.
	
	We will construct the game such that that \playerE has a winning strategy on input NWA $A=(Q_A,\Sigma, \delta_A, q_{0,A}, F_A)$ and SANWA $B=(Q_B,\Sigma, \delta_B, q_{0,B}, F_B)$ with final state function $\Floc$, test state $q_?$ and target state function $t$ if and only if $L(A) \cap L(B) \neq \emptyset$. \playerEs goal in this game is to prove that there is a string that is accepted by both $A$ and $B$ without writing down that string explicitly; by Lemma \ref{lemma:shortstrings}, it does suffice to examine strings of at most exponential width and polynomial depth, but such a string can still not be explicitly spelled out using only polynomial space. We therefore represent a string implicitly by the behaviour it induces in $A$ and $B$.
	
	Game positions for \playerE consist of two states $p_1, p_2 \in Q_A$, a function $S : Q_B \rightarrow \Pot(Q_B)$ and two numbers $c,n \geq 0$, and the game is constructed in such a way that \playerE has a winning strategy from position $(q_1, q_2, S,c,n)$ if and only if there is a string $w$ of root width at most $2^c$ and nesting depth at most $n$ such that $q_1 \reach[w]_A q_2$,
	 and for every $q \in Q_B$ there is a run of $B$ on $w$ beginning in $q$ and ending inside $S(q)$. We write $c_0$ for $|Q_B| \log(|\Sigma| \cdot |Q_A|)$ and $n_0$ for $3(|\Sigma|+1)|Q_A|^2|\Delta|$.
Lemma \ref{lemma:shortstrings} then guarantees that $L(A) \cap L(B)$ is nonempty if and only if there is a state $q_f \in F_A$ and a function $S$ with $S(q_{0,B}) \subs F_B$ such that \playerE has a winning strategy from position $(q_{0,A}, q_f, S,c_0,n_0)$.
		
	In any position $(q_1, q_2, S,c,n)$, \playerE has the following options:
	\begin{itemize}
	\item If $c>0$, she may choose to play a \emph{concatenation round}, asserting that $w=v_1 v_2$ for strings $v_1, v_2 \in \wf$ whose root width is at most half that of $w$. In this case, she chooses two functions $S_1, S_2 : Q_B \rightarrow \Pot(Q_B)$, corresponding to strings $v_1, v_2$ as above and an ``in-between" state $q'\in Q_A$. The functions $S_1$ and $S_2$ have to fulfil the condition that for each $q \in Q_B$ it holds that $S(q) = \bigcup_{p \in S_1(q)}S_2(p)$; if $S_1$ and $S_2$ do not fulfil this condition, \playerA wins. Otherwise, \playerA has a choice of which part of \playerEs assertion he wants to contest, so he may choose as a follow-up position either $(q_1, q', S_1, c-1, n)$ or $(q', q_2, S_2, c-1, n)$.
	\item If $n>0$, \playerE may choose to play a \emph{nesting round} (with $a \in \Sigma$), asserting that $w=\op{a}v\cl{a}$ for some $v \in \wf$. To this end, she first chooses an alphabet symbol $a$ and a function $S'$ corresponding to $v$ as above, as well as states $q'_1, q'_2,p \in Q_A$ such that $\delta_A(q_1, \op{a}) = (q'_1, p)$ and $\delta_A(q'_2,p,\cl{a}) = q_2$ (if no such states exist, \playerA wins immediately). Next, \playerA chooses some $q \in Q_B$ on which to contest \playerEs claim. In response, \playerE picks a state $p' \in Q_B$ and a set of states $P=\{p_1,..,p_k\} \subs Q_B$ such that $(\{p'\} \times P) \models \delta_B(q, \op{a})$ and $S(q) = \bigcup_{p \in P \setminus \{q_?\}}\{t(p,a)\}$. If she cannot choose such a set, \playerA wins.
	
	If \playerA has not won by this point, he has to contest \playerEs claim that there is a string $v$ such that $B$ has a successful run on $v$. If $q_? \in P$ and $S'(p') \not \subs F(a)$, the string $v$ claimed by \playerE fails the test subrun mandated by $B$ branching with $q_?$, so in this case, \playerA wins. Otherwise, the game continues from position $(q'_1, q'_2, S',c_0, n-1)$, as the root width of the  substring $v$ is bounded by $2^{c_0}$.
	\item \playerE may choose to \emph{solve (with $a \in \Sigma$)}, asserting that $w=\op{a}\cl{a}$. In this case, she chooses a symbol $a \in \Sigma$. Similar to a nesting round, \playerA then picks a state $q \in Q_B$ on which to contest \playerEs claim, to which \playerE responds by choosing a state $p \in Q_B$ and a set $P \subs Q_B$. The game then ends and a winner is determined. \playerE wins if and only if all of the following conditions are fulfilled:
		\begin{enumerate}[(a)]
		\item There are states $p', q' \in Q_A$ such that $\delta_A(q_1, \op{a}) = (q',p')$ and \mbox{$\delta_A(q',p',\cl{a}) = q_2$};
		\item $(\{p\} \times P) \models \delta_B(q, \op{a})$;
		\item $S(q) = \bigcup_{p \in P \setminus \{q_?\}}\{t(p,a)\}$;
		\item If $q_? \in P$, then $p \in F(a)$.
		\end{enumerate}
	\item \playerE may choose to \emph{solve with $\epsilon$}, asserting that $w = \epsilon$. In this case, the game ends and \playerE wins if and only if $q_1=q_2$ and for each $q \in Q_B$ it holds that $S(q) = \{q\}$.
    \end{itemize}
	
	Since each round that does not end the game decreases either the number of remaining nesting or concatenation rounds and the number of remaining concatenation rounds only increases at the end of a nesting round, the total number of rounds starting from $(q_{0,A}, q_f, S,c_0,n_0)$ is bounded by $c_0n_0$, which is polynomial in the size of $A$ and $B$. It is easy to see that each choice by \playerE or \playerA requires only a polynomial-size certificate, and that each check for winning conditions is computable in polynomial time. Therefore, an alternating algorithm checking whether \playerE has a winning strategy on this game (as described above) has a polynomial upper bound on its running time. It remains to be shown that this algorithm indeed tests $A$ and $B$ for intersection emptiness, i.e. that \playerE has a winning strategy from $(q_{0,A}, q_f, S,c_0,n_0)$ for some $q_f \in F_A$ if and only if $L(A) \cap L(B) \neq \emptyset$.
	
	To prove this claim, we show that the following statements are equivalent:
	\begin{enumerate}[(1)]
	\item \playerE has a winning strategy from position $(q_1, q_2, S, c,n)$;
	\item There is a string $w \in \wf$ of width at most $2^{|Q_B|}|\Sigma||Q_A|$, root width at most $2^c$ and depth at most $n$ such that there is a run of $A$ on $w$ from $q_1$ to $q_2$, and for each $q \in Q$, there is a successful run of $B$ on $w$ from $q$ ending inside $S(q)$.
	\end{enumerate}
	
	$(1) \Rightarrow (2)$: Assume \playerE has a winning strategy $\Astrat$ from position $(q_1, q_2, S, c,n)$. We prove (2) by induction on the structure of $\Astrat$.
	
	If \playerE solves with $\epsilon$ as her first move according to $\Astrat$, the string $w=\epsilon$ obviously fulfils the claim of (2).	
	
	If \playerEs first move according to $\Astrat$ is to solve with some $a \in \Sigma$, then $w=\op{a}\cl{a}$ fulfils the claim of (2). Since $c,n \geq 0$, $w$ fulfils the desired upper bounds on nesting depth and width; winning condition (a) ensures the existence of a run of $A$; and as for each $q \in Q$ that \playerA chooses, \playerE can respond with a set of horizontal states compliant with the transition formulae of $B$ according to winning conditions (b) to (d), the desired runs of $B$ on $w$ exist as well.
	
	If \playerE begins with a concatenation round according to $\Astrat$, it follows that there exist a state $q' \in Q_A$ and functions $S_1, S_2 : Q_B \rightarrow \Pot(Q_B)$ such that for each $q \in Q_B$ it holds that $S(q) = \bigcup_{p \in S_1(q)}S_2(p)$ and \playerE has a winning strategy on both $(q_1, q', S_1, c-1, n)$ and $(q', q_2, S_2, c-1, n)$. By induction, this implies that there are strings $v_1, v_2 \in \wf$ of width at most $2^{|Q_B|}|\Sigma||Q_A|$, root width at most $2^{c-1}$ and depth at most $n$ for which there exist appropriate runs of $A$ and $B$; it is easy to see that $w = v_1 v_2$ fulfils the width and depth requirements of the claim, and that the claimed runs of $A$ and $B$ on $w$ can be constructed by combining those on $v_1$ and $v_2$.
	
	If \playerE starts by playing a nesting round with some $a \in \Sigma$, there exists a function $S'$ as well as states $q'_1, q'_2,p \in Q_A$ such that $\delta_A(q_1, \op{a}) = (q'_1, p)$ and $\delta_A(q'_2,p,\cl{a}) = q_2$. Furthermore, for each $q \in Q_B$, there is a state $p \in Q_B$ and a set $P \subs Q_B$ such that  $(\{p\} \times P) \models \delta_B(q, \op{a})$ and $S(q) = \bigcup_{p \in P \setminus \{q_?\}}\{t(p,a)\}$, and if $q_? \in P$ then $S'(p) \subs F(a)$. Finally, \playerE has a winning strategy starting from position $(q'_1, q'_2, S', c_0, n-1)$.
	
	By induction, there exists a string $v$ of width at most $2^{|Q_B|}|\Sigma||Q_A|$ and depth at most $n-1$ for which there exist appropriate runs of $A$ and $B$; the string $w = \op{a}v\cl{a}$ therefore fulfils the claimed restrictions on depth and width. Again, it is easy to see that a run of $A$ on $w$ can be constructed from the one on $v$. To construct the desired runs of $B$ on $w$, denote the run on $v$ starting at $p$ and ending in $S'(p)$ by $\rho$ and let $q \in Q_B$. A run on $w$ starting at $q$ is then constructed as follows: The root node, labelled $q$, has $\{p\} \times P$ as the set of labels of its children. Each of these nodes $(p,p')$ is the root of a copy of $\rho$, whose leaves are all inside $S'(p)$; if $p'=q_?$, the leaves of the corresponding copy of $\rho$ have no further children; otherwise, their only child is labelled with the state $t(p',a)$. Using the above properties and the definition for SANWA semantics, it is easy to verify that the tree thus constructed is indeed a run of $B$ on $w$ starting at $q$ and ending inside $S(q)$.
	
	$(2) \Rightarrow (1)$: This part of the proof is by an induction on the structure of $w$ analogous to the above proof of $(1) \Rightarrow (2)$. 
\end{proof}

\begin{restate}{Proposition \ref{prop:lanwacomplexity}}
  \begin{enumerate}[(a)]
  \item Non-emptiness for SANWA is $\PSPACE$-complete.
  \item The membership problem for SANWA is decidable in polynomial
    time.
  \end{enumerate}
\end{restate}

\begin{proof}
That non-emptiness for SANWAs is in \PSPACE follows directly from Proposition \ref{prop:captest}, as alternating polynomial time equals polynomial space.

	$\PSPACE$-hardness can be proven by a simple reduction (with a constant-sized NWA $A$ accepting $\wf$) from the nonemptiness problem for SANWA, which in turn is $\PSPACE$-hard by reduction from the nonemptiness problem for alternating finite automata, interpreting flat strings $w_1 \ldots w_n \in \Sigma^*$ as nested strings $\op{w_1}\cl{w_1}\ldots\op{w_n}\cl{w_n} \in \wf$ of nesting depth 0 (and vice versa). It is then quite easy to construct from an AFA $B'$ a SANWA $B$ such that $L(B') \neq \emptyset$ if and only if $B$ accepts some nested string of depth 0.

Together, statement (a) follows.

To show (b),  that the membership problem for SANWAs can be decided in polynomial time,   it suffices to show that the problem can be decided
  by an alternating Turing machine with
  logarithmic space. The computation of a SANWA $A$ can be easily simulated by an
  alternating Turing machine $M$. To this end,
  the TM $M$ could branch existentially and universally, just as $A$. In
  particular, on a word $w$ it would have exactly one run for each run of $A$
  on $w$. However, such a na\:{i}ve simulation would need to remember the
  stack contents to compute $t(p,a)$ at the next closing tag
  $\cl{a}$, and thus the space required would be proportional to the nesting
  depth of the input word.
  
To achieve a logarithmic space bound, we can modify $M$ as follows. Whenever a
transition at an opening tag $\op{a}$ yields a pair $(q,p)$ with
$p\not=q_?$, the computation branches universally into two
subcomputations: one moves directly to the corresponding closing tag
$\cl{a}$ and continues after that from state $t(p,a)$. The other
proceeds as $A$ on the current subword but does \emph{not} need to
remember $p$. Whenever such a computation reaches a closing tag it
accepts. Test subruns, starting from a pair $(q,q_?)$ are simulated
slightly different: they remember the nesting depth of the opening tag
$\op{a}$ and behave at the corresponding closing tag just as $A$
would. However, if a test subrun starts a test-subsubrun the latter
only needs to remember the new nesting depth, as it can stop when the
subsubrun has finished. 

The correspondence between runs of $A$ and the ATM can be shown by
induction on the nesting depth of the input word $w$. In particular,
the ATM accepts just if $A$ does.\\

More formally, we claim that Algorithm \ref{alg:lanwa-mc} evaluates a SANWA $A=(Q, \Sigma, \delta, q_0, F)$ with local acceptance function $\Floc$, test state $q_?$ and target state function $t$ on a nested word $w = w_1 \ldots w_n\in \wf$ (with $w_i \in \op{\Sigma} \cup \cl{\Sigma}$ for each $i \in [n]$). To this end, it keeps track of a current state $q \in Q$ of $A$, two indices $i$ and $j$ denoting the starting and ending position in $w$ of the substring to be verified in its current run, and an index $f \in \Sigma \uplus \{0\}$ that tracks whether the current string is to be verified against the accepting states of $A$ ($f=0$) or some $\Floc(a)$ ($f=a$). To simplify notation for the former case, we let $\Floc(0) \mydef F$.

 	\begin{algorithm}[h]
   \caption{Verify($A, w$)}
   \label{alg:lanwa-mc}
   \begin{algorithmic}[1]
   	\STATE $q \gets q_0$
   	\STATE $i \gets 1$
   	\STATE $j \gets n$
   	\STATE $f \gets 0$
   	\WHILE{$i \leq j$}
   		\STATE //$i$ always denotes the position of an opening tag
		\STATE Choose alternatingly $(q', p)$ according to $\delta(q, w_i)$
		\IF{$p \neq q_?$}
			\STATE $i \gets$ (position of closing tag associated with $w_i$) + 1
			\STATE $q \gets t(p,w_i)$
		\ELSE
			\STATE //$p = q_?$; start test subrun:
			\STATE $q \gets q'$
			\IF{$w_{i+1}\in \op{\Sigma}$}
				\STATE $i \gets i+1$				
				\STATE $j \gets$ (position of closing tag associated with $w_i$)			
			\ELSE
				\STATE //$w_{i+1}$ is the associated closing tag of $w_i$; end test subrun and exit loop to test for acceptance.
				\STATE $j \gets i$
				\STATE $i \gets i+1$
			\ENDIF
		\ENDIF
   	\ENDWHILE
   	\IF{$q \in \Floc(f)$}
   		\STATE Accept
   	\ELSE
   		\STATE Reject
   	\ENDIF
   \end{algorithmic}
 \end{algorithm}	

	We first elaborate on how to execute line 7 of Algorithm \ref{alg:lanwa-mc} in alternating logarithmic space. Assume that each transition function $\delta(q,\op{a})$ in $A$ is given in prefix notation, i.e. formulas are of the form (i) $\land(\varphi_1, \varphi_2)$ or (ii) $\lor(\varphi_1, \varphi_2)$ or (iii) $(q',p)$. In case (i), the algorithm guesses universally whether to branch into $\varphi_1$ or $\varphi_2$, in case (ii) this choice is existential, and in case (iii), a result is fixed. Clearly, this is feasible in alternating logarithmic space and equivalent to first choosing existentially a set $P \subs Q^2$ with $P \models \delta(q,\op{a})$ and then universally picking a tuple $(q',p) \in P$.

	It is also easy to see that Algorithm \ref{alg:lanwa-mc} terminates (as the value of $i$ increases in each iteration of the loop in line 5 while $j$ only ever decreases) and requires only logarithmic space.
	
	It remains to be proven that Algorithm \ref{alg:lanwa-mc} is correct. We do this by proving that, for any nested word $w$, state $q \in Q$, $f \in \Sigma \cup \{0\}$ and indices $i,j$ such that $w_i \ldots w_j$ is a well-nested string, lines 5-24 of Algorithm \ref{alg:lanwa-mc} accept in an alternating fashion if and only if there is a run of $A$ on $w_i \ldots w_j$ starting at $q$ and ending inside $\Floc(f)$. The proof is by induction on the structure of $w$ and uses as its crucial component the above insight that picking a follow-up state tuple from $\delta(q, w_i)$ in line 7 is equivalent to universally selecting a child of a depth $i$ node labelled $q$ in some run of $A$, and that each existential strategy for the alternating execution of Algorithm \ref{alg:lanwa-mc} corresponds to a single run of $A$ in this way. 
\end{proof}

\subsection*{Lower bounds}

All of our lower bounds for simple games follow from lower bounds for cfGs on flat strings, that is, games on strings as defined in \cite{MuschollSS06}, with target and replacement languages represented by deterministic regular expressions.
Lower bound results for 
replay-free games and bounded replay with finite replacement languages and target languages represented as DFAs were already proven in \cite{MuschollSS06}. They can be transferred to games with target languages described by deterministic regular expressions. As an entirely new result compared to \cite{MuschollSS06}, we prove here a lower bound for 
bounded replay (actually, \Call depth 2 suffices) and later sketch how these results carry over to nested word cfGs.\footnote{Note that our $\PSPACE$ lower bound for bounded replay is not in conflict with the corresponding $\PTIME$ upper bound in \cite{MuschollSS06}. This is because the $\PTIME$ upper bound given there required replacement languages to be finite, whereas we consider here replacement languages given by arbitrary deterministic regular expressions, which may be infinite.}

Intuitively, a regular expression is deterministic, if each of its positions can be matched uniquely with a symbol of the regular expression, without lookahead. Formally let, for a regular expression $r$, $D(r)$ be the expression, in which the $i$-th symbol $\sigma$ of $r$ is replaced by $(\sigma,i)$, e.g. $D((a+b)^*a)=((a,1)+(b,2))^*(a,3)$. We call $r$ is \emph{deterministic}, if there do not exist strings $w,v,v'$, symbol $\sigma$ and numbers $i,j$ such that $w(\sigma,i)v\in L(D(r))$, $w(\sigma,j)v'\in L(D(r))$ and $i\not=j$.

\begin{lemma}\label{lemma:drepspace}
	For the class of games on flat strings with target and replacement languages specified by deterministic regular expressions, $\Safelr$ is $\PSPACE$-hard with bounded replay of Call depth 2.
\end{lemma}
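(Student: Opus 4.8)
The plan is to give a polynomial-time (indeed logspace) reduction from \QBF, the validity problem for quantified Boolean formulas, which is \PSPACE-complete. Fix an instance $\Phi = Q_1 x_1 \cdots Q_n x_n\, \varphi$ with $\varphi = C_1 \wedge \cdots \wedge C_m$ in CNF. I would build a flat-string cfG $G_\Phi$ with target and replacement languages specified by deterministic regular expressions of polynomial size, and a flat string $w_\Phi$, such that \pone has a winning strategy of \Call depth at most $2$ on $w_\Phi$ in $G_\Phi$ if and only if $\Phi$ is valid. The word $w_\Phi$ is split into a \emph{choice part} $\gamma_1\cdots\gamma_n$, processed by \pone in document order, and a \emph{verification part}. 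The gadget $\gamma_i$ realises the choice of a truth value for $x_i$ by the player owning $Q_i$: if $Q_i=\forall$, then $\gamma_i$ is a single function symbol $f_i$ with $R_{f_i}=\{\mathtt t_i,\mathtt f_i\}$, so \pone is forced to play \Call on $f_i$ (the target rejects any final word still containing a symbol $f_i$) and \ptwo's choice of replacement fixes the value of $x_i$; if $Q_i=\exists$, then $\gamma_i=f_i^{\mathtt t} f_i^{\mathtt f}$ with $R_{f_i^{\mathtt t}}=\{\mathtt t_i\}$ and $R_{f_i^{\mathtt f}}=\{\mathtt f_i\}$, and \pone plays \Call on exactly one of the two symbols (the target rejects if both or neither were replaced), choosing the value of $x_i$ herself. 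Since \pone must resolve $\gamma_i$ before the focus reaches $\gamma_{i+1}$, these moves realise precisely the quantifier alternation of $\Phi$, and after the choice part has been processed the word carries a string $s_1\cdots s_n$ with $s_i\in\{\mathtt t_i,\mathtt f_i\}$ encoding a truth assignment $\alpha$ that \pone controls on the $\exists$-variables and \ptwo on the $\forall$-variables, in the right order. The index tags keep the part of the target expression parsing $s_1\cdots s_n$ one-unambiguous and of polynomial size.

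The verification part contains a single function symbol $V$. After \pone plays \Call on $V$ (a \Call move of depth $1$), \ptwo picks from $R_V$ a word that \emph{announces} a clause $C_j$ — its at most three variable indices and polarities — and then exposes, as function symbols that \pone may \Call at depth $2$, one slot per literal of $C_j$; \pone plays \Call on exactly one such slot, namely the literal she claims satisfies $C_j$, and the target accepts the resulting word precisely when the chosen literal really occurs in $C_j$ and is made true by $\alpha$. The subtle point, and the reason a second level of replay is genuinely needed here (the replay-free case with a deterministic regular target is already in \PTIME by the flat-string counterpart of Theorem~\ref{theo:simple}(c)), is that the announcement of $C_j$ reveals indices whose values were chosen much earlier in the word, so a single deterministic (one-unambiguous) regular expression cannot remember the whole assignment and check a late-announced clause against it without exponential blow-up. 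To get around this, \ptwo's $V$-replacement is arranged to re-present, after the announcement, the values of the variables occurring in $C_j$; this re-presentation is kept honest by giving \pone a further depth-$2$ move with which she may dispute it, and the target only ever performs index-local, bounded comparisons (announcement against re-presentation, and re-presentation against the recorded $s_i$ via the dispute), so it stays polynomial-size and one-unambiguous. Allowing arbitrary padding inside the re-presentation makes $R_V$ infinite, which is exactly the feature separating this case from the tractable finite-replacement case.

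For correctness, if $\Phi$ is valid, \pone plays a winning assignment strategy on the choice part and then, whatever clause \ptwo announces, defends a literal satisfied by $\alpha$; conversely, if $\Phi$ is false, \ptwo uses his $\forall$-moves to force an assignment $\alpha\not\models\varphi$ and then announces an unsatisfied clause, against which \pone has no valid defence. Both directions reduce to a routine induction over the quantifier prefix together with the observation that, in the verification part, the genuine options of the two players are exactly the literal choices of the announced clause, plus a local analysis of \ptwo's re-presentation and \pone's dispute. I expect the \textbf{main obstacle} to be precisely the verification gadget: simultaneously keeping the target language a deterministic regular expression of polynomial size and preventing \ptwo from cheating when he re-presents assignment values. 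Meeting the first constraint is what forces the announce-then-re-present layout; meeting the second is what consumes the second unit of replay. Everything else — checking one-unambiguity of all the expressions involved and that the construction is computable in logspace, and finally observing (as announced after the statement) how the construction lifts to nested-word cfGs — is bookkeeping, largely parallel to the DFA-target constructions of \cite{MuschollSS06}.
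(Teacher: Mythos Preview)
Your high-level plan—encode \QBF\ as a two-phase game where quantifier alternation is realised in a choice part and then a verification part checks the resulting assignment—is natural, but the verification gadget as you sketch it does not close. The crux is this: once the choice part has fixed $s_1\cdots s_n$, any check of the form ``literal $\ell$ in the announced clause is satisfied'' must compare a value $s_k$ sitting near position $k$ with data (the clause identity, or a re-presented value) that appears only after position $n$. A deterministic regular expression of polynomial size cannot remember all $n$ bits, and your proposed workaround—let \ptwo re-present the relevant values and let \pone dispute a lie—merely relocates the problem. For the dispute to succeed, the target DRE must certify that the re-presented value $v_k$ disagrees with the original $s_k$; that is again a comparison across unbounded distance, and nothing in your description explains how a polynomial-size one-unambiguous expression performs it. All of \pone's depth-$2$ moves live inside \ptwo's replacement for $V$, i.e.\ to the right of the choice part, so she cannot mark the original $s_k$ to localise the comparison. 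You correctly identify this as the main obstacle, but the sketch does not resolve it.

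The paper sidesteps this difficulty entirely by reducing from the complement of \textsc{Corridor Tiling} rather than from \QBF. \pone's single depth-$1$ call forces \ptwo to produce an entire tiling candidate as one contiguous string; \pone then uses depth-$2$ calls to \emph{mark} an alleged error (a horizontal or vertical constraint violation, or a wrong final tile) \emph{inside that same string}. Every error type is a bounded-distance local pattern—adjacent tiles for horizontal errors, exactly $n$ tiles apart for vertical errors—so the target DRE need only recognise one of finitely many fixed-width patterns and stays polynomial in the instance. No information ever has to be carried from an early ``choice'' region to a distant ``check'' region. If you want to salvage a \QBF-style reduction, you would need a mechanism that brings \pone's marks and the values they refer to within bounded distance in the final string; reducing from a problem whose negative instances admit purely local refutations, as tilings do, is the clean way to achieve that.
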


\begin{proof}
        We prove this 
        by reduction from the complement of the problem \algprobname{Corridor Tiling}: Given a set $U$ of \emph{tiles}, relations $V,H \subs U \times U$ of \emph{vertical} and \emph{horizontal constraints}, \emph{initial} and \emph{final} tiles $u_i, u_f \in U$ and a number $n$ (represented in unary), is there a correct tiling of width $n$ and arbitrary height that starts with $u_i$, ends with $u_f$ and violates none of the (vertical or horizontal) constraints.
	
	Formally, a \emph{tiling} of width $n$ and height $m$ is a mapping $t:[n]\times[m]\to U$. A tiling $t$ is \emph{valid}  if
  \begin{itemize}
  \item $t(0,0)=u_i$,
  \item $t(n,m)=u_f$,
  \item for every $i\in [n-1]$ and $j\in [m]$,
    $(t(i,j),t(i+1,j))\in H$, and
  \item for every $i\in [n]$ and $j\in [m-1]$,
    $(t(i,j),t(i,j+1))\in V$.
  \end{itemize}
\algprobname{Corridor Tiling} asks whether an instance $\mathcal{I}=(U,u_i,u_f,V,H,n)$ has a valid tiling of width $n$. It is well known that this problem is $\PSPACE$-complete (see, e.g., \cite{Chlebus86} for a slightly different definition of tilings). Since $\PSPACE$ is closed under complementation, the complement of \algprobname{Corridor Tiling} is complete for $\PSPACE$ as well.

We give here a reduction from the complement of \algprobname{Corridor Tiling} to $\Safelr$.

The reduction constructs, given an instance $\mathcal{I}=(U,u_i,u_f,V,H,n)$ for \algprobname{Corridor Tiling}, a game $G=(\Sigma,\funcsymb,R,T)$ and a symbol $s$ from $\funcsymb$  such that \pone has a winning strategy on $s$ if and only if $\mathcal{I}$ does \emph{not} have a valid corridor tiling. The basic idea is that, after \pones first \Call move on $s$, \ptwo will answer with an encoding $w$ of a valid corridor tiling, if one exists. With her \Call moves of depth 2, \pone may then try to flag inconsistencies (i.e. constraint violations) in the tiling given by \ptwo; finally, the target automaton should accept a tiling if \pone did indeed point out an actual inconsistency.

The game $G$ is over an alphabet $\Sigma$ which is obtained by the union of $U$ with a set $\hat{U}$ of disjoint copies $\hat{u}$ of all elements $u \in U$ and the set $\{s, ?_h, ?_v, !_h, !_v, \#\}$ for some $s,?_h, ?_v, !_h, !_v, \# \notin U$. We set $\funcsymb=U \cup \{s, ?_h, ?_v\}$ The replacement and target languages are described below.

A \emph{tiling candidate} (for $\mathcal{I}$) is a string of the form $(((U ?_v ?_h)^n \#)^*$, whose length-$n$ blocks of elements from $U$ are supposed to be interpreted as lines of a tiling, with \emph{protest symbols} $?_v, ?_h$ after each tile and a \emph{line separator} symbol $\#$ at the end of each line. The replacement language $R_s$ consists of all tiling candidates $v$ such that $u_i$ is the first symbol of $v$. It is easy to see that $R_s$ can be described by a DRE of polynomial size in $|\mathcal{I}|$. The other replacement languages are very simple: $R_u = \{\hat{u}\}$ for each $u \in U$, $R_{?_h} = \{!_h\}$ and $R_{?_v} = \{!_v\}$.

The construction of the target language $T$ is best motivated by sketching how plays can proceed on the input string $s$. First, \pone should be forced to play \Call on $s$ and allow \ptwo to actually give a candidate for a valid tiling.
Therefore, $s \notin T$.

By the definition of $R_s$, \ptwo responds to this \Call with a tiling candidate which already begins with the correct tile.
It is now \pones task to flag an error in this tiling, i.e. either
\begin{itemize}
\item two tiles separated by $(n-1)$ tiles with corresponding protest symbols and one line separator symbol (a potential vertical error), or
\item two tiles separated by exactly $2$ protest symbols (a potential horizontal error), or
\item a single tile at the end of $v$ (a potential incorrect final tile).
\end{itemize} 
To flag any tiles, \pone plays \Call on them, forcing \ptwo to replace any called tile $x$ by a \emph{marked tile} $\hat{x}$. If the marked tiles indeed make up an error, we want \pone to win, so the DRE for $T$ should describe such strings. If, on the other hand, \pone tries to cheat by marking too few or too many tiles, or tiles that do not make up an error, she should lose the game.

To allow easy DRE-based checking of the three types of errors mentioned above, \pone also has to specify the \emph{type} of error right after the first tile she flagged; in case of a horizontal (vertical) error, she has to \Call $?_h$ ($?_v$) to have it replaced with $!_h$ ($!_v$). This basically ``tells" the target DRE what sort of error to check for. An incorrect final tile does not need its own type of protest symbol, because (as we will see) a flagged inconsistency of this sort can be recognised by a DRE ``as is".

To construct the target language DRE, we first define some abbreviations:
\begin{itemize}
\item For any set $S=\{s_1, \ldots s_k\}$ and REs $\alpha_s$ for each $s \in S$, $\bigoplus_{s \in S} \alpha_s$ stands for the RE $\alpha_{s_1} + \ldots + \alpha_{s_k}$; 
\item $U'$ denotes the DRE $\bigoplus_{u \in U} u ?_v ?_h$, and $(U'+\#)^k$ the $k$-fold repetition of $(U'+\#)$;
\item for each $u \in U$, $V_u \mydef\ !_v ?_h (U' + \#)^{n} (\bigoplus_{(u,u') \notin V} \hat{u'}) ?_v ?_h (U'+\#)^*$;
\item for each $u \in U$, $H_u \mydef\ !_h (\bigoplus_{(u,u') \notin V} \hat{u'}) ?_v ?_h (U'+\#)^*$;
\end{itemize}

It is easy to verify that for each $u \in U$, $V_u$, and $ H_u$ are DREs of polynomial size in $|\mathcal{I}|$. Intuitively, $V_u$ ($H_u$) describes all suffixes immediately to the right of $\hat{u}$ ($\hat{u} ?_v$) in tilings where \pone has correctly flagged a vertical (horizontal) error starting with $u$. 

The target language $T$ can now be described by the DRE 
\[
(U'+\#)^* \big(\hat{u_f}(V_{u_f}+ ?_v H_{u_f}) +  \bigoplus_{u \in U\setminus \{u_f\}} \hat{u}(V_u + ?_v( H_u + ?_h \#)\big), 
\]
which is also of polynomial size in $|\mathcal{I}|$.
It is easy to see that if a valid tiling exists, \ptwo can simply win
the game by providing it in the first move. Therefore, in this case,
\pone does \emph{not} have a winning strategy. On the other hand, if
no tiling exists, \ptwo can only give a tiling candidate with 
at least one (vertical, horizontal or final tile) error in his first move and \pone can win by marking one
such error.
\end{proof}

The following two results can be shown by careful adaptation of the corresponding lower bound proofs in \cite{MuschollSS06}.

\begin{lemma}\label{lemma:dreptime}
	For the class of games on flat strings with target and replacement languages specified by deterministic regular expressions, $\Safelr$ is $\PTIME$-hard  (under logspace reductions) without replay.
\end{lemma}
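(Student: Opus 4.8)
The plan is to reduce, in logarithmic space, from a suitable $\PTIME$-complete problem of a two-player character, adapting the corresponding lower bound of \cite{MuschollSS06}; concretely one can take the game version of the monotone circuit value problem, that is, solvability of a finite acyclic AND/OR graph with a designated source vertex and a designated set of winning sinks, which is well known to be $\PTIME$-complete. Given such an instance, the aim is to build a flat-string cfG $G$ and a word $w$ such that \pone has a replay-free winning strategy on $w$ in $G$ if and only if the source is solvable. The word $w$ is a concatenation of one small gadget per node of the graph, the gadgets laid out so that every edge of the graph points forward within $w$ (possible since the graph is acyclic); each gadget carries a private sub-alphabet and a dedicated function symbol. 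A play then traces the descent of a single token from the source towards a sink: at an existential node \pone encodes her move by whether she plays \Read or \Call on that node's function symbol (the single replacement string recording her chosen out-neighbour), whereas at a universal node \pone is forced to \Call it (leaving the function symbol in the word is made losing, which is easy to arrange since that symbol occurs nowhere else) and the two strings in the corresponding $R_a$ encode the two out-neighbours, so that \ptwo makes the choice. Gadgets of the nodes that the token skips over are simply scanned; thanks to their private sub-alphabets their eventual content is junk absorbed by starred factors of the target expression. The target language is the set of words encoding a play whose token ends in a winning sink.

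All replacement languages $R_a$ are finite --- singletons, or pairs of strings starting with distinct symbols --- hence trivially deterministic, and since no replacement string contains a function symbol, every strategy of \pone automatically has \Call depth $1$; so ``\pone wins'' and ``\pone wins with a replay-free strategy'' coincide here. Correctness of the reduction is then a routine induction over the topological order of the graph, and the whole construction is clearly log-space computable; the flavour of the gadgets is much as in the $\PSPACE$ construction of Lemma~\ref{lemma:drepspace}, but now confined to the replay-free regime.

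The one genuinely new point, and the step I expect to be the main obstacle, is that here the target must be given as a \emph{deterministic} regular expression, not merely by a DFA as in \cite{MuschollSS06}. This is not automatic: a polynomial-size DFA need not have any polynomial-size deterministic regular expression, and some regular languages have no deterministic regular expression at all. The remedy is to engineer the gadgets so that the target language is ``locally deterministic'': the pairwise disjoint per-gadget sub-alphabets, together with explicit delimiters marking the active part of a play and the skipped gadgets, ensure that when reading $w$ (or a rewriting of it) from left to right, at every position exactly one subexpression of $T$ can match the next symbol, so that the Glushkov automaton of the resulting polynomial-size expression is deterministic. Once the gadgets are designed with this constraint built in, writing $T$ and the $R_a$ down as deterministic regular expressions and verifying the reduction proceeds along the lines of \cite{MuschollSS06}.
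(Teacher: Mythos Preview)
Your approach is essentially the same as the paper's: the paper does not give a detailed argument either, stating only that the result ``can be shown by careful adaptation of the corresponding lower bound proofs in \cite{MuschollSS06}''. You correctly identify the one genuinely new obstacle---that the target must be a \emph{deterministic} regular expression of polynomial size rather than a DFA---and your proposed remedy (disjoint per-gadget sub-alphabets and explicit delimiters so that the Glushkov automaton of the target expression is deterministic) is precisely the kind of adaptation the paper leaves implicit.
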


\begin{lemma}\label{lemma:dreexptime}
	For the class of games on flat strings with target and replacement languages specified by deterministic regular expressions, $\Safelr$ is $\EXPTIME$-hard with unlimited replay.
\end{lemma}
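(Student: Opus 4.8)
The plan is to adapt the \EXPTIME-hardness reduction of \cite{MuschollSS06} for flat-string games with unbounded replay. That reduction starts from a standard \EXPTIME-complete problem (e.g.\ a two-player corridor-tiling game in the style of \cite{Chlebus86}, equivalently acceptance of an alternating polynomial-space Turing machine) and builds a game whose target language is given by a DFA. The only genuine obstacle to transferring it to our setting is that we must now present both the target language and all replacement languages as \emph{deterministic} regular expressions; since deterministic regular expressions cannot describe every regular language, a mechanical conversion of the DFA is impossible, and the reduction has to be restructured in the same spirit as the proof of Lemma~\ref{lemma:drepspace}.

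First I would set up, along the lines of Lemma~\ref{lemma:drepspace}, an alphabet that, besides the tile symbols $U$, contains a disjoint set $\hat U$ of ``marked'' copies $\hat u$ of the tiles, a distinguished start symbol $s$ whose replacement language forces \ptwo to output a syntactically correct encoding of a play of the tiling game --- rows of length $n$ separated by a line-separator symbol, interleaved with auxiliary ``slot'' symbols at which \pone selects the next row --- and a small set of ``protest'' symbols together with their finite (hence trivially deterministic) replacement languages, $R_u = \{\hat u\}$ for $u \in U$, $R_{?_h} = \{!_h\}$, $R_{?_v} = \{!_v\}$, and so on. The arbitrarily deep replay available to \pone is what lets the nesting of \Call moves simulate the unbounded alternation between the two tiling players: \ptwos choices of replacement strings encode the moves of one player, while \pones choices of which freshly introduced slot symbol to call encode the moves of the other. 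As in Lemma~\ref{lemma:drepspace}, whenever \pone wants the target expression to verify a consistency condition she must both play \Call on the relevant tile (forcing \ptwo to replace it by its marked copy $\hat u$) \emph{and} flag the \emph{type} of that condition --- horizontal constraint, vertical constraint, final tile, or ``this play is won'' --- by an additional \Call move on a protest symbol. This forced commitment is exactly what makes the target language describable, without lookahead, by a deterministic regular expression of polynomial size: it becomes a disjunction over the possible flags, each disjunct having the shape ``an arbitrary well-formed prefix, then a marked tile, then a sub-expression (of the form $V_u$, $H_u$, $\ldots$, built exactly as in Lemma~\ref{lemma:drepspace}) that deterministically checks the local condition selected by the flag''.

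For correctness one argues, as usual, that \pone has a winning strategy on $s$ if and only if the given \EXPTIME instance is a yes-instance (after, if necessary, passing to the complementary problem as is done in Lemma~\ref{lemma:drepspace}). In one direction, a winning strategy in the tiling game for the appropriate player translates into a strategy for the corresponding game player, using that every intermediate word produced is syntactically well formed (enforced by the replacement expressions) and that the target expression accepts precisely the words that witness the winning condition; in the other direction one reads a winning strategy off a hypothetical winning play. Here the \Call depth of \pones strategies is \emph{not} bounded --- this is exactly what separates the construction from Lemma~\ref{lemma:drepspace}, where bounding it yielded only \PSPACE.

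The main obstacle is realising the target language as a \emph{deterministic}, polynomial-size regular expression without destroying the equivalence with the tiling game. A deterministic regular expression cannot express ``there exist two positions at distance $n$ whose tiles violate a vertical constraint'', so the existential search for a defect must be pushed into the game itself: \pone is forced, through her \Call moves, to commit to the location \emph{and} the type of the alleged defect before the target expression ever sees the word, which turns an existential test into a deterministic local one. Verifying that the replacement languages and the target expression are simultaneously deterministic and of polynomial size, and that they leave \pone no way to cheat --- marking too few tiles, too many tiles, or the wrong ones --- is the technical heart of the argument and proceeds along the same, already delicate, lines as the construction in the proof of Lemma~\ref{lemma:drepspace}.
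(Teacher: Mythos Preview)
Your proposal is correct and takes essentially the same approach as the paper: the paper's own proof consists of the single sentence that the result ``can be shown by careful adaptation of the corresponding lower bound proofs in \cite{MuschollSS06}'', and your outline is precisely such an adaptation, supplying the DRE-specific marking/protest mechanism of Lemma~\ref{lemma:drepspace} to replace the DFA target of the original reduction. You have in fact given considerably more detail than the paper does.
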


\begin{restate}{Proposition \ref{prop:lowerdtd}}
 For the class of games with target languages specified by DTDs, $\Safelr$ is
  \begin{enumerate}[(a)]
  \item $\EXPTIME$-hard with
    unrestricted replay,
  \item  $\PSPACE$-hard with
    bounded replay, and
  \item $\PTIME$-hard  (under logspace-reductions) without replay
  \end{enumerate}
\end{restate}

\begin{proof}
All lower bounds follow by the same reduction from corresponding lower bounds for flat cfGs, which were just given as Lemma \ref{lemma:dreexptime}, Lemma \ref{lemma:drepspace} and Lemma \ref{lemma:dreptime}. 
	
	The idea for the reduction from flat cfGs to simple (nested) cfGs is as follows: All input and replacement strings $w=w_1 \ldots w_n \in \Sigma^*$ are replaced by $\nw{w} = \op{w_1}\cl{w_1}  \ldots \op{w_n}\cl{w_n} \in \wf$; to this end a target DFA $A(T)$ is simulated by a SNWA in normal form with an extra state $q_n$ such that $\delta(q, \op{a})=q_n$ for each $q$ and $a$, $\Floc(a) = q_n$ for each $a$, and $t(q,a)$ is the transition function of $A(T)$. Replacement NFAs are similarly transformed into NWAs.
\end{proof}

Using the reduction from the proof of Proposition \ref{prop:lowerdtd}, Theorem \ref{theo:simple} also yields the following result, which we will need in later proofs:

\begin{corollary}\label{coro:flatptime}
	For the class of games on flat strings with target languages specified by DFAs, $\Safelr$ is $\PTIME$-complete without replay.
\end{corollary}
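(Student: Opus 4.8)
The plan is to prove the two bounds separately, in both cases reducing to or from machinery already set up for nested-word games.

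\textbf{Upper bound.} I would reduce flat-string games with a DFA target to \emph{simple} nested-word games and invoke the upper bound of Theorem~\ref{theo:simple}(c). Given a flat-string cfG $G=(\Sigma,\funcsymb,R,T)$ with $T$ recognised by a DFA and each $R_a$ by an NFA (equivalently, a regular expression), I apply the encoding $\nw{\cdot}$ already used in the proof of Proposition~\ref{prop:lowerdtd}: a flat string $w=w_1\cdots w_n$ is sent to $\nw{w}=\op{w_1}\cl{w_1}\cdots\op{w_n}\cl{w_n}$, the target DFA becomes an SNWA in normal form over a single type (with a ``dummy'' linear state $q_n$ entered at every opening tag, $\Floc(a)=\{q_n\}$ for all $a$, and target state function $t$ mimicking the DFA's transition function), and each replacement NFA becomes an NWA recognising $\{\nw{v}\mid v\in R_a\}$. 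I then observe: (i) the transformation is computable in logarithmic space; (ii) the resulting game is simple; and (iii) it preserves winning in the replay-free setting — a straightforward induction on plays, using that $\nw{w}$ has no proper nesting, so replay-freeness on both sides amounts to \pone calling only symbols of the original word. Since the algorithm behind Theorem~\ref{theo:simple}(c) runs in polynomial time on \emph{any} simple cfG (and, via Proposition~\ref{prop:dtdsimple}, the SNWA we build is exactly of the kind arising from an XML Schema), we obtain a polynomial-time decision procedure, so $\Safelr$ for flat-string games with DFA target is in $\PTIME$.

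\textbf{Lower bound.} I would start from Lemma~\ref{lemma:dreptime}, which gives $\PTIME$-hardness under logspace reductions for flat-string games with target and replacement languages given by \emph{deterministic} regular expressions. The only additional step is to convert, in logarithmic space, a deterministic regular expression into an equivalent DFA — the Glushkov automaton, which is deterministic precisely when the expression is deterministic and is of size polynomial in the expression — and, likewise, into an equivalent NFA. Substituting these automata for the regular-expression representations in a given instance is a logspace reduction to $\Safelr$ for flat-string games with a DFA target (and NFA replacement, which the corollary leaves unconstrained), so the latter inherits $\PTIME$-hardness under logspace reductions. Combined with the upper bound this yields $\PTIME$-completeness without replay.

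\textbf{Main obstacle.} The difficulty here is bookkeeping rather than conceptual. One must verify carefully that the $\nw{\cdot}$ encoding genuinely yields an SNWA (all normal-form and simplicity conditions, in particular that $\delta((p,X),\op{a})$ depends only on the type and that $t$ preserves types), and that the correspondence between replay-free strategies in $G$ on $w$ and in the encoded game on $\nw{w}$ is exact in both directions. One must also confirm that the determinism-preserving regular-expression-to-automaton construction is indeed logspace-computable, so that it composes with the logspace reduction of Lemma~\ref{lemma:dreptime} and the claimed hardness is meaningful. Both points are standard, but they are exactly where a proof could slip.
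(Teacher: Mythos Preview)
Your proposal is correct and follows essentially the same approach as the paper: the paper derives the corollary in one sentence from the $\nw{\cdot}$ reduction of Proposition~\ref{prop:lowerdtd} together with Theorem~\ref{theo:simple}(c), which is precisely your upper-bound argument, and the lower bound is implicit in that chain since Theorem~\ref{theo:simple}(c)'s hardness ultimately comes from Lemma~\ref{lemma:dreptime} and the standard DRE-to-DFA conversion you spell out. Your write-up is considerably more explicit than the paper's, but the underlying route is identical.
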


 \begin{restate}{Proposition \ref{prop:simplefinite}}
For the class of games with target languages specified by XML Schemas and
explicitly enumerated finite replacement languages, $\Safelr$ is
  \begin{enumerate}[(a)]
  \item $\EXPTIME$-complete with
    unrestricted replay, and
  \item $\PTIME$-complete  (under logspace-reductions)  with bounded replay
    or without  replay.
  \end{enumerate}
The same results hold for DTDs in place of XML Schemas.
\end{restate}

\begin{proof}
  The upper bound in (a) follows from Theorem \ref{theo:simple}. The lower bounds in (a) and (b) follow from Lemma \ref{lemma:dreexptime} and Lemma \ref{lemma:dreptime}, respectively, as there the replacement rules are finite. It thus only remains to show the upper bound in (b).

The proof is quite similar to the proof of the upper bound in  Proposition \ref{prop:lanwacomplexity} (b). It combines an alternating logspace-computation, that simulates all plays of the game on the input string, with universal branching to divide, at each opening tag $\op{a}$, the processing of the remaining word into the processing of the subword until the corresponding closing tag $\cl{a}$ and the processing of the remaining word after that $\cl{a}$. 

We first describe how the game on an input string $w$ can be simulated by  an alternating logspace-computation. This part of the proof is very similar to the proof of the upper bound of Theorem 5.8 in \cite{MuschollSS06}. Let $k$ be the bound on the replay depth. We consider the equivalent version of cfGs in which \pone decides already when she reads an opening tag $\op{a}$, whether she wants \ptwo to rewrite a subword $u=\op{a}\cdots\cl{a}$. 

The idea is that the choices of \pone and \ptwo are simulated by existential and universal branching of the algorithm in the obvious fashion. However, if \pone calls an opening tag $\op{a}$  at some position $i$ and \ptwo replaces the corresponding subword $u=\op{a}\cdots\cl{a}$ of $w$ by a word $v$ from $R_a$ then the algorithm does not actually replace $u$ but rather stores the information that $u$ has been replaced by a pointer to $i$ and another pointer to $v$ (which is stored in the representation of $G$). As the replay depth is bounded by $k$, at each time at most $k$ such pairs of pointers are active, consuming at most $\bigO(\log (|G|))$ many bits. The test whether the resulting word (of each branch) is accepted by the target automaton $T$ is integrated into this branching process as follows. Each process maintains a current linear state $p$ reflecting the state of $T$ in the unique computation on the prefix of the current string, that is, if the current game configuration is $(\ponea,w_1,w_2)$, the current state is the one obtained by $T$ after reading $w_1$.   
Whenever \pone reads an opening tag $\op{a}$, the computation universally branches into two subcomputations.  The first subcomputation checks whether \pone has a winning strategy in the game on the subword between $\op{a}$ and its corresponding closing tag. 
The other subcomputation continues after that closing tag in the state determined by the target state function. \pone can only win if both subcomputations accept. Each subcomputation may recursively branch in the same way. When a subcomputation reaches a closing tag $\cl{a}$ it accepts if the current linear state is in $\Floc(a)$ and rejects otherwise. It is not hard to see that this algorithm has an accepting run on a word $w$ if and only if \pone has a winning strategy on $w$. As the algorithm only uses logarithmic space it witnesses the desired \PTIME upper bound.
\end{proof}

\newpage		\section*{Proofs for Section \ref{sec:other}}

In this section, we give proofs for our results concerning parameter validation and games with insertion stated in Section \ref{sec:other}.

\subsection*{Validation of parameters}
In this subsection, we consider cfGs with \emph{parameter validation}, i.e. games of the form $G = (\Sigma, \Gamma, R, V, T)$ which have an additional \emph{validity relation} $V \subs \Gamma \times \wf$. We will generally assume each \emph{validation language} $V_a \mydef \{w \in \wf \mid (a, w) \in V\}$ (for $a \in \Gamma$) to be a nonempty nested word language conforming to some specification (e.g. NWA, DTD or XML Schema). The semantics of such games is similar to the general semantics for cfGs, except for the fact that, in a configuration $(\ponea,u\op{a}v,\cl{a}w)$, \pone is only able to play \Call on $\cl{a}$ if it holds that $\op{a}v\cl{a} \in V_a$. Note that, while it isn't strictly necessary to pass the outermost $a$ to $V_a$ along with $v$, we still do so in order to easier describe $V_a$ as a language of trees with root node labelled $a$.

As mentioned in Section \ref{sec:other}, we restrict our attention to games without replay, as we are seeking to identify tractable cases, and $\Safelr$ is already $\PSPACE$-hard for bounded-replay games with target languages specified by DTDs \emph{without} parameter validation.

\subsubsection*{Upper bounds}

First off, we prove tractability for a restricted class of validation cfGs. As notation used in the proof, we say that a function symbol $g$ is ``from $V_f$'', if $V_g=V_f$.

\nc{\proset}[2][\sigma]{\ensuremath{\calP_{#1}(#2)}}
\nc{\lab}{\ensuremath{\text{label}}}

\begin{restate}{Theorem \ref{theo:validationptime}}
For the class of games with validation with a bounded number of validation DTDs and target languages specified by DTDs, $\Safelr$  is in  $\PTIME$ without replay.
\end{restate}

\begin{proof}(sketch)
	The basic proof idea for this result follows a similar approach to that used in \cite{MiloAABN05}: Going through the input string (interpreted as a tree) in a bottom-up fashion, we check for each node's child string whether it (and the subtree below it) can be rewritten to fit the target and verification languages in a replay-free manner. This allows us to tell whether \pone is able to safely play \Read or \Call on the node whose child string we just examined, and possibly on ancestor nodes as well. In this manner, deciding $\safelr(G)$ basically boils down to performing a polynomial number of safe rewritability tests for replay-free games on flat strings, which are each feasible in polynomial time by Corollary \ref{coro:flatptime}.
	
 For the sake of simple presentation, we identify trees and their nested word linearisations throughout this proof.
	
	As described above, our goal is to subsequently remove subtrees in a bottom-up manner and only consider flat strings of leaf node labels. More precisely, each removal step replaces a subtree of depth one, that is, a node $v$ whose children are all leaves, by a single node with a label that contains all relevant information about its (former) subtree with respect to the game. 
If, for instance, the subtree below a node $v$ with function symbol $f$ cannot be rewritten to conform to the corresponding part of some  DTD $V_f$, this information will be encoded into the label of $v$ and
\pone will never be able to play \Call on $v$ or any of its ancestors with a function symbol from $V_f$, no matter her rewriting capabilities on other parts of the input tree.

Let $t$ be the tree representing some well-nested rooted\footnote{For simplicity, we do not consider non-rooted words in this proof. They can be handled similarly.} word $w$. By $\lab(v)$ we denote the label of a node $v$.
By $S$ we denote the set $\{T,V_1,\ldots,V_d\}$ of schemas of the game. The \emph{profile} $P(t') \subs S$ of a tree $t'$ is the set of schemas for which $t'$ is valid. We first consider subgames on subtrees $t_v$ rooted at some node $v$ with label $a$. With each replay-free strategy $\sigma$ on $t_v$ that does \emph{not} play \Call on $v$ itself, we associate the \emph{profile set} $\proset{t_v}$ of profiles $P$, for which \ptwo has a counterstrategy yielding a tree $t'$ with $P=P(t')$. The \emph{dossier} $\calD(v)$ of $v$ is the set of all sets $X$, for which there is a strategy $\sigma$ of \pone such that 
$\proset{t_v}\subs X$. In our words, $\calD(v)$ is the closure of the set of all sets $\proset{t_v}$ under taking supersets.\footnote{The reason why we do not aim just at the set of all sets $\proset{t_v}$ will become clearer below.}

In the bottom-up computation mentioned above, we plan to replace the subtree below each node $v$ with label $a$ and change $v$'s label to $(a,\proset{t_v})$. Once, this process reaches the root $\Root(t)$ of the tree, it can be instantly decided whether \pone has a wining strategy on $w$. Indeed, this is the case if and only if $\calD(\Root(t))$ contains a profile set $\calP$, such that every profile $P\in\calP$ contains the target schema $T$.

To illustrate the above definitions, we consider the special case $d=1$, that is, besides the target schema $T$ there is only one validation schema $V$. In this case, there are four possible profiles of trees: $\{V,T\}$, $\{V\}$, $\{T\}$, $\emptyset$. As an example, a tree has profile $\{V\}$ if it is valid with respect to $V$ but not with respect to $T$. 

The four different profiles yield $2^4=16$ possible profile sets and $2^{16}=65536$ candidate dossiers. 
However, only the following six cases need to be distinguished: 

 	\begin{itemize}
	\item $\{\{V,T\}\}\in\calD$: \pone has a strategy that guarantees to yield a tree $t'$ that is valid with respect to both schemas;
        \item $\{\{T\}\}\in\calD$ and $\{\{V\}\}\in\calD$: \pone has a strategy that guarantees a tree $t'$ in $T$ and a strategy that guarantees a tree in $t''$ in $V$, but neither $t'$ nor $t''$ is valid with respect to the other schema;
	\item $\{\{T\}\}\in\calD$, but $\{\{V\}\}\not\in\calD$: \pone has a strategy that guarantees a tree $t'$ in $T$, but no strategy that guarantees a tree in $t''$ in $V$;
	\item $\{\{V\}\}\in\calD$, but $\{\{T\}\}\not\in\calD$: \pone has a strategy that guarantees a tree $t''$ in $V$, but no strategy that guarantees a tree in $t'$ in $T$;
        \item $\{\{V\},\{T\}\}\in\calD$: \pone has a strategy that guarantees to yield a tree that is either in $T$ or in $V$, but she can not enforce either of the two;
	\item $\calD=\{\{\emptyset\}\}$: no matter how \pone plays, \ptwo can always enforce a tree that is invalid for both $T$ and $V$.
	\end{itemize}
In all lower cases, we assume that  none of the upper cases applies.

We now start with the detailed description of the algorithm. We assume\footnote{As content models are given by \emph{deterministic} regular expressions, these DFAs can be computed efficiently.} that all content models of DTDs are given by DFAs.  

As stated above, the algorithm works in a bottom-up fashion. First, for all leaf nodes, their dossier is computed. As there is no actual subgame on a leaf node $v$ (that does not play \Call on that node), each such dossier is just $\{\{P(t_v)\}\}$. In this case, $P(t_v)$ is just the set of schemas in which the (original) label of $v$ is allowed at a leaf node.

The key step that the algorithm performs is to compute the dossier of a node $v$ with children $u_1,\ldots,u_m$ all of whose dossiers are already given. The idea is to compute $\calD(v)$ with the help of replay-free games on flat strings, whose winning problem can be decided thanks to Corollary \ref{coro:flatptime}.

For these flat games, the algorithm needs to compute, in a preprocessing phase that only depends on $G$, flat replacement sets $R'_f$, for every function symbol $f\in\Gamma$. As replacement strings represent strings in which no further \Call moves are possible, the labels of their positions do not include dossiers but rather the profile of the actual tree that they represent. 

Each set $R'_f$ can be computed as follows. Let $L_f$ denote the content model of $f$ in $V_f$ (represented by some DFA $A_f$). For each symbol $a$ occurring in $L_f$, let $\Sigma_{f,a}$ be the set of all pairs $(a,P)$, such that there is a tree $t'$ with profile $P$ and root label $a$ that is valid with respect to $R_f$. For each $f$, $a$ and $P$, it can be decided in polynomial time whether $(a,P)\in\Sigma_{f,a}$ by constructing a deterministic tree automaton that accepts all trees that are valid with respect to $R_f$  and the schemas in $P$, and invalid with respect to the schemas in $S\setminus P$. As $d$ is fixed, this amounts to an emptiness test for the polynomial-size product of $d+1$ deterministic tree automata. It follows that all sets $\Sigma_{f,a}$ can be computed in time polynomial in the size of $G$.\footnote{Since the number of validation schemas, and thus also $|S|$, is fixed, the fact that we need superpolynomial time in $|S|$ is of no consequence here.}

The set $R'_f$ consists of all strings over $\bigcup_{a\in\Sigma} \Sigma_{f,a}$ whose $\Sigma$-projection is in $L_f$. Given the sets  $\Sigma_{f,a}$, a DFA for $R'_f$ can be easily (and efficiently) computed.

Now, with the schemas $R'_f$ at hand, we describe the computation of $\calD(v)$ from  $u_1,\ldots,u_m$ and their dossiers in more detail.

For a dossier $\calD=\{\calP_1,\ldots,\calP_\ell\}$ and a symbol $a$, let $s(a,\calD)$ denote the string\footnote{The order of the profile sets in  $s(a,\calD)$ is inessential. We can assume just some ordering of profile sets.} \mbox{$(a,\calD)(a,\calP_1)\cdots(a,\calP_\ell) \cdot \#_a$}.

The idea behind the construction of the flat game is as follows.

The original game on a tree $t_z$ with root label $g$ (where $z$ is a child of the current root node $v$) can be viewed as follows: \pone chooses a strategy for the first phase of the game before the closing tag $\cl{g}$ of $z$ is reached. This strategy corresponds to some profile set $\calP_i\in\calD(z)$. By choosing a counterstrategy for this subgame, \ptwo basically picks  a profile $P\in \calP_i$. Then \pone decides whether she plays \Call at   $\cl{g}$ (subject to validity with respect to $V_g$) and \ptwo replaces $z$, in case she plays \Call.

In the flat game on $(g,\calD)(g,\calP_1)\cdots(g,\calP_\ell) \#_g$ this is mimicked as follows: \pone chooses her  strategy by playing \Call at $(g,\calP_i)$. \ptwo replaces $(g,\calP_i)$ by some pair $(g,P)$ with $P\in P_i$. So far the games exactly mimicks the original game \emph{before} reaching $\cl{g}$. If $P$ allows \pone to play \Call at $\cl{g}$ (that is, if $V_g\in P$), she can call the follow-up symbol $\#_g$ which is then replaced by \ptwo with a string from $R'_g$. The case that \pone cheats by playing \Call although $V_g\not\in P$ can be easily detected by the target automaton (whose construction will be explained soon, otherwise).

For each of the $2^{2^{d+1}}$ possible profile sets $\calQ$, the algorithm determines the winner for a particular replay-free game on the string $s(\lab(u_1),\calD(u_1)),\ldots,s(\lab(u_m),\calD(u_m)))$ with replacement sets
\begin{itemize}
\item $R'_a$, for every symbol $\#_a$ and
\item   $\{(a,P_1),\ldots,(a,P_j)\}$, for each symbol $(a,\calP)$ with
  $\calP=\{(a,P_1),\ldots,(a,P_j)\}$.
\end{itemize}
 It only remains to specify the target language of the game, which, of course, depends on $\calQ$. The DFA $A_\calQ$ for the target language for profile set $\calQ$ has to determine whether \pone has a winning strategy in the (original) subgame on $t_v$ that yields a tree with a profile in $\calQ$. 

To this end, $A_\calQ$ ignores all symbols that do not represent actual subtrees in the original game, that is,
\begin{itemize}
\item all symbols $(g,\calD)$, as they only indicate the beginning of a substring for some node;
\item all symbols $(a,\calP)$ with profile sets $\calP$ as they correspond to strategy options for \pone that she did not choose; and
\item all symbols $\#_g$ as they represent cases in which \pone played \Read and the respective subtree is represented by the symbol $(g,P)$, chosen by \ptwo;
\end{itemize}
We call all other symbols \emph{relevant}.

Thus, $A_\calQ$ accepts all strings $y$ resulting from the game, for which the subsequence $y'$ of relevant symbols is consistent with some profile $P\in\calQ$. That is, if\footnote{As we did not require that $\calD(v)$ consists exactly of all profile sets $\proset{t_v}$, we do not need to ensure anything for profiles not in $\calQ$.}
\begin{itemize}
\item for all  symbols $(q,P')$ of $y'$ it holds $P\subs P'$ and,
\item for each
  schema $D\in P$ the $\Sigma$-projection of $y'$ is in (the language of) $D$.
\end{itemize}
As $d$ is fixed, $A_\calQ$ is of polynomial size.

This completes the construction of the flat game and thus of the algorithm.

Each of the bottom-up reduction steps amounts to a (large but) constant number of tests whether \pone has a winning strategy in a flat game without replay and therefore can be done in overall polynomial time.

It is not too difficult but tedious to verify that the algorithm is also correct.
\end{proof}

\subsubsection*{Lower bounds}

\newcommand{\Gvals}[1][]{\ensuremath{\calG_{\text{val,simple}}^{#1}}}
\newcommand{\Gvald}[1][]{\ensuremath{\calG_{\text{val,DTD}}^{#1}}}

In this subsection, we prove lower bounds for less restricted classes of validation cfGs. We prove the lower bounds of Theorem \ref{theo:validationlower} as single results in the order in which they were stated in Section \ref{sec:other}: from most expressive to least expressive target, replacement and validation languages.

\begin{theorem}\label{theo:nwavalidationlower}
	For the class of validation games with target, validation and replacement languages specified by DNWAs, $\Safelr$ is $\EXPTIME$-hard without replay. This lower bound already holds for games with one single function symbol.
\end{theorem}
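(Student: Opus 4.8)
Theorem \ref{theo:nwavalidationlower} asserts \EXPTIME-hardness of $\Safelr$ without replay for validation games where target, validation and replacement languages are all given by DNWAs, and this already holds for a single function symbol.

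The plan is to reduce from the \emph{intersection emptiness problem for DNWAs}: given deterministic nested word automata $B_1,\dots,B_k$ over a common alphabet $\Sigma_B$, decide whether $\bigcap_{i=1}^{k} L(B_i)=\emptyset$. This problem is $\EXPTIME$-complete: it is the complement of intersection non-emptiness for deterministic tree (equivalently nested-word) automata, which is $\EXPTIME$-complete, and $\EXPTIME$ is closed under complement. From such an instance I would construct, in polynomial time, a validation game $G$ with a single function symbol $f$ and with target, validation and replacement languages each given by a DNWA, together with a start word $w_0$, such that \pone has a winning strategy on $w_0$ iff $\bigcap_i L(B_i)=\emptyset$. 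Since the reduction is to $\Safelr$ restricted to replay-free strategies, $G$ need not restrict replay explicitly.

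The conceptual skeleton of the game is that \ptwo, through his replacement moves, is forced to \emph{commit to a candidate common witness} $z$, after which a short verification phase takes place in which \pone, using her remaining \Read/\Call decisions subject to the validity constraints, either \emph{exposes} $z$ — by pointing at an index $i$ with $z\notin L(B_i)$, in which case the target DNWA accepts and \pone wins — or cannot expose it, in which case the outcome is not in the target language and \ptwo wins. Correctness then reads off directly: if $\bigcap_i L(B_i)\neq\emptyset$, \ptwo commits to a genuine common witness $z^{\ast}$, against which no valid complaint exists, so \ptwo wins; if $\bigcap_i L(B_i)=\emptyset$, then whatever $z$ \ptwo commits to lies outside some $L(B_i)$, and \pone names that $i$, so \pone wins. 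The quantifier structure ``for every $z$ there is some $i$ with $z\notin L(B_i)$'' is exactly mirrored by ``for every replacement of \ptwo there is a refutation of \pone''.

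The key technical ingredients that keep the reduction polynomial are two observations about \emph{single} automata. First, a polynomial-size DNWA can verify membership ``on demand'': on an input of the form $\langle i\rangle\langle z\rangle$, where $i\in[k]$ is written first (say in unary) and $z$ encodes a nested word over $\Sigma_B$, the DNWA records $i$ while scanning $\langle i\rangle$ and then simulates $B_i$ on $z$; its state set is essentially $\{\text{counting states}\}\uplus\biguplus_i Q_{B_i}$, hence of size $O(k+\sum_i|B_i|)$. The same device lets the replacement DNWA $R_f$ (and the validation DNWA $V_f$) recognise exactly the well-formed objects involved. The crucial point is that the verification in the game only ever concerns \emph{one} of the $B_i$ at a time, so no product automaton — and therefore no exponential blow-up in the schemas — is ever needed; the hardness comes entirely from \ptwo's freedom in the game, not from the size of the automata.

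The step I expect to be the main obstacle is making this precise under the triple restriction of a \emph{single} function symbol, \emph{no replay}, and \emph{strict document order}. One must simultaneously guarantee (i) that \ptwo cannot evade committing to a genuine candidate witness $z$ (he is forced into it by the target DNWA rejecting the pure \Read\ outcome and by $R_f$'s shape), (ii) that the committed $z$ is \emph{pinned down}, so the verification concerns one object rather than several independently chosen strings, and (iii) that \pone's choice of the challenged index $i$ can genuinely \emph{react} to $z$ — even though a single deterministic left-to-right pass that checks ``$z\notin L(B_i)$'' needs $i$ available before it starts reading $z$. Resolving this tension is precisely where the validation feature is essential: the nested wrapping of $f$-nodes, together with the fact that the validity of a node is tested on its current subtree at the moment \pone decides to \Call it, makes it possible to route the check so that the index is inspected before the witness while still letting \pone decide it afterwards. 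Intuitively, validation here plays the role that replay plays in the $\EXPTIME$ lower bounds of Sections \ref{sec:general} and \ref{sec:simple}. Finally, in keeping with the paper's stated scope, I would not attempt a matching upper bound.
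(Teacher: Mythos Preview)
Your proposal is correct and takes essentially the same approach as the paper: reduction from intersection emptiness of DNWAs, with \ptwo committing to a candidate witness and \pone then selecting an index $i$ to refute it via a \Call whose validity test encodes ``$z\notin L(B_i)$''. The paper's concrete realisation is the input $\op{t}^{n+1}\op{s}\cl{s}\cl{t}^{n+1}$ with a single function symbol $t$, where the $t$-nesting depth of the node \pone calls encodes $i$ --- exactly the ``DNWA reads the index before the witness, but \pone decides the index after seeing the witness'' mechanism you anticipate as the crux.
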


\begin{proof}
	We show $\EXPTIME$-hardness by reduction from the intersection emptiness problem for deterministic nested word automata: Given $n$ DNWAs $A_1, \ldots, A_n$, does it hold that $L(A_1) \cap \ldots \cap L(A_n) = \emptyset$? That this problem is $\EXPTIME$-hard follows directly from the $\EXPTIME$-hardness of the intersection emptiness problem for deterministic top-down finite tree automata \cite{seidl94}.
	
	Given DNWAs $A_1, \ldots, A_n$ over an alphabet $\Sigma$, we construct a game $G$ and input string $w$ such that \pone has a winning strategy on $w$ in $G$ if and only if there is no string $v \in \wf$ accepted by all $n$ automata. The game $G$ uses the alphabet $\Sigma \cup \{s, t\}$, with $s,t \notin \Sigma$, and $t$ being the only function symbol of $G$.
	
	The input string is $w=\op{t}^{n+1}\op{s}\cl{s}\cl{t}^{n+1}$, i.e. the tree linearised by $w$ is simply a path of length $n+2$ whose $n+1$ non-leaf nodes are labelled $t$ and whose leaf is labelled $s$. According to $G$, play on $w$ should proceed as follows: First, \pone plays \Call on the first $\cl{t}$ in $w$ (i.e. the innermost $t$). We emphasize that $t$ is the only function symbol and is therefore used for two different purposes in this proof.

 \ptwo replies to this call by providing some string $v \in \wf$; if possible, \ptwo will want to choose as $v$ a string contained in the intersection of all $L(A_i)$ for $i \in [n]$. \pone, in turn, will try to show that there is some $i \in [n]$ such that $v \notin L(A_i)$; she does so by playing \Call on the $i$-th remaining $\cl{t}$ in the rewritten string $\op{t}^n v \cl{t}^n$. The validation language for $t$ will ensure that this \Call is only possible if $v$ is indeed not in $L(A_i)$. \ptwo can reply to such a \Call by \pone with an arbitrary string in $\wf$. However, the actual choice of this string is inconsequential as all is needed for \pone to win is that there are less than $n$ occurrences of $\cl{t}$ in the resulting string. 
	
	More formally, the game $G$ over alphabet $\Sigma \cup \{s, t\}$ with $\Gamma=\{t\}$ is defined with the replacement language $R_t = \wf$ and validation language $V_t = \{\op{t}\op{s}\cl{s}\cl{t}\} \cup \{\op{t}^{i} v \cl{t}^i \mid v \in \wf \setminus L(A_i), i \in [n] \}$. The target language is $T = \{\op{t}^i v \cl{t}^i \mid v \in \wf, 0 \leq i < n\}$.
	
	$G$ can be efficiently computed from $A_1, \ldots, A_n$, as DNWAs can be complemented in polynomial time, and given DNWAs for $\wf \setminus L(A_i)$ for each $i \in [n]$, DNWAs for $V_t$ and $T$ can easily be constructed.
	
	Clearly, any strategy $\Astrat$ for \pone on $w$ in $G$ that does not \Call the first $\cl{t}$ cannot be a winning strategy, as the target language does not contain any $s$ tags and the only part of the validation language containing $s$ tags only ever applies to the innermost $t$. From there, it is straightforward to prove that \pone has a winning strategy on $w$ if and only if \ptwo can not respond to this first \Call with a string that is contained in the intersection of all $L(A_i)$ for $i \in [n]$, i.e. iff $L(A_1) \cap \ldots \cap L(A_n) = \emptyset$.
\end{proof}

\begin{theorem}\label{theo:simplevalidationlower}
	For the class of validation games with target, validation and replacement languages specified by XML Schemas, $\Safelr$ is $\PSPACE$-hard. This lower bound already holds for games with one single function symbol, whose replacement and target language are given by DTDs and whose replacement language is finite.
\end{theorem}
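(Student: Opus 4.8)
The plan is to reduce from the truth problem for quantified Boolean formulas whose matrix is in disjunctive normal form; writing $\Phi = Q_1 x_1 \cdots Q_n x_n\, \varphi$ with $\varphi = D_1 \vee \cdots \vee D_m$ and each $D_j$ a conjunction of literals, this problem is $\PSPACE$-complete (it is the complement of ordinary $\QBF$ with a CNF matrix). From such a $\Phi$ I would construct, in logarithmic space, a replay-free validation cfG $G = (\Sigma, \{f\}, R, V, T)$ with a single function symbol $f$, a finite (DTD-presentable) replacement language $R_f$, a DTD target language $T$, and a single validation language $V_f$ given by an XML Schema, together with an input word $w$, so that $w \in \safelr(G)$ if and only if $\Phi$ is true. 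The overall architecture follows the flat-string lower bounds of \cite{MuschollSS06} (as already used for Proposition~\ref{prop:lowerdtd}) and is the natural analogue of Theorem~\ref{theo:validationlower}(c), with the ``unboundedly many validation DTDs'' of that case replaced by the type mechanism of a single XML Schema.

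The word $w$ would be a lightly nested encoding of the quantifier prefix: under an outermost function node $f$, it lists one $\op{f}\cl{f}$ gadget per variable, in prefix order $x_1, \ldots, x_n$, followed by a small ``clause-witness'' region, and is closed by the outermost $\cl{f}$. \pone plays left to right. At the gadget of an existential variable, her choice between \Read and \Call encodes the truth value of that variable; here $R_f$ is set up so that all replacements \ptwo may legally substitute are interchangeable with respect to the remainder of the game, so \ptwo cannot corrupt the bit \pone intends. At the gadget of a universal variable, the target DTD rejects unless \pone plays \Call, and the two kinds of strings \ptwo may pick from $R_f$ then encode the adversarial value. In the clause-witness region \pone analogously fixes her remaining choices, and her final move is \Call on the outermost $\cl{f}$.

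The decisive point is the use of the XML Schema. The case in which \emph{all} schemas are DTDs is tractable (Theorem~\ref{theo:validationptime}) precisely because validity is then a local, bottom-up property; the hardness must therefore exploit the non-locality that single-type grammars allow. Accordingly, $V_f$ is designed so that the final \Call on the outermost $\cl{f}$ is permitted exactly when the subtree that has accumulated under $f$ --- which by then records the full assignment together with \pones clause witness --- certifies that the chosen clause is genuinely satisfied by the chosen assignment. A local (DTD) target cannot check this on its own, since the clause index is only settled after the assignment, but an XML Schema can, because its types can carry the clause index (and the relevant scoping information) down through the nesting of $w$, so that for each variable one only has to compare its recorded value against the literal, if any, that the witnessed clause imposes on it; laying out $w$ so that this comparison becomes scope-local is what keeps $V_f$ of polynomial size. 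The target DTD then only has to verify that \pone actually reached and was allowed to perform the final \Call (and the forced \Call moves at universal positions); if $V_f$ fails, \pone cannot make the final \Call and the resulting un-rewritten document is outside $T$.

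Correctness then follows from the standard correspondence between left-to-right play order and quantifier-prefix order, the irrelevance of \ptwos replacement choices noted above, and the flat-game machinery of Corollary~\ref{coro:flatptime} and \cite{MuschollSS06}: a winning strategy for \pone on $w$ amounts to a choice of the existential values, in prefix order, together with a clause witness that withstands every choice of the universal values, which is exactly the assertion that $\Phi$ is true. The main obstacle I expect is the gadget design rather than the correctness bookkeeping: one must realise the clause/assignment consistency test within the ancestor-and-left-sibling expressiveness of single-type grammars while using only one function symbol and a finite replacement language, and in particular arrange the nesting of $w$ so that the otherwise global matching of a late-chosen clause against the assignment is made local --- and so that \ptwo can never turn his replacement freedom to his advantage.
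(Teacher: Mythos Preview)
Your high-level plan matches the paper's: a reduction from QBF with a DNF matrix, a single function symbol $f$ with $R_f=\{\op{0}\cl{0},\op{1}\cl{1}\}$, existential bits encoded by \pone's \Read/\Call choice, universal bits by \ptwo's replacement, and the single XML-Schema validation language doing the actual satisfaction check. That part is fine.

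The gap is in the layout you sketch. You place the variable gadgets and the clause-witness region as \emph{siblings} under one outermost $f$, with the decisive \Call being on that outermost $\cl{f}$. But then the entire consistency check ``the chosen clause is satisfied by the assignment'' must live in the \emph{content model} of the root type of $V_f$, i.e.\ in a single deterministic regular expression over the labels/types of the root's children. Since in a single-type grammar the type of a child depends only on its label and its parent's type, all variable leaves get the same type and all clause-witness leaves get the same type; no information about \emph{which} clause was picked can flow sideways into the variable positions. Reading left to right, the content-model DFA would have to remember, after the variable block, exactly which subset of clauses is satisfied so as to compare against the later clause choice---that is $2^m$ states, not polynomial. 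Your sentence ``types can carry the clause index \ldots\ down through the nesting'' is exactly the right idea, but your described word has no such nesting between the clause choice and the variable positions, so there is nothing for the index to be carried down through.

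The paper fixes precisely this point. It does \emph{not} use a single outermost validation call. Instead it puts the $m$ clause nodes as a nested \emph{path} of $f$'s \emph{above} the variable subtree; the variable subtree hangs below the bottom clause node via a spine of labelled backbone nodes $b_1,\ldots,b_{n+1}$, each $b_i$ having the $i$-th variable $f$-leaf as its left child. Because closing tags come inside-out, \pone still plays the variable leaves first and the clause nodes afterwards. Her clause choice is then encoded by \emph{which} of the $m$ enclosing $\cl{f}$'s she calls: if she calls the $j$-th from the bottom, the single-type grammar for $V_f$ assigns that node the start type $C_1$, its child $C_2$, down to $C_j$ at the bottom clause node, and then types the spine as $b_1^j,b_2^j,\ldots$. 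Thus the clause index $j$ is carried \emph{vertically} through the backbone types to every variable position, and each $b_i^j$ only needs a constant-size content model comparing the $i$-th variable's value against the literal (if any) that $C_j$ imposes on $x_i$. The target DTD merely checks that some clause node was successfully called (the result is a short $f$-path ending in a $0$/$1$ leaf). If you rework your construction so that the clause selectors enclose the variable subtree rather than sitting next to it, your proposal becomes essentially the paper's proof.
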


\begin{proof} (sketch)
	We prove this claim by giving a reduction from the problem \QBF of determining for a given quantified Boolean formula $\Phi$, whether that $\Phi$ is true. We assume that the input formula is of the form $\Phi = Q_1 x_1 \ldots Q_n x_n \varphi(x_1, \ldots x_n)$ with $Q_i \in \{\exists, \forall\}$ for all $i \in [n]$ and a Boolean formula $\varphi = C_1 \lor \ldots \lor C_m$ with $m$ clauses in \emph{disjunctive} normal form. Without loss of generality, we further assume that no clause contains both $x_i$ and $\neg x_i$ for any $i \in [n]$.
	
	We construct from $\Phi$ a validation game $G$ with a single function symbol $f$, and an input string $w$ such that \pone has a winning strategy on $w$ in $G$ if and only if $\Phi$ is true. We first sketch the construction and the manner in which play proceeds according to $G$ before giving a formal construction. For the sake of simpler presentation, we identify trees and their nested word linearisations. 
	
	The input string consists of a path of $m$ \emph{clause nodes} each labelled $f$.
	 Below the final clause node,
	 $w$ consists of a "spine" of \emph{backbone nodes} with labels $b_1$ to $b_{n+1}$, where each $b_i$ has as its left child a \emph{variable node} 
	 labelled $f$, and as its right child a node  labelled $b_{i+1}$. The leaf node $b_{n+1}$ terminates this chain.
	 
	 As should be obvious from this description, nodes labelled $f$ in this string serve different purposes, depending on their placement in $w$. This is reflected by the single-type tree grammar for the validation language $V_f$ having several different types for nodes which may be labelled $f$. In principle, the clause nodes may be assigned types from $\{C_1, \ldots, C_m\}$, while the variable node child of each node labelled $b_i$ for some $i$ will (usually) be typed as $x_i$. Additionally, the tree grammar for $V_f$ may assign to each node labelled $b_i$ a type from $\{b_i^1, \ldots, b_i^m\}$. The exact purpose of these types will become clear in the rest of the proof.
	
	Play on the input string $w$ proceeds as follows: In a left-to-right order, the variable nodes of type $x_1$ to $x_n$ are the first to be played on (in the same order as the variables $x_1, \ldots, x_n$ are quantified in $\Phi$). A play rewriting these nodes establishes an assignment $\alpha$ of truth values to the variables $x_1, \ldots, x_n$, with \pone choosing assignments for existentially quantified variables and \ptwo choosing for universally quantified variables.
	
	Afterwards, \pone is supposed to select one clause $C_i$ that evaluates to "true" under $\alpha$ by playing \Call on the 
	$i$-th clause node from the bottom, with \ptwo replacing it and thus truncating the input tree to end in a leaf after 
	a path of $f$ nodes. The validation language will ensure that \pone is only allowed to play \Call on the $i$-th clause node from the bottom if $\alpha$ indeed satisfies $C_i$. If \pone manages to play \Call on any clause node, she wins the game, otherwise she loses.
	
	We sketch in some more detail how \pone and \ptwo construct a variable assignment before giving formal details on the construction. For universally quantified variables $x_i$, the matter is simple: No validation (or target) language will be able to match type $x_i$ in this position, so \pone is forced to play \Call on it, giving \ptwo the opportunity to replace it with $0$ or $1$ (which is then interpreted as setting $x_i$ to be true resp.\ false under $\alpha$). For existentially quantified variables $x_i$, the binary choice of setting $x_i$ to true or false is modelled by \pones choice whether or not to \Call the symbol $f$ of type $x_i$: The replacement language for type $x_i$ also has to be $\{0,1\}$ (as there is only the single function symbol $f$ used as a label for all function types), so in this case an uncalled $x_i$ is interpreted as setting $x_i$ to be true under $\alpha$, while both $0$ and $1$ will be interpreted as setting $x_i$ to be false. Note that the replacement language $R_f = \{\op{0}\cl{0}, \op{1}\cl{1}\}$ thus constructed is finite and definable by a DTD.
	
	Formally, the game $G$ is over the alphabet $\Sigma = \{b_1, \ldots, b_{n+1}, 0, 1, f\}$ with a single function symbol $\Gamma = \{f\}$ and replacement language $R_f = \{\op{0}\cl{0}, \op{1}\cl{1}\}$.
		
	The target language consists of all strings linearising paths containing only non-leaf nodes
	labelled $f$ and ending in a leaf node labelled $0$ or $1$. Again, it is easy to see that this language can be represented by a DTD.
	
	Variable nodes should always allow \pone to \Call them on the input string described above, so $\epsilon \in V_{f}$. Furthermore, the part of the validation language used at each clause node of 
	distance $i$ from the bottom $b_1$ node should accept exactly those subtrees encoding satisfying assignments for $C_i$. 

	Altogether, we can give a tree grammar $T_f$ to define the schema for $V_f$. This grammar uses the label alphabet $\Sigma$ (as above), type alphabet $\Delta = \{x_i, b^j_i, C_j \mid i \in [n], j \in [m|\} \cup \{b_{n+1}, 0, 1,x\}$ with a labelling function $\lambda$ mapping all types that are also symbols in $\Sigma$ to themselves, all $b_i^j$ (for $j \in [m]$) to $b_i$ and all other types to $f$, and the following productions (with $C_1$ being the start symbol):
	
	\begin{itemize}
	\item $C_1 \rightarrow C_2 + b_1^1 + \epsilon$\footnote{This $\epsilon$ rule accommodates the special case of a variable node being called.}
	\item $C_j \rightarrow C_{j+1} + b_1^j$ for $2 \leq j < m$,
	\item $C_m \rightarrow b^m_1$,
	\item for all $i \in [n], j \in [m]$:
		\begin{itemize}
		\item $b_i^j \rightarrow x_i b^j_{i+1}$ if $x_i$ is positive in $C_j$ and existentially quantified in $\Phi$,
		\item $b_i^j \rightarrow 1 b^j_{i+1}$ if $x_i$ is positive in $C_j$ and universally quantified in $\Phi$,
		\item $b_i^j \rightarrow (0+1) b^j_{i+1}$ if $x_i$ is negative in $C_j$ and existentially quantified in $\Phi$,
		\item $b_i^j \rightarrow 0 b_{i+1}^j$ if $x_i$ is negative in $C_j$ and universally quantified in $\Phi$,
		\item $b_i^j \rightarrow (x_i + 0 + 1) b_{i+1}^j$ if $x_i$ is not in $C_j$
		\end{itemize}
	\item $b^j_{n+1} \rightarrow \epsilon$.
	\end{itemize}

	It is clear to see that this grammar is indeed single-type, and that all of its content models are specified by deterministic regular expressions.

	We can now explain in detail the exact purpose of the types defined above: When \pone and \ptwo construct a variable assignment, variable nodes can be matched to type $C_1$ in the above grammar and (as they are leaves) accepted with child string $\epsilon$. After the variable assignment has been constructed, if \pone calls the $j$-th clause node from the bottom, that node is matched to type $C_1$, with subsequent child clause nodes being matched to clause types with increasing clause numbers. The bottom clause node is matched to $C_j$. Since that node's child is labelled $b_1$, it has to be matched to $b_1^j$, and this upper index $j$ is "carried down" through the backbone nodes, making certain that each backbone node "knows" which clause is to be checked. The sub-grammars for each type $b_i^j$ then takes care of checking whether the truth assignment constructed by \pone and \ptwo indeed fulfils clause $C_j$.
		
	The correctness of this correctness is proven as follows: Each play on the variable nodes induces an assignment to the variables $x_1 \ldots x_n$ compliant with their quantification in $\Phi$ (and vice versa), and the subtree starting at 
	the $i$-th clause node from the bottom is in $V_{f}$ if and only if it has been rewritten to correspond to a variable assignment satisfying $C_i$. This directly implies that \pone has a winning strategy on $w$ in $G$ if and only if $\Phi$ is true, which concludes the reduction.
\end{proof}

As seen in the proof of Theorem \ref{theo:validationptime}, the running time of the algorithm we give for deciding $\Safelr$ with target, replacement and and verification DTDs grows superpolynomially in the parameter $d$, i.e. the number of function symbols. The following result shows it is unlikely that one can avoid such behaviour.

\begin{theorem}\label{theo:validationpspacelower}
	  For the class of games with validation, $\Safelr$ (without replay) is $\PSPACE$-hard, for games with an unbounded number of validation DTDs and replacement and target languages specified by DTDs.
\end{theorem}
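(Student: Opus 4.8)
The plan is to imitate the \QBF reduction behind Theorem~\ref{theo:simplevalidationlower} — in which the matrix of the formula is in disjunctive normal form — but to replace its single XML Schema, whose \emph{types} carried a clause index down a ``backbone'', by \emph{one validation DTD per clause}. This is precisely where an unbounded number of validation DTDs is used: the choice of which clause \pone wants satisfied becomes the choice of which validation DTD is triggered.

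Given $\Phi = Q_1 x_1 \cdots Q_n x_n(C_1 \lor \cdots \lor C_m)$ with $Q_\ell\in\{\exists,\forall\}$ and clauses $C_j$ (WLOG no clause contains both $x_\ell$ and $\neg x_\ell$), I would build a game $G$ with function symbols $\Gamma=\{c_1,\dots,c_m,v_1,\dots,v_n\}$ — a \emph{clause selector} $c_i$ per clause and a \emph{variable node} $v_\ell$ per variable — a non-function root symbol $c_0$, further non-function symbols $e_1,\dots,e_n,0,1,d$, and the input word
\[
w \;=\; \op{c_0}\op{c_1}\cdots\op{c_m}\,\op{v_1}\cl{v_1}\op{v_2}\cl{v_2}\cdots\op{v_n}\cl{v_n}\,\cl{c_m}\cdots\cl{c_1}\cl{c_0}.
\]
In document order \pone meets $\cl{v_1},\dots,\cl{v_n}$ (in the quantifier order) and only afterwards $\cl{c_m},\dots,\cl{c_1}$, so a play mimics the evaluation game of $\Phi$: at $\cl{v_\ell}$ with $Q_\ell=\exists$, \pones decision between \Read and \Call fixes $x_\ell$ (say $\Read=$ true, $\Call=$ false); at $\cl{v_\ell}$ with $Q_\ell=\forall$, \pone must \Call $v_\ell$ (otherwise, as explained below, she can call no selector and loses) and \ptwos replacement, drawn from $R_{v_\ell}=\{\op{e_\ell}\op{0}\cl{0}\cl{e_\ell},\op{e_\ell}\op{1}\cl{1}\cl{e_\ell}\}$, encodes \ptwos choice of value. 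Once the assignment $\alpha$ is fixed, \pone wins exactly by playing \Call on a single selector $c_i$ whose clause $C_i$ is satisfied by $\alpha$; this is the only possible replacement of $c_i$, namely by $R_{c_i}=\{\op{d}\cl{d}\}$, truncating the tree to a ``witness path'' $c_0\to c_1\to\cdots\to c_{i-1}\to d$, and the target DTD $T$ accepts exactly such paths.

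The core of the construction is the family of validation DTDs. The DTD $V_{v_\ell}$ simply accepts the leaf $v_\ell$, so a variable node may always be called. The DTD $V_{c_i}$ is designed so that the rooted word $\op{c_i}\op{c_{i+1}}\cdots\op{c_m}(\text{current variable list})\cl{c_m}\cdots\cl{c_i}$ present when \pone reaches $\cl{c_i}$ lies in $V_{c_i}$ iff $\alpha\models C_i$: the content model of $c_j$ in $V_{c_i}$ is $c_{j+1}$ for $i\le j<m$, and the content model of $c_m$ is a deterministic regular expression over the per-position alphabet $\{v_\ell,e_\ell\}$ imposing, position by position, the constraint of $C_i$ (a positive existential literal $x_\ell$ forces position $\ell$ to stay $v_\ell$; a negative existential literal forces $e_\ell$; a universal literal forces $e_\ell$ together with the matching digit, via the content model of $e_\ell$ in $V_{c_i}$; a universal variable absent from $C_i$ still forces $e_\ell$, so no universal move can be skipped; an absent existential variable is unconstrained). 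The per-position wrapper $e_\ell$ is needed precisely because a DTD assigns one content model to each label, so the ``digit'' of a called universal variable must be read under a position-specific label. The target DTD $T$ has content model $(c_1{+}d)$ for $c_0$, $(c_{j+1}{+}d)$ for $c_j$ with $1\le j\le m-2$, $d$ for $c_{m-1}$, and $\epsilon$ for $d$; this forces \pone to call exactly one selector (calling none leaves $c_m$ in the tree, which $T$ rejects), and after calling $c_i$ the subtrees under $c_{i-1},\dots,c_1$ no longer satisfy $V_{c_{i-1}},\dots,V_{c_1}$, so no second selector can be called. All of $T$, the $R_a$'s and the $V_{c_i}$'s are of polynomial size with only deterministic regular expressions, the number of pairwise distinct validation DTDs is $\Theta(m+n)$ (hence unbounded), and the reduction is computable in logarithmic space; note that Call depth never exceeds $1$, so this is a replay-free game.

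It then remains to prove that \pone has a winning strategy on $w$ in $G$ iff $\Phi$ is true. In the forward direction, any winning strategy of \pone induces a winning strategy in the evaluation game of $\Phi$ — \pone controlling the existential variables through her \Read/\Call decisions, \ptwo the universal ones through her replacements — because whatever selector \pone eventually calls is validated only when the reached assignment satisfies the corresponding clause; conversely a winning strategy in the evaluation game translates back, a satisfied clause $C_i$ making the call on $c_i$ legal. Since \QBF with a DNF matrix is still \PSPACE-complete (its complement is \QBF with a CNF matrix and dualised quantifier prefix, and \PSPACE is closed under complement), this yields the claimed \PSPACE-hardness. The step I expect to be the most delicate is not any single computation but the bookkeeping that simultaneously (i) rules out every ``cheat'' of \pone — calling a selector before all variables are fixed, calling two selectors, or leaving a universal variable uncommitted — which the interplay of document order, the shape of $T$, and the universal-variable clauses inside each $V_{c_i}$ is engineered to prevent, and (ii) keeps all content models deterministic; getting $T$ and the $V_{c_i}$ right at the same time is the bulk of the work, after which the correspondence with the QBF game is routine.
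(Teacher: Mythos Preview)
Your proposal is correct and takes essentially the same approach as the paper: both reduce from \QBF with a DNF matrix by splitting the single XML-Schema validation language of Theorem~\ref{theo:simplevalidationlower} into one validation DTD per clause, with distinctly labelled clause and variable nodes replacing the formerly uniform label $f$. The only structural variation is that the paper retains the backbone nodes $b_1,\ldots,b_{n+1}$ carrying the variable nodes as left children, whereas you flatten the variable nodes into siblings directly below $c_m$ and introduce per-position wrappers $e_\ell$ so that a DTD can distinguish the positions; this is cosmetic and does not affect the argument.
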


\begin{proof} 
	This follows from the proof of Theorem \ref{theo:simplevalidationlower}, with slight modifications. As in that proof, we show $\PSPACE$-hardness by reduction from \QBF, with the quantor-free part of the input formula in disjunctive normal form.
	
	First off, note that each single-type tree grammar may be seen as a DTD over its type alphabet. More precisely, if $T = (\Sigma, \Delta, S, P, \lambda)$ is a single-type tree grammar, then the tree grammar $T' = (\Delta, \Delta, S, P, \id_\Delta)$ (where $\id_\Delta$ is the identity function on $\Delta$ mapping each type to itself) is local. We use this fact to construct from the verification language $V_f$ given in the proof of Theorem \ref{theo:simplevalidationlower} several validation DTDs, with the number of function symbols (and corresponding validation languages) constructed in the reduction from \QBF growing with the number of clauses and variables of the input formula.
	
	The input string $w$ is similar to the one from the proof of Theorem \ref{theo:validationpspacelower}, consisting of a path of $m$ clause nodes and, below them, a subtree made up of variable and backbone nodes. Other than in that proof, however, the clause nodes are already labelled $C_1$ to $C_m$ (with $C_1$ labelling the topmost node, directly below the root). The variable nodes are already labelled $X_1$ to $x_n$ right from the start.
	
	The target language is almost the same as in the proof of Theorem \ref{theo:validationpspacelower} (accounting, however, for clause node labels), and the replacement language is identical to the one given there. The validation languages for each $x_i$ simply consists of a singleton node labelled $x_i$.
	
	Validation languages for each $C_j$ are obtained from the tree grammar $T_f$ for $V_f$ given in the proof of Theorem \ref{theo:validationpspacelower} as the sub-grammars of $T_f$ starting at $C_j$, with the only difference being that each $b_i^j$ is simply replaced by $b_i$. This is because the purpose of the upper index $j$ in that proof was carrying the clause number selected by \pone down through the variable assignment subtree. This technique is no longer necessary here, due to the fact that the validation language for each $C_j$ is separate, which means that the clause to be checked is inherent to its corresponding validation DTD and thus already "known" to all of its variables.
	
	The correctness of this construction is shown as in the proof of Theorem \ref{theo:validationpspacelower}.
\end{proof}

\subsection*{Insertion rules}
We consider here cfGs with insertion instead of replacement rules, i.e. games of the form $G=(\Sigma, \funcsymb, I, T)$ where the insertion relation $I \subs \funcsymb \times \wf$ takes the place of the replacement relation $R$ from cfGs as defined in Section \ref{sec:prelim}. The semantics is similar to standard cfGs, except for the definition of follow-up configurations after a \Call move by \pone. We recall that we consider three different semantics here: the \emph{general setting}, where \pone may play another subgame on the substring she just called; the \emph{weak replay} setting, where \pone only gets to play on the newly inserted substring after a \Call; and the setting \emph{without replay}, where the play proceeds to the right of a newly inserted substring without modifying it.

Our restriction to games having \emph{only} insertion rules is primarily to simplify the presentation of our proofs. It is relatively easy (if tedious) to prove that games can be extended to contain both replacement and insertion rules without changing the complexity of $\Safelr$, as long as appropriate semantics for insertion and replacement rules are chosen. 

We generally assume $G=(\Sigma,\funcsymb,I,T)$ to be an insertion game with target language $T$ represented by a DNWA $A(T)$ and insertion languages $I_a$ represented by an arbitrary NWA for each $a \in \funcsymb$.

We restate  Proposition \ref{theo:insertion} for easier reference.

\begin{restate}{Proposition \ref{theo:insertion}}
  For the class of games with insertion semantics, target DNWAs and replacement NWAs, $\Safelr$ is
  \begin{enumerate}[(a)]
  \item undecidable in general;
  \item $\iiEXPTIME$-complete for games with weak replay; and
  \item $PSPACE$-complete for games without replay.
  \end{enumerate}
\end{restate}

Before proving Proposition \ref{theo:insertion}, we prove two auxiliary results showing a strong correspondence between replacement games and insertion games (with appropriate semantics). Recall that for a replacement game $G$, $\safelr(G)$ denotes the set of all winning strings for \pone in $G$ with unbounded replay and $\safelr[1](G)$ without replay; similarly, we denote the winning set for \pone in an insertion game $G'$ by $\safelr(G')$ in the general setting, by $\safelr[1+](G')$ with weak replay, and by $\safelr[1](G)$ without replay.

\begin{lemma}\label{lemma:replacement2insertion}
	There exists a polynomial-time algorithm that, given a replacement cfG $G = (\Sigma,\funcsymb,R,T)$ and nested word $w \in \wf$, outputs an insertion cfG $G' = (\Sigma',\funcsymb,I,T')$ and word $w' \in \wf[\Sigma']$ such that
        \begin{itemize}
        \item $w \in \safelr(G) \Leftrightarrow w' \in
          \safelr[1+](G')$, and
        \item $w \in \safelr[1](G) \Leftrightarrow w'
          \in \safelr[1](G')$.
        \end{itemize}

\end{lemma}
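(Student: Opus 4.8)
The plan is to simulate, in an insertion game $G'$, the effect of a replacement move of $G$ by inserting a marked copy of the replacement string and then using the target automaton $A(T')$ to ``read'' only the inserted copy while skipping the original (now obsolete) subtree. Concretely, I would work with the word encoding and introduce, for every $a \in \Sigma$, a fresh symbol $\bar a$ (a ``shadow'' of $a$), so $\Sigma' = \Sigma \cup \{\bar a \mid a \in \Sigma\}$. The input word $w'$ is obtained from $w$ by leaving it unchanged; the insertion languages are $I_a \mydef \{\, \bar v \mid v \in R_a \,\}$, where $\bar v$ denotes the word $v$ with every tag $\op{b}$ (resp. $\cl{b}$) replaced by $\op{\bar b}$ (resp. $\cl{\bar b}$). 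Thus when \pone plays \Call on a closing tag $\cl{a}$ in the insertion game, \ptwo inserts a shadow copy $\bar v$ of some $v \in R_a$ immediately after $\cl{a}$, while the original subtree $\op{a}y\cl{a}$ remains in place. With weak-replay semantics, \pone then continues the play on the freshly inserted $\bar v$, exactly as in $G$ she would continue on the replacement string $v$ — which is why weak replay corresponds to \emph{unbounded} replay in the replacement game, and ``no replay'' to ``no replay''.

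The second ingredient is the construction of $T'$ from $A(T)$. The idea is that a final word $w''$ of the insertion game is a ``decorated'' version of a final word of the replacement game: every place where $G$ would have replaced $\op{a}y\cl{a}$ by $v$, the insertion game has $\op{a}y'\cl{a}\,\bar v'$ where $y'$ is itself such a decorated word and $v'$ is a decorated shadow word. I would define a homomorphism-like ``cleanup'' map that, given a well-nested word over $\Sigma'$, deletes every maximal subword of the form $\op{a}y\cl{a}$ that is \emph{immediately followed} by a shadow subtree $\op{\bar a}\cdots\cl{\bar a}$, and then turns all remaining shadow tags $\op{\bar b},\cl{\bar b}$ back into $\op{b},\cl{b}$. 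One shows that this map sends exactly the set of reachable final words of $G'$ on $w'$ onto the set of reachable final words of $G$ on $w$, strategy by strategy. Then $T' \mydef \{\, w'' \in \wf[\Sigma'] \mid \text{cleanup}(w'') \in T \,\}$, and I would argue that $T'$ is again regular and that a DNWA for it can be built from $A(T)$ in polynomial time: an NWA can detect ``this subtree is followed by a shadow sibling of matching label'' using its linear state after the closing tag (peeking at the next symbol) and suppress the contribution of that subtree by running $A(T)$ in a ``discard'' mode on it, resuming in the old linear state afterwards; determinisation can be avoided by arranging the bookkeeping so that each transition is forced. Because only one level of ``this subtree is immediately dead'' needs to be tracked at a time (a discarded subtree may itself contain discarded subtrees, handled recursively by the same mechanism on the stack), the state blow-up is a constant factor, keeping the construction polynomial.

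The heart of the correctness proof is a pair of strategy translations. Given a strategy $\Astrat$ for \pone in $G$ on $w$, I would build a strategy $\Astrat'$ in $G'$ on $w'$ that mimics $\Astrat$ move for move, replaying \pones choices on the inserted shadow subtree whenever $\Astrat$ would have continued on a replacement string; conversely, from a strategy $\Astrat'$ in $G'$ (w.l.o.g.\ never playing \Call on shadow symbols, since that cannot help) one extracts a strategy $\Astrat$ in $G$. In both directions the key invariant is that after matching prefixes of play, $\text{cleanup}$ of the $G'$-configuration equals the $G$-configuration, and $A(T')$ is in the same linear/hierarchical state on the cleaned word as $A(T)$. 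From the invariant, $w \in \safelr[1+](G) \Leftrightarrow w' \in \safelr[1+](G')$ and likewise for no replay, which after renaming ($\safelr[1+](G) = \safelr(G)$ for replacement games) gives the two claimed equivalences. The main obstacle I anticipate is the bookkeeping in the DNWA for $T'$: making the ``discard subtrees that have a shadow successor'' logic deterministic and provably correct requires care about the order of siblings (the shadow copy comes \emph{after} the dead original, so the automaton must commit to discarding only upon seeing the shadow opening tag, i.e.\ it actually re-reads and must therefore keep enough information on the stack to retro-actively ignore the preceding subtree) — the cleanest fix is to have $A(T')$ speculatively run $A(T)$ \emph{and} a ``discarding'' copy in parallel on each subtree and select the right one at the closing tag plus one lookahead symbol, which is where most of the technical verification effort will go; everything else (regularity of $I_a$, polynomial size, the strategy correspondences) is routine once the invariant is stated precisely.
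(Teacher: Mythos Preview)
Your construction has a genuine gap in the first equivalence (unbounded replay in $G$ versus weak replay in $G'$). You set $I_a = \{\bar v \mid v \in R_a\}$, where \emph{every} tag in $v$ is replaced by its shadow, and you keep the function alphabet $\Gamma$ unchanged. But then the inserted string $\bar v$ contains no symbols from $\Gamma$ at all, so under weak-replay semantics \pone cannot play a single \Call move inside $\bar v$. Your translated strategy $\Astrat'$ therefore cannot mimic $\Astrat$ beyond Call depth~1: as soon as $\Astrat$ plays \Call on a symbol inside a replacement string, $\Astrat'$ is stuck. Concretely, take a game where \pone must call some $\cl{f}$, get back $\op{g}\cl{g}$, and then call $\cl{g}$ to win; in your $G'$ the inserted $\op{\bar g}\cl{\bar g}$ is inert and \pone loses. (Your parenthetical ``w.l.o.g.\ never playing \Call on shadow symbols, since that cannot help'' is not a w.l.o.g.\ assumption here --- shadow symbols are not in $\Gamma$, so they \emph{cannot} be called, and that is precisely the problem.) The replay-free equivalence is unaffected, since there \pone never plays on the inserted string anyway.

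The fix is not to shadow the entire replacement string but only to \emph{wrap} it in a single fresh marker, and this is exactly what the paper does: it takes one new symbol $b$ and sets $I_a = \{\op{b}u\cl{b} \mid u \in R_a\}$, so the interior $u$ still carries genuine $\Gamma$-symbols that \pone can call recursively. Once you make that change, your ``run two copies and select at the next symbol'' idea for $A(T')$ collapses to the paper's construction: keep a pair (current state, fallback state), where the fallback is the $A(T)$-state just before the last completed rooted subword; on $\op{b}$ jump to the fallback. That needs only $Q\times Q$ states, is deterministic without lookahead, and avoids the bookkeeping you flagged as the main obstacle.
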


\begin{proof}
The main observation we need is that replacement in cfGs is generally very localised, i.e. a \Call on $\cl{a}$ in a string of the form $u \op{a}v\cl{a}$ only affects $\op{a}v\cl{a}$, the shortest well-nested suffix of the current string up to $\cl{a}$.

The obvious idea behind the proof is to simulate replacement rules with insertion rules. The crucial insight for this simulation is that, while we cannot delete the rooted suffix $ w = \op{a}v\cl{a}$ from a current string, the new target automaton $A(T')$ can ``undo" the effect of $w$ on $A(T)$ by reverting it to the state it had before reading $w$. To this end, $A(T')$ simulates $A(T)$, all the while memorising (in its state) a ``fallback state" that $A(T)$ was in before beginning to read $w$. That way, $A(T')$ can always revert its simulation of $A(T)$ to the point before $w$ was read, effectively making $A(T)$ ``forget" $w$ and thus simulating a replacement of $w$.

In this way, it is easy to simulate deletion of suffixes that would be replaced in $G$, so we only need some way of knowing \emph{when} such a deletion should take place. To this end, we encapsulate replacement strings $u$ for $G$ within \emph{backspace} tags as $\op{b}u\cl{b}$ (with $b \notin \Sigma$). Now, when the automaton $A'$ reads a $\op{b}$, it knows that what follows after is supposed to be a replacement string, so it ``forgets" the last rooted suffix of the current string, jumps back to the last fallback state and continues simulating $A(T)$ on $u$, re-setting its fallback state along the way as necessary.

Formally, let $A(T) = (Q, \Sigma, \delta, q_0, F)$ be a DNWA in normal form for $T$ and let $b \notin \Sigma$. We define $G'$ as follows:

\begin{itemize}
	\item $\Sigma' = \Sigma \cup \{b\}$
	\item $I_a = \{\op{b}u\cl{b} \mid u \in R_a\}$ for all $a \in \funcsymb$ and
	\item $T' = L(A')$ for the DNWA $A'$ defined below. 
\end{itemize}

The automaton $A' = (Q', \Sigma \cup \{b\}, \delta', q'_0, F')$ is defined by

\begin{itemize}
	\item $Q' = Q \times Q$;
	\item $q'_0 = (q_0, q_0)$;
	\item $F' = F \times Q$ and
	\item $\delta'((p,q), \op{a}) = (\delta(p,\op{a}),p)$ for all $a \in \Sigma$,
	\item $\delta'((p,q), \op{b}) = (q,q)$,
	\item $\delta'((p,q), (p',q'), \cl{a}) = (\delta(p,p', \cl{a}),q)$ for all $a \in \Sigma$ and
	\item $\delta'((p,q),(p',q'), \cl{b}) = (p,q)$.
\end{itemize}

In keeping with the above intuition, $A'$ tracks in its state $(p,q)$ a \emph{current} state $p$ and a \emph{fallback} state $q$ of $A(T)$. When $A'$ reads an $\op{a}$ (resp. $\cl{a}$), it knows that the rooted string immediately to the left of $\op{a}$ has \emph{not} been replaced, so it simulates a step of $A(T)$ to obtain a new current state and sets the new fallback state to be the state $A$ had immediately before reading $\op{a}$ (respectively the $\op{a}$ associated with the current $\cl{a}$).

On reading $\op{b}$, $A'$ knows that the last minimal nested string has been replaced in $G$, so it returns its simulation of $A(T)$ to the fall-back state and simulates $A(T)$ on the replacement string following after $\op{b}$ from there. On $\cl{b}$, neither the current nor fallback state changes, as the last rooted string to the right of $\cl{b}$ may be considered the last minimal suffix of the current word in the replacement game.

If, after reading a string and simulating $A(T)$ on it as described above, $A(T)$ accepts (i.e. the current state of $A'$ is in $F$), $A'$ accepts as $A(T)$ would.
\end{proof}

\begin{lemma}\label{lemma:insertion2replacement}
	There exists a polynomial-time algorithm that, given an insertion cfG $G = (\Sigma,\funcsymb,I,T)$ and nested word $w \in \wf$, outputs a replacement game $G' = (\Sigma',\funcsymb,R,T')$ and word $w' \in \wf[\Sigma']$ such that
        \begin{itemize}
        \item $w \in \safelr[1+](G) \Leftrightarrow w' \in
          \safelr(G')$, and
        \item $w \in \safelr[1](G) \Leftrightarrow w' \in
          \safelr[1](G')$.
        \end{itemize}

\end{lemma}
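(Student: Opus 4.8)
The plan is to simulate the insertion game $G$ by a replacement game $G'$ that plays on a padded version $w'=\widehat{w}$ of $w$ and whose \Call moves only ever delete \emph{empty} subtrees. The obstacle is that a \Call move in a replacement game deletes the rooted suffix it acts on, whereas the content of the corresponding subtree in $G$ must survive (its effect on the target DNWA $A(T)$ is needed to decide the winner). The device is, for every function node $\op{a}\cdots\cl{a}$ with $a\in\funcsymb$, to separate its \emph{content} (which must be kept) from a \emph{trigger} $\op{a}\cl{a}$ (which \pone calls). Concretely, fix a new non-function alphabet symbol $c_a$ for each $a\in\funcsymb$, let $\Sigma'\mydef\Sigma\cup\{c_a\mid a\in\funcsymb\}$, and define the transform $\widehat{\cdot}$ on nested words over $\Sigma$ by $\widehat{\epsilon}=\epsilon$, $\widehat{uv}=\widehat{u}\,\widehat{v}$, sending a subtree $\op{b}u\cl{b}$ with $b\notin\funcsymb$ to $\op{b}\,\widehat{u}\,\cl{b}$ and one with $b=a\in\funcsymb$ to $\op{c_a}\,\widehat{u}\,\cl{c_a}\,\op{a}\cl{a}$ (the content is re-rooted under $c_a$ and an empty trigger is appended right after it). Put $w'\mydef\widehat{w}$, keep the function symbols unchanged, and set $R_a\mydef\{\widehat{v'}\mid v'\in I_a\}$. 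Since $\widehat{\cdot}$ amounts to relabelling the root of each function-labelled subtree and inserting a constant-size gadget after it, an NWA for each $R_a$ is computable in polynomial time from one for $I_a$, and $w'$, $\Sigma'$ are immediate; observe that in every word reachable in $G'$ the only $\op{a}\cl{a}$ occurrences with $a\in\funcsymb$ are these empty triggers, so \pone can only ever call triggers.

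For the target language, let $\Strip$ be the map relabelling every $c_a$-tag to the corresponding $a$-tag and deleting every empty $\op{a}\cl{a}$ pair, and put $T'\mydef\{\,w''\mid \Strip(w'')\in T\,\}$. I would realise $T'$ by the DNWA obtained from $A(T)$ as follows: on $\op{c_a}$ and $\cl{c_a}$ it behaves exactly as $A(T)$ on $\op{a}$ and $\cl{a}$; and on an $\op{a}$ with $a\in\funcsymb$ it expects the immediately following $\cl{a}$ (which is forced in reachable words), pushing and then popping a dummy hierarchical symbol and leaving its linear state unchanged, thus skipping the trigger. This is a DNWA of polynomial size, and by construction $\Strip$ sends every word reachable in $G'$ back to the corresponding word reachable in $G$: an uncalled trigger disappears (it encoded ``no insertion after this $a$-node''), a called one has already been replaced by the transform of the string \ptwo inserted, and the content under $c_a$ is relabelled back to $a$.

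The heart of the argument is a step-by-step correspondence between plays. The only callable tags in $G'$ are the triggers $\cl{a}$, and each trigger sits immediately to the right of the $a$-node whose content it belongs to --- exactly as a callable function node sits in $G$. Under the bijection ``\pones move on a trigger $\cl{a}$'' $\leftrightarrow$ ``\pones move on the associated $\cl{a}$ in $G$'' and ``\ptwo picking $\widehat{v'}$'' $\leftrightarrow$ ``\ptwo picking $v'\in I_a$'', a play of $G'$ with arbitrarily deeply nested trigger calls corresponds to a play of $G$ under the weak-replay semantics~(2): after a \Call in $G'$ the focus sits at the start of the freshly inserted $\widehat{v'}$, whose own triggers are then available for further (nested) calls, mirroring the subgame \pone plays on $v'$ in $G$; and once $\widehat{v'}$ is exhausted the focus leaves it for good, just as in $G$. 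This correspondence preserves finiteness of \pones strategies and maps final words to their strips, and since $A(T')$ accepts $w''$ iff $\Strip(w'')\in T$, it yields $w\in\safelr[1+](G)\iff w'\in\safelr(G')$. For the replay-free equivalence, restricting \pone in $G'$ to \Call depth $1$ forbids calling any trigger that occurs inside an inserted $\widehat{v'}$; \pone then just reads through $\widehat{v'}$ without making any choice, which is behaviourally identical to semantics~(3), where the focus skips the inserted string altogether --- hence $w\in\safelr[1](G)\iff w'\in\safelr[1](G')$.

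The main difficulty I anticipate is bookkeeping rather than ideas: checking that the trigger-skipping transitions of $A(T')$ respect the stack discipline of NWAs and that $A(T')$ really decides $\Strip(w'')\in T$ on every word that can arise in $G'$; and verifying both directions of each equivalence, in particular that \emph{every} finite strategy of \pone in $G'$ calls only triggers and is therefore the faithful image of a strategy in $G$, with the finiteness requirement on \pones strategies preserved under the translation. This is the same sort of care needed in the proof of Lemma~\ref{lemma:replacement2insertion}.
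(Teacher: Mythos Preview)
Your proposal is correct and follows essentially the same approach as the paper: both attach an empty ``anchor'' (your ``trigger'') immediately after each function-labelled subtree and make that anchor the only callable object, so that replacing it simulates insertion. The only cosmetic difference is that the paper keeps the original label $a$ on the content and uses a fresh symbol $a'$ for the anchor (so the target DNWA can simply \emph{ignore} all primed symbols), whereas you relabel the content to $c_a$ and keep $a$ for the trigger (so your DNWA must actively \emph{skip} empty $\op{a}\cl{a}$ pairs); the paper's choice makes the target-automaton construction marginally simpler, but the two are interchangeable.
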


\begin{proof}
The basic idea behind simulating insertion games using replacement games is to replace every subword $\op{a}v\cl{a}$ of $w$ by $\op{a}\mu(v)\cl{a}\op{a'}\cl{a'}$ in $w'$ (where $a'$ is a new ``copy'' of $a$) and to simulate the insertion of a new substring to the right of $\op{a}v\cl{a}$ by the replacement of $\op{a'}\cl{a'}$. We refer to the additional substrings of the form $\op{a'}\cl{a'}$ as  ``anchors". 

To this end, we need to ensure that (a) no non-anchor substring ever gets replaced, and (b) each replacement string contains new anchors for further insertions. For part (a), we add extra symbols to the input alphabet, while part (b) is done through the transformation from $w\in \wf$ to $w' \in \wf[\Sigma']$ hinted at in the claim's statement.

More formally, we set $\Sigma' = \Sigma \cup \{a' \mid a \in \Sigma\}$, i.e. we add a second disjoint copy of $\Sigma$ to itself. Strings will generally be transformed using a function $\mu: \wf \rightarrow \wf[\Sigma']$ defined inductively by
\begin{itemize}
	\item $\mu(\epsilon) = \epsilon$
	\item $\mu(uv) = \mu(u) \mu(v)$ for all $u,v \in \wf$ and
	\item $\mu(\op{a}v\cl{a}) = \op{a}\mu(v)\cl{a}\op{a'}\cl{a'}$ for all $a \in \Sigma$, $v \in \wf$.
\end{itemize}

The target language of $G'$ is defined as $T' = \{\mu(w) \mid w \in T\}$; it is easy to see that a DNWA for $T'$ can be constructed from $A(T)$ by simply ignoring symbols from $\Sigma' \setminus \Sigma$.

The set of function symbols in $G'$ is just $\{a'\mid a\in \funcsymb\}$, and the replacement languages are defined by $$R_{a'} = \{\mu(w) \mid w \in R_a\}.$$ Again, it is easy to see that automata for each $R_{a'}$ can be computed from those for $R_a$ in polynomial time.

Finally, the input string gets transformed (in polynomial time) via $\mu$ as well: $w' = \mu(w)$.

\end{proof}

\begin{proofof}{Proposition \ref{theo:insertion}}
	Parts (b) and (c) follow directly from Lemmas \ref{lemma:insertion2replacement}, \ref{lemma:replacement2insertion} as well as Proposition \ref{prop:lowergeneral}. All that remains to be proven is therefore the undecidability of $\Safelr$ in the general setting.

Intuitively this holds because, on a string of the form $\op{a}v\cl{a}$, jumping back to the start after calling $\cl{a}$ effectively allows \pone to play arbitrarily many left-to-right passes on $v$, thereby enabling her to simulate \emph{any} (not just L2R-) strategy on $v$. We utilise this fact to give a reduction from the algorithmic problem to find out whether for a flat string $w$, and a context-free game $G$  with flat regular replacement and target languages and the ability of \pone to freely select positions (not only from left-to-right), \pone has a winning strategy. It was shown  in \cite{MuschollSS06} that this problem is undecidable. For precise definitions of these games we refer to \cite{MuschollSS06}. 

For the reduction, we construct a cfG $G_n$ from a given input flat cfG $G_r = (\Sigma,\funcsymb,R,T)$ with target language DFA $A = (Q, \Sigma, \delta, q_0, F)$ for $T$. The idea is to simulate an arbitrary strategy for \pone on some flat string $w \in \Sigma^*$ in $G_r$ by means of a L2R strategy on the nested word $\nw{w} \in \wf$ derived from $w \in \Sigma^*$ by replacing each symbol $a$ in $w$ with $\op{a}\cl{a}$.
	
	We make use of the relatively simple observation that an arbitrary strategy of \pone on $w$ (in which \pone may freely choose which position in the current string to \Call next) can easily be simulated by an unbounded number of left-to-right passes over the current string using only the moves \Read (which moves the current position within the string one step to the right), \Call (which does not change the current position) and additional \emph{left-step} (\LS) moves, which reset the current position to 0 once the end of the current string has been reached (cf. \cite{BjorklundSSK13}).
	
	The idea for the reduction, now, is to transform the input string $w$ into a string of the form $\op{r}\nw{w}\cl{r}$ (for some $r \notin \Sigma$), simulate each left-to-right pass for \pone on $w$ appropriately on $\nw{w}$ and then use a \Call on $\cl{r}$ to simulate a \LS move, appending some irrelevant ``tail" $\op{t}\cl{t}$ (for $t \notin \Sigma$) to the current nested string in the process.
	
	The only minor conceptual difficulty is how to simulate a left-to-right pass of \pone on $\nw{w}$ using insertion rules, as context-free games with non-nested regular languages are defined using only replacement in \cite{MuschollSS06}. This can be done with a similar technique as described in the proof of Lemma \ref{lemma:replacement2insertion} -- replacement strings $v$ from some replacement language $R_a \subs \Sigma^*$ are transformed into nested strings as above and encapsulated in ``backspace" tags as $\op{b}\nw{v}\cl{b}$ (for $b \notin \Sigma$); on reading an opening $\op{b}$, the target DNWA for $G_n$ ``forgets" the last nested string before the $\op{b}$ by restoring a fallback state of $A$. The only difference to the proof of Lemma \ref{lemma:replacement2insertion} is that here, the target DNWA for $G_n$ merely has to simulate a DFA, not a DNWA.
	\end{proofof}

\bibliographystyle{plain}
\bibliography{nwgames}

\begin{thebibliography}{10}

\bibitem{AbiteboulBM08}
Serge Abiteboul, Omar Benjelloun, and Tova Milo.
\newblock The {A}ctive {XML} project: an overview.
\newblock {\em VLDB J.}, 17(5):1019--1040, 2008.

\bibitem{AbiteboulMB05}
Serge Abiteboul, Tova Milo, and Omar Benjelloun.
\newblock Regular rewriting of active {XML} and unambiguity.
\newblock In {\em PODS}, pages 295--303, 2005.

\bibitem{AlurM09}
Rajeev Alur and P.~Madhusudan.
\newblock Adding nesting structure to words.
\newblock {\em J. ACM}, 56(3), 2009.

\bibitem{BjorklundSSK13}
Henrik Bj{\"o}rklund, Martin Schuster, Thomas Schwentick, and Joscha Kulbatzki.
\newblock On optimum left-to-right strategies for active context-free games.
\newblock In {\em Joint 2013 EDBT/ICDT Conferences, ICDT '13 Proceedings,
  Genoa, Italy, March 18-22, 2013}, pages 105--116, 2013.

\bibitem{Bozzelli07}
Laura Bozzelli.
\newblock Alternating automata and a temporal fixpoint calculus for visibly
  pushdown languages.
\newblock In {\em CONCUR- Concurrency Theory, 18th International Conference},
  pages 476--491, 2007.

\bibitem{ChandraKS81}
A.~K. Chandra, D.~Kozen, and L.~J. Stockmeyer.
\newblock Alternation.
\newblock {\em Journal of the ACM}, 28(1):114--133, 1981.

\bibitem{Chlebus86}
B.~S. Chlebus.
\newblock Domino-tiling games.
\newblock {\em Journal of Computer and System Sciences}, 32(3):374--392, 1986.

\bibitem{GraedelTW02}
E.~Gr\"{a}del, W.~Thomas, and T.~Wilke, editors.
\newblock {\em Automata, Logics, and Infinite Games. A Guide to Current
  Research}.
\newblock Springer, 2002.

\bibitem{Kaiser09}
Lukasz Kaiser.
\newblock Synthesis for structure rewriting systems.
\newblock In Rastislav Kr{\'a}lovic and Damian Niwinski, editors, {\em MFCS},
  volume 5734 of {\em Lecture Notes in Computer Science}, pages 415--426.
  Springer, 2009.

\bibitem{MartensNeven+Simple07}
Wim Martens, Frank Neven, and Thomas Schwentick.
\newblock Simple off the shelf abstractions for {XML} schema.
\newblock {\em SIGMOD Record}, 36(3):15--22, 2007.

\bibitem{MartensNSB06}
Wim Martens, Frank Neven, Thomas Schwentick, and Geert~Jan Bex.
\newblock Expressiveness and complexity of {XML} schema.
\newblock {\em ACM Trans. Database Syst.}, 31(3):770--813, 2006.

\bibitem{MiloAABN05}
Tova Milo, Serge Abiteboul, Bernd Amann, Omar Benjelloun, and Frederic~Dang
  Ngoc.
\newblock Exchanging intensional {XML} data.
\newblock {\em ACM Trans. Database Syst.}, 30(1):1--40, 2005.

\bibitem{MurataLMK05}
Makoto Murata, Dongwon Lee, Murali Mani, and Kohsuke Kawaguchi.
\newblock Taxonomy of {XML} schema languages using formal language theory.
\newblock {\em ACM Trans. Internet Techn.}, 5(4):660--704, 2005.

\bibitem{MuschollSS06}
Anca Muscholl, Thomas Schwentick, and Luc Segoufin.
\newblock Active context-free games.
\newblock {\em Theory Comput. Syst.}, 39(1):237--276, 2006.

\bibitem{PaulyP03a}
Marc Pauly and Rohit Parikh.
\newblock Game logic - an overview.
\newblock {\em Studia Logica}, 75(2):165--182, 2003.

\bibitem{seidl94}
H.~Seidl.
\newblock Haskell overloading is {DEXPTIME}-complete.
\newblock {\em Information Processing Letters}, 52(2):57--60, 1994.

\bibitem{Benthem03}
Johan van Benthem.
\newblock Logic games are complete for game logics.
\newblock {\em Studia Logica}, 75(2):183--203, 2003.

\bibitem{Waldmann02}
Johannes Waldmann.
\newblock Rewrite games.
\newblock In Sophie Tison, editor, {\em RTA}, volume 2378 of {\em Lecture Notes
  in Computer Science}, pages 144--158. Springer, 2002.

\end{thebibliography}

\end{document}